\documentclass[preprint]{iacrtrans}


\usepackage[utf8]{inputenc}
\usepackage[bookmarksdepth=3]{hyperref}
\usepackage{xspace}
\usepackage{enumitem}
\usepackage{booktabs}
\usepackage{multirow}
\usepackage{cleveref}
\usepackage{doi}
\usepackage{orcidlink}


\setlist[enumerate,1]{label={(\arabic*)}}

\allowdisplaybreaks

\theoremstyle{plain}
\newtheorem{thm}{Theorem}[section]
\newtheorem{lem}[thm]{Lemma}
\newtheorem{cor}[thm]{Corollary}
\newtheorem{prop}[thm]{Proposition}

\newtheorem{defn}[thm]{Definition}
\newtheorem{conj}[thm]{Conjecture}
\theoremstyle{definition}
\newtheorem{rem}[thm]{Remark}
\newtheorem{ex}[thm]{Example}
\theoremstyle{plain}

\crefname{thm}{theorem}{theorems}
\Crefname{thm}{Theorem}{Theorems}
\crefname{defn}{definition}{definitions}
\Crefname{defn}{Definition}{Definitions}
\crefname{prop}{proposition}{propositions}
\Crefname{prop}{Proposition}{Propositions}
\crefname{lem}{lemma}{lemmas}
\Crefname{lem}{Lemma}{Lemmas}
\crefname{cor}{corollary}{corollaries}
\Crefname{cor}{Corollary}{Corollaries}
\crefname{ex}{example}{examples}
\Crefname{ex}{Example}{Examples}
\crefname{rem}{remark}{remarks}
\Crefname{rem}{Remark}{Remarks}
\crefname{hyp}{hypothesis}{hypotheses}
\Crefname{hyp}{Hypothesis}{Hypotheses}
\crefname{conj}{conjecture}{conjectures}
\Crefname{conj}{Conjecture}{Conjectures}


\newcommand{\F}{\mathbb{F}}                                         
\newcommand{\Fp}{\F_{p}}                                            
\newcommand{\Fq}{\F_{q}}                                            
\newcommand{\Fqn}{\Fq^{n}}                                          

\newcommand{\abs}[1]{\left\vert #1 \right\vert}                     
\newcommand{\ceil}[1]{\left\lceil #1 \right\rceil}                  
\DeclareMathOperator{\lcm}{lcm}                                     

\newcommand{\AffSp}[2]{\mathbb{A}^{#1}_{#2}}                        
\newcommand{\ProjSp}[2]{\mathbb{P}^{#1}_{#2}}                       

\newcommand{\dehom}{\text{\normalfont deh}}                         
\newcommand{\homog}{\text{\normalfont hom}}                         
\DeclareMathOperator{\inid}{in}                                     
\DeclareMathOperator{\maxGBdeg}{max.GB.deg}                         
\DeclareMathOperator{\sat}{sat}                                     
\DeclareMathOperator{\solvdeg}{sd}                                  
\DeclareMathOperator{\reg}{reg}                                     
\newcommand{\topcomp}{\text{\normalfont top}}                       

\DeclareMathOperator{\circulant}{circ}                              
\DeclareMathOperator{\rank}{rank}                                   

\newcommand*{\degree}[1]{\deg \left( #1 \right)}                    
\DeclareMathOperator{\LC}{LC}                                       
\DeclareMathOperator{\LM}{LM}                                       
\DeclareMathOperator{\LT}{LT}                                       

\DeclareMathOperator{\gl}{GL}                                       
\newcommand{\GL}[2]{\gl_{#1} \left( #2 \right)}                     


\newcommand{\Anemoi}{\texttt{Anemoi}\xspace}

\newcommand{\Arion}{\textsf{Arion}\xspace}

\newcommand{\Ciminion}{\texttt{Ciminion}\xspace}

\newcommand{\GMiMC}{\texttt{GMiMC}\xspace}
\newcommand{\GMiMCcrf}{\texttt{GMiMC}\textsubscript{crf}\xspace}
\newcommand{\GMiMCerf}{\texttt{GMiMC}\textsubscript{erf}\xspace}

\newcommand{\Griffin}{\textsc{Griffin}\xspace}

\newcommand{\Hades}{\textsc{Hades}\xspace}
\newcommand{\Hydra}{\textsf{Hydra}\xspace}

\newcommand{\Jarvis}{\texttt{Jarvis}\xspace}

\newcommand{\LowMC}{\texttt{LowMC}\xspace}

\newcommand{\MiMC}{\texttt{MiMC}\xspace}

\newcommand{\Poseidon}{\textsc{Poseidon}\xspace}
\newcommand{\Poseidontwo}{\textsc{Poseidon}2\xspace}

\newcommand{\Rescue}{\texttt{Rescue}\xspace}

\newcommand{\ReinforcedConcrete}{\texttt{Reinforced Concrete}\xspace}

\newcommand{\Vision}{\texttt{Vision}\xspace}

\usepackage{tabularx}
\newcolumntype{P}[1]{>{\centering\arraybackslash}p{#1}}
\newcolumntype{M}[1]{>{\centering\arraybackslash}m{#1}}

\usetikzlibrary{arrows.meta,fit}

\newcommand{\MiMCfe}{
            q + 2 \cdot r - 2 \leq \reg \bigg( \mathcal{F}^\homog_\MiMC + \Big( y^q - y \cdot x_0^{q - 1} \Big) \bigg) \leq q + 2 \cdot r
}

\newcommand{\MiMCtwo}{
        4 \cdot r - 3 \leq \reg \left( \mathcal{F}_{\MiMC, 1}^\homog + \mathcal{F}_{\MiMC, 2}^\homog \right) \leq 4 \cdot r + 1
}

\newcommand{\FeistelMiMC}{
        2 \cdot r - 1 \leq \reg \left( \mathcal{F}_{\MiMC\text{-}2n/n}^\homog \right) \leq 2 \cdot r + 1
}

\newcommand{\Hashpre}{
        q + 2 \cdot r - 6 \leq \reg \bigg( \mathcal{F}_\text{preimage}^\homog + \Big( x_2^q - x_2 \cdot x_0^{q - 1} \Big) \bigg) \leq q + 2 \cdot r - 2
}

\title{Solving Degree Bounds for Iterated Polynomial Systems}
\author{Matthias Johann Steiner \orcidlink{0000-0001-5206-6579}}
\authorrunning{M.~J.~Steiner}
\institute{Alpen-Adria-Universit\"at Klagenfurt, Klagenfurt am W\"orthersee, Austria \\ \email{matthias.steiner@aau.at}}

\begin{document}

    \maketitle

    \keywords{Gr\"obner basis \and Solving degree \and MiMC \and GMiMC \and Hades}
    \begin{abstract}
        For Arithmetization-Oriented ciphers and hash functions Gr\"obner basis attacks are generally considered as the most competitive attack vector.
        Unfortunately, the complexity of Gr\"obner basis algorithms is only understood for special cases, and it is needless to say that these cases do not apply to most cryptographic polynomial systems.
        Therefore, cryptographers have to resort to experiments, extrapolations and hypotheses to assess the security of their designs.
        One established measure to quantify the complexity of linear algebra-based Gr\"obner basis algorithms is the so-called solving degree.
        Caminata \& Gorla revealed that under a certain genericity condition on a polynomial system the solving degree is always upper bounded by the Castelnuovo-Mumford regularity and henceforth by the Macaulay bound, which only takes the degrees and number of variables of the input polynomials into account.
        In this paper we extend their framework to iterated polynomial systems, the standard polynomial model for symmetric ciphers and hash functions.
        In particular, we prove solving degree bounds for various attacks on \texttt{MiMC}, Feistel-\texttt{MiMC}, Feistel-\texttt{MiMC}-Hash, \textsc{Hades} and \texttt{GMiMC}.
        Our bounds fall in line with the hypothesized complexity of Gr\"obner basis attacks on these designs, and to the best of our knowledge this is the first time that a mathematical proof for these complexities is provided.

        Moreover, by studying polynomials with degree falls we can prove lower bounds on the Castelnuovo-Mumford regularity for attacks on \texttt{MiMC}, Feistel-\texttt{MiMC} and Feistel-\texttt{MiMC}-Hash provided that only a few solutions of the corresponding iterated polynomial system originate from the base field.
        Hence, regularity-based solving degree estimations can never surpass a certain threshold, a desirable property for cryptographic polynomial systems.
    \end{abstract}

    \section{Introduction}\label{Sec: introduction}
With the increasing adaption of Multi-Party Computation (MPC) and Zero-Knowledge (ZK) proof systems new ciphers and hash functions are needed to implement these constructions efficiently without compromising security.
These new cryptographic primitives are commonly referred to as \emph{Arithmetization-Oriented} (AO) designs.
The main objective of AO is to minimize multiplicative complexity, the minimum number of multiplications needed to evaluate a function.
However, this comes at a cost: a very simple algebraic representation.
Examples of recently proposed AO ciphers and hash functions are \LowMC \cite{EC:ARSTZ15}, \MiMC \cite{AC:AGRRT16}, \GMiMC \cite{ESORICS:AGPRRRRS19}, \Jarvis \cite{EPRINT:AshDho18}, \Hades \cite{EC:GLRRS20}, \Poseidon \cite{USENIX:GKRRS21} and \Poseidontwo \cite{AFRICACRYPT:GraKhoSch23}, \Vision and \Rescue \cite{ToSC:AABDS20}, \Ciminion \cite{EC:DGGK21}, \ReinforcedConcrete \cite{CCS:GKLRSW22}, \Anemoi \cite{C:BBCPSVW23}, \Griffin \cite{C:GHRSWW23}, \Hydra \cite{EC:GOSW23} and \Arion \cite{Arion}.
Unfortunately, with AO an often-neglected threat reemerged in cryptography: \emph{Gr\"obner bases}.
While being a minor concern for well-established ciphers like the Advanced Encryption Standard (AES) \cite{FSE:BucPysWei06,Daemen-AES}, certain proposed AO designs have already been broken with off-the-shelf computing hardware and standard implementations of Gr\"obner bases, see for example \cite{AC:ACGKLRS19,ToSC:GKRS22}.
Therefore, to ensure computational security against Gr\"obner basis attacks cryptographers ask for tight complexity bounds of Gr\"obner basis computations \cite{ToSC:AABDS20,EPRINT:SauSze21}.

Unfortunately, the Gr\"obner basis cryptanalysis of the aforementioned AO designs is lacking mathematical rigor.
Broadly speaking, the Gr\"obner basis analysis of AO designs usually falls into two categories:
\begin{enumerate}[label=(\Roman*)]
    \item It is assumed that the polynomial system satisfies some genericity condition for which Gr\"obner basis complexity estimates are known.
    E.g., being regular or semi-regular.

    \item Empirical complexities from small scale experiments are extrapolated.
\end{enumerate}
In this paper on the other hand, we present a rigor mathematical formalism to derive provable complexity estimates for cryptographic polynomial systems.
In particular, we rigorously obtain Gr\"obner basis complexity estimates for various attacks on \MiMC, Feistel-\MiMC, Feistel-\MiMC-Hash, \Hades and \GMiMC.
We note that our bounds fall in line with the hypothesized cost of Gr\"obner basis attacks on these designs (see \cite[\S 4.3]{EC:GLRRS20} and \cite[\S 4.1.1]{ESORICS:AGPRRRRS19}).
To the best of our knowledge these are the first rigor mathematical proofs for the Gr\"obner basis cryptanalysis of these designs.
Moreover, for \MiMC, Feistel-\MiMC and Feistel-\MiMC-Hash we prove limitations of our complexity estimations, i.e., we derive lower bounds which can never be surpassed by our estimation method.

The cryptographic constructions of our interest all follow the same design principle.
Let $\Fq$ be a finite field with $q$ elements and let $n \geq 1$ be an integer, one chooses a round function $\mathcal{R}: \Fqn \times \Fqn \to \Fqn$, which depends on the input variable $\mathbf{x}$ and the key variable $\mathbf{y}$, and then iterates it $r$ times with respect to the input variable.
Such a design admits a very simple model of keyed iterated polynomials
\begin{equation}
    F_\mathcal{R} (\mathbf{x}_{i - 1}, \mathbf{y}) - \mathbf{x}_i = \mathbf{0},
\end{equation}
where $F_\mathcal{R}$ denotes the polynomial vector representing the round function $\mathcal{R}$, the $\mathbf{x}_i$'s intermediate state variables, $\mathbf{y}$ the key variable and $\mathbf{x}_0, \mathbf{x}_r \in \Fqn$ a plain/ciphertext pair given by the encryption function.
This leads us to standard Gr\"obner basis attacks on ciphers which proceed in four steps:
\begin{enumerate}
    \item Model the cipher function with an iterated system of polynomials.

    \item\label{Step 2} Compute a Gr\"obner basis with respect to an efficient term order, e.g., the degree reverse lexicographic order.

    \item\label{Step 3} Perform a term order conversion to an elimination order, e.g., the lexicographic order.

    \item\label{Step 4} Solve the univariate equation.
\end{enumerate}
Let us for the moment assume that a Gr\"obner basis has already been found and focus on the complexity of the remaining steps.
Let $I \subset \Fq [x_1, \dots, x_n]$ be a zero-dimensional ideal modeling a cipher, and denote with $d = \dim_{\Fq} \left( \Fq [x_1, \dots, x_n] / I \right)$ the $\Fq$-vector space dimension of the quotient space.
With the original FGLM algorithm \cite{Faugere-FGLM} the complexity of term order conversion is $\mathcal{O} \left( n \cdot d^3 \right)$, but improved versions with probabilistic methods achieve $\mathcal{O} \left( n \cdot d^\omega \right)$ \cite{Faugere-SubCubic}, where $2 \leq \omega < 2.37286$ \cite{SODA:AlmWil21}, and sparse linear algebra algorithms \cite{Faugere-SparseFGLM} achieve $\mathcal{O} \left( \sqrt{n} \cdot d^{2 + \frac{n - 1}{n}} \right)$.
To extract the $\Fq$-valued roots of the univariate polynomial most efficiently we compute its greatest common divisor with the field equation $x^q - x$ via the algorithm of Bariant et al.\ \cite[\S3.1]{ToSC:BBLP22}.
The complexity of this step is then
\begin{equation}\label{Equ: efficient factoring}
    \mathcal{O} \Big( d \cdot \log (q) \cdot \log (d) \cdot \log \big( \log (d) \big) +  d \cdot \log (d)^2 \cdot \log \big( \log (d) \big) \Big),
\end{equation}
provided that $d \leq q$ otherwise one has to replace the roles of $d$ and $q$ in the complexity estimate.

Furthermore, in \cite{Faugere-Stark} it was proven that one can also use $d$ to upper bound the complexity of linear algebra-based Gr\"obner basis algorithms.
Since $d$ is in general not known one has to estimate $d$ via the B\'ezout bound.

To the best of our knowledge, the aforementioned AO designs all admit a very high quotient space dimension.
Hence, to improve the capabilities of Gr\"obner basis attacks one must reduce this dimension.
For this problem we have two generic approaches:
\begin{enumerate}[label=(\roman*)]
    \item Alter the standard representation, e.g., choose polynomials in the model which approximate the round function with high probability.
    This approach was successfully deployed in \cite{AC:ACGKLRS19,ToSC:GKRS22}.

    \item Add polynomials to the system to remove parasitic solutions that lie in algebraic closure.
    E.g., the polynomial system for an additional plain/ciphertext pair or the field equations.
    This approach is the concern of this paper.
\end{enumerate}
If one successfully filters all solutions from the algebraic closure, then one expects that steps \ref{Step 3} and \ref{Step 4} are not a major concern anymore.
Therefore, we need tight estimates for the complexity of Gr\"obner basis computations.

\subsection{Contributions \& Related Work}\label{Sec: contributions}
Our main tool to bound the complexity of Gr\"obner basis computations will be the \emph{solving degree} of \emph{linear algebra-based Gr\"obner basis algorithms} which was first formalized in \cite{Ding-SolvingDegree}.
Linear algebra-based Gr\"obner basis algorithms perform Gaussian elimination on matrices associated to a polynomial system.
Given the number of equations, the number of variables and the solving degree one can then estimate the maximal size of these matrices and henceforth also the cost of Gaussian elimination.
In \cite{Caminata-SolvingPolySystems} the solving degree was upper bounded via the \emph{Castelnuovo-Mumford regularity} if the polynomial system is in \emph{generic coordinates}.
This genericity notion can be traced back to the influential work of Bayer \& Stillman \cite{BayerStillman}.
In essence, a polynomial system $\mathcal{F} = \{ f_1, \dots, f_m \} \subset P = K [x_1, \dots, x_n]$ is in generic coordinates if its homogenization $\mathcal{F}^\homog = \left\{ f_1^\homog, \dots, f_m^\homog \right\} \subset P [x_0]$ does not admit a solution with $x_0 = 0$ in the projective space $\ProjSp{n}{K}$, where $x_0$ denotes the homogenization variable.
Moreover, the Castelnuovo-Mumford regularity is always upper bounded by the \emph{Macaulay bound} \cite[Theorem~1.12.4]{Chardin-Regularity}.
Hence, if a polynomial system is in generic coordinates, then we can estimate the complexity of a Gr\"obner basis computation via the degrees of the input polynomials.

Our paper is divided into two parts.
In the first part (\Cref{Sec: Preliminaries,Sec: characterization generic coordinates,Sec: lex Groebner Basis,Sec: applications}), we develop a rigor framework for complexity estimates of Gr\"obner attacks on \MiMC, Feistel-\MiMC, Feistel-\MiMC-Hash, \Hades and \GMiMC.
To streamline the application of the technique developed by Caminata \& Gorla, we prove in \Cref{Th: generic coordinates and highest degree components} that a polynomial system is in generic coordinates if and only if it admits a finite \emph{degree of regularity} \cite{Bardet-Complexity}.
This in turn permits efficient proofs that the keyed iterated polynomial systems of \MiMC, Feistel-\MiMC, Feistel-\MiMC-Hash, \Hades and \GMiMC are in generic coordinates.

In the second part (\Cref{Sec: satiety and polynomials with degree falls,Sec: lower bounds}), we study \emph{polynomials with degree falls}.
For an inhomogeneous polynomial system $\mathcal{F} = \{ f_1, \dots, f_m \} \subset K [x_1, \dots, x_m]$, we say that a polynomial $f \in (\mathcal{F})$ has a degree fall in $d > \degree{f}$, if it cannot be constructed below degree $d$ via $\mathcal{F}$, i.e.\ there does not exist a sum $f = \sum_{i = 1}^{m} g_i \cdot f_i$ such that $\degree{g_i \cdot f_i} < d$ for all $i$.
We define the \emph{last fall degree} as the largest integer $d$ for which there exists a polynomial $f \in (\mathcal{F})$ with a degree fall in $d$.
For polynomial systems in generic coordinates we prove that the last fall degree is equal to the \emph{satiety} of $\mathcal{F}^\homog$ (\Cref{Th: last fall degree finite in generic coordinates}).
Moreover, it is well-known that the satiety of $\mathcal{F}^\homog$ is always upper bounded by the Castelnuovo-Mumford regularity of $\mathcal{F}^\homog$.
Therefore, if we find a polynomial with a degree fall in $(\mathcal{F})$ then we immediately have a lower bound for the Castelnuovo-Mumford regularity of $\mathcal{F}^\homog$.
As consequence one then has a limit on the capabilities of Castelnuovo-Mumford regularity-based complexity estimates.

We note that a different notion of last fall degree was already introduced by Huang et al.\ \cite{C:HuaKosYeo15,Huang-LastFallDegree}.
Therefore, in \Cref{Rem: Huang's last fall degree} we discuss the difference between Huang et al.'s and our notion of last fall degree.

Let \MiMC with $r$ rounds be defined over $\Fq$ and assume that the \MiMC polynomial systems have fewer than three solutions in $\Fq$, applying our bounds we obtain the following ranges on the Castelnuovo-Mumford regularity.
For \MiMC and the field equation for the key variable we have, see \Cref{Ex: MiMC solving degree I,Ex: MiMC solving degree II},
\begin{equation}
    \MiMCfe.
\end{equation}
For the two plain/ciphertext attack on \MiMC we have, see \Cref{Ex: MiMC two plaintext attack I,Ex: MiMC two plaintext solving degree II},
\begin{equation}
    \MiMCtwo.
\end{equation}
For a Feistel-$2n/n$ network based on the \MiMC round function we have, see \Cref{Ex: Feistel MiMC I,Ex: Feistel MiMC solving degree II},
\begin{equation}
    \FeistelMiMC.
\end{equation}
For a Feistel-$2n/n$ network operated in sponge mode \cite{EC:BDPV08} based on the \MiMC round function we have for the preimage attack, see \Cref{Ex: Feistel-MiMC-Hash preimage I,Ex: Feistel-MiMC-Hash preimage II},
\begin{equation}
    \Hashpre.
\end{equation}
Arguably, the bounds that include the size of the finite field $q$ do not have direct cryptographic significance.
We note that these bounds can be significantly improved by an auxiliary division by remainder computation, see the discussions after \Cref{Ex: MiMC solving degree I,Ex: MiMC solving degree II,Ex: Feistel-MiMC-Hash preimage I,Ex: Feistel-MiMC-Hash preimage II}.
We restricted our analysis to the field equation due to generic treatment as well as simple algebraic representations.
Moreover, we point out that our analysis of \MiMC polynomial system serves as role model to showcase that tight complexity estimates for cryptographic polynomial systems are achievable without the evasion to unproven hypotheses.

\subsubsection{Comparison With Existing Cryptanalysis}
In this paper we derive various proven Gr\"obner basis complexity estimates for the \MiMC family, \GMiMC and \Hades.
Let us now shortly discuss how these estimates relate to established cryptanalysis of these designs.
In \Cref{Tab: complexity estimates} we collect our complexity estimates, see \Cref{Tab: MiMC sample values,Tab: MiMC two plaintext sample values,Tab: Feistel MiMC sample values,Tab: Hades values,Tab: GMiMC complexity}, next to the estimates of established attacks that are closely related to our Gr\"obner basis attacks.

The attack on \MiMC with a field equation (first three rows in the \MiMC row in \Cref{Tab: complexity estimates}) can be considered as sparse low degree representation of the greatest common divisor (GCD) attack on \MiMC \cite[\S 4.2]{AC:AGRRT16}.
In the GCD attack with a known plain/ciphertext attack one represents the \MiMC encryption function as univariate polynomial in the key variable $y$ and then computes the GCD with the field equation $y^q - y$.
The number of \MiMC rounds is chosen so that $r \geq \log_3 \left( q \right)$, where $q$ is the size of the underlying finite field, to avoid an interpolation attack \cite{SAC:LiPre19}.
So the complexity of the GCD computation can be estimated as $\mathcal{O} \Big( d \cdot \log (d)^2 \cdot \log \big( \log (d) \big) \Big)$ with $d = 3^r$ (or $d = q$ if one considers the first division by remainder computation in the GCD algorithm to be for free).
If we do not consider the construction of the univariate polynomial to be for free, we can refine this estimate.
The keyed iterated \MiMC polynomial system is already a Gr\"obner basis, so the univariate polynomial can be constructed via the probabilistic FGLM algorithm \cite{Faugere-SubCubic} which has complexity $\mathcal{O} \left( n \cdot d^\omega \right)$, and for key extraction we can use the efficient factoring algorithm of Bariant et al.\ whose complexity is given in \Cref{Equ: efficient factoring}.
In the Gr\"obner basis attack on the other hand, the univariate \MiMC encryption function is decomposed into its $r$ round functions of degree $3$ and together with the field equation the Gr\"obner basis is computed.
As \Cref{Tab: complexity estimates} shows, \MiMC achieves a security level of at least $128$ bits for various field sizes when the sparse low degree representation is used to mount a key recovery attack with the field equation.

Alternatively to the GCD with the field equation, one can consider two plain/cipher\-texts to set up two univariate encryption polynomials and compute their GCD.
As before, we can represent the encryption functions with $r$ sparse polynomials of degree $3$ respectively which share the key variable.
A similar two plain/ciphertext attack was investigated by Albrecht et al.\ \cite[\S 6.1]{AC:ACGKLRS19}.
Since an iterated \MiMC polynomial system is already Gr\"obner basis, they proposed to run the FGLM algorithm twice to construct two univariate polynomials in the key variable and then compute their GCD.
This approach is obviously equivalent to the standard two plain/ciphertext GCD attack on \MiMC, only difference is that Albrecht et al.\ did not consider the univariate polynomial construction to be for free.
Note that Albrecht et al.'s estimate can be refined by again utilizing the probabilistic FGLM algorithm as well as Bariant et al.'s factoring technique.\footnote{\label{Note: GCD}
    In \Cref{Tab: complexity estimates} we still use the standard GCD complexity estimate, since the \MiMC two plain/ciphertext and the \MiMC-$2n/n$ Gr\"obner basis attacks do not depend on the underlying field while Bariant et al.'s method does.
    }
On the other hand, we will discuss in \Cref{Ex: MiMC two plaintext attack I} that the joint polynomial system removes almost all superfluous solutions coming from the algebraic closure of $\Fq$.
Hence, the complexity of running FGLM on the joint system can be neglected after a Gr\"obner basis has been found.
As \Cref{Tab: complexity estimates} shows, \MiMC achieves a security level of at least $128$ bits for the two plain/ciphertexts Gr\"obner basis computation already for $50$ rounds.

For \MiMC-$2n/n$ one utilizes a two branch Feistel network to encrypt two field elements with one field element.
As consequence, one can represent the left and the right branch as univariate polynomials in the key variable of degrees $3^r$ and $3^{r - 1}$  respectively.
So we can again utilize the GCD to recover the key.
In \Cref{Prop: Feistel Groebner bases} we find a DRL Gr\"obner basis for \MiMC-$2n/n$ when the output of the right branch is ignored.
Moreover, the univariate polynomials that represent the left and the right branch are again present in the LEX Gr\"obner basis.
So once again, we can refine the complexity of this attack via the probabilistic FGLM algorithm and Bariant et al.'s factoring algorithm.\textsuperscript{\ref{Note: GCD}}
On the other hand, similar to the two plain/ciphertext attack on \MiMC the Gr\"obner basis computation on \MiMC-$2n/n$ removes almost all superfluous solutions coming from the algebraic closure of $\Fq$, see \Cref{Ex: Feistel MiMC I}.
So the complexity of term order conversion via FGLM can again be ignored.
As \Cref{Tab: complexity estimates} shows, \MiMC-$2n/n$ achieves a security level of at least $128$ bits against Gr\"obner basis computations already for $50$ rounds.

For Feistel-\MiMC-Hash one utilizes the \MiMC-$2n/n$ permutation in sponge mode, though for the hash function we have only one generic choice for the second polynomial to mount a GCD attack: the field equation.
Again, the complexity estimate of the GCD attack can be refined via the probabilistic FGLM algorithm and Bariant et al.'s factoring method.
As \Cref{Tab: complexity estimates} shows, Feistel-\MiMC-Hash also achieves a security level of at least $128$ bits for various field sizes with respect to the Gr\"obner basis computations with the field equation.

\Hades is a family of Substitution-Permutation Network (SPN) ciphers targeted for MPC applications.
In the \Hades proposal the designers analyze the keyed iterated polynomial system for the resistance against Gr\"obner basis attacks \cite[\S E.3]{EPRINT:GLRRS19}.\footnote{The keyed iterated polynomial model is called \emph{second strategy} in the \Hades proposal.}
We revisit this modeling, in particular we prove in \Cref{Th: SPN generic coordinates} that for a single plain/ciphertext pair one can produce a \Hades DRL Gr\"obner basis via affine transformations.
Moreover, any DRL Gr\"obner basis immediately implies being in generic coordinates (\Cref{Cor: DRL Groebner basis in generic coordinates}), so after the affine transformations we have proven complexity estimates for \emph{any} Gr\"obner basis computation on \Hades.
The \Hades designers on the other hand had to assume that the polynomial systems are generic in the sense of Fr\"oberg's conjecture \cite{Froeberg-Conjecture,Pardue-Generic} to derive complexity estimates.
Moreover, with the property of being in generic coordinates we can reproduce the complexity estimate of the designers as minimal baseline for all DRL Gr\"obner basis computations on the iterated polynomial model of \Hades.
Therefore, our Gr\"obner basis complexity estimates coincide with the cryptanalysis of the \Hades designers.
In \cite[Table~1]{EC:GLRRS20} round numbers for \Hades proposed, the \Hades parameters in \Cref{Tab: complexity estimates} are chosen so that every instance in \cite[Table~1]{EC:GLRRS20} exceeds at least one instance in \Cref{Tab: complexity estimates}.
As \Cref{Tab: complexity estimates} shows, all proposed \Hades instances achieve at least $128$ bits of security with respect to Gr\"obner basis computations.

Finally, to the best of our knowledge the keyed iterated polynomial system has not been considered for \GMiMC in the literature before.
The \GMiMC designers only considered models where the encryption function is represented in $n$ key variables for known plain/ciphertext pairs.
Moreover, they assumed that \GMiMC polynomial systems behave like generic polynomial systems in the sense of Fr\"oberg's conjecture \cite{Froeberg-Conjecture,Pardue-Generic} to derive complexity estimates.
For \GMiMC with contracting round function (crf) they derived the estimate $\binom{n + 3^{r - 2 \cdot n + 2}}{3^{r - 2 \cdot n + 2}}^\omega$ \cite[\S 4.1.2]{ESORICS:AGPRRRRS19}, and for \GMiMC with expanding round function (erf) they derived the estimate $\binom{n + 3^{r - n}}{3^{r - n}}^\omega$ \cite[\S C.3]{EPRINT:AGPRRRRS19} for Gr\"obner basis computations.
On the other hand, in \Cref{Ex: GMiMC} we will see that \GMiMCcrf and \GMiMCerf share the same complexity estimate for the keyed iterated polynomial system, provided that they are in generic coordinates.
In particular, the complexity estimate does not depend on the number of branches $n$.
Moreover, being in generic coordinates for \GMiMC can be verified by computing the rank of a linear equation system, see \Cref{Th: Feistel generic generators criteria}.
As \Cref{Tab: complexity estimates} shows, $50$ rounds are sufficient to achieve at least $128$ bits of security for \GMiMC.
\begin{table}[H]
    \centering
    \caption{Comparison of Gr\"obner basis complexity estimates for \MiMC, \MiMC-$2n/n$, Feistel-\MiMC-Hash, \Hades and \GMiMC with established cryptanalysis.
             With $r$ we denote the number of rounds of a primitive, with $n$ the number of blocks, with $d$ the degree of a power permutation and with $m$ the number of samples for an attack.
             The total number of $\Hades$ rounds is given by $r = 2 \cdot r_f + r_p$.
             For all complexities the linear algebra constant $\omega = 2$ has been used.}
    \label{Tab: complexity estimates}
    \resizebox{\linewidth}{!}{
        \begin{tabular}{ c | c | M{25mm} || c | c }
            \toprule
            & & & \multicolumn{2}{ c }{Established Cryptanalysis} \\
            \midrule
            Primitive & Parameters & Gr\"obner Basis Complexity (bits) & Complexity (bits) & Attack Strategy \\
            \midrule

            \multirow{5}{*}{\MiMC} & $\log_2 \left( q \right) = 64$, $r = 50$  & $337.5$ & $164.1$ & Probabilistic FGLM + Efficient factoring \\
            & $\log_2 \left( q \right) = 128$, $r = 81$  & $527.4$ & $263.1$ & Probabilistic FGLM + Efficient factoring \\
            & $\log_2 \left( q \right) = 256$, $r = 162$ & $1156.2$ & $520.9$ & Probabilistic FGLM + Efficient factoring\\
            & $r = 10$, $m = 2$ & $99.4$ & $36.0$ & Probabilistic FGLM + GCD \\
            & $r = 50$, $m = 2$ & $538.1$ & $165.1$ & Probabilistic FGLM + GCD \\
            \midrule

            \multirow{2}{*}{\MiMC-$2n/n$} & $r = 10$ & $48.6$ & $35.0$ & Probabilistic FGLM + GCD \\
            & $r = 50$ & $266.7$ & $164.1$ & Probabilistic FGLM + GCD \\
            \midrule

            \multirow{3}{*}{Feistel-\MiMC-Hash} & $\log_2 \left( q \right) = 64$, $r = 51$ & $337.5$ & $167.3$ & Probabilistic FGLM + Efficient factoring \\
            & $\log_2 \left( q \right) = 128$, $r = 82$  & $527.4$ & $266.2$ & Probabilistic FGLM + Efficient factoring \\
            & $\log_2 \left( q \right) = 256$, $r = 163$ & $1156.2$ & $524.0$ & Probabilistic FGLM + Efficient factoring \\
            \midrule

            \multirow{6}{*}{\Hades} & $r_f = 3$, $r_p = 13$, $n = 2$, $d = 3$ & $130.0$ & $130.0$ & Gr\"obner basis computation \\
            & $r_f = 4$, $r_p = 10$, $n = 2$, $d = 3$ & $135.4$ & $135.4$ & Gr\"obner basis computation \\
            & $r_f = 5$, $r_p = 5$, $n = 2$, $d = 3$  & $130.0$ & $130.0$ & Gr\"obner basis computation \\
            & $r_f = 3$, $r_p = 10$, $n = 2$, $d = 5$ & $149.0$ & $149.0$ & Gr\"obner basis computation \\
            & $r_f = 4$, $r_p = 10$, $n = 2$, $d = 5$ & $177.5$ & $177.5$ & Gr\"obner basis computation \\
            & $r_f = 5$, $r_p = 4$, $n = 2$, $d = 5$  & $163.3$ & $163.3$ & Gr\"obner basis computation \\
            \midrule

            \multirow{6}{*}{\GMiMC} & $r = 10$, $n = 3$, $d = 3$ & $48.6$ & crf: $51.9$, erf: $61.4$ & Gr\"obner basis computation  \\
            & $r = 25$, $n = 3$, $d = 3$ & $130.0$ & crf: $194.5$, erf: $204.0$ & Gr\"obner basis computation  \\
            & $r = 50$, $n = 3$, $d = 3$ & $266.7$ & crf: $432.3$, erf: $441.8$ & Gr\"obner basis computation  \\
            & $r = 10$, $n = 3$, $d = 5$ & $63.5$ & crf: $78.4$, erf: $92.4$ & Gr\"obner basis computation  \\
            & $r = 25$, $n = 3$, $d = 5$ & $170.5$ & crf: $287.4$, erf: $301.3$ & Gr\"obner basis computation  \\
            & $r = 50$, $n = 3$, $d = 5$ & $350.0$ & crf: $635.7$, erf: $649.6$ & Gr\"obner basis computation  \\

            \bottomrule
        \end{tabular}
    }
\end{table}

\subsubsection{Organization of the Paper}
In \Cref{Sec: Preliminaries} we will formally introduce univariate keyed iterated polynomial systems (\Cref{Sec: keyed iterated polynomial systems}), the \MiMC cipher, Feistel-$2n/n$ networks, and recall required definitions and results for the solving degree (\Cref{Sec: solving degree}) and generic coordinates (\Cref{Sec: solving degree and Castelnuovo-Mumford regularity}).
In \Cref{Sec: characterization generic coordinates} we prove that being in generic coordinates is equivalent for the ideal of the highest degree components to be zero-dimensional (\Cref{Th: generic coordinates and highest degree components}).
Moreover, we prove that a large class of univariate keyed iterated polynomial systems, including \MiMC polynomial systems, is already in generic coordinates (\Cref{Th: iterated system generic coordinates}).
As preparation for our bounds on the solving degree we study in \Cref{Sec: lex Groebner Basis} properties of the lexicographic Gr\"obner basis of the univariate keyed iterated polynomial system and Feistel-$2n/n$.
In \Cref{Sec: applications} we finally provide upper bounds for the solving degree of various attacks on \MiMC and \MiMC-$2n/n$.
In \Cref{Sec: multivariate ciphers} we extend our framework to multivariate ciphers, in particular we investigate when the keyed iterated polynomial systems for Substitution-Permutation and generalized Feistel Networks are in generic coordinates.
With our formalism we can then demonstrate that the security analysis of \Hades and \GMiMC against Gr\"obner basis attacks is indeed mathematically sound.
In \Cref{Fig: relation graph first part} we provide a directed graph to illustrate the derivation of the main results of the first part of the paper.
\begin{figure}[H]
    \centering
    \caption{Graphical overview for the development of solving degree upper bounds.}
    \label{Fig: relation graph first part}
    \resizebox{\textwidth}{!}{
        \begin{tikzpicture}
            \coordinate (a) at (0,0);
            \coordinate (b) at (-3,0);
            \coordinate (c) at (-1.5,-1.5);
            \coordinate (d) at (-1.5,-3);
            \coordinate (e) at (-6,-1.5);
            \coordinate (f) at (-6,-5);
            \coordinate (g) at (-8,-7.5);
            \coordinate (h) at (-8,-9.5);
            \coordinate (i) at (2.5,-7.5);
            \coordinate (j) at (-2.5,-7.5);
            \coordinate (k) at (2.5,-9.5);
            \coordinate (l) at (-2.5,-9.5);

            \draw (a) node[draw, align=left] (A) {\Cref{Def: solving degree}: \\ Solving degree};
            \draw (b) node[draw, align=left] (B) {\Cref{Def: generic coordinates}: \\ Generic coordinates};
            \draw (c) node[draw, align=left] (C) {\Cref{Th: solvdeg and CM-regularity}: \\ Solving degree \& regularity};
            \draw (d) node[draw, align=left] (D) {\Cref{Cor: Macauly bound}: \\ Macaulay bound};
            \draw (e) node[draw, align=left] (E) {\Cref{Sec: Caminata Gorla technique}: \\ Caminata-Gorla technique};
            \draw (f) node[draw, align=left] (F) {\Cref{Sec: characterization generic coordinates}: Characterization \\ of generic coordinates};
            \draw (g) node[draw, align=left] (G) {\Cref{Sec: applications}: \\upper bounds for attacks \\ on \MiMC};
            \draw (h) node[draw, align=left] (H) {\Cref{Sec: lex Groebner Basis}: LEX \& DRL Gr\"obner \\ bases of keyed iterated polynomial \\ systems};
            \draw (i) node[draw, align=left] (I) {\Cref{Sec: Feistel}: Feistel cipher \\ in generic coordinates};
            \draw (j) node[draw, align=left] (J) {\Cref{Sec: SPN}: SPN cipher \\ in generic coordinates};
            \draw (k) node[draw, align=left] (K) {\Cref{Ex: GMiMC}: \GMiMC};
            \draw (l) node[draw, align=left] (L) {\Cref{Ex: Hades}: \Hades};

            \node[draw, dashed, inner sep=3mm, label={[align=left]above:\Cref{Sec: Preliminaries}: Solving degree \& generic coordinates}, fit=(A) (E) (D) (E)] {};
            \node[draw, dashed, inner sep=3mm, label={[align=left]below:\Cref{Sec: multivariate ciphers}: Multivariate ciphers \\ in generic coordinates}, fit=(I) (L) (I) (J)] {};

            \draw [fill=black] (F) ++(0,-1) circle[radius=2pt];
            \draw [fill=black] (F) ++(0,-1) -- ++(3.5,0) circle[radius=2pt];
            \draw [fill=black] (D) ++(0,-1.5) ++(1.5,0) ++(0,-5) circle[radius=2pt];

            \draw[arrows = {-Stealth[]}] (A) -- ++(0,-1);
            \draw[arrows = {-Stealth[]}] (B) -- ++(0,-1);
            \draw[arrows = {-Stealth[]}] (C) -- (D);
            \draw[arrows = {-Stealth[]}] (B) -- ++(-3,0) -- (E);
            \draw[arrows = {-Stealth[]}] (E) -- (F);
            \draw[arrows = {-Stealth[]}] (F) -- ++(0,-1) -- ++(-2,0) -- (G);
            \draw[arrows = {-Stealth[]}] (H) -- (G);
            \draw[arrows = {-Stealth[]}] (F) -- ++(0,-1) -- ++(3.5,0) -- (J);
            \draw[arrows = {-Stealth[]}] (F) -- ++(0,-1) -- ++(3.5,0) -- ++(5,0) -- (I);
            \draw[arrows = {-Stealth[]}] (J) -- (L);
            \draw[arrows = {-Stealth[]}] (I) -- (K);
            \draw[arrows = {-Stealth[]}] (D) -- ++(0,-1.5) -- ++(1.5,0) -- ++(0,-5) -- (K);
            \draw[arrows = {-Stealth[]}] (D) -- ++(0,-1.5) -- ++(1.5,0) -- ++(0,-5) -- (L);
            \draw[arrows = {-Stealth[]}] (D) -- ++(-9,0) -- ++(0,-4.5) -- (G);
        \end{tikzpicture}
    }
\end{figure}
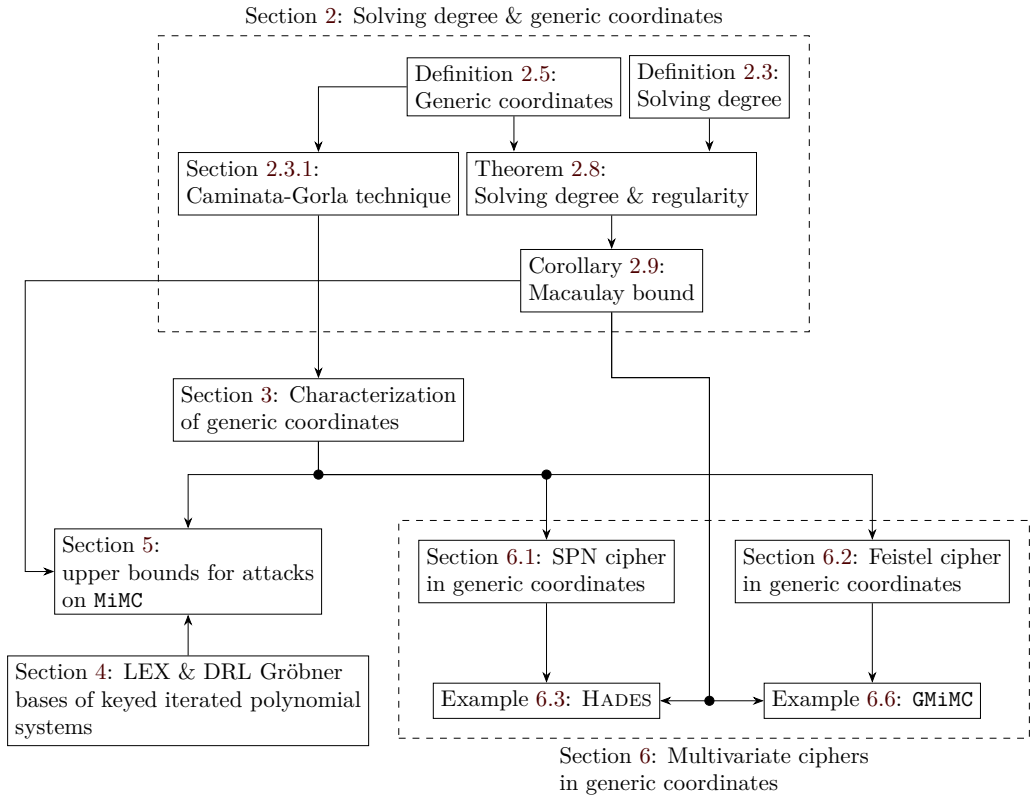

In \Cref{Sec: satiety and polynomials with degree falls} we investigate polynomials with degree falls and the last fall degree.
In particular, we establish that for a polynomial system in generic coordinates the last fall degree is equal to the satiety (\Cref{Th: last fall degree finite in generic coordinates}).
In \Cref{Sec: lower bounds} we construct polynomials with degree falls for the keyed iterated polynomial systems for univariate ciphers and Feistel-$2n/n$.
Finally, this yields regularity lower bounds for various attacks on \MiMC, Feistel-\MiMC and Feistel-\MiMC-Hash.
In \Cref{Fig: relation graph second part} we provide a directed graph to illustrate the derivation of the main results of the second part of the paper.
\begin{figure}[H]
    \centering
    \caption{Graphical overview for the development of satiety lower bounds.}
    \label{Fig: relation graph second part}
    \resizebox{0.85\textwidth}{!}{
        \begin{tikzpicture}
            \coordinate (a) at (0,0);
            \coordinate (b) at (-3,0);
            \coordinate (c) at (-1.5,-1.5);
            \coordinate (d) at (-7,-1.5);
            \coordinate (e) at (-1.5,-3.5);
            \coordinate (f) at (-8,-3.5);

            \draw (a) node[draw, align=left] (A) {\Cref{Def: last fall degree}: \\ Last fall degree};
            \draw (b) node[draw, align=left] (B) {\Cref{Def: satiety}: \\ Satiety};
            \draw (c) node[draw, align=left] (C) {\Cref{Th: last fall degree finite in generic coordinates}: \\ Last fall degree \& satiety};
            \draw (d) node[draw, align=left] (D) {\Cref{Sec: solving degree and Castelnuovo-Mumford regularity}: \\ Generic coordinates};
            \draw (e) node[draw, align=left] (E) {\Cref{Sec: lower bounds}: Last fall degree lower \\ bounds for attacks on \MiMC};
            \draw (f) node[draw, align=left] (F) {\Cref{Sec: lex Groebner Basis}: LEX \& DRL Gr\"obner \\ bases of keyed iterated polynomial \\ systems};

            \node[draw, dashed, inner sep=3mm, label={[align=left]above:\Cref{Sec: satiety and polynomials with degree falls}: Satiety \& polynomials \\ with degree falls}, fit=(A) (C) (A) (B)] {};

            \draw[arrows = {-Stealth[]}] (A) -- ++(0,-1);
            \draw[arrows = {-Stealth[]}] (B) -- ++(0,-1);
            \draw[arrows = {-Stealth[]}] (D) -- (C);
            \draw[arrows = {-Stealth[]}] (C) -- (E);
            \draw[arrows = {-Stealth[]}] (F) -- (E);
        \end{tikzpicture}
    }
\end{figure}
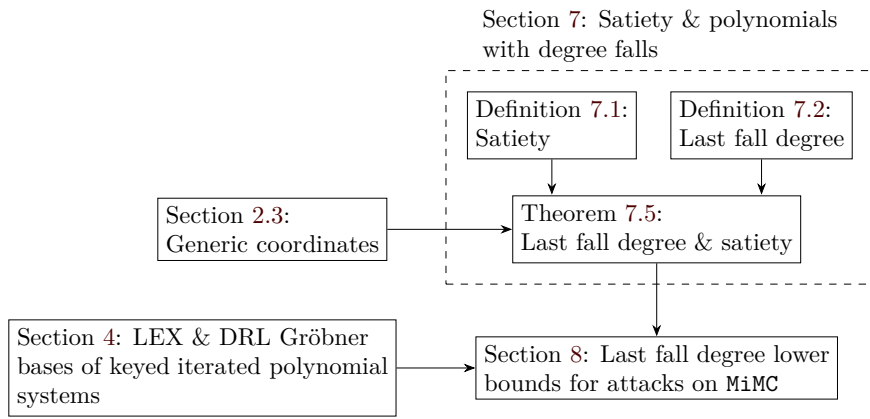

Finally, we finish with a short discussion in \Cref{Sec: discussion}.

    \section{Preliminaries}\label{Sec: Preliminaries}
By $K$ we will always denote a field, by $\bar{K}$ its algebraic closure, and we abbreviate the polynomial ring $P = K [x_1, \dots, x_n]$ if the base field and the number of variables are clear from context.
If $I \subset K [x_1, \dots, x_n]$ is an ideal, then we denote the zero locus of $I$ over $\bar{K}$ as
\begin{equation}
    \mathcal{Z} \left( I \right) = \left\{ \mathbf{p} \in \bar{K}^n \mid f (\mathbf{p}) = 0,\  \forall f \in I \right\} \subset \AffSp{n}{\bar{K}}.
\end{equation}
If moreover $I$ is homogeneous, then we denote the projective zero locus over $\bar{K}$ by $\mathcal{Z}_+ \left( I \right) \subset \ProjSp{n - 1}{\bar{K}}$.

Let $f \in K [x_1, \dots, x_n]$ be a polynomial, and let $x_0$ be an additional variable, we call
\begin{equation}
    f^\homog (x_0, \dots, x_n) = x_0^{\degree{f}} \cdot f \left( \frac{x_1}{x_0}, \dots, \frac{x_n}{x_0} \right) \in K [x_0, \dots, x_n]
\end{equation}
the homogenization of $f$ with respect to $x_0$, and analog for the homogenization of ideals $I^\homog = \left\{ f^\homog \mid f \in I \right\}$ and finite systems of polynomials $\mathcal{F}^\homog = \left\{ f_1^\homog, \dots, f_m^\homog \right\}$.
Let $F \in K [x_0, \dots, x_n]$ be a homogeneous polynomial, we call
\begin{equation}
    F^\dehom (x_1, \dots, x_n) = F (1, x_1, \dots, x_n) \in K [x_1, \dots, x_n]
\end{equation}
the dehomogenization of $F$ with respect to $x_0$, and analog for the dehomogenization of homogeneous ideals $I^\dehom = \left\{ f^\dehom \mid f \in I \right\}$.
Further, we will always assume that we can extend a term order on $K [x_1, \dots, x_n]$ to a term order on $K [x_0, \dots, x_n]$ according to \cite[Definition~8]{Caminata-SolvingPolySystems}.

For a homogeneous ideal $I \subset P$ and an integer $d \geq 0$ we denote
\begin{equation}
    I_d = \left\{ f \in I \mid \degree{f} = d,\ f \text{ homogeneous} \right\},
\end{equation}
and for inhomogeneous ideals $I \subset P$ we denote
\begin{equation}
    I_{\leq d} = \left\{ f \in I \mid \degree{f} \leq d \right\}.
\end{equation}

For a term order $>$ and an ideal $I \subset P$ we denote with
\begin{equation}
    \inid_> (I) = \{ \LT_> (f) \mid f \in I \}
\end{equation}
the initial ideal of $I$, i.e.\ the ideal of leading terms of $I$, with respect to $>$.

Every polynomial $f \in [x_1, \dots, x_n]$ can be written as $f = f_d + f_{d - 1} + \ldots + f_0$, where $f_i$ is homogeneous of degree $i$.
We denote the highest degree component $f_d$ of $f$ with $f^\topcomp$, and analog we denote $\mathcal{F}^\topcomp = \left\{ f_1^\topcomp, \dots, f_m^\topcomp \right\}$.

Let $I, J \subset K [x_1, \dots, x_n]$ be ideals, then we denote with
\begin{equation}
    I : J = \left\{ f \in K [x_1, \dots, x_n] \mid \forall g \in J \colon f \cdot g \in I \right\}
\end{equation}
the usual ideal quotient, and with
\begin{equation}
    I : J^\infty = \bigcup_{i \geq 1} I : J^i
\end{equation}
the saturation of $I$ with respect to $J$.

Let $I, \mathfrak{m} \in K [x_0, \dots, x_n]$ be homogeneous ideals where $\mathfrak{m} = (x_0, \dots, x_n)$, then we call $I^{\sat} = I : \mathfrak{m}^\infty$ the saturation of $I$.

Let $>$ be a term order on $P$, we recall the definition of Buchberger's S-polynomial of $f, g \in P$ with respect to $>$ (cf.\ \cite[Chapter~2~\S6~Definition~4]{Cox-Ideals}).
Denote with $x^\gamma = \lcm \big( \LT_> (f), \LT_> (g) \big)$, then the S-polynomial is defined as
\begin{equation}
    S_> (f, g) = \frac{x^\gamma}{\LT_> (f)} \cdot f - \frac{x^\gamma}{\LT_> (g)} \cdot g.
\end{equation}

We will often encounter the lexicographic and the degree reverse lexicographic term order which we will abbreviate as LEX and DRL respectively.

\subsection{Keyed Iterated Polynomial Systems}\label{Sec: keyed iterated polynomial systems}
A natural description of a univariate keyed function over a finite field is to write the function as composition of low degree polynomials.
This idea leads us to the general notion of keyed iterated polynomial systems.
\begin{defn}[Univariate keyed iterated polynomial system]\label{Def: keyed iterated polynomial system}
    Let $K$ be a field, let $g_1, \dots, g_n \in K [x, y]$ be non-constant polynomials, and let $p, c \in K$ be field elements which will commonly be called plain/ciphertext pair. We say that $f_1, \dots, f_n \in K [x_1, \dots, x_{n - 1}, y]$ is a univariate keyed iterated polynomial system, if the polynomials are of the form
    \begin{align}
        f_1 &= g_1(p, y) - x_1, \nonumber \\
        f_2 &= g_2(x_1, y) - x_2, \nonumber \\
        &\dots \nonumber \\
        f_n &= g_n(x_{n - 1}, y) - c. \nonumber
    \end{align}
    Moreover, we require that
    \[
    \mathcal{Z} (f_1, \dots, f_n) \cap K^n \neq  \emptyset.
    \]
\end{defn}
Before we continue we discuss why the zero locus must contain $K$-valued points.
Let us for the moment replace $p$ with the symbolic variable $x$ and ignore $c$.
Iteratively we can now substitute $f_1, \dots, f_{n - 1}$ into $g_n (x_{n - 1}, y)$, then we obtain a polynomial $f$ in the variables $x$ and $y$.
We can view $f: K \times K \to K$ as a keyed function, where $y$ is the key variable.
The intersection condition states that if $f(p, y) = c$, then there must exist $y \in K$ that satisfies the equation.
I.e., all computations involving a Gr\"obner basis for $f_1, \dots, f_n$ are non-trivial, that is $1 \notin (f_1, \dots, f_n)$.

\subsubsection{\MiMC}\label{Sec: MiMC}
Our main example of a univariate keyed iterated polynomial system is \MiMC, an AO cipher proposed in \cite[\S 2.1]{AC:AGRRT16}.
It is based on the cubing map $x \mapsto x^3$ over finite fields.
If $\Fq$ is a field with $q$ elements, then cubing induces a permutation if $\gcd \left( 3, q - 1 \right) = 1$, see \cite[7.8.~Theorem]{Niederreiter-FiniteFields}.
Let $k \in \Fq$ denote the key, let $r \in \mathbb{N}$ be the number of rounds, and let $c_1, \dots, c_r \in \Fq$ be round constants.
Then the round function of \MiMC is defined as
\begin{equation}
    F_{i, k} (x) =
    \begin{dcases}
        (x + k + c_i)^3,           & 1 \leq  i \leq r - 1, \\
        (x + k + c_r)^3 + k,  &   i = r.
    \end{dcases}
\end{equation}
The \MiMC cipher function is now defined as
\begin{equation}
    F (x, k) = F_{r, k} \circ \cdots \circ F_{1, k} (x),
\end{equation}
which is a permutation for every fixed key $k$.
Given a plain/ciphertext pair $(p, c) \in \Fq^2$ it is straight-forward to describe the univariate keyed iterated polynomial system $I_\MiMC \subset \Fq[x_1, \dots, x_{r - 1}, y]$ for \MiMC
\begin{align}
    \left( p + y + c_1 \right)^3 - x_1 &= 0, \nonumber \\
    \left( x_{i - 1} + y + c_i \right)^3 - x_i &= 0, \quad 1 \leq i \leq r - 1, \\
    \left( x_{r - 1} + y + c_r \right)^3 + y - c &= 0. \nonumber
\end{align}
It was first observed in \cite{AC:ACGKLRS19} that for the DRL term order this system is already a Gr\"obner basis.
It is now straight-forward to compute that
\begin{equation}
    \dim_{\Fq} \big( \Fq[x_1, \dots, x_{r - 1}, y] / I_\MiMC \big) = 3^r.
\end{equation}
For all proposals of \MiMC one has that at least $r \geq 60$.
Hence, using this Gr\"obner basis we do not expect a successful key recovery with today's computational capabilities.

\subsubsection{Feistel-\MiMC}
With the Feistel network we can construct block ciphers with cubing as round function.
Note that a Feistel network induces a permutation irrespective of the size or characteristic of the finite field $\Fq$.
A very special case is the Feistel-$2n/n$ network which encrypts two message blocks of size $n$ with a key of size $n$.
As previously, let $\Fq$ be a finite field, let $r$ be the number of rounds, let $k \in \Fq$ denote the key, and let $c_1, \dots, c_r \in \Fq$ be round constants.
Then the \MiMC-$2n/n$ \cite[\S 2.1]{AC:AGRRT16} round function is defined as
\begin{equation}\label{Equ: Feistel-MiMC}
    F_{i, k}
    \begin{pmatrix}
        x_L \\ x_R
    \end{pmatrix}
    =
    \begin{dcases}
        \begin{pmatrix}
            x_R + \left( x_L + k + c_i \right)^3 \\ x_L
        \end{pmatrix}
        ,   &   1 \leq i \leq r - 1, \\
        \begin{pmatrix}
            x_R + \left( x_L + k + c_r \right)^3  + k \\ x_L
        \end{pmatrix}
        ,   & i = r.
    \end{dcases}
\end{equation}
Again the cipher is defined as iteration of the round functions with respect to the plaintext variables
\begin{equation}
    F_k \left( x_L, x_R \right) = F_{r, k} \circ \cdots \circ F_{0, k} \left( x_L, x_R \right).
\end{equation}

Analog to \MiMC we can model Feistel-\MiMC with a ``multivariate'' system of keyed iterated polynomials.
\begin{defn}[Keyed iterated polynomial system for Feistel-$2n/n$]\label{Def: Feistel keyed iterated polynomial system}
    Let $K$ be a field, let $g_1, \dots, g_n \in K [x, y]$ be non-constant polynomials, and let $(p_L, p_R), (c_L, c_R) \in K^2$ be field elements which will commonly be called plain/ciphertext pair.
    We say that $f_{L, 1}, f_{R, 1}, \dots, f_{L, n}, f_{R, n} \allowbreak \in K [x_{L, 1}, x_{R, 1}, \dots, x_{L, n - 1}, x_{R, n - 1}, y]$ is keyed iterated polynomial system for Feistel-$2n/n$, if the polynomials are of the form
    \[
    \begin{pmatrix}
        f_{L, i} \\ f_{R, i}
    \end{pmatrix}
    =
    \begin{dcases}
        \begin{pmatrix}
            p_R + g_1 (p_L, y) - x_{L, 1} \\
            p_L - x_{R, 1}
        \end{pmatrix}
        , & i = 1, \\
        \begin{pmatrix}
            x_{R, i - 1} + g_i (x_{L, i - 1}, y) - x_{L, i} \\
            x_{L, i - 1} - x_{R, i}
        \end{pmatrix}
        , & 2 \leq i \leq n - 1 \\
        \begin{pmatrix}
            x_{R, n - 1} + g_n ( x_{L, n - 1}, y) - c_L \\
            x_{L, n - 1} - c_R
        \end{pmatrix}
        , & i = n.
    \end{dcases}
    \]
    Moreover, we require that
    \[
    \mathcal{Z} \left( f_{L, 1}, f_{R, 1}, \dots, f_{L, n}, f_{R, n} \right) \cap K^{2n - 1} \neq \emptyset.
    \]
\end{defn}

\subsection{Linear Algebra-Based Gr\"obner Basis Algorithms \& the Solving Degree}\label{Sec: solving degree}
Let $I \subset P = K [x_1, \dots, x_n]$ be an ideal, and let $>$ be a term order on $P$.
A finite basis $\mathcal{G} = \{ g_1, \dots, g_m \}$ of $I$ is said to be a $>$\emph{-Gr\"obner basis} \cite{Buchberger} if $\inid_> (I) = \big( \LT_> (g_1), \dots, \LT_> (g_m) \big)$.
For any term order $>$ on $P$ and any non-trivial ideal $I$ a finite $>$-Gr\"obner basis exists.
For a general introduction to Gr\"obner bases we refer to \cite{Cox-Ideals}.

Today two classes of Gr\"obner basis algorithms are known: \emph{Buchberger's algorithm} and \emph{linear algebra-based algorithms}.
In this paper we are only concerned with the latter.
These algorithms perform Gaussian elimination on the \emph{Macaulay matrices} which under certain conditions produces a Gr\"obner basis.
This idea can be traced back to \cite{Lazard-Groebner}, examples for modern linear algebra-based algorithms are F4 \cite{Faugere-F4} and Matrix-F5 \cite{Faugere-F5}.

The Macaulay matrices are defined as follows, let $\mathcal{F} = \{ f_1, \dots, f_m \} \subset P $ be a system of homogeneous polynomials and fix a term order $>$.
The \textit{homogeneous Macaulay matrix} $M_d$ has columns indexed by monomials in $P_d$ sorted from left to right with respect to $>$, and the rows of $M_d$ are indexed by polynomials $s \cdot f_i$, where $s \in P$ is a monomial such that $\deg \left( s \cdot f_i \right) = d$.
The entry of the row $s\cdot f_i$ at the column $t$ is then simply the coefficient of the polynomial $s \cdot f_i$ at the monomial $t$.
For an inhomogeneous system we replace $M_d$ with $M_{\leq d}$ and similar the degree equality with an inequality.
By performing Gaussian elimination on $M_0, \dots, M_d$ respectively $M_{\leq d}$ for a large enough value of $d$ one produces a $>$-Gr\"obner basis for $\mathcal{F}$.

Obviously, the sizes of the Macaulay matrices $M_{ d}$ and $M_{\leq d}$ depend on $d$, therefore following the idea of \cite{Ding-SolvingDegree} we define the solving degree as follows.
\begin{defn}[{Solving degree, \cite[Definition~6]{Caminata-SolvingPolySystems}}]\label{Def: solving degree}
    Let $\mathcal{F} = \{ f_1, \dots, f_m \} \subset K [x_1, \dots, \allowbreak x_n]$ and let $>$ be a term order.
    The solving degree of $\mathcal{F}$ is the least degree $d$ such that Gaussian elimination on the Macaulay matrix $M_{\leq d}$ produces a Gr\"obner basis of $\mathcal{F}$ with respect to $>$. We denote it by $\solvdeg_> (\mathcal{F})$.

    If $\mathcal{F}$ is homogeneous, we consider the homogeneous Macaulay matrix $M_d$ and let the solving degree of $\mathcal{F}$ be the least degree $d$ such that Gaussian elimination on $M_0, \dots, M_d$ produces a Gr\"obner basis of $\mathcal{F}$ with respect to $>$.
\end{defn}

Algorithms like F4/5 perform Gaussian elimination on the Macaulay matrix for increasing values of $d$, such an algorithm needs a stopping criterion to decide whether a Gr\"obner basis has already been found. Algorithms like the method we described perform Gaussian elimination on a single matrix $M_{\leq d}$ for a large enough value of $d$.
For this class of algorithms one would like to find sharp bounds on $d$ via the solving degree to keep the Macaulay matrix as small as possible.
Nevertheless, for both classes of algorithms one may choose to artificially stop a computation in the degree corresponding to the solving degree.
Due to this reason we consider the solving degree as a complexity measure of Gr\"obner basis computations and do not discuss termination criteria further.

Let $\mathcal{F} = \{ f_1, \dots, f_m \} \subset P$ be a system of polynomials, and let $\mathcal{F}^\homog$ be its homogenization in $P [x_0]$.
One has that $\left( \mathcal{F}^\homog \right) \subseteq \left( \mathcal{F} \right)^\homog$, and it is easy to construct examples for which the inclusion is strict.
Nevertheless, it was demonstrated in \cite[Theorem~7]{Caminata-SolvingPolySystems} that for the DRL term order one still has that
\begin{equation}
    \solvdeg_{DRL} \left( \mathcal{F} \right) \leq \solvdeg_{DRL} \left( \mathcal{F}^\homog \right).
\end{equation}

\subsubsection{Complexity Estimates via the Solving Degree}\label{Sec: complexity estimate}
Storjohann \cite[\S 2.2]{Storjohann-Matrix} has shown that a reduced row echelon form of a matrix $\mathbf{A} \in K^{M \times N}$, where $K$ is a field and $r = \rank \left( \mathbf{A} \right)$, can be computed in $\mathcal{O} \left( M \cdot N \cdot r^{\omega - 2} \right)$ field operations, where $2 \leq \omega < 2.37286$ is a linear algebra constant \cite{SODA:AlmWil21}.

Let $\mathcal{F} = \{ f_1, \dots, f_m \} \subset P = K [x_1, \dots, x_n]$ be a system of homogeneous polynomials.
It is well-known that the number of monomials in $P$ of degree $d$ is given by the binomial coefficient
\begin{equation}
    N (n, d) = \binom{n + d - 1}{d}.
\end{equation}
So the Macaulay matrix $M_d$ has $N (n, d)$ many columns and $N \big( n, d - \deg (f_1) \big) + \ldots + N \big( n, d - \deg (f_m) \big)$ many rows, hence we can upper bound the size of $M_{d}$ by $m \cdot N (n, d) \times N (n, d)$.
Overall we can estimate the complexity of Gaussian elimination on the Macaulay matrices $M_0, \dots, M_d$ by
\begin{equation}\label{Equ: Groebner basis complexity}
    \mathcal{O} \left( m \cdot d \cdot \binom{n + d - 1}{d}^\omega \right).
\end{equation}
Now let $\mathcal{F} \subset P$ be an inhomogeneous polynomial system and let $\mathcal{F}^\homog \subset P [x_0]$ be its homogenization.
If $\mathcal{G}$ is a DRL Gr\"obner basis of $\mathcal{F}^\homog$, then $\mathcal{G}^\dehom$ is a DRL Gr\"obner basis of $\mathcal{F}$, see \cite[Proposition~4.3.18]{Kreuzer-CompAlg2}.
Therefore, we can also consider \Cref{Equ: Groebner basis complexity} as complexity estimate for inhomogeneous Gr\"obner basis computations.

For ease of numerical computation we approximate the binomial coefficient with
\begin{equation}
    \binom{n}{k} \approx \sqrt{\frac{n}{\pi \cdot k \cdot (n - k)}} \cdot 2^{n \cdot H_2 (k / n)},
\end{equation}
where $H_2 (p) = -p \cdot \log_2 \left( p \right) - \left( 1 - p \right) \cdot \log_2 \left( 1 - p \right)$ denotes the binary entropy (cf.\ \cite[Lemma~17.5.1]{Cover-InformationTheory}).
Moreover, since in general $N (n, d) \gg m \cdot d$ we absorb the factor $m \cdot d$ into the implied constant.
Therefore, for solving degree $d$ and number of variables $n$, we estimate the bit complexity $\kappa$ of a Gr\"obner basis attack via
\begin{equation}\label{Equ: bit complexity estimate}
    \kappa \approx \omega \cdot \left( \frac{1}{2} \cdot \log_2 \left( \frac{n + d - 1}{\pi \cdot d \cdot (n - 1)} \right) + (n + d - 1) \cdot H_2 \left( \frac{d}{n + d - 1} \right) \right).
\end{equation}

\subsection{Solving Degree \& Castelnuovo-Mumford Regularity}\label{Sec: solving degree and Castelnuovo-Mumford regularity}
The mathematical foundation to estimate the solving degree via the Macaulay bound draws heavily from commutative and homological algebra.
For readers unfamiliar with the latter subject we point out that \Cref{Def: generic coordinates}, the notion of generic coordinates, is the key mathematical technique in this paper.
Although this notion dates at least back to the influential work of Bayer \& Stillman \cite{BayerStillman}, it was just recently revealed by Caminata \& Gorla \cite{Caminata-SolvingPolySystems} that for the DRL term order the solving degree of a polynomial system in generic coordinates can always be upper bounded by the Macaulay bound.
Although the theory requires heavy mathematical machinery, we will discuss in \Cref{Sec: Caminata Gorla technique} that being in generic coordinates can be verified with rather simple arithmetic operations.
For a concise treatment and as reference point for interested readers we now introduce the mathematical details that serve as foundation of our theory.

The Castelnuovo-Mumford regularity is a well-established invariant from commutative algebra and algebraic geometry.
We recap the definition from \cite[Chapter~4]{Eisenbud-Syzygies}.
Let $P = K [x_0, \dots, x_n]$ be the polynomial ring and let
\begin{equation}
    \mathbf{F}: \cdots \rightarrow F_i \rightarrow F_{i - 1} \rightarrow \cdots
\end{equation}
be a graded complex of free $P$-modules, where $F_i = \sum_j P(-a_{i, j})$.
\begin{defn}
    The Castelnuovo-Mumford regularity of $\mathbf{F}$ is defined as
    \[
    \reg \left( \mathbf{F} \right) = \sup_i a_{i,j} - i.
    \]
\end{defn}
By Hilbert's Syzygy theorem \cite[Theorem~1.1]{Eisenbud-Syzygies} any finitely graded $P$-module has a finite free graded resolution.
I.e., for every homogeneous ideal $I \subset P$ the regularity of $I$ is computable.

Before we can introduce the connection between Castelnuovo-Mumford regularity and solving degree we must introduce the notion of generic coordinates from \cite{BayerStillman}.
Let $I \subset P$ be an ideal, and let $f \in P$.
We use the shorthand notation ``$f \nmid 0 \mod I$'' for expressing that $f$ is not a zero-divisor on $P / I$.
\begin{defn}[{\cite[Definition~5]{Caminata-SolvingPolySystems,Caminata-SolvingPolySystemsPreprint}}]\label{Def: generic coordinates}
    Let $K$ be an infinite field.
    Let $I \subset K [x_0, \dots, \allowbreak x_n]$ be a homogeneous ideal with $| \mathcal{Z}_+ (I) | < \infty$.
    We say that $I$ is in generic coordinates if either $| \mathcal{Z}_+ (I) | = 0$ or $x_0 \nmid 0 \mod I^{{\sat}}$.

    Let $K$ be any field, and let $K \subset L$ be an infinite field extension.
    $I$ is in generic coordinates over $K$ if $I \otimes_K L [x_0, \dots, x_n] \subset L [x_0, \dots, x_n]$ is in generic coordinates.
\end{defn}

In general, computing the saturation of an ideal is a difficult problem on its own, but if a homogeneous ideal is in generic coordinates, then the saturation is exactly the homogenization of its dehomogenization.
\begin{lem}\label{Lem: saturation equal homogenization}
    Let $K$ be an infinite field, and let $P = K [x_1, \dots, x_n]$.
    Let $I \subset P[x_0]$ be a homogeneous ideal with $\left| \mathcal{Z}_+ (I) \right| \neq 0$.
    Then $I$ is in generic coordinates if and only if
    \[
    I^{\sat} = \left( I^\dehom \right)^\homog.
    \]
\end{lem}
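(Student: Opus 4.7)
The plan is to establish the standard identity
\[
(I^\dehom)^\homog = I : x_0^\infty,
\]
and combine it with the inclusion $I^{\sat} = I : \mathfrak{m}^\infty \subseteq I : x_0^\infty$, which holds because $(x_0)^k \subseteq \mathfrak{m}^k$ for every $k$, so every element annihilated by a power of $\mathfrak{m}$ is annihilated by a power of $x_0$. Together these yield the always-true containment $I^{\sat} \subseteq (I^\dehom)^\homog$, and the lemma reduces to controlling when the reverse inclusion holds via the behavior of $x_0$ modulo $I^{\sat}$.

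For the identity itself, I would argue by tracking $x_0$-adic valuations. A generator $(f^\dehom)^\homog$ of $(I^\dehom)^\homog$, with $f \in I$ homogeneous, equals $f / x_0^v$ where $v$ is the $x_0$-adic valuation of $f$, so $x_0^v \cdot (f^\dehom)^\homog = f \in I$, placing $(f^\dehom)^\homog$ in $I : x_0^\infty$. Conversely, given a homogeneous $g \in I : x_0^\infty$, write $g = x_0^w g'$ with $x_0 \nmid g'$; then from $x_0^k g \in I$ we get $g^\dehom = (g')^\dehom \in I^\dehom$, and $((g')^\dehom)^\homog = g'$ since $g'$ is $x_0$-coprime, so $g' \in (I^\dehom)^\homog$, and hence $g = x_0^w g'$ lies in the ideal $(I^\dehom)^\homog$.

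With the identity in hand, both implications follow quickly. For $(\Rightarrow)$, assume $I$ is in generic coordinates, so that $x_0$ is a nonzerodivisor modulo $I^{\sat}$. Any $f \in (I^\dehom)^\homog = I : x_0^\infty$ satisfies $x_0^k f \in I \subseteq I^{\sat}$ for some $k \geq 0$, and repeated cancellation of $x_0$ modulo $I^{\sat}$ forces $f \in I^{\sat}$, giving the missing inclusion. For $(\Leftarrow)$, assume equality; if $x_0 \cdot f \in I^{\sat}$ then $x_0 f \in I : x_0^\infty$, hence $x_0^{k+1} f \in I$ for some $k$, and therefore $f \in I : x_0^\infty = I^{\sat}$, proving that $x_0$ is a nonzerodivisor modulo $I^{\sat}$. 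Under the hypothesis $|\mathcal{Z}_+(I)| \neq 0$ this is precisely the condition of \Cref{Def: generic coordinates}.

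The main obstacle is the bookkeeping in the identity $(I^\dehom)^\homog = I : x_0^\infty$: one has to account carefully for the loss of $x_0$-factors under dehomogenization and their subsequent recovery by ideal membership (rather than by pointwise equality of polynomials). Once this identity is pinned down, the two implications of the lemma reduce to a one-line application of the nonzerodivisor property of $x_0$ modulo the saturation.
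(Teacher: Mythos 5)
Your proof is correct and follows essentially the same route as the paper's: both reduce the lemma to the identity $\left( I^\dehom \right)^\homog = I : x_0^\infty$ and then apply the nonzerodivisor property of $x_0$ modulo $I^{\sat}$ to get both implications. The only difference is that you prove this identity directly via $x_0$-adic valuations, whereas the paper cites it from Kreuzer--Robbiano.
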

\begin{proof}
    ``$\Rightarrow$'':
    Let $F \in I^{\sat} = I : \mathfrak{m}^\infty$, then there exists an $N \geq 0$ such that $x_0^N \cdot F \in I$.
    On the other hand by \cite[Proposition~4.3.5]{Kreuzer-CompAlg2} we have that $\left( I^\dehom \right)^\homog = I : x_0^\infty$, so also $F \in \left( I^\dehom \right)^\homog$.

    By our assumption $\left| \mathcal{Z}_+ (I) \right| \neq 0$ and contraposition of the projective weak Nullstellensatz \cite[Chapter~8~\S3~Theorem~8]{Cox-Ideals}, we have that $I^\dehom \neq (1)$.
    Now let $F \in \left( I^\dehom \right)^\homog$, since $F^\dehom \notin K$ then also by \cite[Proposition~4.3.5]{Kreuzer-CompAlg2} there must exist an $N \geq 0 $ such that $x_0^N \cdot F \in I$.
    By definition $I \subset I^{\sat}$ so also $x_0^N \cdot F \in I^{\sat}$.
    By assumption $x_0 \nmid 0 \mod I^{\sat}$, hence we must already have that $x_0^{N - 1} \cdot F \in J^{\sat}$.
    Iterating this argument we conclude that $F \in J^{\sat}$.

    ``$\Leftarrow$'':
    We have the ideal equality $I^{\sat} = \left( I^\dehom \right)^\homog = I : x_0^\infty$, so
    \[
    I^{\sat} : x_0 = \left( I : x_0^\infty \right) : x_0 = I : x_0^\infty = I^{\sat}.
    \]
    So if $x_0 \cdot f \in I^{\sat}$, then already $f \in I^{\sat}$ which implies $x_0 \nmid 0 \mod I^{\sat}$.
\end{proof}

We provide a simple counterexample to the ideal equality when the ideal is not in generic coordinates.
\begin{ex}
    Let $K$ be a field and let $I = \left( x^2, y \cdot z \right) \subset K [x, y, z]$ be an ideal where we consider $z$ as the homogenization variable.
    Then $I^{\sat} = I$ but $I : z^\infty = (x^2, y)$.
\end{ex}

Let us now present the connection between the solving degree and the Castelnuovo-Mumford regularity.
\begin{thm}[{\cite[Theorem~9, 10]{Caminata-SolvingPolySystems}}]\label{Th: solvdeg and CM-regularity}
    Let $K$ be an algebraically closed field, and let $\mathcal{F} = \{ f_1, \dots, f_m \} \subset K [x_1, \dots, x_n]$ be an inhomogeneous polynomial system such that $\left( \mathcal{F}^\homog \right)$ is in generic coordinates.
    Then
    \[
        \solvdeg_{DRL} \left( \mathcal{F} \right) \leq \reg \left( \mathcal{F}^\homog \right).
    \]
\end{thm}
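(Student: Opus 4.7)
The plan is to proceed via two reductions followed by an application of the Bayer--Stillman criterion. First, the paper already recorded that $\solvdeg_{DRL} \left( \mathcal{F} \right) \leq \solvdeg_{DRL} \left( \mathcal{F}^\homog \right)$ for the DRL term order, so it suffices to prove the stronger inequality $\solvdeg_{DRL} \left( \mathcal{F}^\homog \right) \leq \reg \left( \mathcal{F}^\homog \right)$ under the assumption that the homogeneous ideal $I := \left( \mathcal{F}^\homog \right) \subset K[x_0, \dots, x_n]$ is in generic coordinates. The algebraic closure hypothesis guarantees that $K$ is infinite, so \Cref{Def: generic coordinates} applies directly without any auxiliary field extension.

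Second, I would invoke the classical Bayer--Stillman criterion: since $I$ is in generic coordinates the homogenization variable $x_0$ is a non zero-divisor modulo the saturation $I^{\sat}$, and this is exactly the hypothesis needed to conclude the equality $\reg (I) = \reg \left( \inid_{DRL} (I) \right)$. Consequently, every minimal generator of $\inid_{DRL}(I)$ — and in particular every leading monomial of any reduced DRL Gr\"obner basis of $I$ — has degree at most $d := \reg (I)$.

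Third, the main technical obstacle is to show that Gaussian elimination on the homogeneous Macaulay matrices $M_0, \ldots, M_d$ already recovers all of these leading monomials. I would proceed by induction on $e \leq d$ and argue that after Gaussian elimination the row span of $M_e$ equals the full graded piece $I_e$; equivalently, that every $g \in I_e$ admits a \emph{degree-preserving} expansion $g = \sum_i h_i \cdot f_i^\homog$ with $\deg(h_i \cdot f_i^\homog) = e$. This is a Lazard-style statement, and it is precisely here that the regularity bound genuinely enters the argument: the vanishing of the graded Betti numbers of $P/I$ above the Castelnuovo--Mumford threshold controls the maximum degree in which syzygies among the $f_i^\homog$ are needed to generate elements of $I$. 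Hence no information is lost up to degree $\reg(I)$, and combined with the second step one concludes that after processing $M_0, \ldots, M_d$ the leading terms of the resulting rows generate $\inid_{DRL}(I)$, which is exactly the Gr\"obner basis property. The delicate point — and what really requires the generic coordinates hypothesis rather than a weaker assumption — is ruling out degree falls beyond $\reg(I)$ that could otherwise force the solving degree strictly above the regularity.
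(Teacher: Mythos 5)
The paper does not prove this statement — it is quoted verbatim from Caminata \& Gorla, so there is no in-paper proof to match your attempt against; I will compare your sketch to the underlying Caminata--Gorla argument.

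Your Steps~1 and~2 capture the right structure: reduce to the homogeneous system via $\solvdeg_{DRL}(\mathcal{F}) \leq \solvdeg_{DRL}(\mathcal{F}^\homog)$, then use Bayer--Stillman to conclude $\reg(I) = \reg\big(\inid_{DRL}(I)\big)$ and hence that every minimal generator of $\inid_{DRL}(I)$ has degree at most $d = \reg(I)$. This equality is indeed correct, but be aware that it requires not only $x_0 \nmid 0 \bmod I^{\sat}$ but also the dimension hypothesis $\dim(P/I) \leq 1$ (implicit in the $|\mathcal{Z}_+(I)| < \infty$ clause of the definition); the reduction to the zero-dimensional ideal $(I + (x_0))/(x_0)$ is what makes the recursive Bayer--Stillman criterion close up. Asserting it as ``exactly the hypothesis needed'' glosses over the only genuinely non-trivial step in the whole argument.

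Your Step~3, by contrast, contains a real conceptual error. For a \emph{homogeneous} generating set the row space of $M_e$ is \emph{automatically} equal to $I_e$: any $g \in I_e$ is a finite sum $\sum_i h_i f_i^\homog$, and extracting the degree-$e$ homogeneous component of both sides gives a degree-preserving expansion for free. There is no induction to run, and graded Betti numbers of $P/I$ play no role; the $\beta_{i,j}$ for $i \geq 1$ control \emph{relations} among a fixed generating set, not whether low-degree elements of $I$ can be reached. Likewise, ``degree falls'' are an artefact of the \emph{inhomogeneous} matrix $M_{\leq d}$ (where cancellation of top-degree parts genuinely occurs); they do not exist in the homogeneous setting. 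Consequently the ``delicate point'' you identify — that generic coordinates rules out degree falls above $\reg(I)$ — is not where the hypothesis is used at all: it enters exclusively in Step~2 to obtain $\maxGBdeg_{DRL}(I) \leq \reg(I)$. Once that bound is in hand, the conclusion is immediate: the leading terms of the reduced rows of $M_0, \dots, M_d$ span $\inid_{DRL}(I)_e$ for all $e \leq d$, and since $\inid_{DRL}(I)$ is generated in degrees $\leq d$, these rows form a DRL Gr\"obner basis of $\mathcal{F}^\homog$. In short, you have the right ingredients but have inverted where the work is: the part you call trivial is the whole proof, and the part you call the main obstacle is a tautology.
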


By a classical result one can always bound the regularity of an ideal via the Macaulay bound (see \cite[Theorem~1.12.4]{Chardin-Regularity}).
\begin{cor}[{Macaulay bound, \cite[Theorem~2]{Lazard-Groebner}, \cite[Corollary~2]{Caminata-SolvingPolySystems}}]\label{Cor: Macauly bound}
    Consider a system of equations $\mathcal{F} = \{ f_1, \dots, f_m \} \subset K [x_1, \dots, x_n]$ with $d_i = \deg \left( f_i \right)$ and $d_1 \geq \ldots \geq d_m$. Set $l = \min \{ n + 1, m \}$.
    Assume that $\left| \mathcal{Z}_+ \left( \mathcal{F}^\homog \right) \right| < \infty$ and that $\left( F^\homog \right)$ is in generic coordinates over $\bar{K}$.
    Then
    \[
    \solvdeg_{DRL} \left( \mathcal{F} \right) \leq \reg \left( \mathcal{F}^\homog \right) \leq d_1 + \ldots + d_l - l + 1.
    \]
    In particular, if $m > n$ and $d = d_1$, then
    \[
    \solvdeg_{DRL} \left( \mathcal{F} \right) \leq (n + 1) \cdot (d - 1) + 1.
    \]
\end{cor}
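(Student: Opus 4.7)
The plan is to prove the two inequalities separately and then extract the special case by arithmetic. The first inequality, $\solvdeg_{DRL} \left( \mathcal{F} \right) \leq \reg \left( \mathcal{F}^\homog \right)$, is immediate from \Cref{Th: solvdeg and CM-regularity}: our hypotheses guarantee that $\left( \mathcal{F}^\homog \right)$ is in generic coordinates over $\bar{K}$, and passing from $K$ to $\bar{K}$ does not affect the solving degree, because Gaussian elimination on the Macaulay matrix commutes with scalar extension (ranks and pivot patterns are preserved).

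For the second inequality, $\reg \left( \mathcal{F}^\homog \right) \leq d_1 + \ldots + d_l - l + 1$, I would invoke the classical regularity bound of Chardin \cite[Theorem~1.12.4]{Chardin-Regularity}. The underlying mechanism is that under the generic coordinates assumption (Bayer--Stillman position), a generic linear change of coordinates produces among the $f_i^\homog$ a sequence of $l = \min\{n + 1, m\}$ polynomials that forms a partial regular sequence in the appropriate homological sense. A complete intersection of degrees $d_1, \dots, d_l$ has Castelnuovo--Mumford regularity exactly $\sum_{i = 1}^{l} d_i - l + 1$ via its Koszul resolution, and augmenting the ideal by the remaining generators $f_{l + 1}^\homog, \dots, f_m^\homog$ cannot increase the regularity once generic coordinates are in force.

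The special case then follows by arithmetic: when $m > n$, we have $l = n + 1$, and since $d_i \leq d$ for all $i$, we obtain
\[
    \sum_{i = 1}^{n + 1} d_i - (n + 1) + 1 \leq (n + 1) \cdot d - n = (n + 1) \cdot (d - 1) + 1.
\]

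The main obstacle is verifying that the generic coordinates hypothesis of \Cref{Def: generic coordinates} aligns with the hypothesis Chardin assumes. Our definition is exactly the Bayer--Stillman condition $x_0 \nmid 0 \mod I^{\sat}$ after passing to an infinite field extension, which is the standard framework in which Chardin's theorem is stated; hence the citation goes through directly without modification, and the finiteness assumption $\left| \mathcal{Z}_+ \left( \mathcal{F}^\homog \right) \right| < \infty$ ensures that the saturation behaves as required.
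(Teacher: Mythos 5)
Your proof takes the same route as the paper: the first inequality is \Cref{Th: solvdeg and CM-regularity} (after the correct observation that solving degrees are invariant under scalar extension), the second is Chardin's theorem cited in the paper, and the special case is arithmetic — the paper itself gives no proof beyond these citations. Your arithmetic for the special case is correct.

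One caution about your explanatory aside: the claim that ``augmenting the ideal by the remaining generators $f_{l+1}^\homog, \dots, f_m^\homog$ cannot increase the regularity once generic coordinates are in force'' is not true in general and is not the mechanism behind the bound. Adding a generator can raise the regularity of an ideal (e.g.\ $\reg(x) = 1$ but $\reg(x, y^{10}) = 10$ in $K[x,y]$). The argument sketch does work in the special situation where the $l$ generic combinations already generate an $\mathfrak{m}$-primary ideal (empty projective zero locus), because then $\sat(J') \leq \reg(J)$ and $\reg(J'^{\sat}) = 0$; but the corollary only assumes $\left| \mathcal{Z}_+ \left( \mathcal{F}^\homog \right) \right| < \infty$, not $= 0$, so the nonempty case needs Chardin's more delicate filtration argument, not this monotonicity heuristic. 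Since you are citing Chardin rather than reproving him, this imprecision is not a logical gap in the proof — but the heuristic as stated would mislead a reader about why the bound holds.
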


A sufficient condition for a polynomial system to be in generic coordinates is that the system contains the field equations or their fake Weil descent \cite[Theorem~11]{Caminata-SolvingPolySystems}.

Via inclusion of the field equations we obtain the following solving degree bound for \MiMC.
\begin{ex}[\MiMC and all field equations I]\label{Ex: MiMC all field equations I}
    Let \MiMC be defined over $\Fq$, and let $r$ be the number of rounds.
    Denote the ideal of all field equations by $F$, and the \MiMC ideal with $I_\MiMC$.
    Then by \cite[Theorem~11]{Caminata-SolvingPolySystems} the solving degree is bounded by
    \[
    \solvdeg_{DRL} \left( I_\MiMC + F \right) \leq r \cdot (q - 1) + 3.
    \]
\end{ex}

However, this bound is very unsatisfying, because it only takes the field equations into account except for one summand.
On the other hand, it suffices to add only the field equation for the key variable to $I_\MiMC$ to restrict all solutions to $\Fq^r$. However, this modification is not covered by \cite[Theorem~11]{Caminata-SolvingPolySystems}.

\subsubsection{The Caminata-Gorla Technique}\label{Sec: Caminata Gorla technique}
Since we are going to emulate the proof of \cite[Theorem~11]{Caminata-SolvingPolySystems} several times in this paper, we recapitulate its main argument.
By \cite[Theorem~2.4]{BayerStillman} a homogeneous ideal $I \subset P = \bar{K} [x_0, \dots, x_n]$ with $\dim \left( P / I \right) = 1$ and $\left| \mathcal{Z}_+ (I) \right| < \infty$ is in generic coordinates if and only if $\inid_{DRL} (I)$ is in generic coordinates.
Assume that
\begin{equation}\label{Equ: projective variety intersection condition}
    \mathcal{Z}_+ \big( \inid_{DRL} (I) \big) \cap \mathcal{Z}_+ \left( x_0 \right) = \mathcal{Z}_+ \Big( \big( \inid_{DRL} (I), x_0 \big) \Big) = \emptyset,
\end{equation}
then by the projective weak Nullstellensatz \cite[Chapter~8~\S3~Theorem~8]{Cox-Ideals} there exists some $r \geq 1$ such that $\mathfrak{m}^r = (x_0, \dots, x_n)^r \subset \big( \inid_{DRL} (I), x_0 \big)$.
This also implies that for every $1 \leq i \leq n$ there exists some $r_i \geq 1$ such that $x_i^{r_i} \in \inid_{DRL} (I)$.\footnote{
    Let $B$ be a basis of $\inid_{DRL} (I)$ and $B'$ be basis of $\big( \inid_{DRL} (I), x_0 \big)$.
    If $\mathfrak{m}^r \subset \big( \inid_{DRL} (I), x_0 \big)$ for some $r \geq 1$, then for all $0 \leq i \leq n$ there exists a smallest integer $r_i \in \mathbb{Z}$ such that $x_i^{r_i} \in B'$.
    Observe that a monomial $m \in B$ is also an element of $B'$ if $x_0 \nmid m$.
    Conversely, any basis element from $B'$ different to $x_0$ must come from $B$.}
Now suppose that $x_0 \cdot f \in \inid_{DRL} (I)^{\sat}$, then for every $g \in \mathfrak{m}$ there exists $N \geq 1$ such that $g^N \cdot (x_0 \cdot f) \in \inid_{DRL} (I)$.
Let $g$ be a monomial, we do a case distinction.
\begin{itemize}
    \item For $\gcd \left( g, x_0 \right) = 1$, we increase the power of $g$ until $x_i^{r_i} \mid g^M \cdot f$, for some $1 \leq i \leq n$ and $M \geq 1$, hence $g^M \cdot f \in \inid_{DRL} (I)$.

    \item For $\gcd \left( g, x_0 \right) \neq 1$,  we use the factorization $g^{N + 1} = \frac{g}{x_0} \cdot g^N \cdot x_0$, hence $g^{N + 1} \cdot f \in \inid_{DRL} (I)$.
\end{itemize}
Now let $g \in \mathfrak{m}$ be a polynomial, then we can find $N \geq 0$ big enough so that for every monomial present in $g$ one of the two previous cases applies.
So if $x_0 \cdot f \in \inid_{DRL} (I)^{\sat}$ we also have that $f \in \inid_{DRL} (I)^{\sat}$.
Hence, $x_0 \nmid 0 \mod \inid_{DRL} (I)^{\sat}$ and by \cite[Theorem~2.4]{BayerStillman} also $x_0 \nmid 0 \mod I^{\sat}$.

Finally, in practice \Cref{Equ: projective variety intersection condition} can efficiently be checked with the following ideal equality \cite[Lemma~2.2]{BayerStillman}
\begin{equation}
    \inid_{DRL} \left( I, x_0 \right) = \big( \inid_{DRL} (I), x_0 \big).
\end{equation}

    \section{Characterization of Polynomial Systems in Generic Coordinates}\label{Sec: characterization generic coordinates}
Let $\mathcal{F}$ be a polynomial system which contains equations $x_i^{d_i} - p_i (x_1, \dots, x_n)$, where $\degree{p_i} < d_i$, for all $i$, then the Caminata-Gorla technique implies that $\left( \mathcal{F}^\homog \right)$ is in generic coordinates, see \cite[Remark~13]{Caminata-SolvingPolySystems}.
Though, the polynomial systems of our interest are not of this form in general, e.g.\ the keyed iterated polynomial system for \MiMC.
However, it is already implicit in the Caminata-Gorla technique that for a homogenized polynomial system to be in generic coordinates the associated ideal of the highest degree components has to be zero-dimensional.
If this is the case, then we can indeed find equations $x_i^{d_i} - p_i (x_1, \dots, x_n)$ in $\left( \mathcal{F} \right)$ that lift to $x_i^{d_i} - x_0^{d_i - \degree{p_i}} \cdot p_i (x_1, \dots, x_n)$ in $\left( \mathcal{F}^\homog \right)$ which implies genericity.

To formally prove this observation we need a lemma.
\begin{lem}\label{Lem: radical monomial ideal between maximal ideals}
    Let $K$ be a field, and let $I \subset K [x_0, \dots, x_n]$ be a radical monomial ideal such that $(x_1, \dots, x_n) \subset I \subset (x_0, \dots, x_n)$.
    Then either $I = (x_1, \dots, x_n)$ or $I = (x_0, \dots, x_n)$.
\end{lem}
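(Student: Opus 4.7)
The plan is to exploit the well-known classification of radical monomial ideals via squarefree generators, which reduces the problem to a very short combinatorial analysis.

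First I would recall that a monomial ideal is radical if and only if it admits a minimal generating set consisting entirely of squarefree monomials (this is a standard fact, e.g.\ via the primary decomposition of monomial ideals, or directly because $\sqrt{(m)} = (\mathrm{rad}(m))$ for any monomial $m$). Hence $I$ has a unique minimal monomial generating set $G \subset K[x_0, \dots, x_n]$ whose elements are squarefree monomials.

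Next I would use the inclusion $I \subset (x_0, \dots, x_n)$: every element of $G$ is a nonconstant squarefree monomial in the variables $x_0, \dots, x_n$, i.e.\ a product $x_{i_1} x_{i_2} \cdots x_{i_k}$ with $k \geq 1$ and pairwise distinct indices $i_j \in \{0, 1, \dots, n\}$. I then use the inclusion $(x_1, \dots, x_n) \subset I$: for each $i \in \{1, \dots, n\}$ the variable $x_i$ already lies in $I$, so any squarefree monomial that is divisible by some $x_i$ with $i \geq 1$ is redundant as a minimal generator. Consequently the only squarefree monomials that can appear as minimal generators beyond $x_1, \dots, x_n$ themselves must be squarefree monomials in the single variable $x_0$, i.e.\ just $x_0$.

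The case distinction is now immediate: either $x_0 \notin I$, in which case the minimal generating set is exactly $\{x_1, \dots, x_n\}$ and $I = (x_1, \dots, x_n)$; or $x_0 \in I$, in which case $I \supset (x_0, x_1, \dots, x_n) = (x_0, \dots, x_n)$ and combined with $I \subset (x_0, \dots, x_n)$ one obtains $I = (x_0, \dots, x_n)$. There is no serious obstacle here; the only thing that has to be invoked carefully is the characterization of radical monomial ideals by squarefree generators, which I would cite rather than reprove.
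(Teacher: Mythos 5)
Your proof is correct, and it takes a genuinely different route from the paper. The paper passes to the quotient ring $K[x_0,\dots,x_n]/(x_1,\dots,x_n) \cong K[x_0]$, observes that the image of $I$ there is a principal ideal $(f)$ since $K[x_0]$ is a PID, and then uses the radical and monomial hypotheses to force $f$ to be $0$ or $x_0$. You instead invoke the squarefree characterization of radical monomial ideals and argue combinatorially on the unique minimal monomial generating set: $x_1,\dots,x_n$ must all appear as minimal generators, any other squarefree minimal generator involving a variable $x_i$ with $i\geq 1$ would be redundant, so the only possible extra generator is $x_0$. Both arguments are short and correct. The paper's quotient-ring argument sidesteps the squarefree classification entirely, instead leaning on the PID structure of $K[x_0]$ together with the elementary observation that a reduced monomial in one variable is either $1$ or $x_0$; your version is more hands-on and self-contained once one accepts the standard squarefree characterization, and it makes visible exactly which generators are being ruled out. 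Neither has an advantage in generality here; it is largely a matter of whether one prefers quotient-ring reasoning or direct bookkeeping on monomial generators.
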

\begin{proof}
    Let $P = K [x_1, \dots, x_n]$, by the isomorphism theorems for rings we have that
    \[
    P [x_0] / I \cong \big( P [x_0] / (x_1, \dots, x_n) \big) / \big( I / (x_1, \dots, x_n) \big) \cong K [x_0] / \big( I / (x_1, \dots, x_n) \big).
    \]
    Moreover, if $I / (x_1, \dots, x_n) \neq (0)$, then $I / (x_1, \dots, x_n) = (f)$, where $f \in K [x_0]$.
    $I$ is radical, so $f$ has to be reduced.
    Since $I$ is also a monomial ideal this implies that $f = x_0$.
\end{proof}

Now we can prove the following characterization of generic coordinates.
\begin{thm}\label{Th: generic coordinates and highest degree components}
    Let $K$ be an algebraically closed field, and let $\mathcal{F} = \{ f_1, \dots, f_m \} \subset K [x_1, \dots, x_n]$ be a polynomial system such that
    \begin{enumerate}[label=(\roman*)]
        \item $(\mathcal{F}) \neq (1)$, and

        \item $\dim \left( \mathcal{F} \right) = 0$.
    \end{enumerate}
    Then the following are equivalent.
    \begin{enumerate}
        \item $\left( \mathcal{F}^\homog \right)$ is in generic coordinates.

        \item\label{Item: radical} $\sqrt{\mathcal{F}^\topcomp} = \left( x_1, \dots, x_n \right)$.

        \item $\left( \mathcal{F}^\topcomp \right)$ is zero-dimensional in $K [x_1, \dots, x_n]$.

        \item For every $1 \leq i \leq n$ there exists $d_i \in \mathbb{Z}_{\geq 1}$ such that $x_i^{d_i} \in \inid_{DRL} \left( \mathcal{F}^\homog \right)$.
    \end{enumerate}
\end{thm}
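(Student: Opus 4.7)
The plan is to establish the four conditions cyclically, anchored on the elementary observation that setting $x_0 = 0$ sends $f_i^\homog$ to $f_i^\topcomp$, so $(\mathcal{F}^\homog, x_0) = (\mathcal{F}^\topcomp, x_0)$ in $K[x_0,\dots,x_n]$ and $\mathcal{Z}_+\bigl((\mathcal{F}^\homog, x_0)\bigr) \subset \ProjSp{n}{K}$ is naturally identified with $\mathcal{Z}_+(\mathcal{F}^\topcomp) \subset \ProjSp{n-1}{K}$. Concretely I would prove the cycle $(2) \Rightarrow (4) \Rightarrow (1) \Rightarrow (2)$ and the separate equivalence $(2) \Leftrightarrow (3)$.

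The equivalence $(2) \Leftrightarrow (3)$ is immediate from the Nullstellensatz: $(\mathcal{F}^\topcomp) \subset K[x_1,\dots,x_n]$ is homogeneous, so its affine vanishing set is a cone through the origin; hence it is finite if and only if it equals $\{0\}$, which (as $K$ is algebraically closed) is equivalent to $\sqrt{(\mathcal{F}^\topcomp)} = (x_1,\dots,x_n)$.

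For $(2) \Rightarrow (4)$, for each $i \geq 1$ I pick a minimal $d_i \geq 1$ with $x_i^{d_i} \in (\mathcal{F}^\topcomp)$ and an expression $x_i^{d_i} = \sum_j p_{i,j} f_j^\topcomp$ in which I may take $p_{i,j}$ homogeneous of degree $d_i - \degree{f_j}$. Lifting yields a homogeneous $g_i := \sum_j p_{i,j} f_j^\homog \in (\mathcal{F}^\homog)$ of degree $d_i$ with $g_i = x_i^{d_i} + x_0 \cdot h_i$. Under the DRL extension in which $x_0$ is the smallest variable (the convention invoked in \Cref{Sec: Preliminaries}), within the degree $d_i$ piece every monomial divisible by $x_0$ is DRL-smaller than $x_i^{d_i}$, so $\LM_{DRL}(g_i) = x_i^{d_i}$ and $x_i^{d_i} \in \inid_{DRL}(\mathcal{F}^\homog)$. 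For $(4) \Rightarrow (1)$, I invoke the Caminata--Gorla technique recalled in \Cref{Sec: Caminata Gorla technique} essentially verbatim: condition $(4)$ gives $\mathfrak{m} \subseteq \sqrt{(\inid_{DRL}(\mathcal{F}^\homog), x_0)}$, hence $\mathcal{Z}_+\bigl((\inid_{DRL}(\mathcal{F}^\homog), x_0)\bigr) = \emptyset$, and the monomial chain of arguments in \Cref{Sec: Caminata Gorla technique} yields $x_0 \nmid 0 \bmod (\mathcal{F}^\homog)^{\sat}$.

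The technically most delicate step, and the main obstacle, is $(1) \Rightarrow (2)$. Here $\dim(\mathcal{F}) = 0$ forces $\mathcal{Z}_+(\mathcal{F}^\homog) \subset \ProjSp{n}{K}$ to be finite, so the associated primes of $(\mathcal{F}^\homog)^{\sat}$ are precisely the homogeneous vanishing ideals of its points (saturation with respect to $\mathfrak{m}$ removes the $\mathfrak{m}$-primary embedded components, and for a one-dimensional affine cone there is nothing else to remove). If some $P \in \mathcal{Z}_+(\mathcal{F}^\homog)$ satisfied $x_0(P) = 0$, then $x_0$ would lie in the associated prime of $P$, contradicting the generic-coordinates hypothesis. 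Therefore $\mathcal{Z}_+\bigl((\mathcal{F}^\homog, x_0)\bigr) = \mathcal{Z}_+\bigl((\mathcal{F}^\topcomp, x_0)\bigr) = \emptyset$, and the projective Nullstellensatz applied in $\ProjSp{n - 1}{K}$ gives $\sqrt{(\mathcal{F}^\topcomp)} \supseteq (x_1,\dots,x_n)$, the reverse inclusion being automatic since $\mathcal{F}^\topcomp$ consists of homogeneous polynomials of positive degree. A technically cleaner variant would be to reduce to the monomial case via Bayer--Stillman's characterization of generic coordinates in terms of $\inid_{DRL}(\mathcal{F}^\homog)$ and argue directly that a monomial ideal whose projective variety has dimension $\leq 0$ is in generic coordinates if and only if it contains a pure power of each $x_1,\dots,x_n$.
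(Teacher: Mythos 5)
Your cycle $(2) \Rightarrow (4) \Rightarrow (1) \Rightarrow (2)$ together with $(2) \Leftrightarrow (3)$ differs from the paper's decomposition (the paper proves $(1) \Rightarrow (4)$, $(4) \Leftrightarrow (3)$, $(2) \Leftrightarrow (3)$, $(2) \Rightarrow (1)$), and your $(1) \Rightarrow (2)$ via associated primes of $(\mathcal{F}^\homog)^{\sat}$ is a genuinely different, arguably cleaner alternative to the paper's route, which passes through $(1) \Rightarrow (4)$ and \Cref{Lem: saturation equal homogenization}. Your $(2) \Rightarrow (4)$ is essentially the paper's $(3) \Rightarrow (4)$ and is fine. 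However, there is a real gap in $(4) \Rightarrow (1)$, and a false intermediate claim in $(1) \Rightarrow (2)$.

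The gap in $(4) \Rightarrow (1)$: the last step of the Caminata--Gorla technique -- passing from $x_0 \nmid 0 \mod \inid_{DRL}(\mathcal{F}^\homog)^{\sat}$ to $x_0 \nmid 0 \mod (\mathcal{F}^\homog)^{\sat}$ -- invokes Bayer--Stillman's Theorem~2.4, whose hypotheses are $\dim\bigl(K[x_0,\dots,x_n]/(\mathcal{F}^\homog)\bigr) = 1$ and $|\mathcal{Z}_+(\mathcal{F}^\homog)| < \infty$. Condition $(4)$ does not give these by itself; verifying them is precisely where the bulk of the paper's $(2) \Rightarrow (1)$ argument lives. Concretely, $(4)$ gives $(x_1,\dots,x_n) \subseteq \sqrt{\inid_{DRL}(\mathcal{F}^\homog)}$, so by \Cref{Lem: radical monomial ideal between maximal ideals} this radical is either $(x_1,\dots,x_n)$ or $(x_0,\dots,x_n)$; in the latter case some $x_0^d$ lies in $\inid_{DRL}(\mathcal{F}^\homog)$, and since $x_0$ is DRL-smallest this forces $x_0^d \in (\mathcal{F}^\homog)$, hence $1 \in (\mathcal{F})$, contradicting hypothesis~(i). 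Only after this does one obtain $\dim(\mathcal{F}^\homog) = 1$ and $|\mathcal{Z}_+(\mathcal{F}^\homog)| < \infty$. Invoking ``the technique essentially verbatim'' without this chain is not yet a proof.

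The false claim in $(1) \Rightarrow (2)$: you assert that $\dim(\mathcal{F}) = 0$ forces $\mathcal{Z}_+(\mathcal{F}^\homog)$ to be finite. That inference is wrong: $\mathcal{F} = \{x^3,\ x^2 y - y,\ x^2 z - z\} \subset K[x,y,z]$ has $\mathcal{Z}(\mathcal{F}) = \{(0,0,0)\}$, so $\dim(\mathcal{F}) = 0$ and $(\mathcal{F}) \neq (1)$, yet $\mathcal{F}^\topcomp = \{x^3, x^2 y, x^2 z\}$ and $\mathcal{Z}_+(\mathcal{F}^\homog)$ contains the whole line $\{x_0 = x = 0\} \cong \ProjSp{1}{K}$. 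The correct source of finiteness is not $\dim(\mathcal{F}) = 0$ but hypothesis $(1)$ itself: by \Cref{Def: generic coordinates}, for $(\mathcal{F}^\homog)$ to be in generic coordinates one must already have $|\mathcal{Z}_+(\mathcal{F}^\homog)| < \infty$. With that correction your associated-primes argument goes through, but as written the justification is wrong.
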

\begin{proof}
    ``(1) $\Rightarrow$ (4)'':
    Let $\left( \mathcal{F}^\homog \right)$ be in generic coordinates and suppose that $\mathcal{Z}_+ \left( \mathcal{F}^\homog \right) = \emptyset$.
    Then by the projective weak Nullstellensatz \cite[Chapter~8~\S 3~Theorem~8]{Cox-Ideals} $x_0^k \in \left( \mathcal{F}^\homog \right)$, where $k \geq 1$.
    In particular, this implies that $1 \in \left( \mathcal{F}^\homog \right)^\dehom = (\mathcal{F})$, a contradiction to $(\mathcal{F}) \neq (1)$.
    So $\abs{\mathcal{Z}_+ \left( \mathcal{F}^\homog \right)} \neq 0$, then by \Cref{Lem: saturation equal homogenization} we have that
    \[
    \left( \mathcal{F}^\homog \right)^{\sat} = \left( \left( \mathcal{F}^\homog \right)^\dehom \right)^\homog = \left( \mathcal{F} \right)^\homog.
    \]
    By assumption $\left( \mathcal{F} \right)$ is zero-dimensional, so for every $1 \leq i \leq n$ there exists $f \in \left( \mathcal{F} \right)$ such that $\LM_{DRL} (f) = x_i^d$, where $d > 0 $, see \cite[Theorem~5.11]{Kemper-CommAlg} and \cite[Chapter~5~\S 3~Theorem~6]{Cox-Ideals}.
    Therefore, $f^\homog \in \left( \mathcal{F}^\homog \right)^{\sat}$.
    By definition of the saturation, for every $s \in \mathfrak{m}$ there exists an integer $N \geq 0$ such that $s^N \cdot f^\homog \in \left( \mathcal{F}^\homog \right)$, thus for $s = x_i$ also $x_i^N \cdot f^\homog \in \left( \mathcal{F}^\homog \right)$.
    Obviously, we have that $\LM_{DRL} \left( x_i^N \cdot f^\homog \right) = x_i^{N + d}$.

    ``(4) $\Rightarrow$ (3)'':
    By assumption, for every $1 \leq i \leq n$ there exists $f \in \left( \mathcal{F}^\homog \right)$ such that $\LM_{DRL} (f) = x_i^{d_i}$, where $d_i > 0$.
    Without loss of generality we can assume that $f$ is homogeneous, so we can represent it as
    \[
    f = \sum_{j = 1}^{m} g_i \cdot f_i^\homog,
    \]
    where $g_i \in K [x_0, \dots, x_n]$ is homogeneous for all $i$.
    Now we split the $g_i$'s and $f_i$'s as
    \begin{align}
        f_i^\homog &= f_i^\topcomp + x_0 \cdot \tilde{f}_i, \nonumber \\
        g_i &= g_i^\topcomp + x_0 \cdot \tilde{g}_i, \nonumber
    \end{align}
    where $\tilde{f}_i, \tilde{g}_i \in K [x_0, \dots, x_n]$ are homogeneous and if $\tilde{f}_i, g_i, \tilde{g}_i \neq 0$, then
    \begin{align}
        \degree{f_i^\homog} &= \degree{f_i^\topcomp} = \degree{x_0 \cdot \tilde{f}_i}, \nonumber \\
        \degree{g_i} &= \degree{g_i^\topcomp} = \degree{x_0 \cdot \tilde{g}_i}, \nonumber \\
        \degree{g_i} &= \degree{f} - \degree{f_i}. \nonumber
    \end{align}
    We can now further decompose
    \begin{equation}\label{Equ: highest degree decomposition}
        f
        = \sum_{j = 1}^{m} \left( g_i^\topcomp + x_0 \cdot \tilde{g}_i \right) \cdot \left( f_i^\topcomp + x_0 \cdot \tilde{f}_i \right)
        = \sum_{j = 1}^{m} g_i^\topcomp \cdot f_i^\topcomp + x_0 \cdot \tilde{f},
    \end{equation}
    where $\degree{\tilde{f}} = \degree{f} - 1$.
    Since $\sum_{j = 1}^{m} g_i^\topcomp \cdot f_i^\topcomp >_{DRL} x_0 \cdot \tilde{f}$ we must have that $\LM_{DRL} \left( f \right) = \LM_{DRL} \left( \sum_{j = 1}^{m} g_i^\topcomp \cdot f_i^\topcomp \right)$.
    We can also decompose the left-hand side of the last equation $f = f^\topcomp + x_0 \cdot \hat{f}$, and by rearranging we yield that
    \[
    \underbrace{\left( f^\topcomp - \sum_{j = 1}^{m} g_i^\topcomp \cdot f_i^\topcomp \right)}_{\in K [x_1, \dots, x_n]} = x_0 \cdot \left( \tilde{f} - \hat{f} \right).
    \]
    The only element in $K [x_1, \dots, x_n]$ divisible by $x_0$ is $0$, so we have constructed an element in $\left( \mathcal{F}^\topcomp \right)$ with leading monomial $x_i^{d_i}$.
    Again by \cite[Theorem~5.11]{Kemper-CommAlg} and \cite[Chapter~5~\S 3~Theorem~6]{Cox-Ideals} this implies zero-dimensionality of $\left( \mathcal{F}^\topcomp \right)$.

    ``(3) $\Rightarrow$ (4)'': Suppose $\left( \mathcal{F}^\topcomp \right)$ is zero-dimensional.
    For the claim we can work through the arguments of the previous claim in a backwards manner.
    Since $\left( \mathcal{F}^\topcomp \right)$ is homogeneous and zero-dimensional we can find $f^\topcomp = \sum_{j = 1}^{m} g_i^\topcomp \cdot f_i^\topcomp$, where $g_i^\topcomp$ homogeneous, such that $\LM_{DRL} \left( f^\topcomp \right) = x_i^{d_i}$, where $d_i > 0$.
    With \Cref{Equ: highest degree decomposition} we can lift this decomposition (with $\tilde{g}_i = 0$) to $\left( \mathcal{F}^\homog \right)$.
    Now let $s, t \in K [x_0, \dots, x_n]$ be monomials such that $\degree{s} = \degree{t}$ and $x_0 \nmid s$ and $x_0 \mid t$.
    For compatibility with homogenization we have set $x_0$ as least variable with respect to DRL, this immediately implies that $s >_{DRL} t$ and the claim follows.

    ``(2) $\Leftrightarrow$ (3)'':
    This is just a reformulation of the projective weak Nullstellensatz \cite[Chapter~8~\S 3~Theorem~8]{Cox-Ideals}, \cite[Chapter~5~\S 3~Theorem~6]{Cox-Ideals} and \cite[Theorem~5.11]{Kemper-CommAlg}.

    ``(2) $\Rightarrow$ (1)'':
    Assume that $\sqrt{\mathcal{F}^\topcomp} = (x_1, \dots, x_n)$ in $K [x_1, \dots, x_n]$.
    To apply \cite[Theorem~2.4]{BayerStillman} in the Caminata-Gorla technique (\Cref{Sec: Caminata Gorla technique}) we have to show that $\dim \left( \mathcal{F}^\homog \right) = 1$.
    By the equivalence of (2) and (4) we know that
    \[
    (x_1, \dots, x_n) \subset \sqrt{\inid_{DRL} \left( \mathcal{F}^\homog \right)}.
    \]
    So by \Cref{Lem: radical monomial ideal between maximal ideals} either
    $\sqrt{\inid_{DRL} \left( \mathcal{F}^\homog \right)} = (x_1, \dots, x_n)$ or $\sqrt{\inid_{DRL} \left( \mathcal{F}^\homog \right)} = (x_0, \dots, x_n)$.
    Assume the latter, then there exists a homogeneous $f \in \left( \mathcal{F}^\homog \right)$ such that $\LM_{DRL} (f) = x_0^d$, where $d > 0$.
    Since $x_0$ is the least variable with respect to DRL this already implies that $f = x_0^d$.
    Thus, $1 \in \left( \mathcal{F}^\homog \right)^\dehom = \left( \mathcal{F} \right)$, a contradiction to the non-triviality of $\mathcal{F}$.
    So $\sqrt{\inid_{DRL} \left( \mathcal{F}^\homog \right)} = (x_1, \dots, x_n)$.
    Note that this also implies that $\mathcal{Z}_+ \left( \mathcal{F}^\homog \right) \neq \emptyset$ by a contraposition of the equivalence in the projective weak Nullstellensatz \cite[Chapter~8~\S 3~Theorem~3]{Cox-Ideals}.
    It is well-known that $\left( \mathcal{F}^\homog \right)$ and $\inid_{DRL} \left( \mathcal{F}^\homog \right)$ have the same affine Hilbert function, see \cite[Chapter~9~\S 3~Proposition 4]{Cox-Ideals}.
    Moreover, for any ideal $I \subset K [x_0, \dots, x_n]$ the affine Hilbert polynomials of $I$ and $\sqrt{I}$ have the same degree, see \cite[Chapter~9~\S 3~Proposition~6]{Cox-Ideals}.
    Since dimension of an affine ideal is equal to the degree of the affine Hilbert polynomial, see \cite[Theorem~11.13]{Kemper-CommAlg}, the two previous observations imply that
    \[
    \dim \left( \mathcal{F}^\homog \right)
    = \dim \Big( \inid_{DRL} \big( \mathcal{F}^\homog \big) \Big)
    = \dim \Big( \sqrt{\inid_{DRL} \big( \mathcal{F}^\homog \big)} \Big)
    = \dim \left( x_1, \dots, x_n \right) = 1
    \]
    in $K [x_0, \dots, x_n]$.
    Also, the dimension of an affine variety $\mathcal{Z} (I)$, where $I \subset K [x_0, \dots, x_n]$, is defined as the degree of the affine Hilbert polynomial of $I$, see \cite[Chapter~9~\S 3]{Cox-Ideals}.
    If $I$ is in addition homogeneous and $\mathcal{Z}_+ (I) \neq \emptyset$, then by \cite[Chapter~9~\S 3~Theorem~12]{Cox-Ideals} we have for the dimension of the projective variety $\mathcal{Z}_+ (I)$ that
    \[
    \dim \big( \mathcal{Z}_+ (I) \big) = \dim \big( \mathcal{Z} (I) \big) - 1 = \dim (I) - 1.
    \]
    Combining, all our previous observations we yield that $\dim \Big( \mathcal{Z}_+ \big( \mathcal{F}^\homog \big) \Big) = 0$, and it is well-known that zero-dimensional projective varieties have only finitely many points, i.e.\ $\left| \mathcal{Z}_+ \left( \mathcal{F}^\homog \right) \right| < \infty$, see \cite[Chapter~9~\S 4~Proposition~6]{Cox-Ideals}.
    To apply the Caminata-Gorla technique (\Cref{Sec: Caminata Gorla technique}) it is left to show that $\mathcal{Z}_+ \Big( \inid_{DRL} \big( \mathcal{F}^\homog \big), x_0 \Big) = \emptyset$.
    Note that
    \[
    \left( \mathcal{F}^\homog, x_0 \right) = \left( \mathcal{F}^\homog, x_0 \right) = \left( \mathcal{F}^\topcomp, x_0 \right),
    \]
    so by \cite[Lemma~2.2]{BayerStillman}
    \[
    \Big( \inid_{DRL} \big( \mathcal{F}^\homog \big), x_0 \Big)
    = \inid_{DRL} \left( \mathcal{F}^\homog, x_0 \right)
    = \inid_{DRL} \left( \mathcal{F}^\topcomp, x_0 \right)
    = \Big( \inid_{DRL} \big( \mathcal{F}^\topcomp \big), x_0 \Big).
    \]
    Finally, by our initial assumption and the projective weak Nullstellensatz \cite[Chapter~8~\S 3~Theorem~3]{Cox-Ideals} we have
    \[
    \mathcal{Z}_+ \Big( \inid_{DRL} \big( \mathcal{F}^\homog \big), x_0 \Big)
    = \mathcal{Z}_+ \Big( \inid_{DRL} \big( \mathcal{F}^\topcomp \big), x_0 \Big)
    = \emptyset.
    \]
    So we can apply the Caminata-Gorla technique (\Cref{Sec: Caminata Gorla technique}) to deduce that $x_0 \nmid 0 \mod \left( \mathcal{F}^\homog \right)^{\sat}$.
\end{proof}

As consequence, we can conclude that every zero-dimensional affine polynomial system has a set of generators that is in generic coordinates.
\begin{cor}\label{Cor: DRL Groebner basis in generic coordinates}
    Let $K$ be an algebraically closed field, and let $\mathcal{F} = \{ f_1, \dots, f_m \} \subset K [x_1, \dots, x_n]$ be a polynomial system such that
    \begin{enumerate}[label=(\roman*)]
        \item $(\mathcal{F}) \neq (1)$, and

        \item $\dim \left( \mathcal{F} \right) = 0$.
    \end{enumerate}
    For every DRL Gr\"obner basis $\mathcal{G} \subset (\mathcal{F})$ the ideal $(\mathcal{G}^\homog)$ is in generic coordinates.
\end{cor}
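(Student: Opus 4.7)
The plan is to reduce the corollary directly to the implication $(4) \Rightarrow (1)$ of \Cref{Th: generic coordinates and highest degree components} applied to $\mathcal{G}$, so essentially no new work is needed beyond verifying that hypothesis.

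First, I would observe that since $\mathcal{G}$ is a DRL Gr\"obner basis of $(\mathcal{F})$, it generates $(\mathcal{F})$, and hence $(\mathcal{G}) = (\mathcal{F})$ inherits both assumptions of \Cref{Th: generic coordinates and highest degree components}: it is non-trivial and zero-dimensional. Next, I would invoke the standard finiteness criterion for zero-dimensional ideals (\cite[Theorem~5.11]{Kemper-CommAlg} together with \cite[Chapter~5~\S 3~Theorem~6]{Cox-Ideals}) to produce, for each $1 \leq i \leq n$, a pure power $x_i^{d_i} \in \inid_{DRL}(\mathcal{G})$ with $d_i \geq 1$. Because $\mathcal{G}$ is a Gr\"obner basis, $\inid_{DRL}(\mathcal{G}) = \bigl( \LM_{DRL}(g) \mid g \in \mathcal{G} \bigr)$, so I can pick some $g_i \in \mathcal{G}$ whose leading monomial is itself a power of $x_i$.

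I would then transport this information to $\mathcal{G}^\homog$. Writing $g_i = g_i^\topcomp + r_i$ with $\degree{r_i} < \degree{g_i}$, homogenization yields $g_i^\homog = g_i^\topcomp + x_0 \cdot h_i$, which is homogeneous of degree $\degree{g_i}$. Under the paper's convention that DRL is extended with $x_0$ as the least variable, ties in total degree are broken in favour of monomials with smaller $x_0$-exponent, so $\LM_{DRL}(g_i^\homog) = \LM_{DRL}(g_i^\topcomp) = \LM_{DRL}(g_i)$, still a power of $x_i$. In particular $x_i^{d_i} \in \inid_{DRL}(\mathcal{G}^\homog)$ for every $i$, which is precisely condition (4) of \Cref{Th: generic coordinates and highest degree components} for the system $\mathcal{G}^\homog$; the equivalence $(1) \Leftrightarrow (4)$ therein gives that $(\mathcal{G}^\homog)$ is in generic coordinates.

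The main obstacle is minimal: the whole argument is a direct consequence of the preceding theorem, and the only delicate step is the brief verification that homogenizing with $x_0$ as the least variable preserves DRL leading monomials, which is built into the paper's standing conventions.
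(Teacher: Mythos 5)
Your proof is correct and follows exactly the route the paper intends: the corollary is stated as an immediate consequence of \Cref{Th: generic coordinates and highest degree components}, and your argument simply verifies hypothesis (4) of that theorem for $\mathcal{G}$ and then invokes the equivalence $(4) \Leftrightarrow (1)$. The key observations — that $(\mathcal{G}) = (\mathcal{F})$ so the standing hypotheses transfer, that zero-dimensionality forces a pure power $x_i^{e_i}$ to occur as $\LM_{DRL}(g_i)$ for some $g_i \in \mathcal{G}$, and that homogenizing with $x_0$ as the least DRL variable preserves these leading monomials — are all exactly the facts the paper's own proof of "(3) $\Rightarrow$ (4)" rests on, so nothing new is being smuggled in.

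One remark: you could equally well verify condition (3) instead of (4), since $g_i^\topcomp$ already has leading monomial $x_i^{e_i}$ in $K[x_1,\dots,x_n]$, showing $(\mathcal{G}^\topcomp)$ is zero-dimensional without any reasoning about the extended order on $K[x_0,\dots,x_n]$; this slightly shortens the proof by avoiding the homogenization step. But both are valid and essentially equivalent.
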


Another quantity that is often studied in the Gr\"obner basis complexity literature is the so-called degree of regularity of a polynomial system.
\begin{defn}[{Degree of regularity, \cite[Definition~4]{Bardet-Complexity}}]
    Let $K$ be a field, and let $\mathcal{F} \subset P = K [x_1, \dots, x_n]$.
    Assume that $\left( \mathcal{F}^\topcomp \right)_d = P_d$ for some integer $d \geq 0$.
    The degree of regularity is defined as
    \begin{equation*}
        d_{\reg} \left( \mathcal{F} \right) = \min \left\{ d \geq 0 \; \middle\vert \; \left( \mathcal{F}^\topcomp \right)_d = P_d \right\}.
    \end{equation*}
\end{defn}

It follows from the projective weak Nullstellensatz \cite[Chapter~8~\S 3~Theorem~8]{Cox-Ideals} and \cite[Theorem~5.11]{Kemper-CommAlg} that $d_{\reg} \left( \mathcal{F} \right) < \infty$ is equivalent to $\dim \left( \mathcal{F}^\topcomp \right) = 0$.
\begin{cor}\label{Cor: degree of regularity finite}
    Let $K$ be an algebraically closed field, and let $\mathcal{F} = \{ f_1, \dots, f_m \} \subset K [x_1, \dots, x_n]$ be a polynomial system such that
    \begin{enumerate}[label=(\roman*)]
        \item $(\mathcal{F}) \neq (1)$, and

        \item $\dim \left( \mathcal{F} \right) = 0$.
    \end{enumerate}
    Then $\left( \mathcal{F}^\homog \right)$ is in generic coordinates if and only if $d_{\reg} \left( \mathcal{F} \right) < \infty$.
\end{cor}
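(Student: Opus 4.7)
The plan is to chain together the characterization of generic coordinates established in \Cref{Th: generic coordinates and highest degree components} with the hint provided just before the statement. Concretely, \Cref{Th: generic coordinates and highest degree components} shows that under the two standing hypotheses $(\mathcal{F}) \neq (1)$ and $\dim(\mathcal{F}) = 0$, the ideal $(\mathcal{F}^\homog)$ is in generic coordinates if and only if $(\mathcal{F}^\topcomp)$ is zero-dimensional in $K[x_1,\dots,x_n]$. So the entire content of the corollary reduces to the equivalence
\[
    d_{\reg}(\mathcal{F}) < \infty \iff \dim(\mathcal{F}^\topcomp) = 0.
\]

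For this final equivalence I would argue as follows. Set $\mathfrak{m} = (x_1,\dots,x_n)$. By definition, $d_{\reg}(\mathcal{F}) < \infty$ means that $(\mathcal{F}^\topcomp)_d = P_d$ for some $d \geq 0$, equivalently $\mathfrak{m}^d \subseteq (\mathcal{F}^\topcomp)$. Since $(\mathcal{F}^\topcomp)$ is homogeneous, this inclusion is in turn equivalent to $\sqrt{(\mathcal{F}^\topcomp)} \supseteq \mathfrak{m}$, which by the projective weak Nullstellensatz \cite[Chapter~8~\S 3~Theorem~8]{Cox-Ideals} says exactly that $\mathcal{Z}_+(\mathcal{F}^\topcomp) = \emptyset$. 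Via \cite[Theorem~5.11]{Kemper-CommAlg} and \cite[Chapter~5~\S 3~Theorem~6]{Cox-Ideals}, this is precisely the condition $\dim(\mathcal{F}^\topcomp) = 0$, since a homogeneous ideal is zero-dimensional in affine space iff its only affine zero is the origin.

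Combining the two equivalences yields the corollary. There is essentially no obstacle here beyond correctly citing the ingredients: the non-triviality hypothesis $(\mathcal{F}) \neq (1)$ and zero-dimensionality of $(\mathcal{F})$ are what allowed \Cref{Th: generic coordinates and highest degree components} to equate generic coordinates with $\dim(\mathcal{F}^\topcomp) = 0$, and the remainder is a direct application of the Nullstellensatz to the homogeneous ideal $(\mathcal{F}^\topcomp)$.
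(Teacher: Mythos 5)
Your proof is correct and follows exactly the route the paper intends: the paper's remark immediately preceding the corollary (citing the projective weak Nullstellensatz and Kemper, Theorem~5.11) already reduces the claim to $d_{\reg}(\mathcal{F}) < \infty \iff \dim(\mathcal{F}^\topcomp) = 0$, and the corollary is then item~(3) of \Cref{Th: generic coordinates and highest degree components}. You have simply spelled out the equivalence $d_{\reg}(\mathcal{F}) < \infty \iff \mathfrak{m}^d \subseteq (\mathcal{F}^\topcomp) \iff \sqrt{(\mathcal{F}^\topcomp)} = \mathfrak{m} \iff \dim(\mathcal{F}^\topcomp) = 0$ in detail, which the paper leaves as an implicit remark.
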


\Cref{Th: generic coordinates and highest degree components} also significantly simplifies application of the Caminata-Gorla technique.
For an inhomogeneous polynomial system $\mathcal{F} \subset K [x_1, \dots, x_n]$ we can verify \Cref{Th: generic coordinates and highest degree components} \ref{Item: radical} as follows.
\begin{enumerate}
    \item Homogenize $\mathcal{F}$.

    \item Extract the highest degree components via $\mathcal{F}^\topcomp = \mathcal{F}^\homog \mod (x_0)$.

    \item For $x_1, \dots, x_n$, construct a polynomial $f \in \sqrt{\mathcal{F}^\topcomp}$ such that $f = x_i^d$, where $d > 0$.
    Then replace $\mathcal{F}^\topcomp$ by $\mathcal{F}^\topcomp \mod (x_i)$.
\end{enumerate}

\begin{rem}\label{Rem: genericity notions}
    The notion of generic coordinates is not the only genericity notion for polynomial respectively monomial ideals.
    Other worthwhile mentioning notions are being in quasi-stable position \cite[Definition~3.1]{Hashemi-Deterministic} and Noether position \cite[Definition~4.1]{Hashemi-Deterministic}.
    Let $\mathcal{F} \subset K [x_1, \dots, x_n]$ be such that $\mathcal{F}^\homog$ is in generic coordinates.
    Then, by the proof of \Cref{Th: generic coordinates and highest degree components} we have that $\dim \left( \mathcal{F}^\homog \right) = 1$.
    It follows from \Cref{Lem: saturation equal homogenization} that being in generic coordinates coincides with being in quasi-stable position \cite[Proposition~3.2]{Hashemi-Deterministic} and Noether position \cite[Lemma~4.1]{Bermejo-Computing} in dimension $1$.
    For a survey of different genericity notions and their relations we refer to \cite{Hashemi-Deterministic}.
\end{rem}

Utilizing \Cref{Th: generic coordinates and highest degree components} we can finally provide an elementary proof that a keyed iterated polynomial system is in generic coordinates.
\begin{thm}\label{Th: iterated system generic coordinates}
    Let $K$ be an algebraically closed field, and let $P = K [x_1, \dots, \allowbreak x_{n - 1}, y]$.
    Let $\mathcal{F} = \{ f_1, \dots, \allowbreak f_n \} \subset P$ be a univariate keyed iterated system of polynomials such that
    \begin{enumerate}[label=(\roman*)]
        \item $d_i = \deg \left( f_i \right) \geq 2$ for all $1 \leq i \leq n$, and

        \item $f_i$ has the monomial $x_{i - 1}^{d_i}$ for all $2 \leq i \leq n$.
    \end{enumerate}
    Then every non-trivial homogeneous ideal $I \subset P[x_0]$ with $\mathcal{Z}_+ \left( I \right) \neq \emptyset$ and $\mathcal{F}^\homog \subset I$ is in generic coordinates.
\end{thm}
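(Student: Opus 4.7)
The plan is to reduce the theorem to \Cref{Th: generic coordinates and highest degree components} and then extend the conclusion from $\left( \mathcal{F}^\homog \right)$ to an arbitrary containing homogeneous ideal $I$ via the Caminata-Gorla technique. First I would verify condition (3) of \Cref{Th: generic coordinates and highest degree components} by showing that $\mathcal{Z} \left( \mathcal{F}^\topcomp \right) = \{ 0 \} \subset \AffSp{n}{K}$. Since $p \in K$ is a constant, $g_1 (p, y) \in K[y]$ is univariate, and condition (i) forces $\deg_y \big( g_1 (p, y) \big) = d_1 \geq 2$, so $f_1^\topcomp = c_1 \cdot y^{d_1}$ for some $c_1 \in K^\times$. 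Hence any zero of $\mathcal{F}^\topcomp$ has $y = 0$. For $2 \leq i \leq n$, the polynomial $f_i^\topcomp = g_i^\topcomp (x_{i - 1}, y)$ is homogeneous of degree $d_i$ in the two variables $x_{i - 1}, y$, so its terms are of the form $x_{i - 1}^a \cdot y^b$ with $a + b = d_i$. Substituting $y = 0$ leaves only the term $c_i \cdot x_{i - 1}^{d_i}$, whose coefficient is nonzero by condition (ii). This iteratively forces $x_{n - 1} = \cdots = x_1 = 0$.

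Therefore $\mathcal{Z} \left( \mathcal{F}^\topcomp \right) = \{ 0 \}$, and so $\left( \mathcal{F}^\topcomp \right)$ is zero-dimensional in $K [x_1, \dots, x_{n - 1}, y]$. By \Cref{Th: generic coordinates and highest degree components} (and in particular the equivalence with condition (4)), for every variable $v \in \{ x_1, \dots, x_{n - 1}, y \}$ there exists $N_v \geq 1$ such that $v^{N_v} \in \inid_{DRL} \left( \mathcal{F}^\homog \right)$, and consequently $\left( \mathcal{F}^\homog \right)$ is in generic coordinates.

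Next I would lift the genericity to the containing ideal $I$. Since $\mathcal{F}^\homog \subset I$, we have $\inid_{DRL} \left( \mathcal{F}^\homog \right) \subset \inid_{DRL} (I)$, and together with $x_0$ this gives $\mathfrak{m}^N \subset \big( \inid_{DRL} (I), x_0 \big)$ for some $N \geq 1$. By the projective weak Nullstellensatz and \cite[Lemma~2.2]{BayerStillman} this is equivalent to
\[
\mathcal{Z}_+ \Big( \inid_{DRL} (I, x_0) \Big) = \mathcal{Z}_+ \Big( \big( \inid_{DRL} (I), x_0 \big) \Big) = \emptyset.
\]
To invoke \cite[Theorem~2.4]{BayerStillman} in the Caminata-Gorla technique (\Cref{Sec: Caminata Gorla technique}), I still need $\dim \left( P [x_0] / I \right) = 1$. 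The finiteness $\left| \mathcal{Z}_+ \left( \mathcal{F}^\homog \right) \right| < \infty$ already follows from the zero-dimensionality of $\left( \mathcal{F}^\topcomp \right)$ (no points at infinity) together with the zero-dimensionality of $\left( \mathcal{F} \right)$ (finitely many affine zeros), so $\left| \mathcal{Z}_+ (I) \right| \leq \left| \mathcal{Z}_+ \left( \mathcal{F}^\homog \right) \right| < \infty$; combined with $\mathcal{Z}_+ (I) \neq \emptyset$ this gives $\dim \left( P [x_0] / I \right) = 1$. The Caminata-Gorla technique then yields $x_0 \nmid 0 \mod I^{\sat}$, so $I$ is in generic coordinates.

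The conceptually critical step is the computation $\mathcal{Z} \left( \mathcal{F}^\topcomp \right) = \{ 0 \}$; everything else is an essentially mechanical application of \Cref{Th: generic coordinates and highest degree components} and the Caminata-Gorla technique. The reason this step works out so cleanly is the bivariate structure of each $g_i^\topcomp$, which ensures that dehomogenizing in $y$ leaves behind precisely the required pure power $c_i \cdot x_{i - 1}^{d_i}$; if $g_i$ were allowed to depend on additional variables, this chain-elimination argument could fail.
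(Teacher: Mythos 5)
Your proof is correct and uses the same central observation as the paper: substituting $x_0 = 0$ into $\mathcal{F}^\homog$ and working through the rounds forces $y = x_1 = \cdots = x_{n-1} = 0$, so the top-component ideal is zero-dimensional and \Cref{Th: generic coordinates and highest degree components} applies. You are somewhat more explicit than the paper in passing from $(\mathcal{F}^\homog)$ to an arbitrary containing ideal $I$ via the Caminata--Gorla argument (the paper treats this as immediate from the cited theorem, whose proof already works at the level of $I$), but this is a presentational rather than a conceptual difference.
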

\begin{proof}
    Let us substitute $x_0 = 0$ into the equations $f_i^\homog = 0$.
    For $f_1$ we then have $y^{d_1} = 0$ and hence also $y = 0$.
    Substituting $x_0 = y = 0$ into $f_2$ then yields $x_1 = 0$, hence by iteration we obtain that $x_0 = y = x_1 = \ldots = x_{n - 1} = 0$.
    Therefore, $\sqrt{I^\topcomp} = (y, x_1, \dots, x_n)$ and the claim follows from \Cref{Th: generic coordinates and highest degree components}.
\end{proof}

    \section{DRL \& LEX Gr\"obner Bases of Keyed Iterated Polynomial Systems}\label{Sec: lex Groebner Basis}
In this section we investigate the DRL \& LEX Gr\"obner basis of univariate keyed iterated polynomial systems and Feistel-$2n/n$ polynomial systems.
Consequently, we will see that the solving degree of \MiMC and all field equations can be upper bounded by \MiMC and the field equation for the key variable, and that under a mild assumption also Feistel-$2n/n$ polynomial systems are in generic coordinates.
Moreover, understanding the degrees of polynomials in the lexicographic Gr\"obner basis will be a key ingredient in the proofs of the Castelnuovo-Mumford regularity lower bounds.

The following lemma certainly has been proven by many students of computer algebra.
\begin{lem}[{\cite[Chapter~4~\S5~Exercise~13]{Cox-Ideals}}]\label{Lem: lex shape lemma}
    Let $K$ be a field, let $f_1, \dots, f_n \in K [x_1]$ be polynomials in one variable such that $\deg \left( f_1 \right) > 0$, and let
    \[
    I = \big( f_1 (x_1), x_2 - f_2 (x_1), \dots, x_n - f_n (x_1) \big) \subset K [x_1, \dots, x_n]
    \]
    be an ideal.
    \begin{enumerate}
        \item\label{Item: unique lex representation} Every $f \in K [x_1, \dots, x_n]$ can be written uniquely as $f = q + r$ where $q \in I$ and $r \in K [x_1]$ with either $r = 0$ or $\deg \left( r \right) < \deg \left( f_1 \right)$.

        \item\label{Item: univariate LEX ideal membership} Let $f \in K [x_1]$, then $f \in I$ if and only if $f$ is divisible by $f_1 \in K [x_1]$.

        \item $I$ is a prime ideal if and only if $f_1 \in K [x_1]$ is irreducible.

        \item $I$ is a radical ideal if and only if $f_1 \in K [x_1]$ is square-free.

        \item Let $f_{1, \text{red}} \in K [x_1]$ be the generator of the radical ideal $(f_{1, \text{red}}) = \sqrt{(f_1)}$, then $\sqrt{I} = (f_{1, \text{red}}) + I$.
    \end{enumerate}
\end{lem}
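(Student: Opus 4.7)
The plan is to establish the isomorphism $K[x_1, \dots, x_n]/I \cong K[x_1]/(f_1)$ and then translate each of the five claims across it. Concretely, I would consider the $K$-algebra homomorphism $\varphi \colon K[x_1, \dots, x_n] \to K[x_1]/(f_1)$ determined by $x_1 \mapsto x_1 + (f_1)$ and $x_i \mapsto f_i(x_1) + (f_1)$ for $2 \leq i \leq n$. The inclusion $I \subseteq \ker(\varphi)$ is immediate since each generator of $I$ maps to zero. For the reverse inclusion, I would iterate the relations $x_i \equiv f_i(x_1) \pmod{I}$ (a multivariate division by the generators $x_i - f_i$ for $i \geq 2$, reducing one variable at a time in decreasing order of index) to rewrite any $f \in K[x_1, \dots, x_n]$ as $f = h + g(x_1)$ with $h \in I$ and $g \in K[x_1]$; then $\varphi(f) = g + (f_1)$, so $\varphi(f) = 0$ forces $f_1 \mid g$, hence $g \in (f_1) \subseteq I$ and $f \in I$. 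Surjectivity of $\varphi$ is clear, so it descends to the claimed isomorphism.

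Once this is in hand, each of \ref{Item: unique lex representation}--(5) becomes essentially a dictionary entry. For \ref{Item: unique lex representation}, write $f = h + g(x_1)$ as above and then reduce $g$ modulo $f_1$ by the Euclidean algorithm in $K[x_1]$; uniqueness follows because a difference $r - r' \in I \cap K[x_1]$ lies in $\ker(\varphi) \cap K[x_1] = (f_1)$, and the degree bound $\deg(r - r') < \deg(f_1)$ forces $r = r'$. Statement \ref{Item: univariate LEX ideal membership} is exactly the description of $\ker(\varphi) \cap K[x_1] = (f_1)$ just extracted. For (3) and (4), the isomorphism reduces the questions to whether $K[x_1]/(f_1)$ is an integral domain, respectively reduced, and in the principal ideal domain $K[x_1]$ these are equivalent to $f_1$ being irreducible, respectively square-free. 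For (5), the quotient $K[x_1, \dots, x_n]/\bigl((f_{1,\text{red}}) + I\bigr) \cong K[x_1]/(f_{1,\text{red}})$ is reduced by (4), so $(f_{1,\text{red}}) + I$ is radical; combined with the fact that a suitable power of $f_{1,\text{red}}$ lies in $(f_1) \subseteq I$ (which gives $(f_{1,\text{red}}) + I \subseteq \sqrt{I}$), the minimal radical ideal containing $I$ is exactly $(f_{1,\text{red}}) + I$.

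The only technical point worth stressing is the verification $\ker(\varphi) = I$: the reduction to normal form $h + g(x_1)$ must be carried out carefully, substituting $x_i - f_i(x_1)$ repeatedly to eliminate $x_n, x_{n-1}, \dots, x_2$ in turn (this is a Gr\"obner basis style reduction for the obvious elimination order). Everything else is routine bookkeeping transported across the quotient isomorphism.
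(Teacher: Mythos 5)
The paper gives no proof of this lemma; it simply cites \cite[Chapter~4~\S5~Exercise~13]{Cox-Ideals} and follows the statement with the remark that the generators are already a LEX Gr\"obner basis for the order $x_2 > \cdots > x_n > x_1$. Your argument is correct and is the expected textbook route: the $K$-algebra surjection $\varphi$ with $\ker(\varphi) = I$ gives $K[x_1,\dots,x_n]/I \cong K[x_1]/(f_1)$, and each of (1)--(5) transports across this isomorphism. Your ``reduce $x_n, x_{n-1}, \dots, x_2$ in turn, then reduce modulo $f_1$'' step is exactly the normal-form computation with respect to the Gr\"obner basis the paper alludes to (the leading monomials $x_n, \dots, x_2, x_1^{\deg f_1}$ are pairwise coprime, so the generators are a Gr\"obner basis by Buchberger's first criterion), so the two viewpoints coincide rather than diverge. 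All five items are handled cleanly; in particular the (5) argument correctly shows $(f_{1,\text{red}})+I$ is radical via the isomorphism and part (4), contains $I$, and lies inside $\sqrt{I}$ because $f_{1,\text{red}}^k \in (f_1) \subseteq I$, which pins down $\sqrt{I} = (f_{1,\text{red}})+I$.
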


If we use the LEX term order $x_2 > \ldots > x_n > x_1$, then it's easy to see that the generators of $I$ are already a LEX Gr\"obner basis.
Now we establish that the LEX Gr\"obner basis of a univariate keyed iterated polynomial system has exactly the shape of \Cref{Lem: lex shape lemma}.
\begin{lem}[Keyed Iterated Shape Lemma I]\label{Lem: keyed iterated shape lemma I}
    Let $K$ be a field, let $f_1, \dots, f_n \in K [x_1, \dots, \allowbreak x_{n - 1}, \allowbreak y]$ be a univariate keyed iterated polynomial system together with the LEX term order $x_1 > \ldots > x_{n - 1} > y$.
    Let $\hat{f}_1, \dots, \hat{f}_n \in K [x_1, \dots, x_{n - 1}, y]$ be constructed via the following iteration:
    \begin{enumerate}[label=(\roman*)]
        \item For $i = 1$, set $\hat{f}_1 = -f_1$.

        \item For $2 \leq i \leq n$, let $\hat{f}_{i} = \left( -f_i \mod \hat{f}_{i - 1} \right)$ where the modulo operation is computed with respect to the LEX term order.
    \end{enumerate}
    Then
    \begin{enumerate}
        \item\label{Item: lex polynomial shape} For $1 \leq i < n$, we have that $\hat{f}_i = x_i - \hat{g}_i (y)$ for some $\hat{g}_i \in K [y]$ and $\hat{f}_n \in K [y]$.

        \item\label{Item: lex Groebner basis} $I = \left( f_1, \dots, f_n \right) = \left( \hat{f}_1, \dots, \hat{f}_n \right)$, in particular $\hat{f}_1, \dots, \hat{f}_n$ is a LEX Gr\"obner basis of $I$.

        \item\label{Item: upper bound for all field equations} If $\left| K \right| = q$, then $I + \left( y^q - y \right) = \left( \hat{f}_1, \dots, \hat{f}_{n - 1}, \gcd \left( \hat{f}_n, y^q - y \right) \right)$, and this ideal is radical.
        In particular,
        \[
        I + \left( y^q - y \right) = I + \left( x_1^q - x_1, \dots, y^q - y \right),
        \]
        and
        \[
        \solvdeg_{DRL} \left( f_1, \dots, f_n, x_1^q - x_1, \dots, y^q - y \right) \leq \solvdeg_{DRL} \left( f_1, \dots, f_n, y^q - y \right).
        \]
    \end{enumerate}
\end{lem}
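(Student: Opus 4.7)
The plan is to handle the three claims of the lemma in order, with the key tools being the special shape of LEX reductions in a univariate key variable, Buchberger's product criterion, and a Frobenius identity for the last part.

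For claim \ref{Item: lex polynomial shape}, I would proceed by induction on $i$. The base case is immediate: $\hat{f}_1 = -f_1 = x_1 - g_1(p, y)$, so I set $\hat{g}_1(y) := g_1(p, y)$. For the inductive step with $2 \le i \le n$, the inductive hypothesis gives $\hat{f}_{i-1} = x_{i-1} - \hat{g}_{i-1}(y)$. Under the LEX order $x_1 > \cdots > x_{n-1} > y$, every monomial of $-f_i = x_i - g_i(x_{i-1}, y)$ containing $x_{i-1}$ has $x_{i-1}$ as its leading variable, so reducing modulo $\hat{f}_{i-1}$ amounts to substituting $x_{i-1} \mapsto \hat{g}_{i-1}(y)$. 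For $i < n$ this yields $\hat{f}_i = x_i - g_i\bigl(\hat{g}_{i-1}(y), y\bigr)$ of the claimed shape, and for $i = n$ the variable $x_{n-1}$ is completely eliminated, so $\hat{f}_n \in K[y]$.

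For claim \ref{Item: lex Groebner basis}, the ideal equality $I = (\hat{f}_1, \ldots, \hat{f}_n)$ is automatic because each $\hat{f}_i$ is obtained from $\{f_1, \ldots, f_i\}$ via negations and polynomial reductions, both of which preserve the generated ideal. To see that $\hat{f}_1, \ldots, \hat{f}_n$ is a LEX Gr\"obner basis, I observe that $\LM_{LEX}(\hat{f}_i) = x_i$ for $1 \le i < n$ and $\LM_{LEX}(\hat{f}_n)$ is a pure power of $y$, so all pairs of leading monomials are coprime. Hence by Buchberger's product criterion (see, e.g., \cite[Chapter~2~\S9~Theorem~3]{Cox-Ideals}) every $S$-polynomial reduces to zero, confirming the Gr\"obner basis property.

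For claim \ref{Item: upper bound for all field equations}, I first use the Gr\"obner basis from (2) to write
\[
    I + (y^q - y) = (\hat{f}_1, \ldots, \hat{f}_{n-1}) + (\hat{f}_n, y^q - y),
\]
and reduce the second summand via the PID structure of $K[y]$ to $\bigl(\gcd(\hat{f}_n, y^q - y)\bigr)$. The resulting presentation matches the template of \Cref{Lem: lex shape lemma} with $y$ playing the role of the univariate variable, and since $y^q - y$ is square-free, so is any divisor of it; thus part (4) of \Cref{Lem: lex shape lemma} delivers radicality. To obtain the identity with all field equations, it suffices to prove $x_i^q - x_i \in I + (y^q - y)$ for each $1 \le i < n$. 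I use $x_i \equiv \hat{g}_i(y) \pmod{I + (y^q - y)}$ together with the Frobenius identity $\hat{g}_i(y)^q = \hat{g}_i(y^q)$, which holds because $\hat{g}_i$ has coefficients in $K = \F_q$; combined with $y^q \equiv y$ this yields $x_i^q \equiv x_i$. Finally, the solving degree inequality is a general monotonicity principle: enlarging the generating set only adds rows to the Macaulay matrix $M_{\le d}$, so any set of leading monomials produced by Gaussian elimination on the smaller system at degree $d$ is also produced on the larger one, and since both systems generate the same ideal the Gr\"obner basis property is preserved. The main obstacle is arguably the Frobenius step, as it is the only place that genuinely uses the finiteness of $K$ and must be combined correctly with the radicality statement; the remaining steps are systematic bookkeeping with LEX reductions.
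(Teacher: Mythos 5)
Your proofs of claims (1) and (2) and of the radicality portion of claim (3) follow the same route as the paper (induction for the LEX shape, Buchberger's product criterion for the Gr\"obner basis, and squarefree divisor for radicality), as does your Macaulay-submatrix argument for the solving degree inequality. The one place where you genuinely diverge from the paper is in showing $I + (y^q - y) = I + (x_1^q - x_1, \dots, y^q - y)$: the paper argues geometrically, observing that $\mathcal{Z}\bigl(I + (y^q - y)\bigr) \subset \Fq^n$ and then invoking Hilbert's Nullstellensatz together with a vanishing-ideal theorem of Gao to conclude that the ideal (already known to be radical) must contain all field equations. You instead give a direct algebraic computation: $x_i \equiv \hat{g}_i(y) \bmod I$, combined with the Frobenius identity $\hat{g}_i(y)^q = \hat{g}_i(y^q)$ over $\Fq$ and $y^q \equiv y$, yields $x_i^q \equiv x_i$ directly. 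Both are correct. Your Frobenius route is more elementary and self-contained, avoiding the Nullstellensatz and the external citation, at the cost of being specific to the univariate-key-in-$y$ shape; the paper's geometric route is a one-size-fits-all pattern (radical ideal with zero locus in $\Fq^n$ contains the field equations) that the author reuses elsewhere without modification.
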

\begin{proof}
    For (1), follows from the construction of the $\hat{f}_i$'s.

    For (2), if we record the ``quotients'' which we drop in the modulo operation in the construction of the $\hat{f}_i$'s, then we can reconstruct the $f_i$'s with the $\hat{f}_i$'s.
    So the $\hat{f}_i$'s are indeed an ideal basis.
    Moreover, they have coprime leading monomials under LEX, so by \cite[Chapter~2~\S9~Theorem~3, Proposition~4]{Cox-Ideals} they are a LEX Gr\"obner basis of $I$.

    For (3), let $d = \gcd \left( \hat{f}_n, y^q - y \right)$.
    Clearly, $\left( \hat{f}_1, \dots, \hat{f}_{n - 1}, d \right)$ is an ideal basis of $I + \left( y^q - y \right)$, and again the leading monomials are pairwise coprime under LEX, so they are a Gr\"obner basis of $I + \left( y^q - y \right)$.
    Since $y^q - y$ is square-free also $d$ must be square-free, so by \Cref{Lem: lex shape lemma} $I + \left( y^q - y \right)$ is a radical ideal.
    It is obvious from the shape of the $\hat{f}_i$'s that already $\mathcal{Z} \big( I + (y^q - y) \big) \subset \Fqn$.
    Now we can conclude from Hilbert's Nullstellensatz and \cite[Theorem~3.1.2]{Gao-Counting} that $I + \left( y^q - y \right) = I + \left( x_1^q - x_1, \dots, y^q - y \right)$.
    For the inequality observe that the Macaulay matrix of the polynomial system with one field equation is a submatrix of the Macaulay matrix of the polynomial system with all field equations.
    So the claim follows.
\end{proof}

With an additional assumption on the leading monomials of a univariate keyed iterated polynomial system we can compute the degrees in the LEX Gr\"obner basis.
\begin{cor}\label{Cor: degrees in lexicographic Groebner basis}
    Let $K$ field, and let $f_1, \dots, f_n \in K [x_1, \dots, x_{n - 1}, y]$ be a univariate keyed iterated polynomial system such that
    \begin{enumerate}[label=(\roman*)]
        \item $d_i = \deg \left( f_i \right) \geq 2$ for all $1 \leq i \leq n$, and

        \item $f_i$ has the monomial $x_{i - 1}^{d_i}$ for all $2 \leq i \leq n$.
    \end{enumerate}
    Let $\hat{f}_1, \dots, \hat{f}_n$ be the LEX Gr\"obner basis of $f_1, \dots, f_n$.
    Then
    \[
    \deg \left( \hat{f}_i \right) = \prod_{k = 1}^{i} d_k.
    \]
\end{cor}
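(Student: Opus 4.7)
The plan is to induct on $i$, maintaining simultaneously the structural form of $\hat{f}_i$ established in part~\ref{Item: lex polynomial shape} of \Cref{Lem: keyed iterated shape lemma I} and a degree count on the univariate ``tail'' polynomial appearing in it. Concretely, for $1 \leq i < n$ write $\hat{f}_i = x_i - \hat{g}_i(y)$ with $\hat{g}_i \in K[y]$, and for $i = n$ write $\hat{f}_n = \hat{g}_n(y) \in K[y]$ (up to sign). I will prove $\deg(\hat{g}_i) = \prod_{k=1}^i d_k$, which immediately yields $\deg(\hat{f}_i) = \prod_{k=1}^i d_k$ since $\prod_{k=1}^i d_k \geq d_1 \geq 2 > 1 = \deg(x_i)$.

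For the base case $i = 1$, the construction gives $\hat{f}_1 = -f_1 = x_1 - g_1(p, y)$, so $\hat{g}_1(y) = g_1(p, y)$. Since $f_1$ has total degree $d_1 \geq 2$ while the variable $x_1$ contributes only degree one, the monomial realizing $\deg(f_1) = d_1$ must be a pure power of $y$ coming from $g_1(p, y)$; hence $\deg(\hat{g}_1) = d_1$.

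For the inductive step, observe that the modulo operation with respect to the LEX order $x_1 > \cdots > x_{n-1} > y$ against $\hat{f}_{i-1} = x_{i-1} - \hat{g}_{i-1}(y)$ is precisely the substitution $x_{i-1} \mapsto \hat{g}_{i-1}(y)$. Since $-f_i = x_i - g_i(x_{i-1}, y)$ (respectively $c - g_n(x_{n-1}, y)$ when $i = n$), this yields $\hat{g}_i(y) = g_i(\hat{g}_{i-1}(y), y)$. Write $D_{i-1} := \prod_{k=1}^{i-1} d_k$, so by the induction hypothesis $\deg(\hat{g}_{i-1}) = D_{i-1} \geq 2$. Hypothesis~(ii) guarantees that $g_i(x, y)$ contains the monomial $x^{d_i}$ with nonzero coefficient, so substitution contributes a term $c \cdot \hat{g}_{i-1}(y)^{d_i}$ of exact degree $d_i \cdot D_{i-1} = \prod_{k=1}^{i} d_k$ (no cancellation in the leading coefficient of $\hat{g}_{i-1}^{d_i}$, since $K$ is a field).

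The main technical point — and the only thing that could go wrong — is showing that no other monomial of $g_i(x,y)$ produces a term of equal or higher degree after substitution, which would risk cancellation of the leading coefficient. Any other monomial $x^a y^b$ occurring in $g_i$ satisfies $a + b \leq d_i$ and $a < d_i$, and after substitution contributes degree at most $a \cdot D_{i-1} + b$. The inequality
\[
a \cdot D_{i-1} + b \leq a \cdot D_{i-1} + (d_i - a) = a(D_{i-1} - 1) + d_i < d_i(D_{i-1} - 1) + d_i = d_i \cdot D_{i-1}
\]
uses $a < d_i$ and $D_{i-1} - 1 \geq 1$. Thus $c \cdot \hat{g}_{i-1}(y)^{d_i}$ is strictly dominant, and $\deg(\hat{g}_i) = d_i \cdot D_{i-1} = \prod_{k=1}^i d_k$, completing the induction.
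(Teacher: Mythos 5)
Your proof is correct and follows the same approach the paper intends: the paper's proof is a one-line appeal to ``the monomial assumption and the LEX Gr\"obner basis construction procedure,'' and you simply make that construction explicit by induction. In particular, you supply the one non-trivial detail the paper elides — that no cross term $x^a y^b$ with $a < d_i$ can reach or cancel the degree $d_i \cdot D_{i-1}$ contributed by the monomial $x^{d_i}$, via the inequality $a(D_{i-1}-1) + d_i < d_i D_{i-1}$ — which is exactly what the assumption $d_k \ge 2$ (hence $D_{i-1} \ge 2$) is for.
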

\begin{proof}
    The assertion follows straight-forward from the monomial assumption and the LEX Gr\"obner basis construction procedure.
\end{proof}

Conversely, we can transform any lexicographic Gr\"obner basis with the shape of \Cref{Lem: lex shape lemma} into a univariate keyed iterated polynomial system.
\begin{lem}[Keyed Iterated Shape Lemma II]\label{Lem: keyed iterated shape lemma II}
    Let $K$ be a field, and assume that the ideal $I \subset K [x_1, \dots, y]$ has a LEX Gr\"obner basis of the form
    \[
    x_1 - g_1 (y), \dots, x_{n - 1} - g_{n - 1} (y), g_n (y)
    \]
    such that $1 \leq \deg \left( g_1 \right) \leq \ldots \leq \deg \left( g_n \right)$.
    Then $I$ has an ideal basis of the form
    \[
    \hat{g}_1 (y) - x_1, \hat{g}_2 (x_1, y) - x_2, \dots, \hat{g}_{n - 1} (x_{n - 2}, y) - x_{n - 1}, \hat{g}_n (x_{n - 1}, y).
    \]
    I.e., the ideal is generated by a univariate keyed iterated polynomial system.
\end{lem}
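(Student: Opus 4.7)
The plan is constructive: iteratively perform Euclidean division of $g_i(y)$ by $g_{i-1}(y)$ in the univariate polynomial ring $K[y]$, and then ``lift'' the quotient times $g_{i-1}(y)$ to a quotient times $x_{i-1}$. The degree assumption $\deg(g_{i-1}) \leq \deg(g_i)$ is exactly what is needed to make this division meaningful and the resulting $\hat{g}_i$ genuinely depend on $x_{i-1}$.

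Concretely, set $\hat{g}_1(y) := g_1(y)$, so $\hat{g}_1(y) - x_1 = -(x_1 - g_1(y))$ trivially agrees with the first original generator. For $2 \leq i \leq n$, write
\[
g_i(y) \;=\; q_i(y)\cdot g_{i-1}(y) + r_i(y), \qquad \deg(r_i) < \deg(g_{i-1}),
\]
via Euclidean division, and define $\hat{g}_i(x_{i-1}, y) := q_i(y)\cdot x_{i-1} + r_i(y)$. Denote the old generators by $F_i := x_i - g_i(y)$ for $i < n$ and $F_n := g_n(y)$, and the new generators by $\hat{F}_i := \hat{g}_i(x_{i-1}, y) - x_i$ for $i < n$ and $\hat{F}_n := \hat{g}_n(x_{n-1}, y)$. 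A direct substitution using $r_i = g_i - q_i g_{i-1}$ yields the key identity
\[
\hat{F}_i \;=\; q_i(y)\cdot\bigl(x_{i-1} - g_{i-1}(y)\bigr) \,-\, F_i \;=\; q_i(y)\cdot F_{i-1} \,-\, F_i \qquad (2 \le i \le n-1),
\]
and analogously $\hat{F}_n = q_n(y)\cdot F_{n-1} + F_n$. This single relation does all the work of swapping generators.

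Given this identity, I would conclude the equality $I = (\hat{F}_1, \ldots, \hat{F}_n)$ by a short telescoping induction. One inclusion is immediate from the identity, since each $\hat{F}_i$ is an $K[y]$-combination of $F_{i-1}$ and $F_i$ and hence lies in $I$. For the reverse inclusion, induct on $i$ to show $F_i \in (\hat{F}_1, \ldots, \hat{F}_i)$: the base case $F_1 = -\hat{F}_1$ is clear, and the identity rearranges to $F_i = q_i(y)\cdot F_{i-1} - \hat{F}_i$, so the inductive hypothesis finishes the step; for $i = n$, use $F_n = \hat{F}_n - q_n(y)\cdot F_{n-1}$ and the already established case $i = n-1$.

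The only subtle point — which I would flag rather than call a serious obstacle — is checking that the output actually fits the definition of a univariate keyed iterated polynomial system per \Cref{Def: keyed iterated polynomial system}, i.e.\ that each $\hat{g}_i(x_{i-1},y)$ is non-constant. This is guaranteed by the standing assumption $1 \le \deg(g_1) \le \ldots \le \deg(g_n)$: the quotient $q_i$ in the Euclidean division is non-zero (since $\deg(g_i) \geq \deg(g_{i-1})$), so $\hat{g}_i$ has a genuine $x_{i-1}$-term with non-zero coefficient $q_i(y)$. The required $K$-valued zero is inherited from the hypothesis that the original LEX Gröbner basis describes a zero-dimensional ideal with roots in $K$ (via the shape prescribed by \Cref{Lem: lex shape lemma}), since the new basis cuts out the same variety.
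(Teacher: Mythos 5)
Your proof is correct, and it takes a genuinely different route from the paper's. The paper constructs the new basis iteratively by DRL polynomial reduction: $\hat{f}_1 = g_1(y) - x_1$ and $\hat{f}_i = -f_i \bmod \hat{f}_{i-1}$ with respect to the DRL order $x_1 > \ldots > x_{n-1} > y$. That reduction can produce $\hat{g}_i$ of high $x_{i-1}$-degree (e.g.\ for $g_1 = y^2, g_2 = y^6$ one obtains $\hat{f}_2 = x_1^3 - x_2$) or even with no $x_{i-1}$-dependence at all, and the paper's argument that the result has the required shape is left rather terse. You instead perform a single Euclidean division $g_i = q_i\,g_{i-1} + r_i$ in $K[y]$ and lift it, setting $\hat{g}_i(x_{i-1},y) = q_i(y)\,x_{i-1} + r_i(y)$. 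This gives a sharper output — $\hat{g}_i$ is always \emph{linear} in $x_{i-1}$ with non-zero leading coefficient $q_i(y)$ (here the hypothesis $\deg(g_{i-1}) \le \deg(g_i)$ enters to force $q_i \neq 0$) — and the identity $\hat{F}_i = q_i\,F_{i-1} - F_i$ (resp.\ $\hat{F}_n = q_n\,F_{n-1} + F_n$) makes the ideal equality a one-line telescoping induction with no appeal to term orders or reduction algorithms. The trade-off is that your route does not mirror the construction of the paper's Lemma on the LEX side, whereas the paper's proof is designed to be the literal inverse of that procedure. Your final remark about $\mathcal{Z}(\cdot) \cap K^n \neq \emptyset$ from Definition~2.1 is a fair observation: neither your proof nor the paper's addresses it, and the lemma's hypotheses do not guarantee it — but it is orthogonal to the shape-of-basis claim, which both arguments do establish.
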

\begin{proof}
    For the proof we work with the DRL term order $x_1 > \ldots > x_{n - 1} > y$.
    Let $f_1, \dots, f_n$ denote the polynomials in the LEX Gr\"obner basis, and let $\hat{f}_1, \dots, \hat{f}_n$ denote the polynomials that we claim are the univariate keyed iterated basis.
    We set $\hat{f}_1 = - \big( x_1 - f_1 (y) \big)$. For $2 \leq i \leq n$ we now compute $\hat{f}_i = - f_i \mod \hat{f}_{i - 1}$ with respect to DRL.
    Since we assumed that $1 \leq \deg \left( f_1 \right) \leq \ldots \leq \deg \left( f_n \right)$ the modulo operation indeed constructs non-trivial polynomials $\hat{g}_i (x_{i - 1}, y)$.
\end{proof}

Note that the keyed iterated system from \Cref{Lem: keyed iterated shape lemma II} is in general not a DRL Gr\"obner basis.
We present a simple counterexample.
\begin{ex}
    Let $K$ be a field, and let
    \[
    I = (x_1 - y^3, x_2 - y^5, y^7) \in K [x_1, x_2, y].
    \]
    The respective keyed iterated polynomial system of $I$ is then given by
    \[
    y^3 - x_1,\ x_1 \cdot y^2 - x_2,\ x_2 \cdot y^2,
    \]
    but the DRL Gr\"obner basis of $I$ is given by
    \[
    x_1 \cdot y^2 - x_2,\ x_2 \cdot y^2,\ y^3 - x_1,\ x_1^2 - x_2 \cdot y,\ x_1 \cdot x_2,\ x_2^2.
    \]
\end{ex}

With \Cref{Lem: lex shape lemma} \ref{Item: unique lex representation} and \Cref{Lem: keyed iterated shape lemma I} we can transform every polynomial $f \in K [x_1, \dots, \allowbreak x_n, y]$ into a univariate polynomial $\hat{f} \in K [y]$ using only ideal operations, i.e.\ by performing division by remainder with respect to the LEX Gr\"obner basis.
Understanding the degree of these univariate polynomials will be our main ingredient in proving lower bounds on the regularity.
\begin{prop}\label{Prop: lex substitution properties}
    Let $K$ be a field, and let $I = \left( f_1, \dots, f_n \right) \subset P = K [x_1, \dots, x_{n - 1}, \allowbreak y]$ be an ideal generated by a univariate keyed iterated polynomial system such that
    \begin{enumerate}[label=(\roman*)]
        \item $d_i = \deg \left( f_i \right) \geq 2$ for all $1 \leq i \leq n$, and

        \item $f_i$ has the monomial $x_{i - 1}^{d_i}$ for all $2 \leq i \leq n$.
    \end{enumerate}
    Let $f \in P$ be a polynomial, then we denote with $\hat{f} \in K [y]$ the unique univariate polynomial obtained via division by remainder of $f$ by $I$ with respect to LEX.
    Then
    \begin{enumerate}
        \item Let $a \in P \setminus \inid_{DRL} (I)$ be a monomial, in the computation of $\hat{a}$ via division by remainder there is never a reduction modulo the univariate LEX polynomial.

        \item Let $a, b \in P \setminus \inid_{DRL} (I)$ be monomials such that $a \vert b$, then $\hat{a} \vert \hat{b}$ and $\degree{\hat{a}} \leq \degree{\hat{b}}$.

        \item Let $a, b, c \in P \setminus \inid_{DRL} (I)$ be monomials such that $a \cdot c, b \cdot c \in P \setminus \inid_{DRL} (I)$.
        If $\deg \left( \hat{a} \right) \leq  \deg \left( \hat{b} \right)$, then $\deg \left( \hat{a} \cdot \hat{c} \right) \leq  \deg \left( \hat{b} \cdot \hat{c} \right)$.
    \end{enumerate}
    Let $s_i = \prod_{j = i}^{n - 1} x_i^{d_{i + 1} - 1}$.
    Then
    \begin{enumerate}[resume]
        \item\label{Item: degree} The degree of $\hat{s}_i$ is given by
        \[
        \deg \left( \hat{s}_i \right) = \prod_{k = 1}^{n} d_k - \prod_{k = 1}^{i} d_k.
        \]

        \item\label{Item: degree after substitution} Let $t \in P \setminus \inid_{DRL} (I)$ be a monomial such that $\deg \left( t \right) \leq \deg \left( s_i \right)$.
        Then we also have that $\deg \left( \hat{t} \right) \leq \deg \left( \hat{s}_i \right)$, and the inequality is strict if $t \neq s$.
    \end{enumerate}
\end{prop}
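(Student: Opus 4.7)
The plan rests on two structural facts. First, under the DRL order with $x_1 > \cdots > x_{n-1} > y$ the hypothesis on $f_i$ forces $\inid_{DRL}(f_i) = x_{i-1}^{d_i}$ for $i \geq 2$ and $\inid_{DRL}(f_1) = y^{d_1}$; since these leading monomials are pairwise coprime, Buchberger's product criterion shows that $f_1, \dots, f_n$ is already a DRL Gr\"obner basis. Hence $\inid_{DRL}(I) = (y^{d_1}, x_1^{d_2}, \dots, x_{n-1}^{d_n})$, and the standard monomials are exactly $y^{a_n} x_1^{a_1} \cdots x_{n-1}^{a_{n-1}}$ with $a_n \leq d_1 - 1$ and $a_j \leq d_{j+1} - 1$. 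Second, by \Cref{Lem: keyed iterated shape lemma I} and \Cref{Cor: degrees in lexicographic Groebner basis}, the LEX Gr\"obner basis has the form $\{x_j - \hat{g}_j(y)\}_{j < n} \cup \{\hat{g}_n(y)\}$ with $\deg(\hat{g}_j) = \prod_{k=1}^{j} d_k$ for every $1 \leq j \leq n$.

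For assertion (1), reducing a standard monomial $a = y^{a_n}\prod_j x_j^{a_j}$ by the first $n - 1$ LEX generators amounts to performing the substitution $x_j \mapsto \hat{g}_j(y)$, producing a polynomial $\tilde{a}$ of degree $a_n + \sum_j a_j \prod_{k=1}^{j} d_k$. Plugging in the standard-monomial bounds and using the telescoping identity $\sum_{j=1}^{n-1}(d_{j+1} - 1)\prod_{k=1}^{j} d_k = \prod_{k=1}^{n} d_k - d_1$ shows this degree is at most $\prod_{k=1}^{n} d_k - 1 < \deg(\hat{g}_n)$, so no reduction modulo $\hat{g}_n$ is needed and $\hat{a} = \tilde{a}$. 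Assertions (2) and (3) then follow by transporting the ring-homomorphism property of this substitution: writing $b = a \cdot c$ in (2) gives $\hat{b} = \hat{a} \cdot \tilde{c}$, hence divisibility and the degree bound since $\tilde{c}$ is a nonzero polynomial; (3) is just additivity of $\deg$ on the nonzero polynomials $\hat{a}, \hat{b}, \hat{c}$. Formula (4) is a direct application of (1) with the same telescoping identity applied to the partial sum $\sum_{j=i}^{n-1}(d_{j+1} - 1)\prod_{k=1}^{j} d_k = \prod_{k=1}^{n} d_k - \prod_{k=1}^{i} d_k$.

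The main obstacle is (5), which is an optimisation problem in disguise. I would set $\alpha_j = s_{i,j} - t_j$ with $s_{i,j} = d_{j+1} - 1$ for $j \geq i$ and $s_{i,j} = 0$ otherwise, and $\alpha_n = -t_n$; the box constraints force $\alpha_j \geq 0$ precisely for $j \geq i$ and $\alpha_j \leq 0$ for $j \in \{1, \dots, i - 1, n\}$, while $\deg(t) \leq \deg(s_i)$ gives $\sum_j \alpha_j \geq 0$. Introducing weights $w_j = \prod_{k=1}^{j} d_k$ for $j < n$ and $w_n = 1$, assertion (1) rewrites the goal as $\deg(\hat{s}_i) - \deg(\hat{t}) = \sum_j \alpha_j w_j \geq 0$. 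The decisive observation, exploiting $d_k \geq 2$, is that $w_j \geq w_i$ whenever $\alpha_j > 0$ and $w_j \leq w_i/d_i$ whenever $\alpha_j < 0$. Combining these bounds with the budget inequality yields
\[
    \sum_j \alpha_j w_j \geq w_i\Bigl(1 - \tfrac{1}{d_i}\Bigr) \sum_{\alpha_j < 0}(-\alpha_j) \geq 0.
\]
For strictness when $t \neq s_i$, a short case split finishes the argument: if some $\alpha_j < 0$ then the right-hand side is strictly positive because $d_i \geq 2$; otherwise all $\alpha_j \geq 0$ and $\alpha \neq 0$, so some $\alpha_j \geq 1$ with $j \geq i$ and that single term contributes at least $w_i \geq 1 > 0$ to the sum.
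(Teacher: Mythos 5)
Your proof is correct, and for parts (1)--(4) it follows the same route as the paper: reduce a standard monomial via the substitution $x_j \mapsto \hat{g}_j(y)$, observe that the resulting degree is bounded by $\prod_k d_k - 1 < \deg(\hat{g}_n)$ so no reduction by the univariate LEX polynomial ever occurs, then read off divisibility, degree additivity, and the telescoping formula for $\deg(\hat{s}_i)$.

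For part (5), however, your argument is genuinely different from the paper's. The paper proceeds by a downward induction on the variables: it assumes a counterexample $t$, shows that if $t$ disagrees with $s_i$ in $x_{n-1}$ then $t$ must divide $u_{n-1} = y^{d_1-1}x_{n-1}^{d_n-2}\prod_{j<n-1}x_j^{d_{j+1}-1}$, computes $\deg(\hat{u}_{n-1}) < \deg(\hat{s}_i)$ to force agreement in $x_{n-1}$, then divides both monomials by $x_{n-1}^{d_n-1}$ and repeats for $x_{n-2}, x_{n-3}, \ldots$. You instead recast the claim as a weighted exchange inequality: with exponent differences $\alpha_j$, weights $w_j = \prod_{k\leq j}d_k$ (and $w_n = 1$ for $y$), and the budget constraint $\sum_j \alpha_j \geq 0$, the observation that $w_j \geq w_i$ wherever $\alpha_j > 0$ and $w_j \leq w_i/d_i$ wherever $\alpha_j < 0$ gives $\sum_j \alpha_j w_j \geq w_i(1 - 1/d_i)\sum_{\alpha_j<0}(-\alpha_j) \geq 0$ in one stroke, with strictness falling out of $d_i \geq 2$. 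Your version avoids the somewhat informal ``replace $s_i$ and $t$ and repeat'' step of the paper, makes the role of $d_k \geq 2$ transparent, and is closed-form rather than inductive; the paper's version has the advantage of reusing part (2) directly and staying close to the structure already set up in the proof of part (1). Both are valid; yours is arguably the cleaner argument for (5).
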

\begin{proof}
    Let $I = \big(x_1 - \tilde{f}_1 (y), \dots, x_{n - 1} - \tilde{f}_{n - 1} (y), \tilde{f}_n (y) \big)$ be the LEX Gr\"obner basis of $I$, see \Cref{Lem: keyed iterated shape lemma I} \ref{Item: lex polynomial shape}.

    For (1), let $m = y^{d_1 - 1} \cdot \prod_{i = 1}^{n - 1} x_i^{d_{i + 1} - 1}$, then any monomial $a \in P \setminus \inid_{DRL} (I)$ divides $m$.
    So if there is a reduction modulo $\tilde{f}_n$ in the construction of $\hat{a}$, then there also must be a reduction in the construction of $\hat{m}$.
    Via \Cref{Cor: degrees in lexicographic Groebner basis} let us compute
    \begin{align}
        \deg \Big( m \big(\tilde{f}_1, \dots, \tilde{f}_{n - 1}, y \big) \Big) \nonumber
        &= d_1 - 1 + \sum_{i = 1}^{n - 1} (d_{i + 1} - 1) \cdot \prod_{k = 1}^{i} d_k \nonumber \\
        &= d_1 - 1 + \sum_{i = 1}^{n - 1} \left( \prod_{k = 1}^{i + 1} d_k - \prod_{k = 1}^{i} d_k \right) \nonumber \\
        &= d_1 - 1 + \prod_{k = 1}^{n} d_k - d_1 = \prod_{k = 1}^{n} d_k - 1. \nonumber
    \end{align}
    Since $\degree{\tilde{f}_n} = \prod_{k = 1}^{n} d_k$, there cannot be a reduction modulo $\tilde{f}_n$ in the construction of $\hat{m}$ anymore.
    So we have already computed $\degree{\hat{m}}$.
    By contraposition the claim follows.

    For (2) and (3), by (1) there is no reduction modulo $\tilde{f}_n$ in the construction of $\hat{a}$, $\hat{b}$ and $\hat{c}$, so the claims follow from standard polynomial arithmetic.

    For (4), the computation is analog to the degree computation in (1)
    \[
    \degree{\hat{s}_i}
    = \sum_{j = i}^{n - 1} (d_{j + 1} - 1) \cdot \prod_{k = 1}^{j} d_k
    = \sum_{j = i}^{n - 1} \left( \prod_{k = 1}^{j + 1} d_k - \prod_{k = 1}^{j} d_k \right)
    = \prod_{k = 1}^{n} d_k - \prod_{k = 1}^{i} d_k.
    \]

    For (5), we do a downwards induction.
    Assume that there is a monomial $t \in P \setminus \inid_{DRL} (I)$ such that $t \neq s_i$, $\degree{t} \leq \degree{s_i}$ and $\degree{\hat{t}} > \degree{\hat{s}_i}$.
    The monomial $t$ must differ from $s_i$ in at least one variable.
    Assume that the difference is in the variable $x_{n - 1}$, then $t$ must divide the monomial
    \[
    u_{n - 1} = y^{d_1 - 1} \cdot x_{n - 1}^{d_n - 2} \cdot \prod_{i = 1}^{n - 2} x_i^{d_{i + 1} - 1}.
    \]
    Let us compute the degree of the LEX remainder degree analog to (1) and (4)
    \begin{align}
        \degree{\hat{u}_{n - 1}}
        &= d_1 - 1 + \left( d_n - 2 \right) \cdot \prod_{k = 1}^{n - 1} d_k + \sum_{j = 1}^{n - 2} \left(d_{j + 1} - 1 \right) \cdot \prod_{k = 1}^{j} d_k \nonumber \\
        &= d_1 - 1 - \prod_{k = 1}^{n - 1} d_k + \sum_{j = 1}^{n - 1} (d_{j + 1} - 1) \cdot \prod_{k = 1}^{j} d_k \nonumber \\
        &= \prod_{k = 1}^{n} d_k - \prod_{k = 1}^{n - 1} d_k - 1 < \degree{\hat{s}_i}. \nonumber
    \end{align}
    On the other hand, by (2) we have that $\degree{\hat{t}} \leq \degree{\hat{u}_{n - 1}}$.
    Therefore, $t$ has to coincide with $s_i$ on $x_{n - 1}$, else we already have $\degree{\hat{t}} < \degree{\hat{s}_i}$.
    Now we replace $s_i$ and $t$ by $s_i / x_{n - 1}^{d_n - 1}$ and $t / x_{n - 1}^{d_n - 1}$ respectively, then we perform the same argument for $x_{n - 2}$.
    Inductively we now conclude that either $t = s_i$ or $\degree{\hat{t}} < \degree{\hat{s}_i}$.
\end{proof}

\subsection{DRL \& LEX Gr\"obner Bases for Feistel-2n/n}
Having studied the LEX Gr\"obner basis of univariate keyed iterated polynomial systems we now describe LEX and DRL Gr\"obner bases of Feistel-$2n/n$ polynomial systems, see \Cref{Def: Feistel keyed iterated polynomial system}.
\begin{prop}\label{Prop: Feistel Groebner bases}
    Let $K$ be a field, and let $\mathcal{F} = \left\{ f_{L, 1}, f_{R, 1}, \dots, f_{L, n}, f_{R, n} \right\} \subset K [x_{L, 1}, x_{R, 1}, \allowbreak \dots, \allowbreak x_{L, n - 1}, \allowbreak x_{R, n - 1}, y]$ be a keyed iterated polynomial system for Feistel-$2n/n$ such that
    \begin{enumerate}[label=(\roman*)]
        \item $d_i = \deg \left( f_{L, i} \right) \geq 2$ for all $1 \leq i \leq n$, and

        \item\label{Item: Feistel monomial assumption} $f_i$ has the monomial $x_{L, i - 1}^{d_i}$ for all $2 \leq i \leq n$.
    \end{enumerate}
    Then
    \begin{enumerate}
        \item\label{Item: Feistel Groebner basis} For the DRL term order $x_{L, 1} > x_{R, 1} > \ldots > x_{L, n - 1} > x_{R, n - 1} > y$, a DRL Gr\"obner basis $\mathcal{G}$ of $\mathcal{F} \setminus \{ f_{R, n} \}$ is given by
        \[
        \begin{pmatrix}
            \tilde{f}_{L, i} \\ \tilde{f}_{R, i}
        \end{pmatrix}
        =
        \begin{dcases}
            \begin{pmatrix}
                p_R + g_1 (p_L, y) - x_{R, 2} \\
                p_L - x_{R, 1} \\
            \end{pmatrix}
            , & i = 1, \\
            \begin{pmatrix}
                p_L + g_2 (x_{R, 2}, y) - x_{R, 3} \\
                x_{L, 1} - x_{R, 2}
            \end{pmatrix}
            , & i = 2, \\
            \begin{pmatrix}
                x_{R, i - 1} + g_i (x_{R, i}, y) - x_{R, i + 1} \\
                x_{L, i - 1} - x_{R, i}
            \end{pmatrix}
            , & 3 \leq i \leq n - 2, \\
            \begin{pmatrix}
                x_{R, n - 2} + g_{n - 1} (x_{R, n - 1}, y) - x_{L, n - 1} \\
                x_{L, n - 2} - x_{R, n - 1}
            \end{pmatrix}
            , & i = n - 1, \\
            \begin{pmatrix}
                x_{R, n - 1} + g_n (x_{L, n - 1}, y) - c_R \\
                0
            \end{pmatrix}
            , & i = n.
        \end{dcases}
        \]

        \item\label{Item: generic generators} If we remove the linear polynomials from the  DRL Gr\"obner basis $\mathcal{G}$, then this downsized polynomial system $\mathcal{H} \subset P = K [x_{R, 2}, \dots, x_{R, n - 1}, \allowbreak x_{L, n - 1}, \allowbreak y]$ is already a zero-dimensional Gr\"obner basis.
        Moreover, $\left( \mathcal{H}^\homog, f_{R, n}^\homog \right)$ is in generic coordinates over $\bar{K}$.

        \item\label{Item: Feistel LEX Groebner Basis} For the LEX term order $x_{R, 2} > \ldots > x_{R, n - 1} > x_{L, n - 1} > y$ the Gr\"obner basis of $(\mathcal{H})$ is of the form
        \[
        x_{L, 1} - \hat{f}_1, x_{R, 2} - \hat{f}_2, \dots, \allowbreak x_{R, n - 1} - \hat{f}_{n - 1}, \hat{f}_n
        \]
        where the $\hat{f}_i \in K [y]$ are constructed analog to the LEX Gr\"obner basis in \Cref{Lem: keyed iterated shape lemma I}.

        \item\label{Item: Feistel degrees in LEX Groebner basis} The degree of $\hat{f}_i$ is given by
        \[
        \degree{\hat{f}_i} = \prod_{k = 1}^{i} d_k.
        \]
    \end{enumerate}
    Let $f \in P$ be a polynomial, then we denote with $\hat{f} \in K [y]$ the unique univariate polynomial obtained via division by remainder of $f$ by $(\mathcal{H})$ with respect to LEX, and for $2 \leq i \leq n - 1$ let $s_i = x_{L, n - 1}^{d_n - 1} \cdot \prod_{j = i}^{n - 2} x_{R, i}^{d_i - 1}$ where $s_{n - 1} = x_{L, n - 1}^{d_n - 1}$.
    Then
    \begin{enumerate}[resume]
        \item\label{Item: Feistel degree} The degree of $\hat{s}_i$ is given by
        \[
        \deg \left( \hat{s}_i \right) = \prod_{k = 1}^{n} d_k - \prod_{k = 1}^{i} d_k.
        \]

        \item\label{Item: Feistel degree inequality} Let $t \in P \setminus \inid_{DRL} (I)$ be a monomial such that $\deg \left( t \right) \leq \deg \left( s_i \right)$.
        Then we also have that $\deg \left( \hat{t} \right) \leq \deg \left( \hat{s}_i \right)$, and the inequality is strict if $t \neq s$.
    \end{enumerate}
\end{prop}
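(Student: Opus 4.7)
The plan is to first establish part~(1) by noting that $\tilde{\mathcal{G}}$ is obtained from $\mathcal{F} \setminus \{f_{R, n}\}$ via elementary combinations using the linear polynomials $f_{R, j}$, which substitute $x_{R, 1} \leftarrow p_L$ and $x_{L, j - 1} \leftarrow x_{R, j}$ inside every $f_{L, i}$; since these substitutions are expressible as polynomial multiples of the $f_{R, j}$'s, the generated ideal is unchanged. Under the prescribed DRL order, the leading monomials are $y^{d_1}$ (from $\tilde{f}_{L, 1}$, because $d_1 \geq 2 > 1$), $x_{R, i}^{d_i}$ for $2 \leq i \leq n-1$, $x_{L, n-1}^{d_n}$ for $\tilde{f}_{L, n}$, together with the linear leaders $x_{R, 1}, x_{L, 1}, \dots, x_{L, n-2}$ coming from the $\tilde{f}_{R, j}$'s. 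These are pairwise coprime, so the standard Buchberger product criterion (Cox-Little-O'Shea, Chapter~2~\S 9 Theorem~3 and Proposition~4) yields that $\mathcal{G}$ is a DRL Gr\"obner basis.

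For part~(2), the same coprime-leading-monomial argument applied to the subset $\mathcal{H} = \{\tilde{f}_{L, 1}, \dots, \tilde{f}_{L, n}\}$ shows that $\mathcal{H}$ is already a DRL Gr\"obner basis in the reduced ring $P$, and since $\inid_{DRL}(\mathcal{H})$ contains a pure power of every variable of $P$, zero-dimensionality is immediate. To establish generic coordinates for $\left( \mathcal{H}^\homog, f_{R, n}^\homog \right)$, I would invoke \Cref{Th: generic coordinates and highest degree components} and show $\sqrt{(\mathcal{H}^\topcomp, f_{R, n}^\topcomp)} = (x_{R, 2}, \dots, x_{R, n-1}, x_{L, n-1}, y)$. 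The key observations are that $\deg(f_{L, 1}) = d_1$ forces $g_1(p_L, y)$ to have degree exactly $d_1$ in $y$, so $\tilde{f}_{L, 1}^\topcomp$ is a nonzero scalar multiple of $y^{d_1}$, placing $y$ in the radical; that $f_{R, n}^\topcomp = x_{L, n - 1}$ yields $x_{L, n-1}$ directly; and that for each remaining $\tilde{f}_{L, i}^\topcomp = g_i^\topcomp(x_{R, i}, y)$ (or for $i = n - 1$ the one involving $x_{L, n-1}$), reducing modulo $y$ and the previously handled variables isolates the monomial $x_{R, i}^{d_i}$ guaranteed by hypothesis~(ii), so that by backwards induction every $x_{R, i}$ lies in the radical.

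Parts~(3) and~(4) I would handle together by an inductive construction mirroring \Cref{Lem: keyed iterated shape lemma I}: successively substitute $x_{R, 2} \mapsto \hat{f}_1(y) = p_R + g_1(p_L, y)$, $x_{R, 3} \mapsto \hat{f}_2(y) = p_L + g_2(\hat{f}_1, y)$, and in general $\hat{f}_i(y) = \hat{f}_{i - 2}(y) + g_i(\hat{f}_{i - 1}, y)$, producing univariate remainders $\hat{f}_1, \dots, \hat{f}_n \in K[y]$. The resulting basis has pairwise coprime LEX leading monomials and is therefore a LEX Gr\"obner basis. The degree recursion combined with the monomial $x^{d_i}$ in $g_i$ shows that $\hat{f}_{i-1}^{d_i}$ strictly dominates the additive correction $\hat{f}_{i-2}$ (no cancellation occurs once $d_i \geq 2$), so $\deg(\hat{f}_i) = d_i \cdot \deg(\hat{f}_{i-1})$ and hence $\deg(\hat{f}_i) = \prod_{k=1}^{i} d_k$ by induction.

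Parts~(5) and~(6) then follow from the same telescoping-sum calculations and downwards induction on the variable indices that underlie \Cref{Prop: lex substitution properties} items~(4) and~(5), applied verbatim to $s_i$ written in the Feistel variables $x_{R, i}, \dots, x_{R, n-1}, x_{L, n-1}$; the analog of the ``control monomial'' $u_{n-1}$ is obtained by lowering the $x_{L, n-1}$ exponent by one. The main technical obstacle I anticipate is the careful leading-monomial bookkeeping of part~(1), in particular verifying coprimality between the degree-$d_1$ leader $y^{d_1}$ of $\tilde{f}_{L, 1}$ and the linear leaders of the $\tilde{f}_{R, j}$'s under the chosen DRL order; once this is settled, the remainder of the proof is largely a formal transcription of the univariate arguments to the Feistel setting.
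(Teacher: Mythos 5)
Your proposal is correct and follows essentially the same line of reasoning as the paper's proof: substituting the linear $\tilde{f}_{R,j}$'s into the nonlinear polynomials to obtain pairwise-coprime leading monomials for part (1), noting that $\inid_{DRL}(\mathcal{H})$ contains a pure power of every variable of $P$ for part (2), carrying over the Caminata--Gorla radical argument for generic coordinates, iterated substitution for the LEX basis in parts (3)--(4), and transcribing the telescoping-degree calculations from \Cref{Prop: lex substitution properties} for parts (5)--(6). Your additions (the explicit list of leading monomials, the observation that the constraint $\deg(f_{L,1}) = d_1$ forces $\tilde{f}_{L,1}^\topcomp$ to be a nonzero multiple of $y^{d_1}$, and the recursion $\hat f_i = \hat f_{i-2} + g_i(\hat f_{i-1}, y)$ with the no-cancellation degree argument) faithfully fill in the details the paper leaves implicit; only a small cosmetic slip remains in calling the radical-membership sweep over $y, x_{R,2}, \dots$ a ``backwards'' induction when it proceeds forward through the round indices.
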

\begin{proof}
    For (1), the polynomials $\tilde{f}_{L, i}$ are constructed by substituting the linear polynomials of $\mathcal{F} \setminus \{ f_{R, n} \}$ into the non-linear ones.
    After the substitution all leading monomials are coprime so by \cite[Chapter 2 §9 Theorem 3, Proposition 4]{Cox-Ideals} we have constructed a Gr\"obner basis.

    For (2), it is easy to see that $\mathcal{H} = \{ \tilde{f}_{L, 1}, \dots, \tilde{f}_{L, n} \} \subset P = K [x_{R, 2}, \dots, x_{R, n - 1}, \allowbreak x_{L, n - 1}, \allowbreak y]$ and that
    $\inid_{DRL} (\mathcal{H}) = \left( y^{d_1}, x_{R, 2}^{d_2}, \dots, x_{R, n - 1}^{d_{n - 1}}, x_{L, n - 1}^{d_n} \right)$.
    Observe that only a finite number of monomials of $P$ is not contained in $\inid_{DRL} (\mathcal{H})$.
    I.e., $\dim_K \big( P / \inid_{DRL} (\mathcal{H}) \big) < \infty$ as $K$-vector space and by a well-known equivalence from commutative algebra (see \cite[Theorem~5.11]{Kemper-CommAlg}) this implies zero-dimensionality.
    Lastly, being in generic coordinates is proven analog to \Cref{Th: iterated system generic coordinates}.

    For (3), given the DRL Gr\"obner basis from (1) and (2) the LEX Gr\"obner basis can be constructed via iterated substitutions.

    For (4), this follows analog to \Cref{Cor: degrees in lexicographic Groebner basis}.

    For (5) and (6), the proofs are identical to \Cref{Prop: lex substitution properties} \ref{Item: degree} and \ref{Item: degree after substitution}.
\end{proof}
We provide a counterexample that in general the generators of the DRL Gr\"obner basis of Feistel-$2n/n$ cannot be transformed into a univariate keyed iterated polynomial system.
\begin{ex}
    Consider \MiMC-$2n/n$ over $\F_{13}$ with the round constants and plain/ciphertext pair
    \[
    c_1 = 0, \quad c_2 = 0, \quad c_3 = 0, \quad c_4 = 0, \quad
    \begin{pmatrix}
        p_L \\ p_R
    \end{pmatrix}
    =
    \begin{pmatrix}
        0 \\ 0
    \end{pmatrix}
    , \quad
    \begin{pmatrix}
        c_L \\ c_R
    \end{pmatrix}
    =
    \begin{pmatrix}
        0 \\ 0
    \end{pmatrix}
    .
    \]
    The downsized DRL Gr\"obner basis is
    \begin{align}
        &y^{3} - x_{R, 2}, \nonumber \\
        &x_{R, 2}^3 - 2 \cdot x_{R, 2}^2 \cdot y - 2 \cdot x_{R, 2} \cdot y^2 + y^3 - x_{R, 3}, \nonumber \\
        &x_{R, 3}^3 - 2 \cdot x_{R, 3}^2 \cdot y - 2 \cdot x_{R, 3} \cdot y^2 + 2 \cdot y^3 - x_{L, 3}, \nonumber \\
        &x_{L, 3}^3 - 2 \cdot x_{L, 3}^2 \cdot y - 2 \cdot x_{L, 3} \cdot y^2 + y^3 + y + x_{R, 3}. \nonumber
    \end{align}
    But the univariate keyed iterated generators of this system are
    \begin{align}
        &y^{3} - x_{R, 2}, \nonumber \\
        &x_{R, 2}^3 - 2 \cdot x_{R, 2}^2 \cdot y - 2 \cdot x_{R, 2} \cdot y^2 + x_{R, 2} - x_{R, 3}, \nonumber \\
        &x_{R, 3}^3 - 2 \cdot x_{R, 3}^2 \cdot y - 2 \cdot x_{R, 3} \cdot y^2 + 2 \cdot y^3 - x_{L, 3}, \nonumber \\
        &y^9 - 2 \cdot y^7 - 2 \cdot y^5 + x_{L, 3}^3 - 2 \cdot x_{L, 3}^2 \cdot y - 2 \cdot x_{L, 3} \cdot y^2 + 2 \cdot y^3 + y. \nonumber
    \end{align}
\end{ex}

\subsection{DRL \& LEX Gr\"obner Bases for Univariate Keyed Iterated Polynomial Systems With Two Plain/ciphertexts}
If one has multiple plain/ciphertext samples for a cipher, then one can combine the respective iterated polynomial systems into a joint system and compute its Gr\"obner basis.
Analog to \Cref{Lem: keyed iterated shape lemma I} and \Cref{Prop: lex substitution properties} we now describe DRL Gr\"obner bases for a two plain/ciphertext attack on a univariate cipher.
With the same assumptions as in \Cref{Th: iterated system generic coordinates} we can also prove that the polynomial system of a two plain/ciphertext attack is in generic coordinates.
\begin{prop}\label{Prop: variety intersection substituion properties}
    Let $K$ be a field, and let
    \[
    \begin{split}
        f_1, \dots, f_n &\in K [u_1, \dots, u_{n - 1}, y], \text{ and} \\
        h_1, \dots, h_n &\in K [v_1, \dots, v_{n - 1}, y]
    \end{split}
    \]
    be two univariate keyed iterated polynomial systems which are constructed with the same $g_1, \dots, \allowbreak g_n \in K [x, y]$ but with different plain/ciphertext pairs $(p_1, c_1), (p_2, c_2) \allowbreak \in K^2$.
    Let $\mathcal{F} = \{ f_1, \dots, f_n, h_1, \dots, h_n \}$, and assume that
    \begin{enumerate}[label=(\roman*)]
        \item $d_i = \deg \left( g_i \right) \geq 2$ for all $1 \leq i \leq n$, and

        \item $g_i$ has the monomial $x^{d_i}$ for all $2 \leq i \leq n$.
    \end{enumerate}
    Then
    \begin{enumerate}
        \item\label{Item: Groebner bases} The sets
        \[
        \{ f_1, \dots, f_n \}, \quad \{ h_1, \dots, h_n \}, \quad \mathcal{F} \setminus \{ h_1 \}, \quad \mathcal{F} \setminus \{ f_1 \},
        \]
        are DRL Gr\"obner bases.

        \item\label{Item: variety intersection generic coordinates} If in addition $K$ is algebraically closed, then $\left( \mathcal{F}^\homog \right)$ is in generic coordinates.
    \end{enumerate}
\end{prop}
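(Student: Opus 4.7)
The plan is to reduce both claims to tools already set up in the paper: the product criterion for Gr\"obner bases with pairwise coprime leading monomials and the top-components characterization of generic coordinates from \Cref{Th: generic coordinates and highest degree components}.

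For claim~\ref{Item: Groebner bases}, I would fix a DRL term order with $y$ smallest, for instance $u_1 > \cdots > u_{n - 1} > v_1 > \cdots > v_{n - 1} > y$. For $i \geq 2$ the monomial assumption on $g_i$ forces $\LM_{DRL}(f_i) = u_{i - 1}^{d_i}$ and $\LM_{DRL}(h_i) = v_{i - 1}^{d_i}$; the generators $f_1$ and $h_1$ are univariate in $y$ apart from a single linear term $-u_1$, respectively $-v_1$, and since $d_1 \geq 2$ their leading monomials are pure powers of $y$. In each of the four candidate bases the leading monomials then lie in pairwise disjoint variables and are therefore pairwise coprime, so \cite[Chapter~2~\S9~Proposition~4]{Cox-Ideals} yields the Gr\"obner basis property.

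For claim~\ref{Item: variety intersection generic coordinates}, I would invoke \Cref{Th: generic coordinates and highest degree components}. Both side conditions hold for $\mathcal{F}$: the ideal is non-trivial because the common key of the two plain/ciphertext pairs lies in $\mathcal{Z}(\mathcal{F})$, and zero-dimensionality follows from claim~\ref{Item: Groebner bases}, since the leading ideal of $\mathcal{F} \setminus \{h_1\}$ contains a pure power of every variable and hence cuts out a finite-dimensional quotient. It therefore suffices to verify
\[
\sqrt{\mathcal{F}^\topcomp} = (u_1, \dots, u_{n - 1}, v_1, \dots, v_{n - 1}, y),
\]
and for this I would replay the argument of \Cref{Th: iterated system generic coordinates}: the top component of $f_1$ is a pure power of $y$, so $y \in \sqrt{\mathcal{F}^\topcomp}$; then substituting $y = 0$ into $f_i^\topcomp$ and $h_i^\topcomp$ for $i \geq 2$ isolates $u_{i - 1}^{d_i}$, respectively $v_{i - 1}^{d_i}$, by the monomial hypothesis on $g_i$, placing each remaining variable in the radical.

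There is no substantial obstacle: the proof reduces to clean applications of existing results, with the second plain/ciphertext pair merely contributing the symmetric $v$-block to the radical check. The only mild point of care is that for the argument on $f_1^\topcomp$ and $h_1^\topcomp$ to deliver a pure $y$-power one uses that $g_1(p_j, y)$ has $y$-degree at least two, which is automatic in every cipher setting the paper targets (e.g.\ \MiMC and its Feistel variants, where $g_1$ contains the monomial $y^{d_1}$ by construction).
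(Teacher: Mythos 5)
Your proof is correct and takes essentially the same route as the paper: the paper's own proof simply cites the coprime-leading-monomial criterion for claim~(1) and declares claim~(2) ``analog to \Cref{Th: iterated system generic coordinates}'', which is exactly what you carry out (verifying $\sqrt{\mathcal{F}^\topcomp}$ equals the irrelevant maximal ideal and then applying \Cref{Th: generic coordinates and highest degree components}). Your closing caveat is a genuine observation: the hypothesis list of the proposition constrains $g_1$ only by $\deg g_1 \geq 2$ and reserves the ``has the monomial $x^{d_i}$'' condition for $i \geq 2$, so for $f_1^\topcomp$ and $h_1^\topcomp$ to be pure powers of $y$ one is implicitly importing the stronger assumption made on $f_1$ in \Cref{Th: iterated system generic coordinates}, which holds in the paper's target applications but is not literally forced by what is written here.
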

\begin{proof}
    (1) follows from \cite[Chapter~2~\S9~Theorem~3,~Proposition~4]{Cox-Ideals}, and the proof of (2) is analog to \Cref{Th: iterated system generic coordinates}.
\end{proof}

Note that it is also straight-forward to generalize \ref{Item: variety intersection generic coordinates} for any number of plain/ciphertext pairs.

    \section{Solving Degree Upper Bounds For Attacks On \MiMC}\label{Sec: applications}
Combining our results from \Cref{Sec: characterization generic coordinates,Sec: lex Groebner Basis} we can now derive upper bounds for the solving degree of various attacks on \MiMC, Feistel-\MiMC and Feistel-\MiMC-Hash.
To illustrate our bounds in practice we also compute the bit complexity of \Cref{Equ: Groebner basis complexity} for sample values.

\subsection{Adding a Minimal Number of Field Equations}\label{Sec: field equation attack}
In the original bound for \MiMC, see \Cref{Ex: MiMC all field equations I}, we had to include all field equations into the system, but as we saw in \Cref{Lem: keyed iterated shape lemma I} it suffices to include a single field equation to limit all solutions to the base field.
\begin{ex}[\MiMC and one field equation I]\label{Ex: MiMC solving degree I}
    Let \MiMC be defined over $\Fq$, and let $r$ be the number of rounds.
    We denote with $I_\MiMC$ the \MiMC ideal.
    It follows from \Cref{Lem: keyed iterated shape lemma I} \ref{Item: upper bound for all field equations} that one only needs to include the field equation for the key variable $y$ to limit all solutions to $\Fq$. Hence, by applying \Cref{Cor: Macauly bound} and \Cref{Th: iterated system generic coordinates} to this system we yield
    \[
        \solvdeg_{DRL} \big( I_\MiMC + (y^q - y) \big) \leq q + 2 \cdot r.
    \]
\end{ex}

As an immediate consequence we can also improve the bound of the attack with all field equations.
\begin{ex}[\MiMC and all field equations II]\label{Ex: MiMC all field equations II}
    Let \MiMC be defined over $\Fq$, and let $r$ be the number of rounds.
    Denote the ideal of all field equations by $F$ and the \MiMC ideal with $I_\MiMC$.
    Then by \Cref{Lem: keyed iterated shape lemma I} \ref{Item: upper bound for all field equations} and \Cref{Ex: MiMC solving degree I} the solving degree is bounded by
    \[
        \solvdeg_{DRL} (I_\MiMC + F) \leq q + 2 \cdot  r.
    \]
    Moreover, small scale experiments indicate that the solving degree of this attack is always less than or equal to $q + r - 1$.
\end{ex}

Since the \MiMC polynomials are already a DRL Gr\"obner basis, we can also replace the field equation $y^q - y$ by its remainder $r_y$ modulo $I_\MiMC$ with respect to DRL.
Then the solving degree bound becomes
\begin{equation}
    \solvdeg_{DRL} \big( I_\MiMC + (y^q - y) \big) \leq \degree{r_y} + 2 \cdot r.
\end{equation}
Let $r \geq \ceil{\log_3 \left( q \right)}$, then experimentally we observed that
\begin{equation}\label{Equ: observed degree bound}
    \degree{r_y} \leq 2 \cdot \ceil{\log_3 \left( q \right)}.
\end{equation}

In \Cref{Tab: MiMC sample values} we provide complexity estimates in bits for a Gr\"obner basis computation of \MiMC and the field equation for the key for an optimal adversary with $\omega = 2$.
For ease of computation we estimated the logarithm of the binomial coefficient with \Cref{Equ: bit complexity estimate}.
For the chosen field sizes and the least round number such that $r \geq \ceil{\log_3 \left( q \right)}$ \MiMC achieves a security level of at least $128$ bits against the field equation attack.
\begin{table}[H]
    \centering
    \caption{Complexity estimation of Gr\"obner basis computation for \MiMC and the field equation for the key variable with $\omega = 2$.
             $\kappa_\text{rem}$ denotes the complexity estimate for the remainder of the field equation with the empirically observed degree bound (\Cref{Equ: observed degree bound}).
             The number of rounds $r$ is the least integer such that $r \geq \ceil{\log_3 \left( q \right)}$.}
    \label{Tab: MiMC sample values}
    \begin{tabular}{ c | c | c | c }
        \toprule

        $\log_2 \left( q \right)$  & $64$    & $128$   & $256$ \\
        \midrule

        $r$                        & $50$    & $81$    & $162$    \\
        $\kappa_\text{rem}$ (bits) & $337.5$ & $572.4$ & $1156.2$ \\

        \bottomrule
    \end{tabular}
\end{table}

\subsection{The Two Plain/Ciphertext Attack}
Intuitively, with a single plain/ciphertext pair one can construct a fully determined polynomial system a cipher.
By adding more plain/ciphertext pairs one constructs an overdetermined system, and it is expected that the additional information reduces the difficulty of solving the system. Let $I, J \subset P$ be ideals representing a cipher for different plain/ciphertext pairs.
Combining these two systems into a single system geometrically corresponds to the intersection of two varieties, i.e.,
\begin{equation}
    \mathcal{Z} (I + J) = \mathcal{Z} (I) \cap \mathcal{Z} (J).
\end{equation}
Let us now apply these considerations to \MiMC.
\begin{ex}[\MiMC and two plain/ciphertext pairs I]\label{Ex: MiMC two plaintext attack I}
    Let \MiMC be defined over $\Fq$, let $r$ be the number of rounds, and let $(p_1, c_1), (p_2, c_2) \in \Fq^2$ be two distinct plain/ciphertext pairs generated with the same key by a \MiMC encryption function.
    For these pairs we can construct the univariate polynomials $f_1, f_2 \in \Fq [y]$ in the respective LEX Gr\"obner basis of degree $3^r$.
    These two polynomials must have at least one common root, namely the key $k \in \Fq$.
    If one divides $f_1$ and $f_2$ by $y - k$ and considers them as random polynomials, then with high probability they are coprime.
    Now let $I_{\MiMC, 1} \subset \Fq [u_1, \dots, u_{r - 1}, y]$ and $I_{\MiMC, 2} \subset \Fq [v_1, \dots, v_{r - 1}, y]$ denote the ideals corresponding to the plain/ciphertext pairs. Then, with high probability $\mathcal{Z} (I_{\MiMC, 1} + I_{\MiMC, 2})$ contains only a single point.
    By \Cref{Cor: Macauly bound} and \Cref{Prop: variety intersection substituion properties} \ref{Item: variety intersection generic coordinates} we now obtain the following bound for the solving degree of $I_{\MiMC, 1} + I_{\MiMC, 2}$
    \[
    \solvdeg_{DRL} \left( I_{\MiMC, 1} + I_{\MiMC, 2} \right) \leq 4 \cdot r + 1.
    \]
\end{ex}

In \Cref{Tab: MiMC two plaintext sample values} we provide complexity estimates in bits for a Gr\"obner basis computation of \MiMC and two plain/ciphertexts for an optimal adversary with $\omega = 2$.
For ease of computation we estimated the logarithm of the binomial coefficient with \Cref{Equ: bit complexity estimate}.
Recall from \Cref{Tab: MiMC sample values} that for $q \geq 2^{64}$ we have that $r \geq 50$, hence $50$ rounds are sufficient to achieve $128$ bits security against the two plain/ciphertext attack.
\begin{table}[H]
    \centering
    \caption{Complexity estimation of Gr\"obner basis computation for \MiMC and two plain/ciphertext pairs with $\omega = 2$ over a finite field $\Fq$ with $\gcd \left( 3, q - 1 \right) = 1$.}
    \label{Tab: MiMC two plaintext sample values}
    \begin{tabular}{ c | c }
        \toprule

        $r$ & $\kappa$ (bits) \\
        \midrule

        $10$ & $99.4$  \\
        $50$ & $538.1$ \\

        \bottomrule
    \end{tabular}
\end{table}

\subsection{Feistel-\MiMC}
Interestingly, \MiMC-$2n/n$ behaves very similar to the two plaintext attack on \MiMC, in the sense that with high probability the standard polynomial model of \MiMC-$2n/n$ does not have any solutions in the algebraic closure and its Gr\"obner basis is expected to be linear.
\begin{ex}[\MiMC-$2n/n$ I]\label{Ex: Feistel MiMC I}
    Let $\Fq$ be a finite field, let $r$ be the number of rounds, and let $k \in \Fq$ denote the key.
    Suppose we are given a plain/ciphertext pair $\left( p_L, p_R \right), \left( c_L, c_R \right) \in \Fq^2$ for \MiMC-$2n/n$ generated by the key $k$.
    By substituting this pair into the cipher function we obtain two univariate polynomials $F_y (p_L, p_R) - (c_L, c_R) = (0, 0)$ in the key variable $y$.
    These polynomials have at least one common root, namely $y - k$.
    If we divide these polynomials by $y - k$ and consider them as random polynomials, then with high probability they are coprime.
    Now, to launch an efficient Gr\"obner basis attack we first compute the downsized DRL Gr\"obner basis of the Feistel-$2n/n$ polynomial system from \Cref{Prop: Feistel Groebner bases} \ref{Item: Feistel Groebner basis}.
    Then we add the missing polynomial and compute the Gr\"obner basis.
    By \Cref{Prop: Feistel Groebner bases} \ref{Item: generic generators} the polynomial system is in generic coordinates, therefore we can also apply \Cref{Cor: Macauly bound} to obtain the following bound for the solving degree
    \[
    \solvdeg_{DRL} \left( I_{\MiMC\text{-}2n/n} \right) \leq 2 \cdot r + 1.
    \]
\end{ex}

In the following table we provide complexity estimates in bits for a Gr\"obner basis computation of Feistel-\MiMC for an optimal adversary with $\omega = 2$.
For ease of computation we estimated the logarithm of the binomial coefficient with \Cref{Equ: bit complexity estimate}.
As in \Cref{Tab: MiMC sample values,Tab: MiMC two plaintext sample values}, $50$ rounds are sufficient for Feistel-\MiMC to achieve $128$ bits security.
\begin{table}[H]
    \centering
    \caption{Complexity estimation of Gr\"obner basis computation for Feistel-\MiMC with $\omega = 2$ over a finite field $\Fq$.}
    \label{Tab: Feistel MiMC sample values}
    \begin{tabular}{ c | c }
        \toprule

        $r$ & $\kappa$ (bits) \\
        \midrule

        $10$ & $48.6$  \\
        $50$ & $266.7$ \\

        \bottomrule
    \end{tabular}
\end{table}

\subsection{Feistel-\MiMC-Hash}\label{Sec: Feistel-MiMC-Hash upper bounds}
For Feistel-\MiMC-Hash the Feistel-\MiMC permutation is instantiated in the sponge framework \cite{EC:BDPV08}.
For a preimage attack on Feistel-\MiMC-Hash we have to, as the name suggest, compute a preimage to a given hash value $\alpha \in \Fq$.
We have two generic choices to do so.
First, we can guess the second permutation output value and then simply invert the permutation.
If the preimage is of the form $(\beta, 0)$, for some $\beta \in \Fq$, then the attack was successful.
Though, the success probability of this approach is $1 / q$, and $q$ is at least a $64$-bit prime number, which is too small for a practical attack.
Second, we can use an indeterminate $x_2$ for the second permutation output, then we have to find a solution for the equation
\begin{equation}
    \text{Feistel-}\MiMC
    \begin{pmatrix}
        x_1 \\ 0
    \end{pmatrix}
    =
    \begin{pmatrix}
        \alpha \\ x_2
    \end{pmatrix}
    .
\end{equation}
Further, for the preimage problem we have only one generic choice of polynomials to restrict all solutions to the base field: field equations.
\begin{ex}[Feistel-\MiMC-Hash preimage attack I]\label{Ex: Feistel-MiMC-Hash preimage I}
    Let $\Fq$ be a finite field, and let $r$ be the number of rounds.
    We can construct the polynomial system for Feistel-\MiMC-Hash from the one for the keyed permutation, see \Cref{Def: Feistel keyed iterated polynomial system}, by setting $y = 0$, $p_L = x_1$, $p_R = 0$, $c_L = \alpha$ and $c_R = x_2$, where $x_1$ and $x_2$ are indeterminates and $\alpha \in \Fq$ is the hash value.
    Moreover, we choose the DRL term order such that the intermediate state variables are naturally ordered, $x_1 > x_2$ and all intermediate state variables are bigger than $x_1$.
    Analog to \Cref{Prop: Feistel Groebner bases} we can compute the DRL Gr\"obner basis of the system by substituting the linear polynomials into the non-linear ones, but this time we do not have to remove any linear polynomial.
    Since $p_R = 0$ there are only $r - 1$ polynomials of degree $3$ in $r - 1$ variables.
    To find a solution (if it exists) we now either have to compute the LEX Gr\"obner basis and factor a polynomial of degree $3^{r - 1}$ or add the field equation for $x_2$ to the polynomial system.
    For the latter case we obtain the following bound on the solving degree
    \[
    \solvdeg_{DRL} \big( I_\text{hash} + (x_2^q - x_2) \big) \leq q + 2 \cdot r - 2.
    \]
\end{ex}

Analog to the field equation attack on \MiMC we can also compute the remainder of the field equation modulo the DRL Gr\"obner basis to further reduce the solving degree.
Note that the solving degrees of \Cref{Ex: MiMC solving degree I,Ex: Feistel-MiMC-Hash preimage I} differ only by $2$, therefore we refer to \Cref{Tab: MiMC sample values} for the complexity of Gr\"obner basis computations of Feistel-\MiMC-Hash.

    \section{Multivariate Ciphers in Generic Coordinates}\label{Sec: multivariate ciphers}
So far all our complexity estimates are only applicable to univariate ciphers and two branch Feistel networks.
Naturally, one would like to extend the theory to more advanced multivariate constructions.
Therefore, in \Cref{Sec: SPN} we derive that Substitution-Permutation Network (SPN) based ciphers are in generic coordinates, hence we can apply the Macaulay bound to estimate the solving degree.
In \Cref{Sec: Feistel} we study three classes of generalized Feistel Networks for which we derive efficient criteria to check whether the corresponding polynomial systems are in generic coordinates.

For starters, let us fix some notation.
Let $n, r \geq 1$ be integers, $n$ always denotes the number of blocks and $r$ the number of rounds of a cipher.
Throughout this section we will denote plaintext variables with $\mathbf{x} = (x_1, \dots, x_n)^\intercal$ and key variables with $\mathbf{y} = (y_1, \dots, y_n)^\intercal$.
With
\begin{equation}
    \begin{split}
        \mathcal{K}_\mathbf{y}: \Fqn \times \Fqn &\to \Fqn, \\
        (\mathbf{x}, \mathbf{y}) &\mapsto \mathbf{x} + \mathbf{y}
    \end{split}
\end{equation}
we denote the key addition function, and with
\begin{equation}
    \begin{split}
        \mathcal{A}: \Fqn &\to \Fqn, \\
        \mathbf{x} & \mapsto \mathbf{A} \mathbf{x} + \mathbf{c}
    \end{split}
\end{equation}
we denote affine permutations where $\mathbf{A} \in \GL{n}{\Fq}$ and $\mathbf{c} \in \Fq$.
For $1 \leq i \leq r$ let $\mathcal{A}^{(1)}, \dots, \mathcal{A}^{(r)}: \Fqn \to \Fqn$ be affine permutations and let $\mathcal{P}^{(1)}, \dots, \mathcal{P}^{(r)}: \Fqn \to \Fqn$ some arbitrary permutations.
Then a block cipher without key schedule is defined to be the following composition
\begin{equation}
    \begin{split}
        \mathcal{C}_{n, r}: \Fqn \times \Fqn &\to \Fqn, \\
        \left( \mathbf{x}, \mathbf{y} \right) &\mapsto \big( \mathcal{K}_\mathbf{y} \circ \mathcal{A}^{(r)} \circ \mathcal{P}^{(r)} \big) \circ \cdots \circ \big( \mathcal{K}_\mathbf{y} \circ \mathcal{A}^{(1)} \circ \mathcal{P}^{(1)} \big) (\mathbf{x} + \mathbf{y}),
    \end{split}
\end{equation}
where the composition is taken with respect to the plaintext variable.

For $1 \leq i \leq r - 1$, let $\mathbf{x}^{(i)} = \left( x_1^{(i)}, \dots, x_n^{(i)} \right)^\intercal$ denote intermediate state variables, and let $\mathbf{y} = \left( y_1, \dots, y_n \right)^\intercal$ denote the key variables.
Let $\mathbf{p}, \mathbf{c} \in \Fqn$ be a plain/ciphertext pair given by the block cipher $\mathcal{C}_{n, r}$
Since every function $\Fqn \to \Fq$ can be represented with polynomials, we define the multivariate keyed iterated polynomial system $\mathcal{F} = \left\{ \mathbf{f}^{(1)}, \dots, \mathbf{f}^{(r)} \right\} \subset \Fq \left[ \mathbf{x}^{(1)}, \dots, \mathbf{x}^{(r - 1)}, \mathbf{y} \right]$ for the cipher $\mathcal{C}_{n, r}$ as
\begin{equation}\label{Equ: multivariate keyed iterated polynomial system}
    \mathbf{f}^{(i)}
    =
    \begin{dcases}
        \mathbf{A}_1 \mathcal{P}^{(1)} \left( \mathbf{p + \mathbf{y}} \right) + \mathbf{c}_1 + \mathbf{y} - \mathbf{x}^{(1)}, & i = 1, \\
        \mathbf{A}_i \mathcal{P}^{(i)} \left( \mathbf{x}^{(i - 1)} \right) + \mathbf{c}_i + \mathbf{y} - \mathbf{x}^{(i)}, & 2 \leq i \leq r - 1, \\
        \mathbf{A}_r \mathcal{P}^{(r)} \left( \mathbf{x}^{(r - 1)} \right) + \mathbf{c}_r + \mathbf{y} - \mathbf{c}, & i = r.
    \end{dcases}
\end{equation}
If a key schedule is applied we have two options for the polynomial model.
Either we substitute the key schedule directly into \Cref{Equ: multivariate keyed iterated polynomial system} or we add additional iterated key schedule equations to $\mathcal{F}$.

\subsection{Substitution-Permutation Networks}\label{Sec: SPN}
In symmetric key cryptography the Substitution Permutation Network (SPN) is the most widely adopted strategy to construct block ciphers.
For example, the Advanced Encryption Standard (AES) \cite{AES,Daemen-AES} is an SPN.
Moreover, so-called partial SPNs have been adopted for AO designs \cite{EC:ARSTZ15,EC:GLRRS20,USENIX:GKRRS21,AFRICACRYPT:GraKhoSch23}.
We start with the formal definition of SPN-based ciphers.
\begin{defn}[SPN cipher]
    Let $\Fq$ be a finite field, and let $n, r \geq 1$ be integers.
    \begin{enumerate}
        \item Let $f_1, \dots, f_n \in \Fq [x]$ be permutation polynomials.
        Then the full Substitution Layer is defined as
        \begin{align}
            \mathcal{S}_{f_1, \dots, f_n}: \Fqn &\to \Fqn, \nonumber \\
            \left( x_1, \dots, x_n \right) &\mapsto \big( f_1 (x_1), \dots, f_n (x_n) \big). \nonumber
        \end{align}

        \item Let $f \in \Fq [x]$ be permutation polynomial.
        Then the partial Substitution Layer is defined as
        \begin{align}
            \mathcal{S}_f: \Fqn &\to \Fqn, \nonumber \\
            \left( x_1, \dots, x_n \right) &\mapsto \big( f (x_1), x_2, \dots, x_n \big). \nonumber
        \end{align}

        \item For $1 \leq n \leq r$, let $\mathcal{S}^{(i)}: \Fqn \to \Fqn$ be either a full or a partial Substitution Layer and let $\mathcal{A}_i: \Fqn \to \Fqn$ be an affine permutation.
        Then the SPN cipher is defined as
        \begin{align}
            \mathcal{C}_{n, r}: \Fqn \times \Fqn &\to \Fqn, \nonumber \\
            \left( \mathbf{x}, \mathbf{y} \right) &\mapsto \big( \mathcal{K}_\mathbf{y} \circ \mathcal{A}^{(r)} \circ \mathcal{S}^{(r)} \big) \circ \cdots \circ \big( \mathcal{K}_\mathbf{y} \circ \mathcal{A}^{(1)} \circ \mathcal{S}^{(1)} \big) (\mathbf{x} + \mathbf{y}), \nonumber
        \end{align}
        where the composition is taken with respect to the plaintext variable.
    \end{enumerate}
\end{defn}

Accordingly, a round where a full/partial Substitution Layer is applied is called a full/partial round.

Under a mild assumption on the first round of an SPN cipher $\mathcal{C}_{n, r}$ we can compute a DRL Gr\"obner basis of the multivariate keyed iterated polynomial system.
\begin{thm}\label{Th: SPN generic coordinates}
    Let $\Fq$ be a finite field, let $\overline{\Fq}$ be its algebraic closure, let $n, r \geq 1$ be integers, and let $\mathcal{C}_{n, r}: \Fqn \times \Fqn \to \Fqn$ be an SPN cipher such that $\mathcal{S}^{(1)}$ is a full SPN and every univariate permutation polynomial in $\mathcal{S}^{(1)}$ has degree greater than $1$.
    Let $\mathcal{F} = \left\{ \mathbf{f}^{(1)}, \dots, \mathbf{f}^{(r)} \right\} \subset P = \overline{\Fq} \left[ \mathbf{x}^{(1)}, \dots, \mathbf{x}^{(r - 1)}, \mathbf{y} \right]$ be the multivariate keyed iterated polynomial system for $\mathcal{C}_{n, r}$, and let
    \[
    \mathcal{G} = \left\{ \mathbf{A}_1^{-1} \mathbf{f}^{(1)}, \dots, \mathbf{A}_r^{-1} \mathbf{f}^{(r)} \right\}.
    \]
    Then
    \begin{enumerate}
        \item $\mathcal{G}$ is a DRL Gr\"obner basis.

        \item Every homogeneous ideal $I \subset P [x_0]$ such that $\mathcal{Z}_+ (I) \neq \emptyset$ and $\mathcal{G}^\homog \subset I$ is in generic coordinates.
    \end{enumerate}
\end{thm}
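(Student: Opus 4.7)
My plan is first to compute each $\mathbf{A}_i^{-1}\mathbf{f}^{(i)}$ explicitly, which decouples the Substitution Layer from the affine layer. The result is
\[
\mathbf{A}_i^{-1}\mathbf{f}^{(i)} \;=\; \mathcal{S}^{(i)}\bigl(\mathbf{x}^{(i-1)}\bigr) + \mathbf{A}_i^{-1}\bigl(\mathbf{c}_i + \mathbf{y} - \mathbf{x}^{(i)}\bigr),
\]
with the obvious modifications for $i=1$ (the Substitution Layer is fed $\mathbf{p}+\mathbf{y}$) and $i=r$ (the ciphertext $\mathbf{c}$ replaces $\mathbf{x}^{(r)}$). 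Fixing the DRL variable order $\mathbf{x}^{(1)} > \mathbf{x}^{(2)} > \cdots > \mathbf{x}^{(r-1)} > \mathbf{y}$, I read off the leading monomial of each row of $\mathcal{G}$: in round $1$ it is $y_j^{d_j}$, since $f_j(p_j+y_j)$ has top component $c_j\,y_j^{d_j}$ with $d_j>1$; in a full round $i\ge 2$ it is $(x_j^{(i-1)})^{d_j^{(i)}}$; in a partial round $i\ge 2$ the first row yields $(x_1^{(i-1)})^{d^{(i)}}$, while rows $j\ge 2$ are affine, and the chosen variable order makes their leading monomial the previous-round state variable $x_j^{(i-1)}$. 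Every such leading monomial is a pure power of a single variable, and distinct rows of $\mathcal{G}$ involve distinct underlying variables, so the leading monomials are pairwise coprime. Part~(1) then follows from the standard criterion \cite[Chapter~2~\S9~Theorem~3, Proposition~4]{Cox-Ideals}.

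For part (2), the plan is to invoke \Cref{Th: generic coordinates and highest degree components} after showing that $\sqrt{(\mathcal{G}^\topcomp)}$ equals the ideal of all non-$x_0$ variables, equivalently that the only common zero of $\mathcal{G}^\topcomp$ in $\overline{\Fq}^{\,nr}$ is the origin. Let $(\mathbf{y}_0, \mathbf{x}_0^{(1)}, \ldots, \mathbf{x}_0^{(r-1)})$ be such a zero. The round-$1$ top components $c_j\,y_j^{d_j}$ immediately give $\mathbf{y}_0 = \mathbf{0}$. A downward induction on the round index then handles the state variables: the round-$r$ top components depend only on $\mathbf{x}^{(r-1)}$ and $\mathbf{y}$, and $\mathbf{y}_0 = \mathbf{0}$ forces $\mathbf{x}_0^{(r-1)} = \mathbf{0}$—directly in the full case, and in the partial case through $(x_1^{(r-1)})^{d^{(r)}} = 0$ combined with the affine rows $x_j^{(r-1)} + (\mathbf{A}_r^{-1}\mathbf{y})_j$ for $j \ge 2$. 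Given $\mathbf{x}_0^{(i)} = \mathbf{0}$ and $\mathbf{y}_0 = \mathbf{0}$, the same case analysis at round $i$ yields $\mathbf{x}_0^{(i-1)} = \mathbf{0}$, because in the partial rows the linear-in-$\mathbf{x}^{(i)}$ and linear-in-$\mathbf{y}$ terms now vanish.

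With this radical computation in hand, $(\mathcal{G})$ is zero-dimensional and non-trivial, and \Cref{Th: generic coordinates and highest degree components} gives that $(\mathcal{G}^\homog)$ is in generic coordinates. For an arbitrary homogeneous $I \supseteq \mathcal{G}^\homog$ with $\mathcal{Z}_+(I) \neq \emptyset$, I take a finite generating set $\mathcal{F}$ of $I^\dehom$ extending $\mathcal{G}$. Then $\mathcal{G}^\topcomp \subset \mathcal{F}^\topcomp$ forces $\sqrt{(\mathcal{F}^\topcomp)}$ to equal the full variable ideal; $I^\dehom$ is zero-dimensional since it contains $\mathcal{G}$; and it is non-trivial because the identity $\mathcal{Z}_+(\mathcal{G}^\homog) \cap \{x_0 = 0\} = \emptyset$ (a restatement of the radical computation) together with $\mathcal{Z}_+(I) \neq \emptyset$ produces an affine solution of $I^\dehom$. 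A final application of \Cref{Th: generic coordinates and highest degree components} to $\mathcal{F}$ shows that $I^\sat = (I^\dehom)^\homog$ is in generic coordinates, whence so is $I$.

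The main obstacle I expect is the careful treatment of partial Substitution Layer rows with $j \ge 2$: their top components have degree one and thus single out no variable on their own, so both the coprimality argument in part~(1) and the backward induction in part~(2) must propagate through neighbouring rounds via $\mathbf{y}_0 = \mathbf{0}$ and the already-treated states $\mathbf{x}_0^{(i)}$. Verifying systematically across all round indices and SPN types that the chosen DRL order indeed makes the leading monomial of these affine rows the previous-round state variable is the technical heart of the argument.
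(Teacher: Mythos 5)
Your proof is correct and follows essentially the same approach as the paper: for part (1) both compute the leading monomials of each row of $\mathcal{G}$ under the same DRL variable order and invoke Buchberger's coprimality criterion. For part (2) the paper's own proof is a single line citing \Cref{Cor: DRL Groebner basis in generic coordinates}, which strictly speaking only yields that $(\mathcal{G}^\homog)$ is in generic coordinates, not the stronger statement about an arbitrary homogeneous $I \supseteq \mathcal{G}^\homog$; you instead mirror the proof of \Cref{Th: iterated system generic coordinates}, computing $\sqrt{(\mathcal{G}^\topcomp)}$ explicitly via the backward induction through the rounds, and then lifting the genericity to $I$ by passing through a generating set of $I^\dehom$, which fills in the omitted details cleanly. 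One step worth tightening is the assertion $I^{\sat} = (I^\dehom)^\homog$ at the very end — it is true, but requires noting that $(I^\dehom)^\homog = (\mathcal{F}^\homog)^{\sat}$ is already saturated and sits between $I$ and $I^{\sat}$, so the two saturations coincide; once recorded, \Cref{Lem: saturation equal homogenization} then yields generic coordinates for $I$.
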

\begin{proof}
    For (1), we consider the DRL term order $x_1^{(1)} > \ldots > x_n^{(1)} > x_1^{(2)} > \ldots > x_n^{(n - 1)} > y_1 > \ldots > y_n$.
    Let $f_1^{(j)}, \dots, f_n^{(j)}$ denote the functions of the SPN in the $j$\textsuperscript{th} round, then
    \begin{align}
        \LM_{DRL} \left( \mathbf{A}_1^{-1} \mathbf{f}^{(1)} \right) &= \left( y_i^{\degree{f_i^{(1)}}} \right)_{1 \leq i \leq n}, \nonumber \\
        \LM_{DRL} \left( \mathbf{A}_1^{-1} \mathbf{f}^{(j)} \right) &= \left( {x_i^{(j)}}^{\degree{f_i^{(j)}}} \right)_{1 \leq i \leq n}, \quad 2 \leq j \leq r, \nonumber
    \end{align}
    since $\degree{f_i^{(1)}} > 1$ for all $i$ and $\mathbf{x}^{(1)} > \ldots > \mathbf{x}^{(r - 1)} > \mathbf{y}$.
    So the polynomials in $\mathcal{G}$ have pairwise coprime leading monomials and the claim follows from \cite[Chapter 2 §9 Theorem 3, Proposition 4]{Cox-Ideals}.

    For (2), this follows from (1) and \Cref{Cor: DRL Groebner basis in generic coordinates}.
\end{proof}

Let us now apply \Cref{Th: SPN generic coordinates} to a cipher that utilizes partial as well as full Substitution Layers: the \Hades strategy \cite{EC:GLRRS20}, a cipher family for MPC applications.
The keyed \Hades permutation starts with $r_f$ full rounds, then it applies $r_p$ partial rounds, and it finishes with another application of $r_f$ full rounds.
So, in total \Hades has $r = 2 \cdot r_f + r_p$ many rounds.
All SPNs apply the same univariate permutation $x^d$ for some appropriate $d$.
\Hades has an affine key schedule \cite[\S 3.1]{EC:GLRRS20}, and it is straight-forward to incorporate an affine key schedule into the multivariate keyed iterated polynomial system from \Cref{Equ: multivariate keyed iterated polynomial system}.
Moreover, an affine key schedule does not affect the proof of \Cref{Th: SPN generic coordinates} as long as the master key was added before application of the first Substitution Layer.
\begin{ex}[Solving degree bounds for \Hades]\label{Ex: Hades}
    Let $\Fq$ be a finite field, let $n \geq 1$ denote the number of branches, and let $d \in \mathbb{Z}_{> 1}$ be an integer such that $\gcd \left( d, q - 1 \right) = 1$.
    Let $r_f, r_p \geq 1$ denote the number full and partial rounds, and let $I_\Hades$ denote the \Hades ideal.
    Then by \Cref{Cor: Macauly bound} and \Cref{Th: SPN generic coordinates}
    \[
        \solvdeg_{DRL} \left( I_\Hades \right) \leq \left( d - 1 \right) \cdot \left( 2 \cdot n \cdot r_f + r_p \right) + 1.
    \]
    Now let $I_{\Hades, 1}$ and $I_{\Hades, 2}$ denote \Hades ideals for two different plain/ciphertext pairs.
    It is straight-forward to extend \Cref{Th: SPN generic coordinates} to $I_{\Hades, 1} + I_{\Hades, 2}$, cf.\ \Cref{Prop: variety intersection substituion properties} \ref{Item: variety intersection generic coordinates}, therefore by \Cref{Cor: Macauly bound}
    \[
        \solvdeg_{DRL} \left( I_{\Hades, 1} + I_{\Hades, 2} \right) \leq 2 \cdot (d - 1) \cdot (2 \cdot n \cdot r_f + r_p) + 1.
    \]
\end{ex}

The \Hades designers use \Cref{Equ: Groebner basis complexity} and the Macaulay bound (\Cref{Cor: Macauly bound}) to estimate the resistance of \Hades against Gr\"obner basis attacks, see \cite[\S 4.3]{EC:GLRRS20} and \cite[\S E.3]{EPRINT:GLRRS19}.
In particular, their \emph{second strategy} is the multivariate keyed iterated polynomial system from \Cref{Equ: multivariate keyed iterated polynomial system}.
To justify this approach the authors hypothesized that the \Hades polynomial system is a \emph{generic} polynomial system in the sense of Fr\"oberg's conjecture \cite{Froeberg-Conjecture,Pardue-Generic}.
With \Cref{Th: SPN generic coordinates} and \Cref{Ex: Hades} this hypothesis can be bypassed, and we have proven that the complexity estimation of the \Hades designers is indeed mathematically sound.

In \Cref{Tab: MiMC sample values} we provide complexity estimates in bits for a Gr\"obner basis computation of \Hades where we use the Macaulay bound of the keyed iterated \Hades polynomial system as minimal baseline of the solving degree for an optimal adversary with $\omega = 2$.
We assume that the key schedule equations have been substituted into \Cref{Equ: multivariate keyed iterated polynomial system}.
Recall from \Cref{Th: SPN generic coordinates} that a partial \Hades round only contributes one non-linear equation but $n - 1$ affine equations.
Therefore, we can eliminate $r_p \cdot (n - 1)$ many variables in the \Hades polynomial system in advance, i.e.\ after the elimination the number of variables and equations in the \Hades polynomial system is $2 \cdot n \cdot r_f + r_p$.
Note that this elimination does not affect the Macaulay bound from \Cref{Ex: Hades}.
We stress that after the substitution we exactly reproduce the complexity estimation of the second strategy \cite[p.~48-51]{EPRINT:GLRRS19}, though with a proof and not under a hypothesis.
For ease of computation we estimated the logarithm of the binomial coefficient with \Cref{Equ: bit complexity estimate}.

\begin{table}[H]
    \centering
    \caption{Complexity estimation of Gr\"obner basis computations for \Hades via the Macaulay bound for $I_\Hades$ with $n = 2$, and $\omega = 2$ over a finite field $\Fq$ such that $\gcd \left( d, q - 1 \right) = 1$.}
    \label{Tab: Hades values}
    \begin{tabular}{c | c c c | c c c }
        \toprule
        & \multicolumn{3}{ c | }{$d = 3$} & \multicolumn{3}{ c }{$d = 5$} \\
        \midrule

        $r_f$           & $3$     & $4$     & $5$     & $3$     & $4$     & $5$     \\
        $r_p$           & $13$    & $10$    & $5$     & $10$    & $10$    & $4$     \\
        $\kappa$ (bits) & $130.0$ & $135.4$ & $130.0$ & $149.0$ & $177.6$ & $163.3$ \\

        \bottomrule
    \end{tabular}
\end{table}
In \cite[Table~1]{EC:GLRRS20} round numbers for \Hades instantiations are proposed, as can be derived from our table all instantiations achieve a security level of $128$ bits already for $n = 2$ and $d = 3$.

We also mention that it is straight-forward to compute \Hades' quotient space dimension
\begin{equation}
    \dim_{\Fq} \left( I_\Hades \right) = d^{2 \cdot n \cdot r_f + r_p}.
\end{equation}

\subsection{Generalized Feistel Networks}\label{Sec: Feistel}
The second permutation that has been dominant in block cipher design in the past is the so-called Feistel Network, named after its inventor Horst Feistel.
For example, the predecessor of AES the Data Encryption Standard (DES) \cite{DES77} is based on the Feistel Network.
Moreover, so-called unbalanced generalized Feistel Networks have been proposed for AO designs \cite{ESORICS:AGPRRRRS19}.
We start with the formal definition of Feistel-based ciphers.
\begin{defn}[Generalized Feistel cipher]
    Let $\Fq$ be a finite field, and let $n, r \geq 1$ be integers.
    \begin{enumerate}
        \item For $1 \leq i \leq n - 1$, let $f_i \in \Fq [x_{i + 1}, \dots, x_n]$ be a polynomial.
        Then the generalized Feistel Layer is defined as
        \begin{align}
            \mathcal{F}_{f_1, \dots, f_{n - 1}}: \Fqn &\to \Fqn, \nonumber \\
            \left( x_1, \dots, x_n \right) &\mapsto \big( x_1 + f_1 (x_2, \dots, x_n), \dots, x_{n - 1} + f_{n - 1} (x_n), x_n \big). \nonumber
        \end{align}

        \item For $1 \leq n \leq r$, let $\mathcal{F}^{(i)}: \Fqn \to \Fqn$ be a generalized Feistel Layer and let $\mathcal{A}_i: \Fqn \to \Fqn$ be an affine permutation.
        Then the Feistel cipher is defined as
        \begin{align}
            \mathcal{C}_{n, r}: \Fqn \times \Fqn &\to \Fqn, \nonumber \\
            \left( \mathbf{x}, \mathbf{y} \right) &\mapsto \big( \mathcal{K}_\mathbf{y} \circ \mathcal{A}^{(r)} \circ \mathcal{F}^{(r)} \big) \circ \cdots \circ \big( \mathcal{K}_\mathbf{y} \circ \mathcal{A}^{(1)} \circ \mathcal{F}^{(1)} \big) (\mathbf{x} + \mathbf{y}), \nonumber
        \end{align}
        where the composition is taken with respect to the plaintext variable.
    \end{enumerate}
\end{defn}

For special types of Feistel ciphers we can derive efficient criteria to verify whether the corresponding multivariate keyed iterated polynomial system is in generic coordinates.
\begin{thm}\label{Th: Feistel generic generators criteria}
    Let $\Fq$ be a finite field, let $\overline{\Fq}$ be its algebraic closure, let $n, r \geq 1$ be integers, and let $\mathcal{C}_{n, r}: \Fqn \times \Fqn \to \Fqn$ be a Feistel cipher.
    Let $\mathcal{F} = \left\{ \mathbf{f}^{(1)}, \dots, \mathbf{f}^{(r)} \right\} \subset P = \overline{\Fq} \left[ \mathbf{x}^{(1)}, \dots, \mathbf{x}^{(r - 1)}, \mathbf{y} \right]$ be a multivariate keyed iterated polynomial system for $\mathcal{C}_{n, r}$.
    \begin{enumerate}
        \item\label{Item: expanding} For $1 \leq i \leq r$, let $f^{(i)} \in \Fq [x_n]$ be polynomials such that $\degree{f^{(i)}} > 1$, and assume that the $i$\textsuperscript{th} Feistel Layer of $\mathcal{C}_{n, r}$ is $\mathcal{F}_{f^{(i)}, \dots, f^{(i)}}$.
        Let the polynomial system $\mathcal{G} = \left\{ \mathbf{g}^{(1)}, \dots, \mathbf{g}^{(r)} \right\}$ be defined as follows
        \[
        \left( \mathbf{g}^{(i)} \right)_j =
        \begin{dcases}
            \left( \mathbf{A}_i^{-1} \mathbf{f}^{(i)} \right)_j, & j = 1, n, \\
            \left( \mathbf{A}_i^{-1} \mathbf{f}^{(i)} \right)_j - \left( \mathbf{A}_i^{-1} \mathbf{f}^{(i)} \right)_1, & 2 \leq j \leq n - 1.
        \end{dcases}
        \]
        Then every homogeneous ideal $I \subset P [x_0]$ such that $\mathcal{Z}_+ (I) \neq \emptyset$ and $\mathcal{G}^\homog \subset I$ is in generic coordinates if the following linear system has rank $r \cdot \left( n - 1\right)$
        \begin{alignat*}{1}
            y_j - y_1 + \bigg( \mathbf{A}_1^{-1} \Big( \boldsymbol{\hat{x}}^{(1)} - \boldsymbol{\hat{y}} \Big) \bigg)_1 + \bigg( \mathbf{A}_1^{-1} \Big( \boldsymbol{\hat{y}} - \boldsymbol{\hat{x}}^{(1)} \Big) \bigg)_j &= 0, \\
            \bigg( \mathbf{A}_1^{-1} \Big( \boldsymbol{\hat{y}} - \boldsymbol{\hat{x}}^{(1)} \Big) \bigg)_n &= 0, \\
            x_j^{(i - 1)} - x_1^{(i - 1)} + \bigg( \mathbf{A}_i^{-1} \Big( \boldsymbol{\hat{x}}^{(i)} - \boldsymbol{\hat{y}} \Big) \bigg)_1 + \bigg( \mathbf{A}_i^{-1} \Big( \boldsymbol{\hat{y}} - \boldsymbol{\hat{x}}^{(i)} \Big) \bigg)_j &= 0, \\
            \bigg( \mathbf{A}_i^{-1} \Big( \boldsymbol{\hat{y}} - \boldsymbol{\hat{x}}^{(i)} \Big) \bigg)_n &= 0, \\
            x_j^{(r - 1)} - x_1^{(r - 1)} + \bigg( \mathbf{A}_r^{-1} \Big( \boldsymbol{\hat{x}}^{(r - 1)} - \boldsymbol{\hat{y}} \Big) \bigg)_1 + \bigg( \mathbf{A}_r^{-1} \Big( \boldsymbol{\hat{y}} - \boldsymbol{\hat{x}}^{(r - 1)} \Big) \bigg)_j &= 0, \\
            \left( \mathbf{A}_r^{-1} \boldsymbol{\hat{y}} \right)_n &= 0,
        \end{alignat*}
        where $2 \leq i \leq r - 1$, $2 \leq j \leq n - 1$, $\boldsymbol{\hat{x}}^{(i)} = \left( x_1^{(i)}, \dots, x_{n - 1}^{(i)}, 0 \right)$ and $\boldsymbol{\hat{y}} = (y_1, \dots, y_{n - 1}, 0)$.

        \item\label{Item: strong contracting} For $1 \leq i \leq r$ and $1 \leq j \leq n - 1$, let $f_j^{(i)} \in \Fq [x_{i + 1}, \dots, x_n]$ be polynomials such that $\degree{f^{(i)}} > 1$ and the monomial $x_{i + 1}^{\degree{f_j^{(i)}}}$ is present in $f_j^{(i)}$, and assume that the $i$\textsuperscript{th} Feistel Layer of $\mathcal{C}_{n, r}$ is $\mathcal{F}_{f_1^{(i)}, \dots, f_{n - 1}^{(i)}}$.
        Let
        \[
        \mathcal{G} = \left\{ \mathbf{A}_1^{-1} \mathbf{f}^{(1)}, \dots, \mathbf{A}_r^{-1} \mathbf{f}^{(r)} \right\}.
        \]
        Then every homogeneous ideal $I \subset P [x_0]$ such that $\mathcal{Z}_+ (I) \neq \emptyset$ and $\mathcal{G}^\homog \subset I$ is in generic coordinates if the following linear system has rank $r$
        \begin{align*}
            \bigg( \mathbf{A}_1^{-1} \Big( \boldsymbol{\hat{y}} - \boldsymbol{\hat{x}}^{(1)} \Big) \bigg)_n &= 0, \\
            \bigg( \mathbf{A}_i^{-1} \Big( \boldsymbol{\hat{y}} - \boldsymbol{\hat{x}}^{(i)} \Big) \bigg)_n &= 0, \\
            \left( \mathbf{A}_r^{-1} \boldsymbol{\hat{y}} \right)_n &= 0,
        \end{align*}
        where $2 \leq i \leq r - 1$, $\boldsymbol{\hat{x}}^{(i)} = \left( x_1^{(i)}, 0 \dots, 0 \right)$ and $\boldsymbol{\hat{y}} = (y_1, 0, \dots, 0)$.

        \item\label{Item: contracting} For $1 \leq i \leq r$, let $f^{(i)} \in \Fq [x_2, \dots, x_n]$ be such that
        \[
        f^{(i)} (x_2, \dots, x_n) = \hat{f}^{(i)} \left( \sum_{j = 2}^{n} a_{i, j} \cdot x_j \right)
        \]
        with $\hat{f}^{(i)} \in \Fq [x]$ such that $\degree{\hat{f}^{(i)}} > 1$ and $a_{i, 2}, \dots, a_{i, n} \in \Fq$, and assume that the $i$\textsuperscript{th} Feistel Layer of $\mathcal{C}_{n, r}$ is $\mathcal{F}_{f^{(i)}, 0, \dots, 0}$.
        Let
        \[
        \mathcal{G} = \left\{ \mathbf{A}_1^{-1} \mathbf{f}^{(1)}, \dots, \mathbf{A}_r^{-1} \mathbf{f}^{(r)} \right\}.
        \]
        Then every homogeneous ideal $I \subset P [x_0]$ such that $\mathcal{Z}_+ (I) \neq \emptyset$ and $\mathcal{G}^\homog \subset I$ is in generic coordinates if the following linear system has rank $r \cdot n$
        \begin{align*}
            \sum_{k = 2}^{n} a_{1, k} \cdot y_k &= 0, \\
            \bigg( \mathbf{A}_1^{-1} \Big( \mathbf{y} - \mathbf{x}^{(1)} \Big) \bigg)_j &= 0, \\
            \sum_{k = 2}^{n} a_{i, k} \cdot x_k^{(i - 1)} &= 0, \\
            x_j^{(i - 1)} + \bigg( \mathbf{A}_1^{-1} \Big( \mathbf{y} - \mathbf{x}^{(i)} \Big) \bigg)_j &= 0, \\
            \sum_{k = 2}^{n} a_{r, k} \cdot x_k^{(r - 1)}  &= 0, \\
            x_j^{(r - 1)} + \left( \mathbf{A}^{-1} \mathbf{y} \right)_j &= 0,
        \end{align*}
        where  $2 \leq i \leq r - 1$ and $2 \leq j \leq n$.
    \end{enumerate}
\end{thm}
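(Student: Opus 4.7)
The strategy is to apply \Cref{Th: generic coordinates and highest degree components} to the modified generating set $\mathcal{G}$: it suffices to verify that $\sqrt{(\mathcal{G}^\topcomp)}$ equals the maximal ideal generated by all affine variables $\mathbf{x}^{(1)}, \dots, \mathbf{x}^{(r-1)}, \mathbf{y}$. Once this holds for $\mathcal{G}$, any homogeneous $I$ with $\mathcal{Z}_+(I) \neq \emptyset$ and $\mathcal{G}^\homog \subset I$ satisfies $\mathcal{Z}_+\big(I + (x_0)\big) \subseteq \mathcal{Z}_+\big((\mathcal{G}^\topcomp)\big) \cap \mathcal{Z}_+(x_0) = \emptyset$, so the Caminata-Gorla technique of \Cref{Sec: Caminata Gorla technique} yields that $I$ itself is in generic coordinates. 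In all three cases, multiplication by the invertible $\mathbf{A}_i^{-1}$ strips away the affine mixing of the $i$\textsuperscript{th} round, so that each coordinate of $\mathbf{A}_i^{-1}\mathbf{f}^{(i)}$ decomposes as a non-linear Feistel contribution plus a purely linear tail in $\mathbf{y}$ and the input/output state variables; only the Feistel contribution influences the top component.

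For part~\ref{Item: expanding}, because every slot of the Feistel Layer applies the same $f^{(i)}(x_n)$, the non-linear parts of $(\mathbf{A}_i^{-1}\mathbf{f}^{(i)})_j$ for $1 \leq j \leq n-1$ all equal $f^{(i)}$ evaluated at the same argument. Retaining $j = 1$ gives a polynomial whose top component is a non-zero scalar multiple of $(x_n^{(i-1)})^{\deg f^{(i)}}$ (or $y_n^{\deg f^{(1)}}$ in round~$1$), thereby placing $y_n$ and $x_n^{(1)}, \dots, x_n^{(r-1)}$ into $\sqrt{(\mathcal{G}^\topcomp)}$. Subtracting the $j = 1$ component from those with $2 \leq j \leq n-1$ cancels the non-linearity and leaves a linear polynomial, while the $j = n$ component is already linear because the last Feistel slot is untouched. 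Substituting the already-radical variables into the linear tops — which is precisely what the hatted vectors $\boldsymbol{\hat{x}}^{(i)}$ and $\boldsymbol{\hat{y}}$ encode — reproduces the $r(n-1)$ linear equations in the statement, and full rank forces the remaining $r(n-1)$ variables into $\sqrt{(\mathcal{G}^\topcomp)}$.

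For part~\ref{Item: strong contracting}, the monomial assumption guarantees that $(x_{j+1}^{(i-1)})^{\deg f_j^{(i)}}$ is the DRL-leading monomial of $(\mathbf{A}_i^{-1}\mathbf{f}^{(i)})_j^\topcomp$ for each $1 \leq j \leq n-1$, contributing $x_2^{(i-1)}, \dots, x_n^{(i-1)}$ (respectively $y_2, \dots, y_n$ in round~$1$) to $\sqrt{(\mathcal{G}^\topcomp)}$ in one stroke, so only $y_1$ and $x_1^{(1)}, \dots, x_1^{(r-1)}$ remain. The $n$-th components are again linear, and after substituting the radical variables they yield exactly the stated $r$ linear equations, whose full rank closes the case. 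Part~\ref{Item: contracting} is structurally similar but with only one non-trivial Feistel slot per round: its top is $c_i \cdot \big(\sum_k a_{i,k} z_k^{(i)}\big)^{d_i}$, so only the linear form itself (not the individual variables) lands in $\sqrt{(\mathcal{G}^\topcomp)}$. Combined with the $n-1$ linear tops per round coming from the untouched slots, one obtains $r \cdot n$ linear relations in the $r \cdot n$ variables, and the full-rank hypothesis closes this case as well.

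The main obstacle is bookkeeping rather than conceptual difficulty: one must verify that, after substituting the variables already placed in the radical by the non-linear generators, the resulting degree-one components of $\mathcal{G}$ align exactly with the hatted linear systems in the statement. In particular, one must check that the constants $\mathbf{c}_i$, $\mathbf{p}$ and $\mathbf{c}$ disappear from the degree-one part, and that the three kinds of rounds — input $\mathbf{p} + \mathbf{y}$, middle rounds, and output $\mathbf{c}$ — all reduce, after the appropriate substitution of the radical variables to zero, to the entries of $\mathbf{A}_i^{-1}(\boldsymbol{\hat{y}} - \boldsymbol{\hat{x}}^{(\cdot)})$ claimed in the statement.
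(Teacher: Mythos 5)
Your proposal is correct and follows essentially the same route as the paper's proof: reduce to showing $\sqrt{(\mathcal{G}^\topcomp)}$ is the full maximal ideal, observe that the Feistel non-linearity (after the $\mathbf{A}_i^{-1}$ normalization, and in case (1) after subtracting the $j=1$ component) places pure powers of $y_n$ and $x_n^{(\cdot)}$ (resp.\ the appropriate branch variables or linear forms) into the radical, substitute these into the linear remainders to reproduce the stated linear systems, and invoke the rank hypothesis together with \Cref{Th: generic coordinates and highest degree components}/the Caminata--Gorla technique. The only minor imprecision is in case~\ref{Item: strong contracting}: the phrase ``in one stroke'' glosses over the downward induction from $j = n-1$ to $j = 1$ needed because $f_j^{(i)}$ may contain monomials in $x_{j+2}, \dots, x_n$ besides $x_{j+1}^{\deg f_j^{(i)}}$; these must first be annihilated before $x_{j+1}^{(i-1)}$ can be extracted, which is the same small induction the paper itself performs.
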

\begin{proof}
    For all cases we show that $\sqrt{\mathcal{G}^\topcomp} = (x_1, \dots, x_n)$.

    For (1), note that for all $1 \leq i \leq r$ we have that the degree of the first component of $\mathbf{g}^{(i)}$ is $\degree{f^{(i)}}$ and $1$ for the other components.
    Substituting $x_0 = 0$ into $\mathcal{G}^\homog$ we yield from the first components of the $\left( \mathbf{g}^{(i)} \right)^\homog$'s that $y_n^{\degree{f^{(1)}}} = {x_n^{(1)}}^{\degree{f^{(2)}}} = \ldots = {x_n^{(r - 1)}}^{\degree{f^{(r)}}} = 0$ so also $y = x_n^{(1)} = \ldots = x_n^{(r - 1)} = 0$.
    Now we substitute these coordinates into the remaining equations.
    This yields the linear system from the assertion.
    If the linear system is of rank $r \cdot \left( n - 1\right)$, then $\sqrt{\mathcal{G}^\homog} = (x_1, \dots, x_n)$.

    For (2), note that for $1 \leq i \leq r$ and $1 \leq j \leq n - 1$ we have that $\degree{\left( A_i^{-1} \mathbf{f}^{(i)} \right)_j} = \degree{f_j^{(i)}}$.
    Now we substitute $x_0 = 0$ into $\mathcal{G}^\homog$, in the $i$\textsuperscript{th} round in the $(n - 1)$\textsuperscript{th} component this yields ${x_n^{(i)}}^{\degree{f_{n - 1}^{(i)}}} = 0$.
    Inductively we now work through all higher branches in the $i$\textsuperscript{th} round and then through all rounds to obtain that
    $y_2 = \ldots = y_n = x_2^{(1)} = \ldots = x_n^{(1)} = \ldots = x_2^{(r - 1)} = \ldots = x_n^{(r - 1)} = 0$.
    Substituting these variables into the remaining equations that come from the last branch of the  $\mathbf{f}^{(i)}$'s yields the linear system from the assertion which proves the claim.

    For (3), after substituting $x_0 = 0$ into $\mathcal{G}^\homog$ we obtain for the first branch of each round
    \begin{alignat*}{4}
        \Bigg( &\sum_{k = 2}^{n} a_{1, k} \cdot y_k \Bigg)^{\degree{\hat{f}^{(1)}}}
        &&= \Bigg( &&\sum_{k = 2}^{n} a_{i, k} \cdot x_k^{(i - 1)} \Bigg)^{\degree{\hat{f}^{(i)}}}
        &&= 0 \\
        & &&\Longrightarrow \\
        &\sum_{k = 2}^{n} a_{1, k} \cdot y_k
        &&= &&\sum_{k = 2}^{n} a_{i, k} \cdot x_k^{(i - 1)}
        &&= 0,
    \end{alignat*}
    where $2 \leq i \leq r$.
    Combining these linear equations with the remaining equations from $\mathcal{G}^\homog$ we obtain the linear system from the assertion.
\end{proof}

The Feistel Networks from \Cref{Th: Feistel generic generators criteria} \ref{Item: expanding} and \ref{Item: contracting} are also known as \emph{expanding round function} (erf) and \emph{contracting round function} (crf) respectively.
An example for block ciphers with these round functions is the \GMiMC family \cite[\S 2.1]{ESORICS:AGPRRRRS19}, which is targeted for MPC applications.
Moreover, the designers of \GMiMC use \Cref{Equ: Groebner basis complexity} and the Macaulay bound (\Cref{Cor: Macauly bound}) to estimate the resistance of \GMiMC against Gr\"obner basis attacks, see \cite[\S 4.1.1]{ESORICS:AGPRRRRS19}.\footnote{
    For completeness, we mention that the \GMiMC designers did not analyze the keyed iterated polynomial system (\Cref{Equ: multivariate keyed iterated polynomial system}).
    They only studied systems where all rounds are substituted into each other, i.e.\ one has $n$ equations in $n$ variables.}
To justify this approach the authors hypothesized that the \GMiMC polynomial systems are \emph{generic} polynomial systems in the sense of Fr\"oberg's conjecture \cite{Froeberg-Conjecture,Pardue-Generic}.
With \Cref{Th: Feistel generic generators criteria} this hypothesis can be bypassed for \GMiMC without a key schedule.
For \GMiMC an affine key schedule was proposed, hence one can extend \Cref{Th: Feistel generic generators criteria} to this scenario by replacing the key variables with intermediate key variables after the first round and by adding the linear part of the affine key schedule to the linear systems.
Thus, we have derived efficient criteria to verify that the complexity estimations of the \GMiMC designers can indeed be mathematically sound.
\begin{ex}[Solving degree bounds for \GMiMC]\label{Ex: GMiMC}
    Let $\Fq$ be a finite field, let $n, r \geq 1$ denote the number of branches and rounds, and let $d \geq 1$ be the degree of the degree increasing function.
    In the proof of \Cref{Th: Feistel generic generators criteria} we saw that the \GMiMCerf polynomial system can be transformed so that there is only one non-linear polynomial in every round.
    Therefore, \GMiMCcrf and \GMiMCerf have the same Macaulay bound.
    Let $I_\GMiMC$ be a \GMiMC ideal and assume that $n$ and $r$ are such that the corresponding matrix from \Cref{Th: Feistel generic generators criteria} has full rank, i.e.\ \GMiMC is in generic coordinates.
    Therefore, by \Cref{Cor: Macauly bound}
    \[
    \solvdeg_{DRL} \left( I_\GMiMC \right) \leq (d - 1) \cdot r + 1.
    \]
    Now let $I_{\GMiMC, 1}$ and $I_{\GMiMC, 2}$ denote \GMiMC ideals for two different plain/ciphertext pairs.
    It is straight-forward to extend \Cref{Th: Feistel generic generators criteria} to $I_{\GMiMC, 1} + I_{\GMiMC, 2}$, cf. \Cref{Prop: variety intersection substituion properties} \ref{Item: variety intersection generic coordinates}.
    Provided that $n$ and $r$ are such that $I_{\GMiMC, 1} + I_{\GMiMC, 2}$ is in generic coordinates we have by \Cref{Cor: Macauly bound}
    \[
    \solvdeg_{DRL} \left( I_{\GMiMC, 1} + I_{\GMiMC, 2} \right) \leq 2 \cdot (d - 1) \cdot r + 1.
    \]
\end{ex}

For small primes we applied \Cref{Th: Feistel generic generators criteria} to \GMiMCcrf and \GMiMCerf without key schedules.
Depending on the parameters $n$ and $r$ we noticed a highly regular pattern when the matrices from the theorem have full rank.
In \Cref{Tab: GMiMC generic generators shift permutation} we record this pattern for small sample parameters.
\begin{table}[H]
    \centering
    \caption{Matrix criteria from \Cref{Th: Feistel generic generators criteria} for sample parameters for \GMiMCcrf and \GMiMCerf with the shift permutation $(x_1, \dots, x_n) \mapsto (x_n, x_1, \dots, x_{n - 1})$ in the affine layer and without key schedules.}
    \label{Tab: GMiMC generic generators shift permutation}
    \begin{tabular}{ c | c  || c | c || c | c }
        \toprule
        \multicolumn{2}{ c || }{$n = 3$} & \multicolumn{2}{ c || }{$n = 4$} & \multicolumn{2}{ c }{$n = 5$} \\
        \midrule

        $r$  & Full rank      & $r$  & Full rank      & $r$  & Full rank      \\
        \midrule

        $10$ & \texttt{True}  & $12$ & \texttt{True}  & $10$ & \texttt{True}  \\
        $11$ & \texttt{False} & $13$ & \texttt{True}  & $11$ & \texttt{False} \\
        $12$ & \texttt{True}  & $14$ & \texttt{False} & $12$ & \texttt{True}  \\
        $13$ & \texttt{False} & $15$ & \texttt{True}  & $13$ & \texttt{False} \\
             &                & $16$ & \texttt{True}  &      &                \\
             &                & $17$ & \texttt{False} &      &                \\

        \bottomrule
    \end{tabular}
\end{table}
We observed that for the shift permutation $(x_1, \dots, x_n) \mapsto (x_n, x_1, \dots, x_{n - 1})$ in the affine layer the matrix criteria for \GMiMCcrf and \GMiMCerf behave identical.
On the other hand, if we instantiate \GMiMC with the circulant matrix $\circulant (1, \dots, n)$\footnote{
    We understand circulant matrices as right shift circulant matrices, i.e.\
    \[
    \circulant (a_1, \dots, a_n) =
    \begin{pmatrix}
        a_1 & a_2 & \dots  & a_{n - 1} & a_n       \\
        a_n & a_1 & \dots  & a_{n - 2} & a_{n - 1} \\
            &     & \vdots &           &           \\
        a_2 & a_3 & \dots  & a_n       & a_1
    \end{pmatrix}
    .
    \]}, then we observed that \GMiMCerf is always in generic coordinates and for \GMiMCcrf the criterion is identical to \Cref{Tab: GMiMC generic generators shift permutation}.

In \Cref{Tab: GMiMC complexity} we provide complexity estimates in bits for a Gr\"obner basis computation of \GMiMC where we use the Macaulay bound of the keyed iterated \GMiMC polynomial system as minimal baseline of the solving degree for an optimal adversary with $\omega = 2$.
We assume that the key schedule equations have been substituted into \Cref{Equ: multivariate keyed iterated polynomial system}.
Also, recall from \Cref{Th: Feistel generic generators criteria} that we can transform a \GMiMC polynomial system so that every round contains only one non-linear polynomial.
Thus, we can use the affine equations to eliminate $r \cdot (n - 1)$ many variables in the \GMiMC polynomial system in advance, i.e.\ after the elimination the number of variables and equations in the \GMiMC polynomial system is $r$.
Note that the elimination does not affect the Macaulay bound in \Cref{Ex: GMiMC}.
For ease of computation we estimated the logarithm of the binomial coefficient with \Cref{Equ: bit complexity estimate}.
\begin{table}[H]
    \centering
    \caption{Complexity estimation of Gr\"obner basis computations for \GMiMC via the Macaulay bound with $\omega = 2$ over any finite field $\Fq$.}
    \label{Tab: GMiMC complexity}
    \begin{tabular}{c | c c c | c c c }
        \toprule
        & \multicolumn{3}{ c | }{$d = 3$} & \multicolumn{3}{ c }{$d = 5$} \\
        \midrule

        $r$              & $10$   & $25$    & $50$    & $10$   & $25$    & $50$    \\
        $\kappa$ (bits)  & $48.6$ & $130.0$ & $266.7$ & $63.5$ & $170.5$ & $350.0$ \\

        \bottomrule
    \end{tabular}
\end{table}

In \cite[Table~7]{EPRINT:AGPRRRRS19} round number for \GMiMCerf instantiations are proposed, as can be derived from our table all instantiations achieve a security level of $128$.

\subsection{The Problem With Sponge Constructions \& Generic Coordinates}
Let us return to the sponge construction \cite{Sponge,EC:BDPV08}.
Let $\mathcal{P}: \Fqn \to \Fqn$ be an arbitrary permutation which we instantiate in sponge mode with capacity $1 < c < n $ and rate $r = n - c$.
Let $\texttt{IV} \in \Fq^c$ be a fixed initial value, and let $\alpha \in \Fq$ be a hash output.
To find a preimage $\mathbf{x} \in \Fq^r$ we have to solve the equation
\begin{equation}
    \mathcal{P}
    \begin{pmatrix}
        \mathbf{x} \\ \texttt{IV}
    \end{pmatrix}
    =
    \begin{pmatrix}
        \alpha \\
        \mathbf{y}
    \end{pmatrix}
    ,
\end{equation}
where $\mathbf{y} \in \Fq^{n - 1}$ is an indeterminate variable.
First, we observe that this polynomial system is only fully determined if $c = n - 1$, else one always has $r + n - 1 > n$ many variables for $\mathbf{x}$ and $\mathbf{y}$.
Otherwise, we have to guess some entries of $\mathbf{x}$ and $\mathbf{y}$ which we expect to be successful with probability $1 / q$.
Second, if we model the sponge $\mathcal{P}$ with iterated polynomials, then the Caminata-Gorla technique (\Cref{Sec: Caminata Gorla technique}) will fail whenever the last round of $\mathcal{P}$ is non-linear in all its components.
In this case, after homogenizing the keyed iterated polynomial system and setting $x_0 = 0$ we will always remove the variables $\mathbf{y}$ from the equations.
So \Cref{Th: generic coordinates and highest degree components} \ref{Item: radical} cannot be satisfied, and the naive homogenization of a sponge polynomial system cannot be in generic coordinates.

We illustrate this property with a simple example.
\begin{ex}
    We work over the field $\F_5$.
    We consider an SPN sponge function based on the cubing map with $n = 2$ and $r = 3$ where the first and the last round are full SPNs and the middle round is a partial SPN.
    In every round the mixing matrix is $\circulant (1, 2)$ and all round constants are $\mathbf{0}$.
    The matrix is also applied before application of the first SPN.
    We illustrate this sponge function in \Cref{Fig: hash}.
    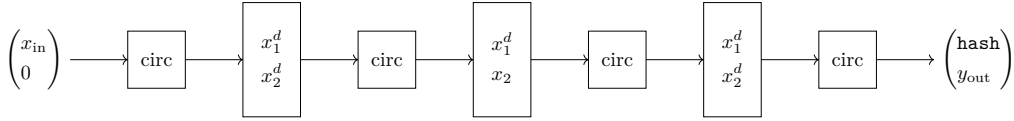
\begin{figure}[H]
        \centering
        \caption{Illustration of a simple SPN sponge function.}
        \label{Fig: hash}
        \resizebox{\textwidth}{!}{
            \begin{tikzpicture}
                \draw[->] (0, 0) node[left] {$\left(
                    \begin{aligned}
                        &x_\text{in} \\ &0
                    \end{aligned}
                    \right)$} -- ++(1, 0);

                \draw (1, -0.5) rectangle node {$\circulant$} ++(1, 1);

                \draw[->] (2, 0) -- ++(1, 0);

                \draw (3, -1) rectangle node {$
                    \begin{aligned}
                        &x_1^d \\ &x_2^d
                    \end{aligned}
                    $} ++(1, 2);

                \draw[->] (4, 0) -- ++(1, 0);

                \draw (5, -0.5) rectangle node {$\circulant$} ++(1, 1);

                \draw[->] (6, 0) -- ++(1, 0);

                \draw (7, -1) rectangle node (A) {$
                    \begin{aligned}
                        &x_1^d \\ &x_2
                    \end{aligned}
                    $} ++(1, 2);

                \draw[->] (8, 0) -- ++(1, 0);

                \draw (9, -0.5) rectangle node {$\circulant$} ++(1, 1);

                \draw[->] (10, 0) -- ++(1, 0);

                \draw (11, -1) rectangle node {$
                    \begin{aligned}
                        &x_1^d \\ &x_2^d
                    \end{aligned}
                    $} ++(1, 2);

                \draw[->] (12, 0) -- ++(1, 0);

                \draw (13, -0.5) rectangle node {$\circulant$} ++(1, 1);

                \draw[->] (14, 0) -- ++(1, 0) node[right] {$\left(
                    \begin{aligned}
                        &\texttt{hash} \\ &y_\text{out}
                    \end{aligned}
                    \right)$};
            \end{tikzpicture}
        }
    \end{figure}
    For hash value $0$ the iterated polynomial system $\mathcal{F} \subset \F_5 \left[ x_1^{(1)}, x_2^{(1)}, x_1^{(2)}, x_2^{(2)} x_\text{in}, y_\text{out} \right]$ is
    \begin{align*}
        x_\text{in}^3 + 2 \cdot x_1^{(1)} + x_2^{(1)} &= 0, \\
        -2 \cdot x_\text{in}^3 + x_1^{(1)} + 2 \cdot x_2^{(1)} &= 0, \\
        {x_1^{(1)}}^3 + 2 \cdot x_1^{(2)} + x_2^{(2)} &= 0, \\
        x_2^{(1)} + x_1^{(2)} + 2 \cdot x_2^{(2)} &= 0, \\
        {x_1^{(2)}}^3 + y_\text{out} &= 0, \\
        {x_2^{(2)}}^3 + 2 \cdot y_\text{out} &= 0.
    \end{align*}
    Note that $\left( \mathcal{F} \right)$ is zero-dimensional.
    Let $x_0$ denote the homogenization variable.
    Then $I^{\sat} = \left( \mathcal{F}^\homog \right)^{\sat}$ is generated by
    \begin{align*}
        x_\text{in}^3 + 2 \cdot x_1^{(1)} \cdot x_0 &= 0, \\
        x_2^{(1)} + x_1^{(2) + 2 \cdot x_2{(2)}} &= 0, \\
        \left( x_1^{(2)} + x_2^{(2)} \right) \cdot x_0^2 &= 0, \\
        {x_2^{(2)}}^3 + 2 \cdot y_\text{out} \cdot x_0^2 &= 0, \\
        {x_1^{(1)}}^3 + y_\text{out} \cdot x_0^2 &= 0, \\
        {x_1^{(1)}}^3 + 2 \cdot x_2^{(2)} \cdot x_0^2 &= 0, \\
        y_\text{out} \cdot x_0^4 &= 0.
    \end{align*}
    Hence, after reducing modulo $(x_0)$ we remove the variable $y_\text{out}$.
\end{ex}

To resolve this problem we have to add additional polynomials to the system.
Over finite fields we can always add the field equations for $\mathbf{y}$ though for AO designs this introduces high degree equations to a low degree polynomial system.
On the other hand, we could add the inverse of the last round of an iterated construction to the polynomial system to introduce polynomials with leading monomials in $\mathbf{y}$.
Though, in general we also expect that this trick introduces high degree equations.

\subsection{The Problem With Non-Affine Key Schedules \& Generic Coordinates}
We face a similar obstacle for the Caminata-Gorla technique if we deploy a non-affine key schedule.
For sake of example let us return to \MiMC with the key schedule
\begin{equation}
    y_i = y_{i - 1}^3,
\end{equation}
for $2 \leq i \leq r$ and $y_1 \in \Fq$ the master key.
We then add the $i$\textsuperscript{th} key in the $i$\textsuperscript{th} round.
Obviously, we then have to add the equations $y_{i - 1}^3 - y_i = 0$ to the \MiMC keyed iterated polynomial system.
Now we homogenize this new system and set $x_0 = 0$, like in \Cref{Th: iterated system generic coordinates} we can iterate through the rounds to deduce that $y_1 = \ldots = y_{r - 1} = x_1 = \ldots = x_{r - 2} = 0$.
But for the last round we obtain that $y_r + x_{r - 1} = 0$, and we do not have any more equations left to cancel one of the variables.
Again, we would have to add polynomials to the system to fix our method like the field equations, or if possible the inverse of the last key schedule equation.

    \section{Polynomials With Degree Falls \& The Satiety}\label{Sec: satiety and polynomials with degree falls}
We now return to studying \MiMC, Feistel-\MiMC and Feistel-\MiMC-Hash.
In \Cref{Sec: applications} we derived solving degree estimates for various attacks on these primitives.
A natural question for the cryptanalyst is tightness of these bounds.
To partially answer this question we derive Castelnuovo-Mumford regularity lower bounds for the attacks on these primitives.
Essentially, if we find a non-trivial lower bound for the Castelnuovo-Mumford regularity, then regularity-based complexity estimates can never improve upon the lower bound.

In this section we develop the theoretical foundation for our regularity lower bounds.
First we introduce the notion of last fall degree of $\mathcal{F} \subset P$, that is the largest $d \in \mathbb{Z} \cup \{ \infty \}$ such that the row space of the inhomogeneous Macaulay matrix $M_{\leq d}$ is unequal to $(\mathcal{F})_{\leq d}$ (as $K$-vector space).
Then we prove that in generic coordinates the last fall degree of $\mathcal{F}$ is equal to the satiety of $\mathcal{F}^\homog$, another invariant closely related to the regularity.

Let $I \subset P = K [x_0, \dots, x_n]$ be a homogeneous ideal, it is well-known that the saturation $I^{\sat} = I : \mathfrak{m}^\infty$ is the unique largest ideal $J \subset P$ such that there exists $m \geq 0$ and for all $l \geq m$ one has $I_l = J_l$.
This motivates the following definition.
\begin{defn}\label{Def: satiety}
    Let $I \subset K [x_0, \dots, x_n]$ be a homogeneous ideal.
    The satiety of $I$, denoted by $\sat \left( I \right)$ is the smallest positive integer $m$ such that $I_l = I_l^{\sat}$ for all $l \geq m$.
\end{defn}

We recall some properties of the satiety.
If $x_0 \nmid 0 \mod I^{\sat}$, then by \cite[Lemma~1.8]{BayerStillman} one has that
\begin{equation}\label{Equ: satiety less than regularity}
    \sat \left( I \right) \leq \reg \left( I \right),
\end{equation}
and by \cite[Proposition~2.2]{Hashemi-EfficientComputation} one has that
\begin{equation}
    \sat \left( I \right) = \sat \big( \inid_{DRL} (I) \big).
\end{equation}
Moreover, if $I$ is in generic coordinates and $\mathcal{Z}_+ (I) \neq \emptyset$, then by \cite[Theorem~2.30]{Green-InitialIdeals} one has that
\begin{equation}
    \reg \left( I \right) = \max \left\{ \sat \left( I \right), \reg \left( I^{\sat} \right) \right\}.
\end{equation}

Let $\mathcal{F} = \{ f_1, \dots, f_m \} \subset P = K [x_1, \dots, x_n]$ be a polynomial system, and let $M_{\leq d}$ be the inhomogeneous Macaulay matrix in degree $d$.
We denote with
\begin{equation}
    W_{\mathcal{F}, d} = \left\{ f \in P \, \middle | \, f = \sum_{i = 1}^{m} g_i \cdot f_i,\ \degree{g_i \cdot f_i} \leq d \right\}
\end{equation}
the row space of $M_{\leq d}$.
Next we define the last fall degree.
\begin{defn}\label{Def: last fall degree}
    Let $K$ be a field, and let $\mathcal{F} \subset K [x_1, \dots, x_n]$ be a polynomial system.
    \begin{enumerate}
        \item For any $f \in (\mathcal{F})$ let
        \[
        d_f = \min \{ d \in \mathbb{Z}_{\geq 0} \mid f \in W_{\mathcal{F}, d} \}.
        \]

        \item If $d_f > \deg \left( f \right)$, then we say that $f$ has a degree fall in degree $d_f$.
        We say that $\mathcal{F}$ has a degree fall if there is an $f \in (\mathcal{F})$ such that $f$ has a degree fall.
        Else we say that $\mathcal{F}$ has no degree falls.

        \item Let $W_{\mathcal{F}, \infty} = (\mathcal{F})$ and $V_{\mathcal{F}, -1} = \emptyset$.
        The last fall degree of $\mathcal{F}$ is
        \[
        d_\mathcal{F} = \min \left\{ d \in \mathbb{Z}_{\geq 0} \cup \{ \infty \} \mid f \in W_{\mathcal{F}, \max \left\{ d, \degree{f} \right\}} \text{ for all } f \in (\mathcal{F}) \right\}.
        \]
    \end{enumerate}
\end{defn}

Note that the definition implies that for all $d \geq d_\mathcal{F}$ we have that $W_{\mathcal{F}, d} = (\mathcal{F}) \cap P_{\leq d} = (\mathcal{F})_{\leq d}$.

Next we collect some alternative characterizations of the last fall degree.
\begin{prop}\label{Prop: alternatice charcterizations of last fall degree}
    Let $K$ be a field, and let $\mathcal{F} \subset P = K [x_1, \dots, x_n]$ be a polynomial system.
    \begin{enumerate}
        \item If there exists a largest $d \in \mathbb{Z}_{\geq 0}$ such that $W_{\mathcal{F}, d} \cap P_{\leq d - 1} \neq W_{\mathcal{F}, d - 1}$, then $d_\mathcal{F} = d$.
        Else $d_\mathcal{F} = \infty$.

        \item $d_\mathcal{F} \geq \sup \left\{ d_f \mid f \in (\mathcal{F}),\ d_f > \degree{f} \right\}$.

        \item If $d_\mathcal{F} < \infty$, then $d_\mathcal{F} = \max \left\{ d_f \mid f \in (\mathcal{F}),\ d_f > \degree{f} \right\}$.
    \end{enumerate}
\end{prop}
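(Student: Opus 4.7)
The plan is to prove the three parts in the stated order, deriving part (3) from parts (1) and (2). All three assertions reduce to bookkeeping of the minimality properties defining $d_f$ and $d_\mathcal{F}$, together with the inclusion $W_{\mathcal{F}, d-1} \subseteq W_{\mathcal{F}, d} \cap P_{\leq d-1}$, which is always tautological.

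For part (1), I would split the equality $d_\mathcal{F} = d_0$, with $d_0$ denoting the largest $d$ such that $W_{\mathcal{F}, d} \cap P_{\leq d-1} \neq W_{\mathcal{F}, d-1}$, into two inequalities. For $d_\mathcal{F} \leq d_0$, take any $f \in (\mathcal{F})$, let $d^{\ast} = d_f$, and suppose $d^{\ast} > \max\{d_0, \degree{f}\}$. Since $d^{\ast} > d_0$, the maximality of $d_0$ forces $W_{\mathcal{F}, d^{\ast}} \cap P_{\leq d^{\ast} - 1} = W_{\mathcal{F}, d^{\ast} - 1}$, while $\degree{f} \leq d^{\ast} - 1$ puts $f$ into the left-hand side, contradicting minimality of $d^{\ast}$. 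For $d_\mathcal{F} \geq d_0$, pick a witness $f \in (W_{\mathcal{F}, d_0} \cap P_{\leq d_0 - 1}) \setminus W_{\mathcal{F}, d_0 - 1}$ supplied by the hypothesis; then $d_f = d_0 > \degree{f}$, and assuming $d_\mathcal{F} < d_0$ would force $f \in W_{\mathcal{F}, \max\{d_\mathcal{F}, \degree{f}\}} \subseteq W_{\mathcal{F}, d_0 - 1}$, a contradiction. The \emph{else} case, when the defining set is unbounded above, follows by applying the same lower bound argument at arbitrarily large $d$ to conclude $d_\mathcal{F} = \infty$.

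Part (2) is essentially a one-line contradiction: for $f \in (\mathcal{F})$ with $d_f > \degree{f}$, if $d_\mathcal{F} < d_f$, then $\max\{d_\mathcal{F}, \degree{f}\} \leq d_f - 1$, yet the definition of $d_\mathcal{F}$ forces $f \in W_{\mathcal{F}, \max\{d_\mathcal{F}, \degree{f}\}} \subseteq W_{\mathcal{F}, d_f - 1}$, contradicting minimality of $d_f$. Part (3) then follows by combining (1) and (2): (2) yields $d_\mathcal{F} \geq \sup$, and in the finite nontrivial case part (1) identifies $d_\mathcal{F}$ with the maximal $d_0$, whose witness $f$ satisfies $d_f = d_0 = d_\mathcal{F}$ together with $d_f > \degree{f}$, so the supremum is attained as a maximum.

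The main obstacle I anticipate is not algebraic but combinatorial: the careful bookkeeping of boundary cases. Specifically, one must distinguish an empty defining set (the degenerate situation in which no degree fall exists and $d_\mathcal{F}$ collapses to the trivial value) from an unbounded defining set (where $d_\mathcal{F} = \infty$), and one must consistently handle the asymmetry introduced by $\max\{d, \degree{f}\}$ when $d < \degree{f}$, to avoid spurious cases where the inequality $f \in W_{\mathcal{F}, \cdot}$ is argued against a bound smaller than $\degree{f}$. Once these cases are systematically unpacked, the arguments reduce to mechanical applications of the definitions of $W_{\mathcal{F}, d}$, $d_f$, and $d_\mathcal{F}$.
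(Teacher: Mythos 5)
Your proof is correct and takes essentially the same approach as the paper: each part is established by verifying the two inequalities directly from the definitions of $d_f$ and $d_\mathcal{F}$, using the tautological inclusion $W_{\mathcal{F}, d-1} \subseteq W_{\mathcal{F}, d} \cap P_{\leq d-1}$. The only stylistic difference is in part (3), where you derive $d_\mathcal{F} \leq \max$ by extracting a witness polynomial from part (1), whereas the paper instead verifies directly from the definition (splitting into the cases $d_f > \degree{f}$ and $d_f = \degree{f}$) that the maximum belongs to the defining set of $d_\mathcal{F}$; both routes are equally valid.
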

\begin{proof}
    For (1), let $d < \infty$ be as asserted.
    By definition of the last fall degree $d_\mathcal{F} \geq d$.
    If $f \in (\mathcal{F})$ is such that $\degree{f} \leq d_\mathcal{F}$, then by definition of the last fall degree $f \in W_{\mathcal{F}, d_\mathcal{F}}$.
    Therefore, we have that
    \[
    W_{\mathcal{F}, d_\mathcal{F}} \cap P_{\leq d_\mathcal{F} - 1} = (\mathcal{F}) \cap P_{\leq d_\mathcal{F} - 1} \neq W_{\mathcal{F}, d_\mathcal{F} - 1},
    \]
    which implies that $d_\mathcal{F} \leq d$.
    If such a $d \in \mathbb{Z}_{\geq 0}$ does not exist, then obviously $d_\mathcal{F} = \infty$.

    For (2), let $f \in (\mathcal{F})$ be such that $d_f > \degree{f}$.
    Then for all $d < d_f$ we have $f \notin W_{\mathcal{F}, d}$.
    Therefore, by definition of the last fall degree $d_\mathcal{F} > d_f - 1$.
    Hence, $d_\mathcal{F} \geq \sup \left\{ d_f \mid f \in (\mathcal{F}),\ d_f > \degree{f} \right\}$.

    For (3), since the last fall degree is finite by assumption the supremum from (2) is indeed a maximum.
    Now let $d = \max \left\{ d_f \mid f \in (\mathcal{F}),\ d_f > \degree{f} \right\}$ and fix $f \in (\mathcal{F})$.
    If $d_f > \degree{f}$, then $f \in W_{\mathcal{F}, d_f} \subset W_{\mathcal{F}, d}$ and $d = \max \{ d, \degree{f} \}$ since $d \geq d_f > \degree{f}$.
    If $d_f = \degree{f}$, then we always have that $W_{\mathcal{F}, \degree{f}} \subset W_{\mathcal{F}, \max \left\{ d, \degree{f} \right\}}$.
    Thus, for all $f \in (\mathcal{F})$ we have $f \in W_{\mathcal{F}, \max \left\{ d, \degree{f} \right\}}$.
    So $d \in \{ e \in \mathbb{Z}_{\geq 0} \cup \{ \infty \} \mid f \in W_{\mathcal{F}, \max \left\{ e, \degree{f} \right\}}  \text{ for all } f \in (\mathcal{F}) \}$ which implies that $d_\mathcal{F} \leq d$.
\end{proof}

Unsurprisingly, the last fall degree can also be considered as a measure of the complexity of solving polynomial systems.
Let $\maxGBdeg_> \left( \mathcal{F} \right)$ denote the maximal degree of the polynomials appearing in the reduced $>$-Gr\"obner basis of $\mathcal{F}$.
\begin{lem}
    Let $K$ be a field, and let $\mathcal{F} \subset P = K [x_1, \dots, x_n]$ be a polynomial system with $d_\mathcal{F} < \infty$.
    Then
    \[
        \solvdeg_{DRL} \left( \mathcal{F} \right) \leq \max \big\{ d_\mathcal{F}, \maxGBdeg_{DRL} (\mathcal{F}) \}.
    \]
\end{lem}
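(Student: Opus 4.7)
The plan is to show that setting $d := \max\{d_\mathcal{F}, \maxGBdeg_{DRL}(\mathcal{F})\}$, Gaussian elimination on the inhomogeneous Macaulay matrix $M_{\leq d}$ already yields a DRL Gr\"obner basis of $\mathcal{F}$. This then gives $\solvdeg_{DRL}(\mathcal{F}) \leq d$ by \Cref{Def: solving degree}.

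First I would recall that by construction the row space of $M_{\leq d}$ is exactly $W_{\mathcal{F}, d}$. Since $d \geq d_\mathcal{F}$, the defining property of the last fall degree (as remarked right after \Cref{Def: last fall degree}) gives the equality
\[
W_{\mathcal{F}, d} \;=\; (\mathcal{F}) \cap P_{\leq d} \;=\; (\mathcal{F})_{\leq d}.
\]
Next, I would recall a standard fact about Gaussian elimination on a Macaulay matrix: after reduction, one obtains a set of polynomials $\mathcal{H} \subset (\mathcal{F})$ whose leading monomials are precisely the leading monomials occurring in the row space, i.e.,
\[
\{\LM_{DRL}(h) : h \in \mathcal{H}\} \;=\; \{\LM_{DRL}(f) : f \in W_{\mathcal{F}, d},\ f \neq 0\}.
\]

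With the identification $W_{\mathcal{F}, d} = (\mathcal{F})_{\leq d}$, the set on the right equals $\{\LM_{DRL}(f) : f \in (\mathcal{F})_{\leq d},\ f \neq 0\}$. Now let $\mathcal{G}$ be the reduced DRL Gr\"obner basis of $(\mathcal{F})$. Since $d \geq \maxGBdeg_{DRL}(\mathcal{F})$, every $g \in \mathcal{G}$ satisfies $\degree{g} \leq d$, hence $g \in (\mathcal{F})_{\leq d}$ and therefore $\LM_{DRL}(g) \in \{\LM_{DRL}(h) : h \in \mathcal{H}\}$. Since $\mathcal{G}$ is a Gr\"obner basis we get
\[
\inid_{DRL}\big( (\mathcal{F}) \big) \;=\; \big( \LM_{DRL}(g) : g \in \mathcal{G} \big) \;\subseteq\; \big( \LM_{DRL}(h) : h \in \mathcal{H} \big) \;\subseteq\; \inid_{DRL}\big( (\mathcal{F}) \big),
\]
so equality holds throughout, which means $\mathcal{H}$ is a DRL Gr\"obner basis of $\mathcal{F}$.

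There is no real obstacle here: the statement is essentially a direct book-keeping argument combining the two definitions. The only mildly delicate point is the standard linear-algebra fact that reducing $M_{\leq d}$ to row-echelon form produces one polynomial in $\mathcal{H}$ for each leading monomial occurring in its row space, which I would invoke without proof (it is the defining property of Gaussian elimination on a Macaulay matrix). Once that is in place, the inclusion chain above immediately gives $\solvdeg_{DRL}(\mathcal{F}) \leq d$.
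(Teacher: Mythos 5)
Your proof is correct and follows essentially the same route as the paper's: both arguments reduce to showing that every element of the reduced DRL Gr\"obner basis lies in $W_{\mathcal{F}, d}$ for $d = \max\{d_\mathcal{F}, \maxGBdeg_{DRL}(\mathcal{F})\}$ and then observing that Gaussian elimination on $M_{\leq d}$ therefore recovers all leading monomials of the ideal. The only cosmetic difference is that the paper handles each $g$ in the reduced basis by a case distinction on whether $g$ has a degree fall (invoking \Cref{Prop: alternatice charcterizations of last fall degree}), while you instead use the blanket identity $W_{\mathcal{F}, d} = (\mathcal{F})_{\leq d}$ from the remark following \Cref{Def: last fall degree}; you also spell out the Gaussian-elimination step that the paper leaves implicit.
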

\begin{proof}
    Let $g \in \mathcal{G}$ be an element of the reduced DRL Gr\"obner basis of $\mathcal{F}$.
    If $g$ has a degree fall, then by \Cref{Prop: alternatice charcterizations of last fall degree} $g \in W_{\mathcal{F}, d_g} \subset W_{\mathcal{F}, d_\mathcal{F}}$.
    If $g$ does not have a degree fall, then $g \in W_{\mathcal{F}, \degree{g}}$.
    Thus, the upper bound follows by taking the maximum over the last fall degree and the maximal degree in the reduced DRL Gr\"obner basis.
\end{proof}

If a polynomial system is in generic coordinates, then one can guarantee that the last fall degree is finite.
\begin{thm}\label{Th: last fall degree finite in generic coordinates}
    Let $K$ be an algebraically closed field, and let $\mathcal{F} = \{ f_1, \dots, f_m \} \subset K [x_1, \dots, x_n]$ be an inhomogeneous polynomial system such that $\left( \mathcal{F}^\homog \right)$ is in generic coordinates and $\left| \mathcal{Z}_+ \left( \mathcal{F}^\homog \right) \right| \neq 0$.
    If $d \geq \sat \left( \mathcal{F}^\homog \right)$ is an integer, then
    \[
        (\mathcal{F})_{\leq d} = W_{\mathcal{F}, d}.
    \]
    In particular,
    \[
        d_\mathcal{F} = \sat \left( \mathcal{F}^\homog \right).
    \]
\end{thm}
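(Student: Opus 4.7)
The strategy is to introduce the degree-$d$ dehomogenization bijection $\Psi_d : P[x_0]_d \to P_{\leq d}$, $P \mapsto P^\dehom$, whose inverse sends $q \in P_{\leq d}$ to $x_0^{d - \degree{q}} \cdot q^\homog$. The key observation is that $\Psi_d$ carries $(\mathcal{F}^\homog)_d$ bijectively onto $W_{\mathcal{F}, d}$ and $((\mathcal{F})^\homog)_d$ bijectively onto $(\mathcal{F})_{\leq d}$. Combined with \Cref{Lem: saturation equal homogenization}, which under our hypotheses (generic coordinates and $\abs{\mathcal{Z}_+ (\mathcal{F}^\homog)} \neq 0$) gives $(\mathcal{F}^\homog)^{\sat} = (\mathcal{F})^\homog$, and the defining property of satiety, the first assertion reduces to the identity $(\mathcal{F}^\homog)_d = ((\mathcal{F}^\homog)^{\sat})_d$, valid for $d \geq \sat(\mathcal{F}^\homog)$, which upon applying $\Psi_d$ yields $W_{\mathcal{F}, d} = (\mathcal{F})_{\leq d}$.

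For the second assertion, the first part immediately gives $d_\mathcal{F} \leq \sat(\mathcal{F}^\homog)$: applying it with $d = \max \{ \sat(\mathcal{F}^\homog), \degree{f} \}$ places every $f \in (\mathcal{F})$ in $W_{\mathcal{F}, \max \{ \sat(\mathcal{F}^\homog), \degree{f} \}}$. For the reverse inequality, set $s = \sat(\mathcal{F}^\homog)$. By minimality of $s$ there exists $F \in ((\mathcal{F}^\homog)^{\sat})_{s - 1} \setminus (\mathcal{F}^\homog)_{s - 1}$, and the bijection $\Psi_{s - 1}$ sends $F$ to an element $F^\dehom \in (\mathcal{F})_{\leq s - 1} \setminus W_{\mathcal{F}, s - 1}$. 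Hence $d_{F^\dehom} \geq s > \degree{F^\dehom}$, and \Cref{Prop: alternatice charcterizations of last fall degree}~(2) then yields $d_\mathcal{F} \geq s$.

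The main technical step is verifying the compatibility $\Psi_d \big( (\mathcal{F}^\homog)_d \big) = W_{\mathcal{F}, d}$. In one direction, a homogeneous representation $\sum_i G_i \cdot f_i^\homog$ of degree $d$ (with each $G_i$ homogeneous of degree $d - \degree{f_i}$) dehomogenizes to $\sum_i G_i^\dehom \cdot f_i$, and $\degree{G_i^\dehom \cdot f_i} \leq \degree{G_i} + \degree{f_i} = d$ places this sum in $W_{\mathcal{F}, d}$. Conversely, given $f = \sum_i g_i \cdot f_i$ with $\degree{g_i \cdot f_i} \leq d$, the scaled homogenizations $H_i := x_0^{d - \degree{g_i} - \degree{f_i}} \cdot g_i^\homog$ (each homogeneous of degree $d - \degree{f_i}$, using $\degree{g_i \cdot f_i} = \degree{g_i} + \degree{f_i}$ since $K$ is a domain) yield $\sum_i H_i \cdot f_i^\homog \in (\mathcal{F}^\homog)_d$ whose dehomogenization is $f$; uniqueness of the degree-$d$ homogeneous lift of a polynomial of degree $\leq d$ then forces this element to equal $\Psi_d^{-1}(f)$. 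The analogous compatibility $\Psi_d \big( ((\mathcal{F})^\homog)_d \big) = (\mathcal{F})_{\leq d}$ is a direct bookkeeping check using that $(\mathcal{F})^\homog$ is an ideal containing $f^\homog$ for every $f \in (\mathcal{F})$.
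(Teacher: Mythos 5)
Your proof is correct and takes essentially the same approach as the paper. Both arguments hinge on the same two ingredients: the homogenization/dehomogenization correspondence between homogeneous representations of degree $d$ and representations with summands of degree $\leq d$, and \Cref{Lem: saturation equal homogenization} to identify $(\mathcal{F})^\homog$ with $\left(\mathcal{F}^\homog\right)^{\sat}$. You repackage the former as the explicit bijection $\Psi_d$ and, for the reverse inequality $\sat\left(\mathcal{F}^\homog\right) \leq d_\mathcal{F}$, argue contrapositively --- using minimality of the satiety to produce a homogeneous witness in degree $\sat\left(\mathcal{F}^\homog\right)-1$ that dehomogenizes to a polynomial with a degree fall, then invoking \Cref{Prop: alternatice charcterizations of last fall degree} --- whereas the paper shows directly that $\left(\mathcal{F}^\homog\right)_d = \left(\mathcal{F}\right)^\homog_d$ for all $d \geq d_\mathcal{F}$ and reads off $\sat\left(\mathcal{F}^\homog\right) \leq d_\mathcal{F}$ from that; these are the two sides of the same coin, and your framing is, if anything, slightly more transparent.
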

\begin{proof}
    As always we abbreviate $P = K [x_1, \dots, x_n]$.
    $W_{\mathcal{F}, d} \subset (\mathcal{F})_{\leq d}$ is trivial, so let $f \in (\mathcal{F})_{\leq d}$.
    Recall that by \Cref{Lem: saturation equal homogenization} $(\mathcal{F})^\homog = \left( \mathcal{F}^\homog \right)^{\sat}$.
    Therefore, by definition of the satiety $x_0^{d - \degree{f}} \cdot f^\homog \in (\mathcal{F})_d^\homog = \left( \mathcal{F}^\homog \right)_d$.
    So we can construct $x_0^{d - \degree{f}} \cdot f^\homog = \sum_{i = 1}^{m} g_i \cdot f_i^\homog$, where $g_i$ homogeneous and $\degree{g_i \cdot f_i^\homog} = d$.
    Then by \cite[Proposition~4.3.2]{Kreuzer-CompAlg2}
    \[
    f = \left( x_0^{d - \degree{f}} \cdot f^\homog \right)^\dehom = \sum_{i = 1}^{m} g_i^\dehom \cdot f_i,
    \]
    where $\degree{g_i^\dehom \cdot f_i} \leq d$.
    So $f \in W_{\mathcal{F}, d}$.
    Therefore, $W_{\mathcal{F}, d} \cap P_{\leq d - 1} = (\mathcal{F})_{\leq d} \cap P_{\leq d - 1} = (\mathcal{F})_{\leq d - 1} = W_{\mathcal{F}, d - 1}$ for all $d > \sat \left( \mathcal{F}^\homog \right)$.
    So by \Cref{Prop: alternatice charcterizations of last fall degree} we can conclude that $d_\mathcal{F} \leq \sat \left( \mathcal{F}^\homog \right)$.

    For the second claim let $f \in (\mathcal{F})_{\leq d_\mathcal{F}}$, then it can be constructed as
    \[
    f = \sum_{i = 1}^{m} g_i \cdot f_i,
    \]
    where $\degree{g_i \cdot f_i} \leq d_\mathcal{F}$.
    Let $\hat{d} = \max_{1 \leq i \leq m} \degree{g_i \cdot f_i} \leq  d_\mathcal{F}$.
    Then by \cite[Proposition~4.3.2]{Kreuzer-CompAlg2}
    \[
    x_0^{\hat{d} - \degree{f}} \cdot f^\homog = \sum_{i = 1}^{m} x_0^{\hat{d} - \degree{f_i \cdot g_i}} \cdot g_i^\homog \cdot f_i^\homog.
    \]
    Multiplying this equation by $x_0^{d_\mathcal{F} - \hat{d}}$ lifts it to $\left( \mathcal{F} \right)^\homog_{d_\mathcal{F}}$.
    Since $f \in (\mathcal{F})_{\leq d_\mathcal{F}}$ was arbitrary we can then conclude that
    \[
        \left( \mathcal{F}^\homog \right)_{d_\mathcal{F}}
        = \left\{ x_0^{d_\mathcal{F} - \degree{f}} \cdot f^\homog \; \middle\vert \; f \in (\mathcal{F})_{\leq d_\mathcal{F}} \right\}
        = \left( \mathcal{F} \right)^\homog_{d_\mathcal{F}}.
    \]
    Obviously, the latter equality extends to all $d \geq d_\mathcal{F}$, so by minimality of the saturation we also have that $\sat \left( \mathcal{F}^\homog \right) \leq d_\mathcal{F}$.
\end{proof}

\begin{cor}
    In the situation of \Cref{Th: last fall degree finite in generic coordinates}, if $f \in (\mathcal{F})$ has a degree fall in $d_f$, then
    \[
    d_f \leq \sat \left( \mathcal{F}^\homog \right).
    \]
\end{cor}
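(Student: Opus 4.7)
The plan is to simply combine the preceding theorem with part~(3) of \Cref{Prop: alternatice charcterizations of last fall degree}. By \Cref{Th: last fall degree finite in generic coordinates} applied under the stated hypotheses, we have the equality $d_{\mathcal{F}} = \sat(\mathcal{F}^{\homog})$, and in particular $d_{\mathcal{F}} < \infty$ since the satiety of a homogeneous ideal is always a non-negative integer.

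Once $d_{\mathcal{F}}$ is known to be finite, \Cref{Prop: alternatice charcterizations of last fall degree}~(3) gives us the exact description
\[
d_{\mathcal{F}} = \max \left\{ d_g \mid g \in (\mathcal{F}),\ d_g > \degree{g} \right\}.
\]
In particular, if $f \in (\mathcal{F})$ admits a degree fall in $d_f$ (i.e.\ $d_f > \degree{f}$), then $f$ belongs to the set over which the maximum is taken, so $d_f \leq d_{\mathcal{F}}$. Chaining this with the theorem yields the claimed inequality $d_f \leq \sat(\mathcal{F}^{\homog})$.

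There is essentially no obstacle here, since all the work has already been done in \Cref{Th: last fall degree finite in generic coordinates} and in \Cref{Prop: alternatice charcterizations of last fall degree}~(3); the corollary is a two-line bookkeeping statement. The only subtlety worth checking is that the hypothesis of \Cref{Prop: alternatice charcterizations of last fall degree}~(3) (namely $d_{\mathcal{F}} < \infty$) is indeed guaranteed by the theorem, which it is because $\sat(\mathcal{F}^{\homog}) \in \mathbb{Z}_{\geq 0}$ for any homogeneous ideal with $|\mathcal{Z}_+(\mathcal{F}^{\homog})| \neq 0$.
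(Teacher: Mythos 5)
Your proof is correct and takes essentially the same route as the paper: the paper's own proof is the one-line observation that the corollary is a consequence of \Cref{Prop: alternatice charcterizations of last fall degree} and \Cref{Th: last fall degree finite in generic coordinates}, exactly the combination you spell out. (One could also quote \Cref{Prop: alternatice charcterizations of last fall degree}~(2), which gives $d_f \leq d_{\mathcal{F}}$ directly without first checking finiteness, but your use of part~(3) together with the finiteness guaranteed by the theorem is equally valid.)
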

\begin{proof}
    This is a consequence of \Cref{Prop: alternatice charcterizations of last fall degree} and \Cref{Th: last fall degree finite in generic coordinates}.
\end{proof}

So by \Cref{Equ: satiety less than regularity} the construction of a polynomial with a degree fall yields a lower bound on the regularity of $\mathcal{F}^\homog$.

\begin{rem}\label{Rem: Huang's last fall degree}
    We note that the first notion of ``last fall degree'' already appeared in Huang et al.\ \cite{C:HuaKosYeo15,Huang-LastFallDegree}.
    They define their last fall degree as follows: Let $\mathcal{F} \subset P = K [x_1, \dots, x_n]$ be a polynomial system the vector space of constructible polynomials $V_{\mathcal{F}, d}$ in degree $d \geq 0$ is defined via
    \begin{enumerate}[label=(\roman*)]
        \item $\{ f \in \mathcal{F} \mid \degree{f} \leq d \} \subset V_{\mathcal{F}, d}$,

        \item if $g \in V_{\mathcal{F}, d}$ and $h \in P$ with $\degree{g \cdot h} \leq d$, then $h \cdot g \in V_{\mathcal{F}, d}$.
    \end{enumerate}
    Analog to \Cref{Def: last fall degree} Huang et al.\ define the last fall degree as
    \[
    \overline{d}_\mathcal{F} = \min \left\{ d \in \mathbb{Z}_{\geq 0} \cup \{ \infty \} \mid f \in V_{\mathcal{F}, \max \left\{ d, \degree{f} \right\}} \text{ for all } f \in (\mathcal{F}) \right\}.
    \]
    For Huang et al.'s last fall degree one can also prove analog characterizations to \Cref{Prop: alternatice charcterizations of last fall degree}, see \cite[Propostion~2.6]{Huang-LastFallDegree} and \cite[Theorem~2.8]{Caminata-Degrees}.
    Moreover, Huang et al.'s last fall degree is always finite \cite[Propostion~2.6]{Huang-LastFallDegree}.

    Huang et al.'s last fall degree can also be interpreted in terms of Macaulay matrices.
    Let $>$ be a degree compatible term order on $P$, and let $M_{\leq d}$ be the inhomogeneous Macaulay matrix for $\mathcal{F}$ in degree $d$.
    First we compute the basis $\mathcal{B}$ of the row space of $M_{\leq d}$ via Gaussian elimination.
    Now we generate the Macaulay matrix $M_{\leq d}$ for $\mathcal{B}$ and again compute the row space basis $\mathcal{B}'$ via Gaussian elimination.
    We repeat this procedure until $\mathcal{B} = \mathcal{B}'$.
    We then denote with $\overline{W}_d$ the row space of the final stationary Macaulay matrix in that procedure.
    It is clear that for $d$ large enough $\mathcal{B}$ contains a Gr\"obner basis, in analogy to \Cref{Def: solving degree} one can define another notion of solving degree $\overline{\solvdeg}_> \left( \mathcal{F} \right)$ to be the minimal $d$ such that the iterated Macaulay matrix construction produces a $>$-Gr\"obner basis for $\mathcal{F}$.
    Gorla et al.\ proved that $\overline{W}_d = V_{\mathcal{F}, d}$ \cite[Theorem~1]{Gorla-StrongerBounds}, and Caminata \& Gorla proved that \cite[Theorem~~3.1]{Caminata-Degrees}
    \[
    \overline{\solvdeg}_> \left( \mathcal{F} \right) = \max \left\{ \overline{d}_\mathcal{F}, \maxGBdeg_> \left( \mathcal{F} \right) \right\}.
    \]

    With the Macaulay matrix interpretation the difference between Huang et al.'s and our last fall degree notion becomes clear.
    Our last fall degree is defined via a single Macaulay matrix in degree $d$, i.e.\ it is a \emph{last fall degree of the first order}.
    Huang et al.'s last fall degree is defined via an iteration of Macaulay matrices in degree $d$, i.e.\ it is a \emph{last fall degree of higher order}.

    Finally, it follows easily from the definitions that
    \begin{align}
        \overline{d}_\mathcal{F} &\leq d_\mathcal{F}, \nonumber \\
        \overline{\solvdeg}_> \left( \mathcal{F} \right) &\leq \solvdeg_> \left( \mathcal{F} \right). \nonumber
    \end{align}
\end{rem}

    \section{Lower Bounds for the Last Fall Degree of Iterated Polynomial Systems}\label{Sec: lower bounds}
In this section we prove lower bounds for the Castelnuovo-Mumford regularity of attacks on \MiMC, Feistel-\MiMC and Feistel-\MiMC-Hash.
Essentially, we will achieve this by constructing S-polynomials with degree falls.

\subsection{Lower Bound for Univariate Keyed Iterated Polynomial Systems With a Field Equation}
Before we present the theorem we first outline our proof strategy for all results in this section.
First we pick a polynomial $f \in (I, g)$, where $I$ is an ideal with known DRL and LEX Gr\"obner bases and $g$ is an additional polynomial, and assume that $f$ does not have a degree fall in some degree $d_f$.
Now we express as sum $f = f_I + f_g \cdot g$, where $f_I \in I$,  that is compatible with $d_f$ and rearrange this equation so that the right-hand side only consists of elements of $I$, i.e.\ $f - f_g \cdot g = f_I$.
Additionally, we reduce $f_g$ modulo $I$ with respect to DRL, so without loss of generality we can assume that no monomial of $f_g$ is an element of $\inid_{DRL} (I)$.
Then we use the LEX Gr\"obner basis of $I$ to transform the left-hand side into a univariate polynomial.
Finally, we compare the degrees of the univariate left-hand side polynomial and the univariate LEX polynomial of $I$.
If $f$ has a degree fall, then we expect that the degree of the left-hand side polynomial is less than the degree of the univariate LEX polynomial, i.e.\ we have constructed a contradiction.
\begin{thm}\label{Th: lower bound for satiety of iterated polynomial systems}
    Let $\Fq$ be a finite field, let $n \geq 2$ be an integer, and let $f_1, \dots, f_n \in \Fq [x_1, \dots, x_{n - 1}, y]$ be a univariate keyed iterated polynomial system such that
    \begin{enumerate}[label=(\roman*)]
        \item $d_i = \deg \left( f_i \right) \geq 2$ for all $1 \leq i \leq n$ and $d_1 \leq q$, and

        \item $f_i$ has the monomial $x_{i - 1}^{d_i}$ for all $2 \leq i \leq n$.
    \end{enumerate}
    Let $f_{n + 1} = y^q - y$ be the field equation for $y$, let $\mathcal{F} = \{ f_1, \dots, f_{n + 1} \}$, and let $\tilde{f}_n \in \Fq [y]$ be the univariate polynomial in the LEX Gr\"obner basis of $(f_1, \dots, f_n)$.
    Further, assume that
    \begin{enumerate}[label=(\roman*),resume]
        \item\label{Item: field equation not in ideal} $f_{n + 1} \notin (f_1, \dots, f_n)$, and

        \item\label{Item: roots assumption} $\tilde{f}_n$ has less than $d_1$ many roots in $\Fq$.
    \end{enumerate}
    Then
    \[
    d_\mathcal{F} \geq q + \sum_{i = 2}^{n} (d_i - 1).
    \]
    Moreover, if $d_i \geq d$ for all $1 \leq i \leq n$, then
    \[
    d_\mathcal{F} \geq q + (n - 1) \cdot (d - 1).
    \]
\end{thm}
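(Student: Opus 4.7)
The strategy is to exhibit an explicit polynomial $f \in (\mathcal{F})$ whose minimal Macaulay representation requires degree at least $d := q + \sum_{i=2}^{n}(d_i-1)$, which equals $q + \deg(s_1)$ for the ``socle monomial'' $s_1 := \prod_{j=1}^{n-1} x_j^{d_{j+1}-1}$ of \Cref{Prop: lex substitution properties}. The natural candidate is $f := s_1 - r$, where $r \in \Fq[y]$ is the LEX normal form of $s_1$ modulo $(\mathcal{F})$. By \Cref{Lem: keyed iterated shape lemma I} \ref{Item: upper bound for all field equations} the LEX Gr\"obner basis of $(\mathcal{F})$ is $\{\hat f_1, \ldots, \hat f_{n-1}, \tilde h\}$ with $\tilde h := \gcd(\tilde f_n, y^q - y)$; assumption \ref{Item: roots assumption} yields $\deg(\tilde h) < d_1 \leq q$, so $r = \hat s_1 \bmod \tilde h$ satisfies $\deg(r) < d_1$, and consequently $f \in (\mathcal{F})$ with $\deg(f) = \deg(s_1) < d$.

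Proceeding by contradiction, assume $f \in W_{\mathcal{F}, d-1}$, so
\[
    f \ =\ \sum_{i=1}^{n} g_i f_i \ +\ g_{n+1}(y^q - y)
\]
with $\deg(g_i f_i) \leq d - 1$, giving in particular $\deg(g_{n+1}) \leq \deg(s_1) - 1$. Since $\{f_1, \ldots, f_n\}$ is a DRL Gr\"obner basis of $I$, I would absorb the DRL-reducible part of $g_{n+1}$ into the other $g_i$'s (preserving the bound $d - 1$), so without loss of generality every monomial of $g_{n+1}$ is a standard monomial $y^{a_0} \prod x_j^{a_j}$ with $a_0 < d_1$, $a_j < d_{j+1}$ of total degree at most $\deg(s_1) - 1$. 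Applying the substitution homomorphism $\sigma \colon P \to \Fq[y]$, $\sigma(x_j) := \hat g_j$, read off from \Cref{Lem: keyed iterated shape lemma I} \ref{Item: lex polynomial shape}, one has $\sigma(f_i) = 0$ for $i < n$ and $\sigma(f_n) = \tilde f_n$, so the identity transforms into
\[
    \hat s_1 - r \ =\ \sigma(g_{n+1})\,(y^q - y) \ +\ \sigma(g_n)\,\tilde f_n \qquad \text{in } \Fq[y],
\]
whose left-hand side has degree exactly $\prod_{k=1}^{n} d_k - d_1$ by \Cref{Prop: lex substitution properties} \ref{Item: degree}. Because every monomial $t$ of $g_{n+1}$ satisfies $\deg(t) \leq \deg(s_1) - 1$ and $t \neq s_1$, \Cref{Prop: lex substitution properties} \ref{Item: degree after substitution} forces the sharp bound $\deg(\sigma(g_{n+1})) \leq \prod_{k} d_k - d_1 - 1$.

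The final step exploits the coprime factorisation $\tilde f_n = \tilde h \cdot q_1$, $y^q - y = \tilde h \cdot q_2$ with $\gcd(q_1, q_2) = 1$. Since $\tilde h$ divides $\hat s_1 - r$ by construction of $r$, dividing the key equation by $\tilde h$ yields a B\'ezout-type relation in $\Fq[y]$ whose reduction modulo $q_1$ determines $\sigma(g_{n+1}) \bmod q_1$ uniquely; assumption \ref{Item: roots assumption} guarantees $\deg(q_1) \geq \prod_k d_k - d_1 + 1$, so this unique representative must equal $\sigma(g_{n+1})$ itself, and matching it against the degrees produced by $\sigma$ on standard monomials of degree at most $\deg(s_1) - 1$ (whose substitution degrees form the mixed-radix subset underlying \Cref{Prop: lex substitution properties}) produces the contradiction. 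The main obstacle is this last matching step: I expect to need the explicit leading-term description $\hat s_1 = \prod_{j=1}^{n-1} \hat g_j^{d_{j+1}-1}$ to rule out that the forced B\'ezout coefficient can ever lie in the image of $\sigma$ on bounded-degree standard monomials. The second inequality under the uniform assumption $d_i \geq d$ is then immediate by monotonicity from $\sum_{i=2}^{n}(d_i - 1) \geq (n-1)(d-1)$.
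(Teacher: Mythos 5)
Your proposal departs from the paper's argument in a structurally important way, and the point you flag as ``the main obstacle'' is in fact a genuine gap that the paper's approach is designed to avoid.

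The paper does not start from the socle element $s_1 - r$. It instead works with the polynomial $s = x^\gamma \cdot S_{DRL}(f_1, f_{n+1})$ with $x^\gamma = \prod_{i=1}^{n-1} x_i^{d_{i+1}-1} = s_1$. Expanding this $s$ explicitly contains the term $-x^\gamma \cdot f_{n+1}$, so the hypothesized Macaulay representation $s = \sum s_i f_i + s_{n+1} f_{n+1}$ can be \emph{rearranged} to put every occurrence of $f_{n+1}$ on one side: $\left(x^\gamma + s_{n+1}\right) f_{n+1} \in I = (f_1,\dots,f_n)$. This is a divisibility constraint --- after the LEX substitution it says $\tilde f_n$ divides $\bigl(\hat f_\gamma + \hat f_{s_{n+1}}\bigr) f_{n+1}$ --- and the contradiction is then a clean degree count using Assumption (iv), $\gcd(\tilde f_n, f_{n+1})$, and $\gcd(p, q r) = \gcd(p, q)$ when $\gcd(p, r) = 1$. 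The whole point of choosing $s$ to be $x^\gamma$ times an S-polynomial of $f_1$ and $f_{n+1}$ is to engineer this rearrangement.

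Your $f = s_1 - r$ has no $f_{n+1}$-component to move across, so after applying $\sigma$ you are left with a B\'ezout identity $\hat s_1 - r = \sigma(g_n)\tilde f_n + \sigma(g_{n+1})(y^q - y)$ rather than a divisibility. The subsequent steps you carry out correctly (reducing $g_{n+1}$ to a normal form, the bound $\deg(\sigma(g_{n+1})) \leq \prod_k d_k - d_1 - 1$ from \Cref{Prop: lex substitution properties}~(5), the factorization $\tilde f_n = \tilde h q_1$, $y^q - y = \tilde h q_2$, the unique determination of $\sigma(g_{n+1})$ modulo $q_1$) do pin down $\sigma(g_{n+1})$ to a single element of $\Fq[y]$, but nothing about that element contradicts the constraints on $g_{n+1}$. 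You would have to prove that this forced $B \in \Fq[y]$ lies outside the $\sigma$-image of the span of standard monomials of degree $\leq \deg(s_1) - 1$; this is a statement about a particular polynomial landing in the complement of a proper (but rather large) subspace, and I see no uniform degree argument that settles it. Your own phrasing --- ``I expect to need the explicit leading-term description \dots to rule out'' --- confirms this is missing. It is not a detail to be filled in: it is the entire remaining content of the proof, and the B\'ezout-identity route does not obviously reduce to it. I'd recommend switching the candidate polynomial to the S-polynomial $x^\gamma \cdot S_{DRL}(f_1, f_{n+1})$, which makes the divisibility argument available and the degree contradiction immediate.
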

\begin{proof}
    Without loss of generality we can assume that $\LC_{DRL} (f_1) = 1$.
    Let $I = (f_1, \dots, f_n)$, and let $x^\gamma = \prod_{i = 1}^{n - 1} x_{i}^{d_{i + 1} - 1}$.
    We consider the S-polynomial
    \[
    s = x^\gamma \cdot S_{DRL} \left( f_1, f_{n + 1} \right),
    \]
    and the degree $d_s = q + \sum_{i = 2}^{n} (d_i - 1)$.
    By Assumption \ref{Item: field equation not in ideal} $S_{DRL} \left( f_1, f_{n + 1} \right) = y^{q - d_1} \cdot f_1 - f_{n + 1}$ has a degree fall in $q$, so we also have that $\deg \left( s \right) < d_s$.
    For a contradiction let us assume that $s$ does not have a degree fall in $d_s$, i.e.,
    \[
    s = \sum_{i = 1}^{n} s_i \cdot f_i + s_{n + 1} \cdot f_{n + 1},
    \]
    where $\deg \left( s_i \cdot f_i \right) < d_s$ for all $1 \leq i \leq n + 1$.
    Expanding the definition for $s$ and by rearranging we yield that
    \[
    \left( x^\gamma + s_{n + 1} \right) \cdot f_{n + 1} = \sum_{i = 1}^{n} \tilde{s}_i \cdot f_i \in I.
    \]
    Via division by remainder we can split $s_{n + 1} = s_I + s_r$, where $s_I \in I$ and no term of $s_r$ lies in $\inid_{DRL} (I)$.
    Note that for a degree compatible term order the degree of polynomials involved in the division by remainder algorithm can never reach $d_s$.
    Now we move $s_I \cdot f_{n + 1}$ to the right-hand side of the equation, so without loss of generality we can assume that no term of $s_{n + 1}$ lies in $\inid_{DRL} (I)$.
    Via the LEX Gr\"obner basis of $I$, see \Cref{Lem: keyed iterated shape lemma I}, we can transform any polynomial in $g \in \Fq [x_1, \dots, x_{n - 1}, y]$ into a univariate polynomial $\hat{g} \in K [y]$ such that
    \[
    g \equiv \hat{g} \mod (f_1, \dots, f_n)
    \]
    by simply substituting $x_i \mapsto \tilde{f}_i$.
    Via the substitution we now obtain univariate polynomials $\hat{f}_\gamma, \hat{f}_{s_{n + 1}} \in \Fq [y]$ such that
    \begin{equation}\label{Equ: substitution}
        \left( x^\gamma + s_{n + 1} \right) \cdot f_{n + 1} \equiv \left( \hat{f}_\gamma + \hat{f}_{s_{n + 1}} \right) \cdot f_{n + 1} \equiv 0 \mod I.
    \end{equation}
    By our assumption that $s$ does not have a degree fall, we have that $\deg \left( s_{n + 1} \right) < \deg \left( x^\gamma \right)$.
    So by \Cref{Prop: lex substitution properties} \ref{Item: degree after substitution} we also have that $\deg \left( \hat{f}_{s_{n + 1}} \right) < \deg \left( \hat{f}_\gamma \right)$.
    Recall that the degree of $\hat{f}_\gamma$ is given by \Cref{Prop: lex substitution properties} \ref{Item: degree},
    \begin{equation}\label{Equ: degree 1}
        \deg \left( \hat{f}_\gamma \right) = \prod_{i = 1}^{n} d_i - d_1.
    \end{equation}
    Combining \Cref{Lem: lex shape lemma} \ref{Item: univariate LEX ideal membership} and \Cref{Equ: substitution} we now conclude that
    \begin{equation}\label{Equ: contained in ideal}
        \left( x^\gamma + s_{n + 1} \right) \cdot f_{n + 1} \in (f_1, \dots, f_n) \Leftrightarrow \left( \hat{f}_\gamma + \hat{f}_{s_{n + 1}} \right) \cdot f_{n + 1} \in \left( \tilde{f}_n \right),
    \end{equation}
    where $\tilde{f}_n$ is the univariate polynomial in the LEX Gr\"obner basis of $I$.
    I.e., $\left( \hat{f}_\gamma + \hat{f}_{s_{n + 1}} \right) \cdot f_{n + 1}$ must be a multiple of $\tilde{f}_n$.
    By Assumption \ref{Item: roots assumption} $\tilde{f}_n$ has less than $d_1$ many roots in $\Fq$ and $f_{n + 1} = \prod_{a \in \Fq} (y - a)$ is a square-free polynomial, so
    \begin{equation} \label{Equ: degree 2}
        \deg \left( \gcd \left( \hat{f}_n, f_{n + 1} \right) \right) < d_1.
    \end{equation}
    Before our final step we recall the following property of univariate polynomial greatest common divisors: If $p, q, r \in K [x]$, $K$ a field, and $\gcd \left( p, r \right) = 1$, then $\gcd \left( p, q \cdot r \right) = \gcd \left( p, q \right)$.
    Combining this property with \Cref{Equ: contained in ideal} we conclude that the following equation must be true
    \begin{align}
        \tilde{f}_n
        &= \gcd \left( \tilde{f}_n, \left( \hat{f}_\gamma + \hat{f}_{s_{n + 1}} \right) \cdot f_{n + 1} \right) \nonumber \\
        &= \gcd \left( \tilde{f}_n, \left( \hat{f}_\gamma + \hat{f}_{s_{n + 1}} \right) \cdot \gcd \left( \tilde{f}_n, f_{n + 1} \right) \right). \nonumber
    \end{align}
    On the other hand, by \Cref{Equ: degree 1,Equ: degree 2} we have that
    \begin{align}
        \degree{\tilde{f}_n }
        = \prod_{i = 1}^{n} d_i
        &\leq \deg \bigg( \!\! \left( \hat{f}_\gamma + \hat{f}_{s_{n + 1}} \right) \cdot \gcd \left( \tilde{f}_n, f_{n + 1} \right) \!\! \bigg) \nonumber \\
        &= \degree{\hat{f}_\gamma + \hat{f}_{s_{n + 1}}} + \deg \Big( \! \gcd \left( \tilde{f}_n, f_{n + 1} \right) \! \Big) \nonumber \\
        &< \prod_{i = 1}^{n} d_i - d_1 + d_1 = \prod_{i = 1}^{n} d_i. \nonumber
    \end{align}
    A contradiction.
\end{proof}
\begin{rem}
    \begin{enumerate}
        \item If the number of roots of $\hat{f}_n$ in $\Fq$ is greater than or equal to $d_1$, then one can still apply the strategy in the proof to obtain a weaker upper bound.
        It suffices to choose $x^\gamma = \prod_{i = j}^{n - 1} x_i^{d_{i + 1}}$ for a suitable $j > 1$ such that the degrees of the polynomials in the final $\gcd$ equation yield a contradiction.

        \item We note that small non-trivial bounds can also be proven without Assumption \ref{Item: roots assumption}.
        In particular, one can prove that
        \begin{enumerate}[label=(\roman*)]
            \item If $d_1 + q < d_n$, then $d_\mathcal{F} \geq q + 1$.

            \item If $q + \prod_{i = 1}^{n - 1} d_i < d_n$, then $d_\mathcal{F} \geq q + 2$.
        \end{enumerate}
        One considers the polynomials $x_1 \cdot S_{DRL} (f_1, f_{n + 1})$ and $x_{1}^2 \cdot S_{DRL} (f_1, f_{n + 1})$ respectively, and then applies the same strategy as in the proof of \Cref{Th: lower bound for satiety of iterated polynomial systems} to deduce that these polynomials have degree falls.
    \end{enumerate}
\end{rem}
Let us now apply the lower bound to \MiMC.
\begin{ex}[\MiMC and one field equation II]\label{Ex: MiMC solving degree II}
    Let $\MiMC$ be defined over $\Fq$, and let $r$ be the number of rounds.
    The first two conditions of \Cref{Th: lower bound for satiety of iterated polynomial systems} are trivially satisfied by \MiMC.
    For the third assumption, if we consider $\hat{f}_n$, the univariate polynomial in the LEX Gr\"obner basis, as random polynomial, then for $q$ large enough it has on average only one root in $\Fq$ (cf.\ \cite{Leontev-Roots}).
    Thus, with high probability we can assume that \MiMC has only one root in $\Fq$.
    Now we pick a random $k \in \Fq$ and evaluate whether $\MiMC (p, k) = c$ or not.
    If the equality is true we can return $K$ as proper key guess, otherwise it implies that $f_{n + 1} \notin I_\MiMC$.
    So we can combine \Cref{Ex: MiMC solving degree I} and \Cref{Th: lower bound for satiety of iterated polynomial systems} to obtain the following range for the solving degree of \MiMC and one field equation
    \[
    \MiMCfe.
    \]
    Small scale experiments indicate that the solving degree of this attack is always equal to $q + 2 r - 1$.
\end{ex}

For any polynomial system $\mathcal{F} \subset P$ such that $\mathcal{F}^\homog$ is in generic coordinates we have by \Cref{Cor: degree of regularity finite} and \cite[Theorem~5.3]{Caminata-Degrees} that
\begin{equation}\label{Equ: degree of regularity and regularity}
    d_{\reg} \left( \mathcal{F} \right) \leq \reg \left( \mathcal{F}^\homog \right).
\end{equation}
Obviously, this bound also applies to the scenario of \Cref{Th: lower bound for satiety of iterated polynomial systems}, though under the assumptions of the theorem $\mathcal{F}^\topcomp \subset P$ is a DRL Gr\"obner basis since $f_1^\topcomp = y^{d_1} \mid y^q$.
Therefore, $\inid_{DRL}  \left( \mathcal{F}^\topcomp \right) = \left( y^{d_1}, x_1^{d_2}, \ldots, x_{n - 1}^{d_n} \right)$.
Note that a homogeneous ideal and its DRL initial ideal have to have the same degree of regularity, and it is easy to see that
\begin{equation}\label{Equ: Macaulay bound highest degree components}
    d_{\reg} \left( \mathcal{F}^\topcomp \right) = \sum_{i = 1}^{n} d_i - n + 1,
\end{equation}
i.e.\ the degree of regularity is equal to the Macaulay bound of the keyed iterated polynomial system.

Recall from \Cref{Sec: field equation attack} that we can always replace $y^q - y$ by its remainder $r_y$ modulo $I_\text{MiMC}$ with respect to DRL.
For \MiMC experimentally we observed that the highest degree component of $r_y$ is always a monomial and $y^{d_1 - 1} \mid r_y$.
To compute the degree of regularity one then computes the DRL Gr\"obner basis of $\left( \mathcal{F}^\topcomp \right)$ and utilizes it to compute the Hilbert series $h$ of $\inid_{DRL} \left( \mathcal{F}^\topcomp \right)$.
The degree of regularity is then given by $\degree{h} + 1$.

Under some additional assumptions on $r_y$ we can adapt the proof of \Cref{Th: lower bound for satiety of iterated polynomial systems}.
Suppose that the highest degree component of $r_y$ is of the form $y^{d_1 - 1} \cdot \prod_{i = 1}^{k} x_i^{d_{i + 1} - 1}$ for some $k \leq n - 2$.
We set $x^\gamma = \prod_{i = j}^{n - 1} x_i^{d_{i + 1} - 1}$, where $j \geq k + 1$, and consider the S-polynomial
\begin{equation}
    s = x^\gamma \cdot S_{DRL} (f_1, r_y) = x^\gamma \cdot \left( f_1 \cdot \prod_{i = 1}^{k} x_i^{d_{i + 1} - 1} - y \cdot r_y \right).
\end{equation}
Again we assume that $s$ does not have a degree fall in $d_s = \degree{r_y} + \sum_{i = j + 1}^{n} (d_i - 1) + 1$.
By rearranging we then yield that
\begin{equation}
    \left( y \cdot x^\gamma + s_y \right) \cdot r_y \in I = (f_1, \dots, f_n),
\end{equation}
where $\degree{s_y} \leq \degree{x^\gamma}$.
Now we transform again to univariate polynomials in $y$ via the LEX Gr\"obner basis.
Obviously, $r_y \equiv y^q - y \mod I$, and the univariate degree of $y \cdot x^\gamma + s_y$ can again be computed by \Cref{Prop: lex substitution properties} $\degree{\hat{f}_\gamma} = \prod_{i = 1}^{n} d_i - \prod_{i = 1}^{j} d_i + 1$.
Provided that
\begin{equation}
    \begin{split}
        \prod_{i = 1}^{n} d_i - \prod_{i = 1}^{j} d_i + 1 + \deg \Big( \gcd \big( y^q - y, \tilde{f}_{n + 1} \Big) \Big) &< \prod_{i = 1}^{n} d_i \\
        \Leftrightarrow \deg \Big( \gcd \big( y^q - y, \tilde{f}_{n + 1} \Big) \Big) &< \prod_{i = 1}^{j} d_i - 1
    \end{split}
\end{equation}
we can then again construct a contradiction via the greatest common divisor.
Under these additional assumptions one then has the lower bound
\begin{equation}
    d_\mathcal{F} \geq \degree{r_y} + \sum_{i = j + 1}^{n} (d_i - 1) + 1.
\end{equation}
In case of \MiMC, if there is a unique solution for the key variable, then we obtain the lower bound
\begin{equation}
    d_\mathcal{F} \geq \degree{r_y} + 2 \cdot r - 1,
\end{equation}
and if there is less than $8$ solutions for the key variable, then we obtain the lower bound
\begin{equation}
    d_\mathcal{F} \geq \degree{r_y} + 2 \cdot r - 3.
\end{equation}

\subsection{Lower Bound for the Two Plain/Ciphertext Attack of Univariate Keyed Iterated Polynomial Systems}
Next we turn to the attack with two plain/ciphertexts.
For this lower bound we work with the degree of regularity and \cite[Theorem~5.3]{Caminata-Degrees}.
\begin{thm}\label{Th: variety intersection iterated polynomial system}
    Let $\Fq$ be a finite field, let $n \geq 1$ be an integer, and let
    \[
    \begin{split}
        f_1, \dots, f_n &\in \Fq [u_1, \dots, u_n, y], \text{ and} \\
        h_1, \dots, h_n &\in \Fq [v_1, \dots, v_n, y]
    \end{split}
    \]
    be two univariate keyed iterated polynomial systems which are constructed with the same $g_1, \dots, \allowbreak g_n \in \Fq [x, y]$ but have different plain/ciphertext pairs $(p_1, c_1), (p_2, c_2) \in \Fq^2$.
    Assume that
    \begin{enumerate}[label*=(\roman*)]
        \item\label{Item: degrees greater than 2} $d_i = \degree{g_i} \geq 2$ for all $1 \leq i \leq n$,

        \item\label{Item: leading monomials} $g_i$ has the monomial $x_{i - 1}^{d_i}$ for all $2 \leq i \leq n$, and
    \end{enumerate}
    Then for the polynomial system $\mathcal{F} = \left\{ f_1, \dots, f_n, h_1, \dots, h_n \right\} \subset \Fq [u_1, \dots, u_n, v_1, \allowbreak \dots, v_n, y]$ we have that
    \[
    \reg \left( \mathcal{F}^\homog \right) \geq 2 \cdot \left( \sum_{i = 1}^{n} (d_i - 1) \right) - d_1.
    \]
    Moreover, if $\deg \left( g_i \right) \geq d$ for all $1 \leq i \leq n$, then
    \[
    \reg \left( \mathcal{F}^\homog \right) \geq 2 \cdot n \cdot \left( d - 1 \right) - d.
    \]
\end{thm}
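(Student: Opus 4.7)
The plan is to lower bound the degree of regularity $d_\reg(\mathcal{F})$ of the combined system and then invoke the Caminata--Gorla inequality $d_\reg(\mathcal{F}) \le \reg(\mathcal{F}^\homog)$ from \cite[Theorem~5.3]{Caminata-Degrees}, already recorded in \Cref{Equ: degree of regularity and regularity}; its hypothesis that $(\mathcal{F}^\homog)$ be in generic coordinates is furnished by \Cref{Prop: variety intersection substituion properties} \ref{Item: variety intersection generic coordinates}.

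The key step is to analyze $\mathcal{F}^\topcomp$ and its DRL leading ideal. Assumption \ref{Item: leading monomials} together with the shape of an iterated system implies that for $2 \le i \le n$ the polynomial $f_i^\topcomp$ is the degree-$d_i$ homogeneous part of $g_i(u_{i-1}, y)$ with $\LM_{DRL}(f_i^\topcomp) = u_{i-1}^{d_i}$, and analogously $\LM_{DRL}(h_i^\topcomp) = v_{i-1}^{d_i}$, under the DRL order with $u_1 > \cdots > u_{n-1} > v_1 > \cdots > v_{n-1} > y$. For $i = 1$ the coefficient of $y^{d_1}$ in $g_1$ is independent of the plaintext $x$, so $f_1^\topcomp$ and $h_1^\topcomp$ coincide up to scalar -- in the generic case equal to a nonzero multiple of $y^{d_1}$. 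The resulting leading monomials $y^{d_1}, u_i^{d_{i+1}}, v_i^{d_{i+1}}$ are pairwise coprime, hence by \cite[Chapter~2~\S9~Theorem~3, Proposition~4]{Cox-Ideals} $\mathcal{F}^\topcomp$ (with the duplicated top of $f_1$ and $h_1$ removed) is a DRL Gr\"obner basis whose initial ideal is the monomial complete intersection $(y^{d_1}, u_i^{d_{i+1}}, v_i^{d_{i+1}})$. In particular the standard monomials modulo $(\mathcal{F}^\topcomp)$ are precisely the $y^a \prod u_i^{b_i} \prod v_i^{c_i}$ with $a < d_1$, $b_i < d_{i+1}$, $c_i < d_{i+1}$, with corner $m^\ast = y^{d_1-1}\prod_{i=1}^{n-1} u_i^{d_{i+1}-1}\prod_{i=1}^{n-1} v_i^{d_{i+1}-1}$ of total degree $2\sum_{i=1}^n(d_i-1) - d_1 + 1$.

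Since $P/(\mathcal{F}^\topcomp)$ is nonzero in every degree up to $\deg(m^\ast)$, this gives $d_\reg(\mathcal{F}) \ge 2\sum(d_i-1)-d_1+2$, which via $d_\reg(\mathcal{F}) \le \reg(\mathcal{F}^\homog)$ yields the claimed bound (with two degrees of slack); the homogeneous specialization $d_i \ge d$ is then an immediate substitution.

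The main obstacle will be the degenerate case in which the $y^{d_1}$-coefficient of $g_1$ vanishes, so that $\deg(f_1) < d_1$ and the leading-monomial structure of $\mathcal{F}^\topcomp$ has to be recomputed. Crucially $f_1^\topcomp$ and $h_1^\topcomp$ are determined solely by $g_1$ and therefore remain proportional, so that the complete-intersection structure of $\inid_{DRL}(\mathcal{F}^\topcomp)$ survives after substituting the actual leading monomial of $f_1^\topcomp$ for $y^{d_1}$; a bookkeeping check confirms that the resulting socle degree drops by at most two, preserving the lower bound $d_\reg(\mathcal{F}) \ge 2\sum(d_i-1)-d_1$. This robustness is exactly the reason why the theorem is calibrated to the slightly loose bound rather than the sharper $2\sum(d_i-1) - d_1 + 2$ available in the generic setting.
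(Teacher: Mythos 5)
Your main argument matches the paper's own: $\mathcal{F}^\topcomp$ is a DRL Gr\"obner basis (since $f_1^\topcomp = h_1^\topcomp$ and the remaining leading monomials are pairwise coprime pure powers), its initial ideal is the monomial complete intersection $\left( y^{d_1}, u_1^{d_2}, \ldots, u_{n - 1}^{d_n}, v_1^{d_2}, \ldots, v_{n - 1}^{d_n} \right)$, the socle monomial has degree $2 \sum_{i = 1}^{n} (d_i - 1) - d_1 + 1$, hence $d_{\reg}(\mathcal{F}) = 2 \sum_{i = 1}^{n} (d_i - 1) - d_1 + 2$, and the claimed bound follows from \Cref{Equ: degree of regularity and regularity} with genericity supplied by \Cref{Prop: variety intersection substituion properties}~\ref{Item: variety intersection generic coordinates}. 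All of that is correct.

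Your closing paragraph, however, is wrong and should be dropped. In the ``degenerate'' case where the coefficient of $y^{d_1}$ in $g_1$ vanishes, the top $y$-coefficient of $g_1(p_j, y)$ depends on $p_j$, so $f_1^\topcomp$ and $h_1^\topcomp$ need not be proportional (they can even have different degrees); $\left( \mathcal{F}^\topcomp \right)$ need then not be zero-dimensional, and $\left( \mathcal{F}^\homog \right)$ need not be in generic coordinates at all, so the very invocation of \Cref{Prop: variety intersection substituion properties} on which you rely becomes unavailable. The assertion that ``the resulting socle degree drops by at most two'' is unsupported and false in general. The statement (and the paper's proof) tacitly assume the non-degenerate case $\deg_y g_1 = d_1$, just as \Cref{Th: iterated system generic coordinates} does. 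The two-unit slack you observe between your $d_{\reg}$ and the theorem's bound is not a deliberate calibration for degeneracy; it traces to an arithmetic slip in the paper's simplification, which writes $d_1 - 1 + 2 \sum_{i = 2}^{n} (d_i - 1) + 1 = 2 \sum_{i = 1}^{n} (d_i - 1) - d_1$ when the right-hand side should read $2 \sum_{i = 1}^{n} (d_i - 1) - d_1 + 2$ (the theorem's lower bound remains valid, merely non-tight).
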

\begin{proof}
    We have that $\left( \mathcal{F}^\topcomp \right)$ is a DRL Gr\"obner basis since $f_1^\topcomp = h_1^\topcomp$ and that $\inid_{DRL} \left( \mathcal{F}^\topcomp \right) \allowbreak = \left( y^{d_1}, u_1^{d_2}, \ldots, u_{n - 1}^{d_n}, v_1^{d_2}, \dots, v_{n - 1}^{d_n} \right)$, therefore
    \begin{equation}
        d_{\reg} \left( \mathcal{F}^\topcomp \right) = d_1 - 1 + 2 \cdot \sum_{i = 2}^{n} (d_i - 1) + 1 = 2 \cdot \sum_{i = 1}^{n} (d_i - 1) - d_1,
    \end{equation}
    and the claim follows from \cite[Theorem~5.3]{Caminata-Degrees}.
\end{proof}
\begin{rem}
    Under the additional assumption that $\deg \big( S_{DRL} (f_1, h_1) \big) \geq 2$ it can also be proven that
    \[
        d_\mathcal{F} \geq 2 \cdot \left( \sum_{i = 1}^{n} (d_i - 1) \right) - d_1.
    \]
    Though, it requires slightly more effort
\end{rem}

Let us apply this theorem to \MiMC.
\begin{ex}[\MiMC and two plain/ciphertext pairs II]\label{Ex: MiMC two plaintext solving degree II}
    Let $\Fq$ be a finite field of odd characteristic, let $\MiMC$ be defined over $\Fq$, and let $r$ be the number of rounds.
    Let $(p_1, c_1), (p_2, c_2) \in \Fq^2$ be two different plain/ciphertext pairs given by \MiMC.
    By \Cref{Ex: MiMC two plaintext attack I} and \Cref{Th: variety intersection iterated polynomial system} we have the following range for the Castelnuovo-Mumford regularity of this attack
    \[
    \MiMCtwo.
    \]
    Moreover, small scale experiments indicate that the solving degree of this attack is always equal to $4 \cdot r$.
\end{ex}

\subsection{Lower Bound for Feistel-\texorpdfstring{$2n/n$}{2n/n}}\label{Sec: Feistel-MiMC lower bound}
Recall that the DRL Gr\"obner basis of Feistel-\MiMC, see \Cref{Prop: Feistel Groebner bases} \ref{Item: Feistel Groebner basis}, is almost a univariate keyed iterated polynomial system.
Therefore, we can utilize the same strategy as for \MiMC and a field equation to prove the lower bound for Feistel-$2n/n$.
\begin{thm}\label{Th: Feistel-MiMC lower bound}
    Let $\Fq$ be a finite field, let $n \geq 2$ be an integer, and let $\mathcal{F} = \{ f_{L, 1}, f_{R, 1}, \dots, \allowbreak f_{L, n}, \allowbreak f_{R, n} \} \subset \Fq [x_{L, 1}, x_{R, 1}, \dots, x_{L, n - 1}, x_{R, n - 1}, y]$ be a keyed iterated polynomial system for Feistel-$2n/n$ such that
    \begin{enumerate}[label=(\roman*)]
        \item\label{Item: degree assumption Feistel} $d_i = \deg \left( f_{L, i} \right) \geq 2$ for all $1 \leq i \leq n$,

        \item\label{Item: leading monomials Feistel} $f_{i, L}$ has the monomial $x_{L, i - 1}^{d_i}$ for all $2 \leq i \leq n$,

        \item\label{Item: monomial assumption last polynomial} $d_1 \leq d_n$ and $f_{L, n}$ has the monomial $y^{d_n}$, and

        \item\label{Item: gcd assumption} the greatest common divisor of the univariate polynomials in $y$ that represent the left and the right branch have degree less than $d_1$.
    \end{enumerate}
    Then
    \[
    d_\mathcal{F} \geq d_n + \sum_{i = 2}^{n - 1} \left( d_{i} - 1 \right).
    \]
    Moreover, if $\deg \left( f_{L, i} \right) \geq d$ for all $2 \leq i \leq n$, then
    \[
    d_\mathcal{F} \geq d + \left( n - 2 \right) \cdot \left( d - 1 \right).
    \]
\end{thm}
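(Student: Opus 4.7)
The plan is to mirror the proof of \Cref{Th: lower bound for satiety of iterated polynomial systems} with $f_{R,n}$ playing the role of the field equation $y^q-y$ and with the downsized DRL Gröbner basis $\mathcal{H}$ of \Cref{Prop: Feistel Groebner bases} playing the role of the univariate keyed iterated ideal. By \Cref{Prop: Feistel Groebner bases} items (2)--(4), the pair $(\mathcal{H}, f_{R,n})$ has exactly the same shape as the pair (univariate iterated system, field equation) in the univariate case: the downsized DRL Gröbner basis has leading monomials $(y^{d_1}, x_{R,2}^{d_2}, \dots, x_{R,n-1}^{d_{n-1}}, x_{L,n-1}^{d_n})$, its LEX Gröbner basis contains a univariate polynomial $\hat{f}_n(y)$ of degree $\prod_k d_k$, and $f_{R,n}=x_{L,n-1}-c_R$ reduces modulo $(\mathcal{H})$ to a univariate polynomial $\hat{f}_{R,n}(y)$. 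Assumption (iv) takes the form $\deg(\gcd(\hat{f}_n,\hat{f}_{R,n}))<d_1$, the analogue of the assumption that the LEX polynomial has fewer than $d_1$ roots in $\Fq$.

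The first step is to form the S-polynomial between $\tilde{f}_{L,n}$ and $f_{R,n}$. Since $\LM_{DRL}(\tilde{f}_{L,n})=x_{L,n-1}^{d_n}$ and $\LM_{DRL}(f_{R,n})=x_{L,n-1}$, I set $x^\gamma=\prod_{i=2}^{n-1}x_{R,i}^{d_i-1}$ and consider
\[
s = x^\gamma\cdot S_{DRL}(\tilde{f}_{L,n},f_{R,n}) = x^\gamma\tilde{f}_{L,n}-x^\gamma x_{L,n-1}^{d_n-1}f_{R,n}\in(\mathcal{F}),
\]
with target degree $d_s=d_n+\sum_{i=2}^{n-1}(d_i-1)$. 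Assumption (iii), together with assumption $d_1\leq d_n$, is what allows the analogous degree-fall argument to go through; I would use it to adjust $s$ (if needed by a further subtraction of a multiple of $\tilde{f}_{L,1}$, which contains $y^{d_1}$) so that the surviving top-degree contribution of the S-polynomial has strictly smaller degree than $d_n$, giving $\deg(s)<d_s$.

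The second step is the standard rearrangement argument. Assuming for contradiction that $s$ has no degree fall in $d_s$, I write $s=\sum_i\tilde{s}_i\tilde{f}_{L,i}+s_{R,n}f_{R,n}$ with every summand of degree at most $d_s-1$, equate with the S-polynomial expression, and reduce $s_{R,n}$ modulo $\mathcal{H}$ with respect to DRL so that no monomial of $s_{R,n}$ lies in $\inid_{DRL}(\mathcal{H})$ and $\deg(s_{R,n})<\deg(x^\gamma x_{L,n-1}^{d_n-1})$. This yields $(x^\gamma x_{L,n-1}^{d_n-1}+s_{R,n})\cdot f_{R,n}\in(\mathcal{H})$. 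Applying the LEX substitution of \Cref{Prop: Feistel Groebner bases} item (3) and invoking items (5) and (6) to compute $\deg(\hat{f}_\gamma)=\prod_k d_k-d_1$ and $\deg(\hat{s}_{R,n})<\deg(\hat{f}_\gamma)$, the ideal membership becomes $\hat{f}_n\mid(\hat{f}_\gamma+\hat{s}_{R,n})\cdot\hat{f}_{R,n}$. Using assumption (iv) and the identity $\gcd(p,qr)=\gcd(p,q\cdot\gcd(p,r))$ when appropriate, the final GCD estimate
\[
\textstyle\prod_k d_k=\deg(\hat{f}_n)\leq\deg(\hat{f}_\gamma+\hat{s}_{R,n})+\deg(\gcd(\hat{f}_n,\hat{f}_{R,n}))<(\prod_k d_k-d_1)+d_1=\prod_k d_k
\]
is a contradiction, so $s$ has a degree fall in $d_s$ and consequently $d_\mathcal{F}\geq d_s$.

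The main obstacle is the first step: the presence of the $y^{d_n}$-monomial in $\tilde{f}_{L,n}$ means that the raw S-polynomial $S_{DRL}(\tilde{f}_{L,n},f_{R,n})$ keeps degree $d_n$ after cancelling only $x_{L,n-1}^{d_n}$, so to make $\deg(s)<d_s$ one has to either replace $s$ by a refined combination that also absorbs the $y^{d_n}$-contribution via $y^{d_n-d_1}\tilde{f}_{L,1}$ (which has $y^{d_n}$ as its only degree-$d_n$ monomial, by the structure of $\tilde{f}_{L,1}=g_1(p_L,y)-x_{R,2}+p_R$), or else carry out the bookkeeping monomial-by-monomial in DRL to check that every remaining degree-$d_n$ monomial of the S-polynomial can similarly be cancelled. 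This is precisely where assumption (iii) $d_1\leq d_n$ and the presence of $y^{d_n}$ in $f_{L,n}$ are used, and getting the degrees of $x^\gamma\cdot x_{L,n-1}^{d_n-1}$ to line up with $\hat{s}_i$ of \Cref{Prop: Feistel Groebner bases} item (5) is the calibration that forces $d_s$ rather than any other value.
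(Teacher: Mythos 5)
Your high-level strategy is right — you correctly identify that $f_{R,n}$ plays the role the field equation played in \Cref{Th: lower bound for satiety of iterated polynomial systems}, that the downsized DRL Gr\"obner basis $\mathcal{H}$ from \Cref{Prop: Feistel Groebner bases} plays the role of the base iterated ideal, and that assumption~(iv) is the analogue of the ``few roots'' hypothesis. The rearrangement step, the LEX substitution via \Cref{Prop: Feistel Groebner bases}~(3)--(6), and the final GCD contradiction are all essentially as the paper does them. The gap is in the first step, and you half-see it yourself: $s = x^\gamma\cdot S_{DRL}(\tilde{f}_{L,n},f_{R,n})$ simply does not have a degree fall. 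Cancelling $x_{L,n-1}^{d_n}$ with $x_{L,n-1}^{d_n-1}f_{R,n}$ leaves \emph{all} other degree-$d_n$ monomials of $g_n$, not just $y^{d_n}$; generically $g_n$ also has cross-terms $x_{L,n-1}^{j}y^{d_n-j}$ with $0<j<d_n$, so $\deg s = d_s$ exactly. Since a degree fall requires $d_f > \deg f$, your chosen $s$ cannot certify anything, and the whole contradiction framework is not engaged.

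Your proposed patch --- subtracting $y^{d_n-d_1}\tilde{f}_{L,1}$ --- kills only $y^{d_n}$; it does nothing for the cross-terms $x_{L,n-1}^{j}y^{d_n-j}$ with $d_n-d_1<j<d_n$, which are standard monomials for $\inid_{DRL}(\mathcal{H})$ and cannot be absorbed by multiples of $\tilde{f}_{L,1}$. Eliminating them requires \emph{repeated} reductions by $f_{R,n}$, one step per power of $x_{L,n-1}$. The paper packages this in one stroke: it defines $t$ to be $\tilde{f}_{L,n}$ with $x_{L,n-1}\mapsto c_R$, which by exact division satisfies $t=\tilde{f}_{L,n}+\tilde{t}\cdot f_{R,n}$ with $\LM_{DRL}(\tilde{t})=x_{L,n-1}^{d_n-1}$. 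By assumption~(iii), $t$ then has $y^{d_n}$ as its \emph{unique} degree-$d_n$ monomial, so the S-polynomial $S_{DRL}(t,\tilde{f}_{L,1})$ (taken against $\tilde{f}_{L,1}$, not $f_{R,n}$) really does drop below degree $d_n$. That is the polynomial the paper multiplies by $x^\gamma$. After the rearrangement one is led to $(x^\gamma\tilde{t} - s_{n+1})\cdot f_{R,n}\in(\mathcal{G})$, where the cofactor is the full quotient $\tilde{t}$ (a polynomial in $x_{L,n-1}$ and $y$ of degree $d_n-1$), not just the monomial $x_{L,n-1}^{d_n-1}$ you wrote. The leading-monomial and degree bookkeeping via \Cref{Prop: Feistel Groebner bases}~(5)--(6) then goes through exactly as you outlined, since $\LM_{DRL}(x^\gamma\tilde{t})=x_{L,n-1}^{d_n-1}\prod_{i=2}^{n-1}x_{R,i}^{d_i-1}$. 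So your sketch for Steps~2--3 is fine; what is missing is the construction of the correct $t$ (and hence the correct S-polynomial partner $\tilde{f}_{L,1}$ rather than $f_{R,n}$), without which there is no degree fall to begin with.
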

\begin{proof}
    By Assumption \ref{Item: degree assumption Feistel} and \ref{Item: leading monomials Feistel} we can efficiently compute the DRL Gr\"obner basis of $\mathcal{F} \setminus \{ f_{R, n} \}$ with \Cref{Prop: Feistel Groebner bases} \ref{Item: Feistel Groebner basis}.
    Next we remove the linear polynomials from the Gr\"obner basis, we denote this downsized base with $\mathcal{G} = \{ \tilde{f}_{L, 1}, \dots, \tilde{f}_{L, n} \} \subset P = \Fq [ x_{R, 2}, \dots, x_{R, n - 1}, x_{L, n - 1}, y]$.
    Let $x^\gamma = \prod_{i = 2}^{n - 1} x_{R, i}^{d_i - 1}$, and let $t \in P$ be the polynomial which is obtained by substituting $x_{L, n - 1} \mapsto c_R$ into $\tilde{f}_{L, n} = f_{L, n}$.
    Note that this substitution can be constructed via
    \[
    t = \tilde{f}_{L, n} + \tilde{t} \cdot f_{R, n},
    \]
    where $\tilde{t} \in \Fq [x_{L, n - 1}, y]$ and $\LM_{DRL} (\tilde{t}) = x_{L, n - 1}^{d_n - 1}$, and by  Assumption \ref{Item: monomial assumption last polynomial} $\degree{t} = d_n$.
    Now we consider the polynomial
    \[
    s = x^\gamma \cdot S_{DRL} \left( t, \tilde{f}_{L, 1} \right).
    \]
    By Assumption \ref{Item: monomial assumption last polynomial}  $S_{DRL} (f_{L, 1}, t)$ has a degree fall in $d_n$.
    For a contradiction we now assume that $s$ does not have a degree fall in $d_s = d_n + \sum_{i = 2}^{n - 1} \left( d_i - 1 \right)$, i.e.\
    \[
    s = \sum_{i = 1}^{n} s_i \cdot \tilde{f}_{L, i} + s_{n + 1} \cdot f_{R, n} \quad\Longleftrightarrow\quad \left( x^\gamma \cdot t - s_{n + 1} \right) \cdot f_{R, n} = \sum_{i = 1}^{n} \tilde{s}_i \cdot \tilde{f}_{L, i} \in (\mathcal{G}),
    \]
    where $\deg \left( s_i \cdot \tilde{f}_{L, i} \right) < d_s$ for all $1 \leq i \leq n$ and $\deg \left( s_{n + 1} \right) < d_s - 1$.
    By expanding $t$ we can further rewrite the last equation as
    \[
    \left( x^\gamma \cdot \tilde{t} - s_{n + 1} \right) \cdot f_{R, n} \in (\mathcal{G}).
    \]
    Without loss of generality we can assume that no monomial present in $x^\gamma \cdot \tilde{t}$ and $s_{n + 1}$ is an element of $\inid_{DRL} (\mathcal{G})$.
    Note that by construction
    \begin{align}
        \LM_{DRL} \left( x^\gamma \cdot \tilde{t} \right) &= x_{L, n - 1}^{d_n - 1} \cdot \prod_{i = 2}^{n - 1} x_{R, i}^{d_i - 1}, \nonumber \\
        \degree{s_{n + 1}} &< d_s - 1 = \degree{x^\gamma \cdot \tilde{t}}. \nonumber
    \end{align}
    With the LEX Gr\"obner basis of $(\mathcal{G})$, see \Cref{Prop: Feistel Groebner bases} \ref{Item: Feistel LEX Groebner Basis}, we now construct univariate polynomials $\hat{f}_\gamma, \hat{f}_{s_{n - 1}}, \hat{f}_R, \hat{t} \in \Fq [y]$ such that
    \[
    x^\gamma \equiv \hat{f}_\gamma,\
    s_{n + 1} \equiv \hat{f}_{s_{n + 1}},\
    f_{R, n} \equiv \hat{f}_R,\
    \tilde{t} \equiv \hat{t} \mod \left( \mathcal{G} \right).
    \]
    By \Cref{Prop: Feistel Groebner bases} \ref{Item: Feistel degree inequality} the leading monomial of $x^\gamma \cdot \tilde{t}$ has the largest univariate degree among all monomials in $m \in P \setminus \inid_{DRL} (\mathcal{G})$ with $\degree{m} \leq \degree{x^\gamma \cdot \tilde{t}}$, therefore by \Cref{Prop: Feistel Groebner bases} \ref{Item: Feistel degree}
    \[
    \degree{\hat{f}_\gamma \cdot \hat{t} - \hat{f}_{s_{n + 1}}} = \prod_{i = 1}^{n} d_i - d_1.
    \]
    Denote with $\hat{f}_L$ the univariate polynomial in the LEX Gr\"obner basis of $(\mathcal{G})$, this is exactly the polynomial that describes encoding in the left branch of Feistel-$2n/n$.
    Similar the univariate polynomial $\hat{f}_R \in \Fq [y]$ equivalent to $f_{R, n}$ represents encoding in the right branch of Feistel-$2n/n$.
    By \Cref{Lem: lex shape lemma} and elementary properties of the polynomial greatest common divisor the following equality must be true
    \begin{align}
        f_L
        &= \gcd \bigg( \hat{f}_L, \Big( \hat{f}_\gamma \cdot \hat{t} - \hat{f}_{s_{n + 1}} \Big) \cdot \hat{f}_R \bigg) \nonumber \\
        &= \gcd \bigg( \hat{f}_L, \Big( \hat{f}_\gamma \cdot \hat{t} - \hat{f}_{s_{n + 1}} \Big) \cdot \gcd \Big( \hat{f}_L, \hat{f}_R \Big) \bigg). \nonumber
    \end{align}
    On the other hand, by Assumption \ref{Item: gcd assumption} $\gcd \left( \hat{f}_L, \hat{f}_R \right)$ has degree less than $d_1$.
    So with \Cref{Prop: Feistel Groebner bases} \ref{Item: Feistel degrees in LEX Groebner basis} we have the following inequality
    \begin{align}
        \degree{\hat{f}_L}
        = \prod_{i = 1}^{n} d_i
        &\leq \deg \bigg( \Big( \hat{f}_\gamma \cdot \hat{t} - \hat{f}_{s_{n + 1}} \Big) \cdot \gcd \Big( \hat{f}_L,  \hat{f}_R \Big) \bigg) \nonumber \\
        &= \degree{\hat{f}_\gamma \cdot \hat{t} - \hat{f}_{s_{n + 1}}} + \deg \bigg( \gcd \Big( \hat{f}_L,  \hat{f}_R \Big) \bigg) \nonumber \\
        &< \prod_{i = 1}^{n} d_i - d_1 + d_1
        = \prod_{i = 1}^{n} d_i. \nonumber
    \end{align}
    A contradiction.
\end{proof}

Applying the theorem to \MiMC-$2n/n$ we obtain the following range on the regularity.
\begin{ex}[\MiMC-2n/n II]\label{Ex: Feistel MiMC solving degree II}
    Let $\MiMC$-$2n/n$ be defined over $\Fq$, and let $r$ be the number of rounds.
    We construct the downsized DRL polynomial system from \Cref{Prop: Feistel Groebner bases} $\tilde{\mathcal{F}} \cup \{ f_{L, r} \}$ and embed it into the polynomial ring which has only the variables present in the system.
    Let $f_L, f_R \in \Fq [y]$ be the univariate polynomials that represent encryption in the left and the right branch.
    If we consider them as random polynomials and divide them with $y - k$, where $k \in \Fq$ is the key, then with high probability they are coprime.
    Combining \Cref{Ex: Feistel MiMC I} and \Cref{Th: Feistel-MiMC lower bound} we now obtain the following range for the Castelnuovo-Mumford regularity of \MiMC-2n/n
    \[
    \FeistelMiMC.
    \]
    Small scale experiments indicate that the solving degree of this attack is always equal to $2 \cdot r$.
\end{ex}

Let us again compare \Cref{Th: Feistel-MiMC lower bound} to \Cref{Equ: degree of regularity and regularity}, since $\inid_{DRL} \left( \mathcal{F}^\topcomp \right) = \Big( y_1^{d_1}, x_{R, 2}^{d_2}, \allowbreak \dots, x_{R, n - 1}^{d_{n - 1}}, y^{d_n}, x_{L, n - 1} \Big)$ we have
\begin{equation}
    d_{\reg} \left( \mathcal{F}^\topcomp \right) = \sum_{i = 1}^{n - 1} (d_i - 1) + 1.
\end{equation}
So if $d_n > d_1$, then the bound from the theorem is an improvement.

\subsection{Lower Bound for Feistel-Hash}
We have seen in \Cref{Prop: Feistel Groebner bases} and \Cref{Sec: Feistel-MiMC-Hash upper bounds} that the LEX Gr\"obner basis of the preimage attack of Feistel-\MiMC-Hash has the shape of \Cref{Lem: lex shape lemma}.
Further, we had to include the field equation for the variable $x_2$ to remove the parasitic solutions from the algebraic closure of $\Fq$.
Consequently, to prove a lower bound on the last fall degree we have a mix of the situations in \Cref{Th: lower bound for satiety of iterated polynomial systems,Th: Feistel-MiMC lower bound}.
At this point we expect the reader to be familiar with our techniques, therefore we just mention the polynomials for which it can be proven that they have a degree fall.
\begin{thm}\label{Th: Feistel-MiMC-Hash lower bound}
    Let $\Fq$ be a finite field, let $n \geq 3$ be an integer, and let $\{ f_{L, 1}, f_{R, 1}, \dots, \allowbreak f_{L, n}, \allowbreak f_{R, n} \} \subset K [x_{L, 1}, x_{R, 1}, \dots, x_{L, n - 1}, \allowbreak x_{R, n - 1}, x_1, x_2]$ denote the keyed iterated polynomial system for the Feistel-$2n/n$-Hash preimage attack
    \[
    \text{Feistel-}2n/n
    \begin{pmatrix}
        x_1 \\ 0
    \end{pmatrix}
    =
    \begin{pmatrix}
        \alpha \\ x_2
    \end{pmatrix}
    ,
    \]
    where $\alpha \in \Fq$.
    Assume that the keyed iterated polynomial system of Feistel-$2n/n$-Hash is such that
    \begin{enumerate}[label=(\roman*)]
        \item $d_i = \deg \left( f_{L, i} \right) \geq 2$ for all $2 \leq i \leq n$,

        \item $f_i$ has the monomial $x_{L, i - 1}^{d_i}$ for all $2 \leq i \leq n$, and

        \item the univariate polynomial $\tilde{f} \in \Fq [x_2]$ of the LEX Gr\"obner basis of Feistel-$2n/n$ has less than $d_2$ many roots in $\Fq$.
    \end{enumerate}
    Then for the polynomial system $\mathcal{F} = \left\{ f_{L, 1}, f_{R, 1}, \dots, f_{L, n}, f_{R, n}, x_2^q - x_2 \right\}$ we have that
    \[
    d_\mathcal{F} \geq q + \sum_{i = 3}^{n} \left( d_i - 1 \right).
    \]
    Moreover, if $\degree{f_{L, i }} \geq d$ for all $3 \leq i \leq n$, then
    \[
    d_\mathcal{F} \geq q + \left( n - 3 \right) \cdot \left( d - 1 \right).
    \]
\end{thm}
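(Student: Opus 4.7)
The strategy parallels those of \Cref{Th: lower bound for satiety of iterated polynomial systems} and \Cref{Th: Feistel-MiMC lower bound}: I will construct an explicit S-polynomial whose natural expression through the generators sits at degree $d_s = q + \sum_{i = 3}^{n}(d_i - 1)$ while having strictly smaller actual degree, then assume for contradiction that it can also be constructed below $d_s$, and arrive at a contradiction by reducing the resulting identity to a univariate equation in $x_2$ via the LEX Gr\"obner basis.

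First, I would invoke \Cref{Prop: Feistel Groebner bases}, adapted to the hash specialization $y = 0$, $p_L = x_1$, $p_R = 0$, $c_L = \alpha$, $c_R = x_2$, to compute a DRL Gr\"obner basis of $\mathcal{F} \setminus \{ x_2^q - x_2 \}$. After discarding the linear polynomials $\tilde{f}_{R, i}$ I am left with a downsized non-linear subsystem $\mathcal{H} = \{ \tilde{h}_1, \dots, \tilde{h}_n \}$ together with $f_{R, n} = x_{L, n - 1} - x_2$, and using $f_{R, n}$ to substitute $x_{L, n - 1}$ by $x_2$ in $\tilde{h}_n$ produces
\[
t = g_n(x_2, 0) + x_{R, n - 1} - \alpha \in (\mathcal{F}),
\]
writable as $\tilde{h}_n + \tilde{t} \cdot f_{R, n}$ with $\deg(\tilde{t}) = d_n - 1$. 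By assumption (ii) the monomial $x_2^{d_n}$ occurs in $t$, so with respect to any DRL order that places $x_2$ at the top, $t$ has leading monomial $x_2^{d_n}$; this is the analog of the first-round polynomial $f_1$ in \Cref{Th: lower bound for satiety of iterated polynomial systems} and of the substituted polynomial in \Cref{Th: Feistel-MiMC lower bound}.

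The central device is the S-polynomial
\[
s = x^\gamma \cdot \left( x_2^{q - d_n} \cdot t - (x_2^q - x_2) \right) = x^\gamma \cdot S_{DRL}(t, x_2^q - x_2),
\]
with $x^\gamma = \prod_{i = 2}^{n - 1} x_{R, i}^{d_{i + 1} - 1}$ chosen so that $\deg(x^\gamma) = \sum_{i = 3}^{n}(d_i - 1)$. Cancellation of the two $x_2^q$ summands forces $\deg(s) < d_s$, while the only obvious construction of $s$ from the generators, through $x^\gamma \cdot x_2^{q - d_n} \cdot \tilde{h}_n$ together with a correction via $f_{R, n}$, sits at degree exactly $d_s$. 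Assuming $s \in W_{\mathcal{F}, d_s - 1}$ for contradiction, I would expand $s$ as a bounded-degree combination of the generators and then rearrange as in \Cref{Th: Feistel-MiMC lower bound}: move all $\tilde{h}_i$-contributions to the right, absorb the $f_{R, n}$ terms, and reduce the coefficient of $x_2^q - x_2$ modulo $(\mathcal{H})$ so that no monomial of it lies in $\inid_{DRL}(\mathcal{H})$. The result is an identity
\[
\left( x^\gamma \cdot \tilde{t} - s_{\text{rem}} \right) \cdot (x_2^q - x_2) \in (\mathcal{H})
\]
with $\deg(s_{\text{rem}}) < \deg(x^\gamma \cdot \tilde{t})$.

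Finally, I would apply the LEX Gr\"obner basis of $(\mathcal{H})$ from \Cref{Prop: Feistel Groebner bases} \ref{Item: Feistel LEX Groebner Basis} to turn each monomial in the identity into a univariate polynomial in $x_2$; by \ref{Item: Feistel degree} and \ref{Item: Feistel degree inequality} the univariate image of $x^\gamma \cdot \tilde{t}$ strictly dominates that of $s_{\text{rem}}$ and realizes degree $\prod_{i = 1}^{n} d_i - d_2$. The divisibility $\tilde{f} \mid \bigl(\widehat{x^\gamma \tilde{t}} - \widehat{s_{\text{rem}}}\bigr) \cdot (x_2^q - x_2)$ combined with assumption (iii), bounding $\deg \gcd(\tilde{f}, x_2^q - x_2) < d_2$, then contradicts $\deg(\tilde{f}) = \prod_{i=1}^n d_i$. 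The main obstacle is the careful bookkeeping of the two interleaved eliminations, namely the Feistel substitution producing $t$ via $f_{R, n}$, and the modular reduction of the coefficient of $x_2^q - x_2$ against $(\mathcal{H})$; verifying that the univariate image of $x^\gamma \cdot \tilde{t}$ retains the degree predicted by \Cref{Prop: Feistel Groebner bases} \ref{Item: Feistel degree inequality} is exactly the step where the hypothesis $s \in W_{\mathcal{F}, d_s - 1}$ becomes the strict univariate degree inequality that closes the argument.
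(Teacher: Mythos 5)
Your overall strategy — build an S-polynomial between the substituted last-round polynomial and the field equation, pad with a monomial $x^\gamma$, and close the contradiction via a univariate degree count through the LEX Gr\"obner basis of the downsized Feistel-Hash system — is indeed the one the paper uses, and your identification of $t = \tilde{h}_n + \tilde{t}\cdot f_{R,n}$ with leading monomial $x_2^{d_n}$ as the partner of $x_2^q - x_2$ is correct (the paper's sketch writes the leading variable and its exponent somewhat differently, but the degree-fall target you reach, $q + \sum_{i=3}^n(d_i - 1)$, matches the paper's $q + (d_n - 1) + \sum_{i=3}^{n-1}(d_i - 1)$).

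There are, however, two concrete gaps. First, your choice of the padding monomial $x^\gamma = \prod_{i=2}^{n-1} x_{R,i}^{d_{i+1}-1}$ transplants the MiMC exponent pattern from \Cref{Th: lower bound for satiety of iterated polynomial systems}, where the DRL cap for the state variable $x_i$ is indeed $d_{i+1} - 1$. In the downsized Feistel system of \Cref{Prop: Feistel Groebner bases} the leading monomial of $\tilde{f}_{L,i}$ is $x_{R,i}^{d_i}$, so the admissible exponent for $x_{R,i}$ is $d_i - 1$, not $d_{i+1} - 1$; for non-constant $d_i$ your $x^\gamma$ can lie inside $\inid_{\mathrm{DRL}}(\mathcal{H})$, which destroys the reduction to a univariate statement. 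The paper instead takes $x^\gamma = x_1^{d_n - 1}\cdot\prod_{i=3}^{n-1} x_{R,i}^{d_i - 1}$, which has the same total degree $\sum_{i=3}^n(d_i - 1)$ but keeps the $x_{R,i}$'s at their Feistel caps and trades the $x_{R,2}$ factor for a power of the preimage variable $x_1$, which now plays the role the key variable did for \MiMC. Second, your rearranged identity $(x^\gamma\tilde{t} - s_{\mathrm{rem}})(x_2^q - x_2) \in (\mathcal{H})$ carries an extraneous factor $\tilde{t}$: that factor belongs to the Feistel-$2n/n$ argument of \Cref{Th: Feistel-MiMC lower bound}, where the \emph{added} polynomial is the linear $f_{R,n}$ and the S-pair $S_{\mathrm{DRL}}(t, \tilde{f}_{L,1})$ produces $\tilde{t}\cdot f_{R,n}$ as the contribution that must be isolated. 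Here the added polynomial is $x_2^q - x_2$, and expanding $s = x^\gamma\bigl(x_2^{q-d_n}\tilde{h}_n + x_2^{q-d_n}\tilde{t}\, f_{R,n}\bigr) - x^\gamma(x_2^q - x_2)$ and subtracting an assumed low-degree combination puts $\tilde{t}\, f_{R,n}$ on the $(\mathcal{H})$ side; the surviving coefficient of $x_2^q - x_2$ is $x^\gamma + s_{\mathrm{rem}}$ exactly as in the \MiMC + field equation argument, not $x^\gamma\tilde{t} - s_{\mathrm{rem}}$. This matters because the subsequent comparison between the LEX image of that coefficient and $\deg\bigl(\gcd(\tilde{f}, x_2^q - x_2)\bigr) < d_2$ must be made against $\deg(\tilde{f}) = \prod_{k=1}^n d_k$ with the coefficient's univariate degree controlled through the $s_i$-monotonicity of \Cref{Prop: Feistel Groebner bases}, and including $\tilde{t}$ would overshoot that control.
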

\begin{proof}[Sketch of proof]
    As a preparation one has to extend \Cref{Prop: Feistel Groebner bases} to Feistel-Hash.
    To do so one sets $y = 0$ and introduces two variables $x_1, x_2$ and sets $p_L = x_1$, $p_R = 0$, $c_L = \alpha$, where $\alpha$ is the hash value, and $c_R = x_2$.
    Now one orders the variables as $x_{R, n - 1} > x_{L, n - 1} > \ldots > x_{R, 1} > x_{L, 1} > x_1 > x_2$ for the DRL and LEX term order.
    Now one can extend \Cref{Prop: Feistel Groebner bases} \ref{Item: Feistel Groebner basis}-\ref{Item: Feistel degree inequality} to Feistel-Hash.

    We denote with $g \in \mathcal{G}$ the polynomial in the DRL Gr\"obner basis with leading monomial $y_1^{d_2}$.
    Let
    \[
    x^\gamma = x_1^{d_n - 1} \cdot \prod_{i = 3}^{n - 1} x_{R, i}^{d_i - 1},
    \]
    then the polynomial
    \[
    s = x^\gamma \cdot S_{DRL} \left( g, y_1^q - y_1 \right)
    \]
    has a degree fall in $q + (d_n - 1) + \sum_{i = 3}^{n - 1} \left( d_i - 1 \right)$.
\end{proof}

For the attacks on Feistel-\MiMC-Hash we now obtain the following regularity ranges.
\begin{ex}[Feistel-\MiMC-Hash preimage attack II]\label{Ex: Feistel-MiMC-Hash preimage II}
    Let Feistel-\MiMC-Hash be defined over $\Fq$, and let $r$ be the number of round.
    Under the assumptions of \Cref{Th: Feistel-MiMC-Hash lower bound} we obtain with \Cref{Ex: Feistel-MiMC-Hash preimage I} the following range for the Castelnuovo-Mumford regularity of the Feistel-\MiMC-Hash preimage attack together with a field equation
    \[
    \Hashpre.
    \]
    Small scale experiments indicate that the solving degree of the preimage attack is always equal to $q + 2 r - 3$.
\end{ex}

Like for \MiMC and the field equation we can replace $x_2^q - x_2$ by its remainder and obtain a lower bound on $d_\mathcal{F}$ via \Cref{Equ: degree of regularity and regularity}.

    \section{Discussion}\label{Sec: discussion}
In this paper we utilized a rigorous mathematical framework to prove Gr\"obner basis complexity estimates for various AO designs.
For \Hades and the \GMiMC family we proved that the Gr\"obner basis cryptanalysis of these designs is indeed mathematically sound.
Our analysis of the \MiMC family revealed that for mildly overdetermined systems we can compute small ranges for the Castelnuovo-Mumford regularity, hence putting a limit on the capabilities of regularity-based solving degree estimates.
Arguably, since our regularity/solving degree estimates for \MiMC polynomial systems that involve field equations exceed the size of the underlying field, these bounds do not have direct cryptographic implications.
Instead, they should be viewed as showcase that for well-behaved cryptographic polynomial systems provable upper as well as lower bounds for the regularity are achievable.
Moreover, as we discussed below \Cref{Ex: MiMC solving degree I,Ex: MiMC solving degree II} these bounds can be significantly improved via an auxiliary division by remainder computation.
The reason why we did not work with the remainder directly is quite simple: For every possible \MiMC instantiation and plain/ciphertext sample the remainder polynomial is different.
So unless one can reveal structural properties of the remainder polynomial one has to do an individual analysis for every possible instantiation.
On the other hand, by working with the field equation itself we could keep our analysis generic.

To the best of our knowledge this paper is the first time that AO Gr\"obner basis analysis has been performed without evasion to assumptions and hypotheses that could fail in practice.
Of course, from an AO designer's point of view this raises whether more advanced AO primitives are also provable in generic coordinates.
We point out that recent designs like \ReinforcedConcrete \cite{CCS:GKLRSW22}, \Anemoi \cite{C:BBCPSVW23}, \Griffin \cite{C:GHRSWW23} and \Arion \cite{Arion} have deviated heavily from classical design strategies, and these deviations seem to be in conflict with elementary applications of the Caminata-Gorla technique.
For example one of the \ReinforcedConcrete permutations over $\Fp$ is of the form
\begin{equation}
    \begin{pmatrix}
        x_1 \\ x_2 \\ x_3
    \end{pmatrix}
    \mapsto
    \begin{pmatrix}
        x_1^d \\[2pt]
        x_2 \cdot \left( x_1^2 + \alpha_1 \cdot x_1 + \beta_1 \right) \\[2pt]
        x_3 \cdot \left( x_2^2 + \alpha_2 \cdot x_2 + \beta_2 \right)
    \end{pmatrix}
    ,
\end{equation}
where $d \in \mathbb{Z}_{> 1}$ such that $\gcd \left( d, p - 1 \right) = 1$ and $\alpha_i, \beta_i \in \Fp$ are such that $\alpha_i^2 - 4 \cdot \beta_i$ are non-squares in $\Fp$.
Let us naively apply the Caminata-Gorla technique for this permutation.
After homogenizing it and substituting $x_0 = 0$ we yield that $x_1^d = x_2 \cdot x_1^2 = x_3 \cdot x_2^2 = 0$, but it is not true that $x_1 = x_2 = x_3 = 0$ is the only solution over $\overline{\Fp}$ to these equations.
Hence, our proving technique for generic coordinates fails.
We also want to point out that we face a similar situation for \Griffin and \Arion.

For all our regularity lower bounds we were given a DRL Gr\"obner basis together with an additional polynomial.
Via careful analysis of the arithmetic of the polynomial systems we could then discover polynomials with degree falls.
Of course, we would like to provide lower bounds in the presence of two or more additional equations.
Our readers might also recall that the attack on \MiMC with all field equations was missing in \Cref{Sec: lower bounds}.
From small scale experiments we raise the following conjecture for this attack.
\begin{conj}
    Let $\Fq$ a finite field.
    Let $\mathcal{F} = \left\{ f_1, \dots, f_n \right\} \subset P = \Fq [x_1, \dots, \allowbreak x_{n - 1}, y]$ be a keyed iterated system of polynomials such that
    \begin{enumerate}[label=(\roman*)]
        \item $d_i = \deg \left( f_i \right) \geq 2$ for all $1 \leq i \leq n$, and

        \item $f_i$ has the monomial $x_{i - 1}^{d_i}$ for all $2 \leq i \leq n$.
    \end{enumerate}
    Let $F \subset P$ be the ideal of all field equations.
    Further, assume that
    \begin{enumerate}[label=(\roman*),resume]
        \item $(F) \not\subset (f_1, \dots, f_n)$, and

        \item the univariate LEX polynomial has less than $d_1$ many roots in $\Fq$.
    \end{enumerate}
    Then the polynomial
    \[
    \left( \prod_{i = 1}^{n - 1} x_i \right) \cdot S_{DRL} \left( f_1, y^q - y\right)
    \]
    has a degree fall for the polynomial system $\mathcal{F} + F$.
\end{conj}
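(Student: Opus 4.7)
The plan is to mirror the argument of \Cref{Th: lower bound for satiety of iterated polynomial systems} (the single field-equation lower bound), accommodating the contributions of the extra field equations $x_i^q - x_i$ for $1 \le i \le n - 1$. First I would pin down the actual degree of the polynomial
\[
f = \Bigl(\prod_{i=1}^{n-1} x_i\Bigr) \cdot S_{DRL}(f_1, y^q - y).
\]
Because the leading $y^q$-terms of $y^{q - d_1} f_1$ and of $y^q - y$ cancel, the S-polynomial has degree at most $q - 1$, and hence $\deg f \le q + n - 2$, one below its formal S-pair degree $q + n - 1$. For contradiction I would assume $f \in W_{\mathcal{F} + F,\, q + n - 2}$, i.e., $f$ admits a representation
\[
f = \sum_{i=1}^n s_i f_i + \sum_{i=1}^{n-1} s_{F, i}(x_i^q - x_i) + s_{F, y}(y^q - y)
\]
with every summand of degree at most $q + n - 2$, and after normalizing each $s_{F, \cdot}$ modulo $I = (f_1, \dots, f_n)$ with respect to DRL so that none of their monomials lies in $\inid_{DRL}(I)$, rearranging yields
\[
\Bigl(\prod_{i=1}^{n-1} x_i + s_{F, y}\Bigr)(y^q - y) + \sum_{i=1}^{n-1} s_{F, i}(x_i^q - x_i) \in I.
\]

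The key new algebraic input is that $y^q - y$ divides $\tilde{f}_i^q - \tilde{f}_i$ in $\Fq[y]$ for every $1 \le i \le n - 1$: indeed, for every $a \in \Fq$ we have $\tilde{f}_i(a) \in \Fq$, so $\tilde{f}_i(a)^q - \tilde{f}_i(a) = 0$. Writing $\tilde{f}_i^q - \tilde{f}_i = (y^q - y)\, q_i(y)$ and applying the LEX substitution $x_j \mapsto \tilde{f}_j(y)$ from \Cref{Lem: keyed iterated shape lemma I}, the ideal membership above translates into a univariate identity modulo $\tilde{f}_n$, from which $y^q - y$ can be factored out to yield
\[
(y^q - y)\, \hat{T}(y) \equiv 0 \pmod{\tilde{f}_n}, \qquad \hat{T} := \hat{\pi} + \hat{s}_{F, y} + \sum_{i=1}^{n-1} \hat{s}_{F, i}\, q_i,
\]
where $\hat{\pi} = \prod_{i=1}^{n-1} \tilde{f}_i(y)$ is the substitution of $\prod x_i$. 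By the root-count assumption $\deg\gcd(\tilde{f}_n, y^q - y) < d_1$, and the standard coprime-divisor argument then forces either $\hat{T} \equiv 0 \pmod{\tilde{f}_n}$ or $\deg(\hat{T} \bmod \tilde{f}_n) > \prod_i d_i - d_1$.

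The contradiction, as in \Cref{Th: lower bound for satiety of iterated polynomial systems}, should then come from an upper bound on $\deg \hat{T}$ derived from \Cref{Prop: lex substitution properties}. Without the extra field equations one has $\hat{T} = \hat{\pi} + \hat{s}_{F, y}$, and \Cref{Prop: lex substitution properties} \ref{Item: degree after substitution} pins down $\deg \hat{T} = \deg \hat{\pi}$, which is far below $\prod_i d_i - d_1$. The main obstacle of this plan is that the additional summands $\hat{s}_{F, i}\, q_i$ have univariate degrees up to $q\,(\deg \tilde{f}_i - 1) + \deg \hat{s}_{F, i}$, which can vastly exceed $\prod_i d_i$; after reduction modulo $\tilde{f}_n$ these contributions could in principle both avoid the trivial zero and land anywhere below $\prod_i d_i$. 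Closing the gap therefore seems to require a delicate analysis of possible cancellations inside $\hat{T}$, exploiting the Frobenius identity $\tilde{f}_i(y)^q = \tilde{f}_i(y^q)$ and the fact that the multipliers $s_{F, i}$ in $P$ have degree at most $n - 2$ — a constraint that severely limits the algebraic flexibility of the lifted $\hat{s}_{F, i}$ but which, as stated, does not obviously preclude a combination of the $q_i$'s from producing a valid $\hat{T}$. Making this last cancellation argument rigorous — or, failing that, identifying an auxiliary hypothesis (perhaps on the joint distribution of roots of the $\tilde{f}_i$'s over $\Fq$) that rules out such pathological combinations — is the essential, and in my view not-yet-accessible, technical hurdle.
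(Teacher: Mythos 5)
This statement is labeled a \emph{conjecture} in the paper and is explicitly left open: the author writes ``From small scale experiments we raise the following conjecture for this attack,'' and closes the section with ``We expect that a resolution to this \MiMC problem will also reveal insight into the more general cryptographic polynomial systems.'' There is no proof in the paper to compare your attempt against.

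Given that, your treatment is exactly the appropriate one. You correctly set up the would-be argument by mirroring the proof of \Cref{Th: lower bound for satiety of iterated polynomial systems}: you compute that $\deg\bigl(\prod x_i \cdot S_{DRL}(f_1, y^q - y)\bigr) \leq q + n - 2$ because the top $y^q$-terms in the S-pair cancel, you pass to the membership $(\prod x_i + s_{F,y})(y^q - y) + \sum s_{F,i}(x_i^q - x_i) \in I$, and you note the key identity $(y^q - y) \mid \tilde f_i^q - \tilde f_i$, which lets you factor $y^q - y$ out of the LEX-reduced relation and reduce to $(y^q - y)\,\hat T \equiv 0 \pmod{\tilde f_n}$. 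All of this is sound, and the subsequent $\gcd$ bookkeeping using squarefreeness of $y^q - y$ and assumption (iv) is the same mechanism the paper uses in \Cref{Th: lower bound for satiety of iterated polynomial systems}. Crucially, you then identify precisely the step that has no analogue in the single-field-equation case and that a genuine proof would have to control: the quotients $q_i = (\tilde f_i^q - \tilde f_i)/(y^q - y)$ have degrees on the order of $q \cdot \deg \tilde f_i$, and after reduction modulo $\tilde f_n$ their contributions $\hat s_{F,i}\, q_i$ are not controlled by the degree bounds of \Cref{Prop: lex substitution properties}. That is indeed the obstruction; the hypotheses as stated do not rule out cancellations among these terms, which is presumably why the author stops at a conjecture supported only by experiments. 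You were right not to manufacture a spurious closure of this gap, and your diagnosis of where the gap lives is accurate and useful.
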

We expect that a resolution to this \MiMC problem will also reveal insight into the more general cryptographic polynomial systems.

    \section*{Acknowledgments}
    The author would like to thank the anonymous reviewers at ToSC for their valuable comments and helpful suggestions which improved both the quality and presentation of this paper.
    The author would like to thank Arnab Roy for his suggestion to study Gr\"obner basis attacks on Arithmetization-Oriented designs.
    Matthias Steiner has been supported in part by the European Research Council (ERC) under the European Union's Horizon 2020 research and innovation program (grant agreement No.\ 725042).

    \bibliographystyle{alphaurl}
    \bibliography{abbrev3.bib,crypto.bib,literature.bib}

\newcommand{\etalchar}[1]{$^{#1}$}
\begin{thebibliography}{AGP{\etalchar{+}}19b}

\bibitem[AAB{\etalchar{+}}20]{ToSC:AABDS20}
Abdelrahaman Aly, Tomer Ashur, Eli {Ben-Sasson}, Siemen Dhooghe, and Alan
  Szepieniec.
\newblock Design of symmetric-key primitives for advanced cryptographic
  protocols.
\newblock {\em {IACR} Trans. Symm. Cryptol.}, 2020(3):1--45, 2020.
\newblock \href {https://doi.org/10.13154/tosc.v2020.i3.1-45}
  {\path{doi:10.13154/tosc.v2020.i3.1-45}}.

\bibitem[ACG{\etalchar{+}}19]{AC:ACGKLRS19}
Martin~R. Albrecht, Carlos Cid, Lorenzo Grassi, Dmitry Khovratovich, Reinhard
  L{\"u}ftenegger, Christian Rechberger, and Markus Schofnegger.
\newblock Algebraic cryptanalysis of {STARK}-friendly designs: Application to
  {MARVELlous} and {MiMC}.
\newblock In Steven~D. Galbraith and Shiho Moriai, editors, {\em
  ASIACRYPT~2019, Part~III}, volume 11923 of {\em {LNCS}}, pages 371--397.
  Springer, Heidelberg, December 2019.
\newblock \href {https://doi.org/10.1007/978-3-030-34618-8_13}
  {\path{doi:10.1007/978-3-030-34618-8_13}}.

\bibitem[AD18]{EPRINT:AshDho18}
Tomer Ashur and Siemen Dhooghe.
\newblock {MARVELlous}: a {STARK}-friendly family of cryptographic primitives.
\newblock Cryptology ePrint Archive, Report 2018/1098, 2018.
\newblock \url{https://eprint.iacr.org/2018/1098}.

\bibitem[AES01]{AES}
{Advanced} {Encryption} {Standard} ({AES}).
\newblock National Institute of Standards and Technology, {NIST FIPS PUB} 197,
  {U.S.} Department of Commerce, November 2001.

\bibitem[AGP{\etalchar{+}}19a]{ESORICS:AGPRRRRS19}
Martin~R. Albrecht, Lorenzo Grassi, L{\'e}o Perrin, Sebastian Ramacher,
  Christian Rechberger, Dragos Rotaru, Arnab Roy, and Markus Schofnegger.
\newblock Feistel structures for {MPC}, and more.
\newblock In Kazue Sako, Steve Schneider, and Peter Y.~A. Ryan, editors, {\em
  ESORICS~2019, Part~II}, volume 11736 of {\em {LNCS}}, pages 151--171.
  Springer, Heidelberg, September 2019.
\newblock \href {https://doi.org/10.1007/978-3-030-29962-0_8}
  {\path{doi:10.1007/978-3-030-29962-0_8}}.

\bibitem[AGP{\etalchar{+}}19b]{EPRINT:AGPRRRRS19}
Martin~R. Albrecht, Lorenzo Grassi, Leo Perrin, Sebastian Ramacher, Christian
  Rechberger, Dragos Rotaru, Arnab Roy, and Markus Schofnegger.
\newblock {Feistel} structures for {MPC}, and more.
\newblock Cryptology ePrint Archive, Report 2019/397, 2019.
\newblock \url{https://eprint.iacr.org/2019/397}.

\bibitem[AGR{\etalchar{+}}16]{AC:AGRRT16}
Martin~R. Albrecht, Lorenzo Grassi, Christian Rechberger, Arnab Roy, and Tyge
  Tiessen.
\newblock {MiMC}: Efficient encryption and cryptographic hashing with minimal
  multiplicative complexity.
\newblock In Jung~Hee Cheon and Tsuyoshi Takagi, editors, {\em ASIACRYPT~2016,
  Part~I}, volume 10031 of {\em {LNCS}}, pages 191--219. Springer, Heidelberg,
  December 2016.
\newblock \href {https://doi.org/10.1007/978-3-662-53887-6_7}
  {\path{doi:10.1007/978-3-662-53887-6_7}}.

\bibitem[ARS{\etalchar{+}}15]{EC:ARSTZ15}
Martin~R. Albrecht, Christian Rechberger, Thomas Schneider, Tyge Tiessen, and
  Michael Zohner.
\newblock Ciphers for {MPC} and {FHE}.
\newblock In Elisabeth Oswald and Marc Fischlin, editors, {\em EUROCRYPT~2015,
  Part~I}, volume 9056 of {\em {LNCS}}, pages 430--454. Springer, Heidelberg,
  April 2015.
\newblock \href {https://doi.org/10.1007/978-3-662-46800-5_17}
  {\path{doi:10.1007/978-3-662-46800-5_17}}.

\bibitem[AW21]{SODA:AlmWil21}
Josh Alman and Virginia~Vassilevska Williams.
\newblock A refined laser method and faster matrix multiplication.
\newblock In D{\'a}niel Marx, editor, {\em 32nd SODA}, pages 522--539.
  {ACM-SIAM}, January 2021.
\newblock \href {https://doi.org/10.1137/1.9781611976465.32}
  {\path{doi:10.1137/1.9781611976465.32}}.

\bibitem[BBC{\etalchar{+}}23]{C:BBCPSVW23}
Cl{\'e}mence Bouvier, Pierre Briaud, Pyrros Chaidos, L{\'e}o Perrin, Robin
  Salen, Vesselin Velichkov, and Danny Willems.
\newblock New design techniques for efficient arithmetization-oriented hash
  functions: {Anemoi} permutations and {Jive} compression mode.
\newblock In Helena Handschuh and Anna Lysyanskaya, editors, {\em CRYPTO~2023,
  Part~III}, volume 14083 of {\em {LNCS}}, pages 507--539. Springer,
  Heidelberg, August 2023.
\newblock \href {https://doi.org/10.1007/978-3-031-38548-3_17}
  {\path{doi:10.1007/978-3-031-38548-3_17}}.

\bibitem[BBLP22]{ToSC:BBLP22}
Augustin Bariant, Cl{\'e}mence Bouvier, Ga{\"e}tan Leurent, and L{\'e}o Perrin.
\newblock Algebraic attacks against some arithmetization-oriented primitives.
\newblock {\em {IACR} Trans. Symm. Cryptol.}, 2022(3):73--101, 2022.
\newblock \href {https://doi.org/10.46586/tosc.v2022.i3.73-101}
  {\path{doi:10.46586/tosc.v2022.i3.73-101}}.

\bibitem[BDPV07]{Sponge}
Guido Bertoni, Joan Daemen, Micha{\"e}l Peeters, and Gilles {Van Assche}.
\newblock Sponge functions.
\newblock Ecrypt Hash Workshop, 2007.
\newblock URL: \url{https://keccak.team/files/SpongeFunctions.pdf}.

\bibitem[BDPV08]{EC:BDPV08}
Guido Bertoni, Joan Daemen, Micha{\"e}l Peeters, and Gilles {Van Assche}.
\newblock On the indifferentiability of the sponge construction.
\newblock In Nigel~P. Smart, editor, {\em EUROCRYPT~2008}, volume 4965 of {\em
  {LNCS}}, pages 181--197. Springer, Heidelberg, April 2008.
\newblock \href {https://doi.org/10.1007/978-3-540-78967-3_11}
  {\path{doi:10.1007/978-3-540-78967-3_11}}.

\bibitem[BFS04]{Bardet-Complexity}
Magali Bardet, Jean-Charles Faug\`{e}re, and Bruno Salvy.
\newblock On the complexity of {G}r{\"o}bner basis computation of semi-regular
  overdetermined algebraic equations.
\newblock In {\em Proceedings of the International Conference on Polynomial
  System Solving}, pages 71--74, 2004.

\bibitem[BG01]{Bermejo-Computing}
Isabel Bermejo and Philippe Gimenez.
\newblock Computing the {C}astelnuovo–{M}umford regularity of some subschemes
  of $\mathbb{P}_k^n$ using quotients of monomial ideals.
\newblock {\em J. Pure Appl. Algebra}, 164(1):23--33, 2001.
\newblock \href {https://doi.org/10.1016/S0022-4049(00)00143-2}
  {\path{doi:10.1016/S0022-4049(00)00143-2}}.

\bibitem[BPW06]{FSE:BucPysWei06}
Johannes Buchmann, Andrei Pyshkin, and Ralf-Philipp Weinmann.
\newblock A zero-dimensional {Gr{\"o}bner} basis for {AES}-128.
\newblock In Matthew J.~B. Robshaw, editor, {\em FSE~2006}, volume 4047 of {\em
  {LNCS}}, pages 78--88. Springer, Heidelberg, March 2006.
\newblock \href {https://doi.org/10.1007/11799313_6}
  {\path{doi:10.1007/11799313_6}}.

\bibitem[BS87]{BayerStillman}
David Bayer and Michael Stillman.
\newblock A criterion for detecting m-regularity.
\newblock {\em Invent. Math.}, 87(1):1--11, 2 1987.
\newblock \href {https://doi.org/10.1007/BF01389151}
  {\path{doi:10.1007/BF01389151}}.

\bibitem[Buc65]{Buchberger}
Bruno Buchberger.
\newblock {\em Ein Algorithmus zum Auffinden der Basiselemente des
  Restklassenringes nach einem nulldimensionalen Polynomideal}.
\newblock PhD thesis, Universit\"{a}t Innsbruck, 1965.

\bibitem[CG21]{Caminata-SolvingPolySystems}
Alessio Caminata and Elisa Gorla.
\newblock Solving multivariate polynomial systems and an invariant from
  commutative algebra.
\newblock In Jean-Claude Bajard and Alev Topuzo{\u{g}}lu, editors, {\em
  Arithmetic of Finite Fields - 8th International Workshop, {WAIFI} 2020,
  Rennes, France, July 6-8, 2020, Revised Selected and Invited Papers}, volume
  12542 of {\em Lecture Notes in Computer Science}, pages 3--36, Cham, 2021.
  Springer International Publishing.
\newblock \href {https://doi.org/10.1007/978-3-030-68869-1_1}
  {\path{doi:10.1007/978-3-030-68869-1_1}}.

\bibitem[CG22]{Caminata-SolvingPolySystemsPreprint}
Alessio Caminata and Elisa Gorla.
\newblock Solving multivariate polynomial systems and an invariant from
  commutative algebra, 2022.
\newblock {V}ersion: 7.
\newblock \href {https://arxiv.org/abs/1706.06319} {\path{arXiv:1706.06319}}.

\bibitem[CG23]{Caminata-Degrees}
Alessio Caminata and Elisa Gorla.
\newblock Solving degree, last fall degree, and related invariants.
\newblock {\em J. Symb. Comput.}, 114:322--335, 2023.
\newblock \href {https://doi.org/10.1016/j.jsc.2022.05.001}
  {\path{doi:10.1016/j.jsc.2022.05.001}}.

\bibitem[Cha07]{Chardin-Regularity}
Marc Chardin.
\newblock Some results and questions on {C}astelnuovo-{M}umford regularity.
\newblock In Irena Peeva, editor, {\em {S}yzygies and {H}ilbert Functions},
  volume 254 of {\em Lecture Notes in Pure and Applied Mathematics}, pages
  1--40. Chapman and Hall/CRC, 2007.

\bibitem[CJ06]{Cover-InformationTheory}
Thomas~M. Cover and Thomas~A. Joy.
\newblock {\em Elements of Information Theory}.
\newblock John Wiley \& Sons, Ltd, Hoboken, New Jersey, 2 edition, 2006.
\newblock \href {https://doi.org/10.1002/0471200611}
  {\path{doi:10.1002/0471200611}}.

\bibitem[CLO15]{Cox-Ideals}
David~A. Cox, John Little, and Donal O'Shea.
\newblock {\em Ideals, Varieties, and Algorithms: An Introduction to
  Computational Algebraic Geometry and Commutative Algebra}.
\newblock Undergraduate Texts in Mathematics. Springer International
  Publishing, 4 edition, 2015.
\newblock \href {https://doi.org/10.1007/978-3-319-16721-3}
  {\path{doi:10.1007/978-3-319-16721-3}}.

\bibitem[DES77]{DES77}
Data encryption standard.
\newblock National Bureau of Standards, {NBS FIPS PUB} 46, {U.S.} Department of
  Commerce, January 1977.

\bibitem[DGGK21]{EC:DGGK21}
Christoph Dobraunig, Lorenzo Grassi, Anna Guinet, and Dani{\"e}l Kuijsters.
\newblock Ciminion: Symmetric encryption based on {Toffoli}-gates over large
  finite fields.
\newblock In Anne Canteaut and Fran\c{c}ois-Xavier Standaert, editors, {\em
  EUROCRYPT~2021, Part~II}, volume 12697 of {\em {LNCS}}, pages 3--34.
  Springer, Heidelberg, October 2021.
\newblock \href {https://doi.org/10.1007/978-3-030-77886-6_1}
  {\path{doi:10.1007/978-3-030-77886-6_1}}.

\bibitem[DR20]{Daemen-AES}
Joan Daemen and Vincent Rijmen.
\newblock {\em The Design of {R}ijndael: {AES} - The Advanced Encryption
  Standard}.
\newblock Information Security and Cryptography. Springer Berlin, Heidelberg, 2
  edition, 2020.
\newblock \href {https://doi.org/10.1007/978-3-662-60769-5}
  {\path{doi:10.1007/978-3-662-60769-5}}.

\bibitem[DS13]{Ding-SolvingDegree}
Jintai Ding and Dieter Schmidt.
\newblock Solving degree and degree of regularity for polynomial systems over a
  finite fields.
\newblock In Marc Fischlin and Stefan Katzenbeisser, editors, {\em Number
  Theory and Cryptography: Papers in Honor of Johannes Buchmann on the Occasion
  of His 60th Birthday}, pages 34--49, Berlin, Heidelberg, 2013. Springer
  Berlin Heidelberg.
\newblock \href {https://doi.org/10.1007/978-3-642-42001-6_4}
  {\path{doi:10.1007/978-3-642-42001-6_4}}.

\bibitem[Eis05]{Eisenbud-Syzygies}
David Eisenbud.
\newblock {\em The Geometry of Syzygies: A Second Course Commutative Algebra
  and Algebraic Geometry}.
\newblock Springer New York, 1 edition, 2005.
\newblock \href {https://doi.org/10.1007/b137572} {\path{doi:10.1007/b137572}}.

\bibitem[Fau99]{Faugere-F4}
Jean-Charles Faug\`{e}re.
\newblock A new efficient algorithm for computing {G}r{\"o}bner bases {(F4)}.
\newblock {\em J. Pure Appl. Algebra}, 139(1):61--88, 1999.
\newblock \href {https://doi.org/10.1016/S0022-4049(99)00005-5}
  {\path{doi:10.1016/S0022-4049(99)00005-5}}.

\bibitem[Fau02]{Faugere-F5}
Jean-Charles Faug\`{e}re.
\newblock A new efficient algorithm for computing {G}r\"{o}bner bases without
  reduction to zero {(F5)}.
\newblock In {\em Proceedings of the 2002 International Symposium on Symbolic
  and Algebraic Computation}, ISSAC '02, page 75–83. Association for
  Computing Machinery, 2002.
\newblock \href {https://doi.org/10.1145/780506.780516}
  {\path{doi:10.1145/780506.780516}}.

\bibitem[FGHR14]{Faugere-SubCubic}
Jean-Charles Faug\`{e}re, Pierrick Gaudry, Louise Huot, and Gu\'{e}na\"{e}l
  Renault.
\newblock Sub-cubic change of ordering for {G}r\"{o}bner basis: A probabilistic
  approach.
\newblock In {\em Proceedings of the 39th International Symposium on Symbolic
  and Algebraic Computation}, ISSAC '14, page 170–177, New York, NY, USA,
  2014. Association for Computing Machinery.
\newblock \href {https://doi.org/10.1145/2608628.2608669}
  {\path{doi:10.1145/2608628.2608669}}.

\bibitem[FGLM93]{Faugere-FGLM}
Jean-Charles Faug\`{e}re, Patrizia Gianni, Daniel Lazard, and Teo Mora.
\newblock Efficient computation of zero-dimensional {G}r{\"o}bner bases by
  change of ordering.
\newblock {\em J. Symb. Comput.}, 16(4):329--344, 1993.
\newblock \href {https://doi.org/10.1006/jsco.1993.1051}
  {\path{doi:10.1006/jsco.1993.1051}}.

\bibitem[FM17]{Faugere-SparseFGLM}
Jean-Charles Faug\`{e}re and Chenqi Mou.
\newblock Sparse {FGLM} algorithms.
\newblock {\em J. Symb. Comput.}, 80:538--569, 2017.
\newblock \href {https://doi.org/10.1016/j.jsc.2016.07.025}
  {\path{doi:10.1016/j.jsc.2016.07.025}}.

\bibitem[FP19]{Faugere-Stark}
Jean-Charles Faug\`{e}re and Ludovic Perret.
\newblock Algebraic attacks against {\texttt{stark}}-friendly ciphers.
\newblock Appearing as Appendix A in \url{https://eprint.iacr.org/2020/948},
  2019.
\newblock Version 1.2.

\bibitem[Fr{\"{o}}85]{Froeberg-Conjecture}
Ralf Fr{\"{o}}berg.
\newblock An inequality for {H}ilbert series of graded algebras.
\newblock {\em Math. Scand.}, 56:117–144, 12 1985.
\newblock \href {https://doi.org/10.7146/math.scand.a-12092}
  {\path{doi:10.7146/math.scand.a-12092}}.

\bibitem[Gao09]{Gao-Counting}
Sicun Gao.
\newblock Counting zeros over finite fields using {Gr\"{o}bner} bases.
\newblock Master's thesis, Carnegie Mellon University, 2009.
\newblock URL: \url{https://www.cs.cmu.edu/~sicung/papers/MS_thesis.pdf}.

\bibitem[GHR{\etalchar{+}}23]{C:GHRSWW23}
Lorenzo Grassi, Yonglin Hao, Christian Rechberger, Markus Schofnegger, Roman
  Walch, and Qingju Wang.
\newblock Horst meets fluid-{SPN}: Griffin for zero-knowledge applications.
\newblock In Helena Handschuh and Anna Lysyanskaya, editors, {\em CRYPTO~2023,
  Part~III}, volume 14083 of {\em {LNCS}}, pages 573--606. Springer,
  Heidelberg, August 2023.
\newblock \href {https://doi.org/10.1007/978-3-031-38548-3_19}
  {\path{doi:10.1007/978-3-031-38548-3_19}}.

\bibitem[GKL{\etalchar{+}}22]{CCS:GKLRSW22}
Lorenzo Grassi, Dmitry Khovratovich, Reinhard L{\"u}ftenegger, Christian
  Rechberger, Markus Schofnegger, and Roman Walch.
\newblock Reinforced concrete: {A} fast hash function for verifiable
  computation.
\newblock In Heng Yin, Angelos Stavrou, Cas Cremers, and Elaine Shi, editors,
  {\em ACM CCS 2022}, pages 1323--1335. {ACM} Press, November 2022.
\newblock \href {https://doi.org/10.1145/3548606.3560686}
  {\path{doi:10.1145/3548606.3560686}}.

\bibitem[GKR{\etalchar{+}}21]{USENIX:GKRRS21}
Lorenzo Grassi, Dmitry Khovratovich, Christian Rechberger, Arnab Roy, and
  Markus Schofnegger.
\newblock Poseidon: {A} new hash function for zero-knowledge proof systems.
\newblock In Michael Bailey and Rachel Greenstadt, editors, {\em USENIX
  Security 2021}, pages 519--535. {USENIX} Association, August 2021.

\bibitem[GKRS22]{ToSC:GKRS22}
Lorenzo Grassi, Dmitry Khovratovich, Sondre R{\o}njom, and Markus Schofnegger.
\newblock The {Legendre} symbol and the modulo-2 operator in symmetric schemes
  over {$\mathbb{F}^{n}_{p}$}: Preimage attack on full {Grendel}.
\newblock {\em {IACR} Trans. Symm. Cryptol.}, 2022(1):5--37, 2022.
\newblock \href {https://doi.org/10.46586/tosc.v2022.i1.5-37}
  {\path{doi:10.46586/tosc.v2022.i1.5-37}}.

\bibitem[GKS23]{AFRICACRYPT:GraKhoSch23}
Lorenzo Grassi, Dmitry Khovratovich, and Markus Schofnegger.
\newblock {Poseidon2}: {A} faster version of the poseidon hash function.
\newblock In Nadia {El Mrabet}, Luca {De Feo}, and Sylvain Duquesne, editors,
  {\em AFRICACRYPT 23}, volume 14064 of {\em {LNCS}}, pages 177--203. Springer
  Nature, July 2023.
\newblock \href {https://doi.org/10.1007/978-3-031-37679-5_8}
  {\path{doi:10.1007/978-3-031-37679-5_8}}.

\bibitem[GLR{\etalchar{+}}19]{EPRINT:GLRRS19}
Lorenzo Grassi, Reinhard L{\"u}ftenegger, Christian Rechberger, Dragos Rotaru,
  and Markus Schofnegger.
\newblock On a generalization of substitution-permutation networks: The {HADES}
  design strategy.
\newblock Cryptology ePrint Archive, Report 2019/1107, 2019.
\newblock \url{https://eprint.iacr.org/2019/1107}.

\bibitem[GLR{\etalchar{+}}20]{EC:GLRRS20}
Lorenzo Grassi, Reinhard L{\"u}ftenegger, Christian Rechberger, Dragos Rotaru,
  and Markus Schofnegger.
\newblock On a generalization of substitution-permutation networks: The {HADES}
  design strategy.
\newblock In Anne Canteaut and Yuval Ishai, editors, {\em EUROCRYPT~2020,
  Part~II}, volume 12106 of {\em {LNCS}}, pages 674--704. Springer, Heidelberg,
  May 2020.
\newblock \href {https://doi.org/10.1007/978-3-030-45724-2_23}
  {\path{doi:10.1007/978-3-030-45724-2_23}}.

\bibitem[GMP22]{Gorla-StrongerBounds}
Elisa Gorla, Daniela Mueller, and Christophe Petit.
\newblock Stronger bounds on the cost of computing gröbner bases for {HFE}
  systems.
\newblock {\em J. Symb. Comput.}, 109:386--398, 2022.
\newblock \href {https://doi.org/10.1016/j.jsc.2020.07.011}
  {\path{doi:10.1016/j.jsc.2020.07.011}}.

\bibitem[G{\O}SW23]{EC:GOSW23}
Lorenzo Grassi, Morten {\O}ygarden, Markus Schofnegger, and Roman Walch.
\newblock From farfalle to megafono via ciminion: The {PRF} hydra for {MPC}
  applications.
\newblock In Carmit Hazay and Martijn Stam, editors, {\em EUROCRYPT~2023,
  Part~IV}, volume 14007 of {\em {LNCS}}, pages 255--286. Springer, Heidelberg,
  April 2023.
\newblock \href {https://doi.org/10.1007/978-3-031-30634-1_9}
  {\path{doi:10.1007/978-3-031-30634-1_9}}.

\bibitem[Gre98]{Green-InitialIdeals}
Mark~L. Green.
\newblock Generic initial ideals.
\newblock In J.~Elias, J.~M. Giral, R.~M. Mir{\'o}-Roig, and S.~Zarzuela,
  editors, {\em Six Lectures on Commutative Algebra}, pages 119--186, Basel,
  1998. Birkh{\"a}user Basel.
\newblock \href {https://doi.org/10.1007/978-3-0346-0329-4_2}
  {\path{doi:10.1007/978-3-0346-0329-4_2}}.

\bibitem[Has12]{Hashemi-EfficientComputation}
Amir Hashemi.
\newblock Efficient computation of {C}astelnuovo-{M}umford regularity.
\newblock {\em Math. Comput.}, 81(278):1163--1177, 2012.
\newblock \href {https://doi.org/10.1090/S0025-5718-2011-02515-9}
  {\path{doi:10.1090/S0025-5718-2011-02515-9}}.

\bibitem[HKY15]{C:HuaKosYeo15}
Ming-Deh~A. Huang, Michiel Kosters, and Sze~Ling Yeo.
\newblock Last fall degree, {HFE}, and {Weil} descent attacks on {ECDLP}.
\newblock In Rosario Gennaro and Matthew J.~B. Robshaw, editors, {\em
  CRYPTO~2015, Part~I}, volume 9215 of {\em {LNCS}}, pages 581--600. Springer,
  Heidelberg, August 2015.
\newblock \href {https://doi.org/10.1007/978-3-662-47989-6_28}
  {\path{doi:10.1007/978-3-662-47989-6_28}}.

\bibitem[HKYY18]{Huang-LastFallDegree}
Ming-Deh~A. Huang, Michiel Kosters, Yun Yang, and Sze~Ling Yeo.
\newblock On the last fall degree of zero-dimensional {W}eil descent systems.
\newblock {\em J. Symb. Computat.}, 87:207--226, 2018.
\newblock \href {https://doi.org/10.1016/j.jsc.2017.08.002}
  {\path{doi:10.1016/j.jsc.2017.08.002}}.

\bibitem[HSS18]{Hashemi-Deterministic}
Amir Hashemi, Michael Schweinfurter, and Werner~M. Seiler.
\newblock Deterministic genericity for polynomial ideals.
\newblock {\em J. Symb. Computat.}, 86:20--50, 2018.
\newblock \href {https://doi.org/10.1016/j.jsc.2017.03.008}
  {\path{doi:10.1016/j.jsc.2017.03.008}}.

\bibitem[Kem11]{Kemper-CommAlg}
Gregor Kemper.
\newblock {\em A Course in Commutative Algebra}.
\newblock Graduate Texts in Mathematics. Springer Berlin Heidelberg, 1 edition,
  2011.
\newblock \href {https://doi.org/10.1007/978-3-642-03545-6}
  {\path{doi:10.1007/978-3-642-03545-6}}.

\bibitem[KR05]{Kreuzer-CompAlg2}
Martin Kreuzer and Lorenzo Robbiano.
\newblock {\em Computational Commutative Algebra 2}.
\newblock Springer Berlin Heidelberg, Berlin, Heidelberg, 1 edition, 2005.
\newblock \href {https://doi.org/10.1007/3-540-28296-3}
  {\path{doi:10.1007/3-540-28296-3}}.

\bibitem[Laz83]{Lazard-Groebner}
Daniel Lazard.
\newblock {G}r{\"o}bner bases, {G}aussian elimination and resolution of systems
  of algebraic equations.
\newblock In J.~A. van Hulzen, editor, {\em Computer Algebra, {EUROCAL} '83,
  European Computer Algebra Conference, London, England, March 28-30, 1983,
  Proceedings}, volume 162 of {\em Lecture Notes in Computer Science}, pages
  146--156. Springer Berlin Heidelberg, 1983.
\newblock \href {https://doi.org/10.1007/3-540-12868-9_99}
  {\path{doi:10.1007/3-540-12868-9_99}}.

\bibitem[Leo06]{Leontev-Roots}
V.~K. Leont'ev.
\newblock Roots of random polynomials over a finite field.
\newblock {\em Math. Notes}, 80(1):300--304, 07 2006.
\newblock \href {https://doi.org/10.1007/s11006-006-0139-y}
  {\path{doi:10.1007/s11006-006-0139-y}}.

\bibitem[LN97]{Niederreiter-FiniteFields}
Rudolf Lidl and Harald Niederreiter.
\newblock {\em Finite fields}.
\newblock Encyclopedia of mathematics and its applications. Cambridge Univ.
  Press, Cambridge, 2 edition, 1997.

\bibitem[LP19]{SAC:LiPre19}
Chaoyun Li and Bart Preneel.
\newblock Improved interpolation attacks on cryptographic primitives of low
  algebraic degree.
\newblock In Kenneth~G. Paterson and Douglas Stebila, editors, {\em SAC 2019},
  volume 11959 of {\em {LNCS}}, pages 171--193. Springer, Heidelberg, August
  2019.
\newblock \href {https://doi.org/10.1007/978-3-030-38471-5_8}
  {\path{doi:10.1007/978-3-030-38471-5_8}}.

\bibitem[Par10]{Pardue-Generic}
Keith Pardue.
\newblock Generic sequences of polynomials.
\newblock {\em J. Algebra}, 324(4):579--590, 2010.
\newblock \href {https://doi.org/10.1016/j.jalgebra.2010.04.018}
  {\path{doi:10.1016/j.jalgebra.2010.04.018}}.

\bibitem[RST23]{Arion}
Arnab Roy, Matthias~Johann Steiner, and Stefano Trevisani.
\newblock \textsf{Arion}: Arithmetization-oriented permutation and hashing from
  generalized triangular dynamical systems, 2023.
\newblock {V}ersion: 3.
\newblock \href {https://arxiv.org/abs/2303.04639} {\path{arXiv:2303.04639}}.

\bibitem[SS21]{EPRINT:SauSze21}
Jan~Ferdinand Sauer and Alan Szepieniec.
\newblock {SoK}: {Gr{\"o}bner} basis algorithms for arithmetization oriented
  ciphers.
\newblock Cryptology ePrint Archive, Report 2021/870, 2021.
\newblock \url{https://eprint.iacr.org/2021/870}.

\bibitem[Sto00]{Storjohann-Matrix}
Arne Storjohann.
\newblock {\em Algorithms for matrix canonical forms}.
\newblock Doctoral thesis, ETH Zurich, Zürich, 2000.
\newblock Diss., Technische Wissenschaften ETH Zürich, Nr. 13922, 2001.
\newblock \href {https://doi.org/10.3929/ethz-a-004141007}
  {\path{doi:10.3929/ethz-a-004141007}}.

\end{thebibliography}

\end{document}